\def\l@subsection#1#2{}
\def\l@subsubsection#1#2{}
\newcommand{\pushright}[1]{\ifmeasuring@#1\else\omit\hfill$\displaystyle#1$\fi\ignorespaces}
\newcommand{\pushleft}[1]{\ifmeasuring@#1\else\omit$\displaystyle#1$\hfill\fi\ignorespaces}
\newtheorem{theorem}{Theorem}
\newtheorem{lemma}[theorem]{Lemma}
\newtheorem{proposition}[theorem]{Proposition}
\newtheorem{corollary}[theorem]{Corollary}
\renewcommand{\v}[1]{\ensuremath{\mathbf{#1}}} 
\newcommand{\gv}[1]{\ensuremath{\text{\boldmath$ #1 $}}}
\newcommand{\abs}[1]{\left| #1 \right|} 
\newcommand{\norm}[1]{\left\| #1 \right\|} 
\newcommand{\trace}{\mathrm{Tr}}
\newcommand{\poly}{\mathrm{poly}}
\newcommand{\avg}{\mathrm{avg}}
\newcommand{\group}{{\mathrm{group}}} 
\newcommand{\point}{{\mathrm{local}}}
\newcommand{\charge}{{\mathrm{charge}}}
\newcommand{\txr}{{\mathscr{R}}}
\newcommand{\tPi}{{\tilde\Pi}}
\newcommand{\mC}{{\mathcal{C}}}
\newcommand{\mU}{{\mathcal{U}}}
\newcommand{\mN}{{\mathcal{N}}}
\newcommand{\tB}{{\widetilde{B}}}
\newcommand{\mR}{{\mathcal{R}}}
\newcommand{\frakF}{{\mathfrak{F}}}
\newcommand{\frakJ}{{\mathfrak{J}}}
\newcommand{\frakc}{{\mathfrak{c}}}
\newcommand{\mD}{{\mathcal{D}}}
\newcommand{\mE}{{\mathcal{E}}}
\newcommand{\id}{{\mathbbm{1}}}
\newcommand{\barF}{{F^\infty}}
\newcommand{\bgamma}{\overline{\gamma}}
\newcommand{\bepsilon}{\overline{\epsilon}}
\newcommand{\bdelta}{\overline{\delta}}
\newcommand{\barP}{\overline{P}}
\newcommand{\vac}{{\emptyset}}
\newcommand{\rep}{{\mathrm{rep}}}
\newcommand{\frakB}{{\mathfrak{B}}}
\newcommand{\is}{{\mathfrak{s}}}
\newcommand{\vj}{{\gv{j}}}
\newcommand{\vK}{{\v{K}}}
\newcommand{\bR}{{\mathbb{R}}}
\newcommand{\bZ}{{\mathbb{Z}}}
\newcommand{\bV}{{\mathbb{V}}}
\newcommand{\bN}{{\mathbb{N}}}
\renewcommand{\Re}{{\mathrm{Re}}}
\newcommand{\optL}{{\mathrm{opt}}}
\newcommand{\optG}{{\mathrm{opt(G)}}}
\newcommand{\optCG}{{\mathrm{opt(C)}}}
\newcommand{\cov}{{\mathrm{cov}}}
\newcommand{\covG}{{\mathrm{cov(G)}}}
\newcommand{\covCG}{{\mathrm{cov(C)}}}
\newcommand{\supp}{{\mathrm{supp}}}
\renewcommand{\epsilon}{\varepsilon}
\newcommand{\appropto}{\mathrel{\vcenter{
  \offinterlineskip\halign{\hfil$##$\cr
    \propto\cr\noalign{\kern2pt}\sim\cr\noalign{\kern-2pt}}}}}
\newcommand{\LtoS}{{S\leftarrow L}}
\newcommand{\LtoB}{{B\leftarrow L}}
\newcommand{\StoL}{{L\leftarrow S}}
\newcommand{\StoB}{{B\leftarrow S}}
\newcommand{\CtoSA}{{SA\leftarrow C}}
\newcommand{\SAtoC}{{C \leftarrow SA}}
\newcommand{\CtoLA}{{LA \leftarrow C}}
\newcommand{\LAtoC}{{C \leftarrow LA}}
\let\baraccent=\= 
\renewcommand{\=}[1]{\stackrel{#1}{=}} 
\newcommand{\thmref}[1]{\hyperref[#1]{Theorem~\ref{#1}}}
\newcommand{\lemmaref}[1]{\hyperref[#1]{Lemma~\ref{#1}}}
\newcommand{\corollaryref}[1]{\hyperref[#1]{Corollary~\ref{#1}}}
\newcommand{\propref}[1]{\hyperref[#1]{Proposition~\ref{#1}}}
\newcommand{\figref}[1]{\hyperref[#1]{Fig.~\ref{#1}}}
\newcommand{\figaref}[1]{\hyperref[#1]{Fig.~\ref{#1}a}}
\newcommand{\figbref}[1]{\hyperref[#1]{Fig.~\ref{#1}b}}
\newcommand{\figcref}[1]{\hyperref[#1]{Fig.~\ref{#1}c}}
\newcommand{\figdref}[1]{\hyperref[#1]{Fig.~\ref{#1}d}}
\newcommand{\figeref}[1]{\hyperref[#1]{Fig.~\ref{#1}e}}
\renewcommand{\eqref}[1]{\hyperref[#1]{Eq.~(\ref{#1})}}
\newcommand{\secref}[1]{\hyperref[#1]{Sec.~\ref{#1}}}
\newcommand{\eqsref}[2]{\hyperref[#1]{Eqs.~(\ref{#1})-(\ref{#2})}}
\newcommand{\appref}[1]{\hyperref[#1]{Appx.~\ref{#1}}}
\newcommand{\isometric}{Consider an isometric quantum code defined by $\mE_{\LtoS}$. Consider  physical Hamiltonian $H_S$,  logical Hamiltonian $H_L$, and noise channel $\mN_S$. Suppose the HKS condition is satisfied. }
\newcommand{\nonisometric}{Consider a quantum code defined by $\mE_{\LtoS}$. Consider  physical Hamiltonian $H_S$,  logical Hamiltonian $H_L$, and noise channel $\mN_S$. Suppose the HKS condition is satisfied.  }
\begin{document}

\title{
Quantum error correction meets continuous symmetries: 
fundamental trade-offs and case studies}

\author{Zi-Wen Liu}
\thanks{\href{mailto:zwliu0@mail.tsinghua.edu.cn}{zwliu0@mail.tsinghua.edu.cn} (Zi-Wen Liu); \href{mailto:sisi.zhou26@gmail.com}{sisi.zhou26@gmail.com} (Sisi Zhou); The author names are in alphabetical order. }
\affiliation{Perimeter Institute for Theoretical Physics, Waterloo, Ontario N2L 2Y5, Canada}
\affiliation{Yau Mathematical Sciences Center, Tsinghua University, Beijing 100084, China}
\author{Sisi Zhou}
\thanks{\href{mailto:zwliu0@mail.tsinghua.edu.cn}{zwliu0@mail.tsinghua.edu.cn} (Zi-Wen Liu); \href{mailto:sisi.zhou26@gmail.com}{sisi.zhou26@gmail.com} (Sisi Zhou); The author names are in alphabetical order. }
\affiliation{Perimeter Institute for Theoretical Physics, Waterloo, Ontario N2L 2Y5, Canada}
\affiliation{Institute for Quantum Information and Matter, California Institute of Technology, Pasadena, CA 91125, USA}
\affiliation{Pritzker School of Molecular Engineering, The University of Chicago, Illinois 60637, USA}
\date{\today}


\begin{abstract}
    
We systematically study the fundamental competition between quantum error correction (QEC) and continuous symmetries, two key notions in quantum information and physics, in a quantitative manner. Three meaningful measures of approximate symmetries in quantum channels and, in particular, QEC codes, based on covariance violation over the entire symmetry group, covariance violation at a local point (closely related to quantum Fisher information), and the violation of charge conservation, respectively, are introduced and studied.  Each measure induces a corresponding characterization of approximately covariant codes.  We explicate a host of different ideas and techniques that enable us to derive various forms of trade-off relations between the  QEC accuracy and all symmetry  measures.  More specifically, we introduce two frameworks for understanding and analyzing the trade-offs, based on the notions of charge fluctuation and gate implementation error (which may be of interest {in their own rights}) respectively, and employ methods including the Knill--Laflamme conditions as well as quantum metrology and quantum resource theory for the derivation. From the perspective of fault-tolerant quantum computing, our bounds on symmetry violation indicate limitations on the precision or density of transversally implementable logical gates for general QEC codes, refining the Eastin--Knill theorem. To exemplify nontrivial approximately covariant codes and understand the achievability of the above fundamental limits, we analyze two explicit types of codes: 1) a parametrized extension of the thermodynamic code, which gives a construction of a code family that continuously interpolates between exact QEC and exact symmetry, and 2) the quantum Reed--Muller codes, which represents a prominent example of approximately covariant exact QEC code.  We show that both codes can saturate the scaling of the bounds for group-global covariance and charge conservation asymptotically, indicating the near-optimality of both our bounds and codes. 
\end{abstract}

\maketitle

\newpage 
\tableofcontents

\newpage

\section{Introduction}

Quantum error correction (QEC) is one of the most important and widely studied ideas in quantum information processing~\cite{shor1995scheme,nielsen2002quantum,gottesman2010introduction,lidar2013quantum}.  The spirit of QEC is to protect  quantum information against noise and errors by suitably encoding logical quantum systems into quantum codes living in a larger physical Hilbert space. Since quantum systems are highly susceptible to noise effects such as decoherence so that errors easily occur, it is clear that  QEC is of vital importance to the practical realization of quantum computing and other quantum technologies.  Interestingly, besides the enduring efforts on the study of QEC and quantum codes for quantum information processing purposes,  in recent years, they are also found to play fundamental roles in many important physical scenarios  in e.g., holographic quantum gravity \cite{almheiri2015bulk,pastawski2015holographic} and many-body physics \cite{Kitaev2003,zeng2019quantum,brandao2019quantum,PhysRevLett.120.200503}, and have consequently drawn great interest in physics.

When considering the practical implementation  of QEC as well as its connections to physical problems, it is important to take symmetries and conservation laws into account as they are ubiquitous in physical systems.  More explicitly, symmetries may constrain the encoders in the way that they must be covariant with respect to the symmetry group, i.e., commute with certain representations of group actions, generating the so-called \emph{covariant codes} \cite{hayden2017error,faist2019continuous,woods2019continuous,wang2019quasi}.
A principle of fundamental significance in both quantum information and physics is that (finite-dimensional) covariant codes for continuous symmetries (mathematically modeled by Lie groups)\footnote{In what follows, we assume that the associated symmetry group is  continuous and that the relevant Hilbert spaces are finite-dimensional when using the term ``covariant codes''.} are in a sense fundamentally 
incompatible with exact QEC~\cite{hayden2017error,eastin2009restrictions}.
A well known no-go theorem that unfolds this principle from a quantum computation perspective is the Eastin--Knill theorem \cite{eastin2009restrictions}, which indicates that any  QEC code covariant with respect to any continuous symmetry group in the sense that the logical group actions are mapped to \emph{transversal} physical actions (that are tensor products on physical subsystems) cannot correct {local} errors perfectly.  An intuitive explanation of this phenomenon is that, due to the conservation laws and transversality, physical subsystems necessarily contain logical charge information that gets leaked into the environment upon errors, so that the perfect recovery of logical information is prohibited.
 Crucially, transversal actions are highly desirable for the ``fault tolerance''~\cite{shor1996fault,nielsen2002quantum,gottesman2010introduction,lidar2013quantum} of practical quantum computation schemes because they do not spread errors across physical subsystems within each code block. 
It is also worth noting that the transversality
property is widely important in physics as a fundamental feature of internal symmetries in many-body scenarios.  More specifically, they are normally generated by sums of disjoint local charge observables, or in particular, on-site (transversal with respect to sites).  Note that whether the symmetries are on-site is linked to whether they can be gauged or are anomaly-free, which plays important roles in the physics of quantum many-body systems and field theories \cite{Wen13}.  In AdS/CFT, transversality also plays fundamental roles \cite{harlow2018symmetries,PhysRevLett.117.021601,MaySorceYoshida}.

Due to the Eastin--Knill theorem, unfortunately, it is impossible to find an exact QEC code that implements a universal set of gates transversally, or namely achieves the full power of quantum computation while maintaining transversality.   However, it may still be feasible to perform QEC approximately under these constraints, and a natural task is then to characterize the optimal degree of accuracy.  Recently, several such  bounds on the QEC accuracy achievable by covariant codes (which give rise to ``robust'' or ``approximate'' versions of the Eastin--Knill theorem) as well as explicit constructions of near-optimal covariant codes are found using many different techniques and insights from various areas in quantum information \cite{faist2019continuous,woods2019continuous,wang2019quasi,kubica2020using,zhou2020new,yang2020covariant,fang2020no,tajima2021symmetry,wang2021theory,KongLiu21:random}, showcasing the fundamental nature of the problem.  
Remarkably, covariant codes have also found interesting applications to several important areas in physics already, including quantum many-body physics \cite{brandao2019quantum,PhysRevLett.120.200503,wang2019quasi}, AdS/CFT correspondence \cite{harlow2018symmetries,harlow2018constraints,kohler2019toy,faist2019continuous,woods2019continuous}, and quantum information \cite{hayden2017error,woods2019continuous,KongLiu21:random}.

These existing studies on covariant codes are mostly concerned with the precision of QEC under exact symmetry conditions.
Indeed, when symmetry principles arise, they are exactly respected by default.  However, especially for continuous symmetries, it is often important or even necessary to consider  approximate forms of symmetries or conservation laws in physical and practical scenarios. First of all, realistic quantum many-body systems are often dirty or defective so that the exact symmetry conditions and conservation laws could generally be violated to a certain extent. 
Furthermore, there are many important situations in physics where non-exact symmetries need to be considered for fundamental reasons. There are various symmetry breaking mechanisms that play key roles in wide-ranging physical scenarios including spontaneous symmetry breaking, anomalies, and non-renormalizable effects \cite{symmetry-breaking}. In particle physics, many important symmetries are known to be only approximate \cite{Witten2018}.   
More notably,  it has long been believed that global
symmetries cannot be exact in a unified theory of quantum
mechanics and gravity \cite{Misner1957,Giddings1988,PhysRevD.52.912,Arkani_Hamed_2007,BanksSeiberg11,Witten2018} (justified in more concrete terms in AdS/CFT \cite{harlow2018constraints,harlow2018symmetries}).  
Considering the need for large quantum systems to boost the advantages of quantum technologies and also the broad connections between QEC and physics, it would be important and fruitful to have a quantitative theory of QEC codes with approximate symmetries, or \emph{approximately covariant} codes. 
For example,  given that the QEC accuracy of exactly covariant codes is limited, one may wonder whether for codes that achieve exact QEC there are ``dual'' bounds on the degree of symmetry or covariance.
It is particularly worth noting that the no-global-symmetry arguments in AdS/CFT indeed have deep connections to covariant codes, and in particular this question  \cite{faist2019continuous,harlow2018symmetries}.
However, our understanding of approximate symmetries, especially characterizations and applications on a quantitative level, is very limited to date.

Our work aims to establish a quantitative theory of the interplay between the degree of continuous symmetries and QEC accuracy, which in particular allows us to understand symmetry violation in exact QEC codes. (Note that our discussion here mainly proceeds in terms of the most fundamental $U(1)$ symmetry which is sufficient to reveal the key phenomena.) 
To this end, we first formally define three different meaningful measures of  symmetry violation,  in terms of the violation of covariance conditions globally over the entire symmetry group or locally at a specific point in the group, and  the violation of charge conservation respectively, which induce corresponding quantitative notions of approximately covariant codes. Our main results are a series of trade-off bounds between QEC accuracy and the above different symmetry measures under a general condition called \emph{Hamiltonian-in-Kraus-span (HKS) condition} which
subsumes transversality in our setup, each of which may suit certain scenarios the best. (For readers' convenience,  we provide in \appref{app:comparison} a table that identifies the key theorems and summarizes their respective strengths and weaknesses.) We introduce two concepts---charge fluctuation and gate implementation error---each providing a framework for analyzing the QEC-symmetry trade-off and could be useful in their own rights.  Furthermore, our derivations feature ideas and techniques from several different fields. More explicitly, various different forms of the trade-off relations are derived by analyzing the ``perturbation'' of the Knill--Laflamme conditions~\cite{knill1997theory,beny2010general}, as well as by leveraging insights and techniques from the fields of quantum metrology~\cite{giovannetti2011advances,zhou2020theory} and quantum resource theory~\cite{chitambar2019quantum,marvian2020coherence,FangLiu19:nogo,fang2020no}. 
Our theory provides a complete understanding of the transition between exact QEC and exact symmetry.
On the exact symmetry end, the previous limits on covariant codes (often referred to as ``approximate Eastin--Knill theorems''~\cite{faist2019continuous,woods2019continuous,kubica2020using,zhou2020new}) are recovered, while the exact QEC end provides new lower bounds on various forms of symmetry violation for the commonly studied exact codes. In particular, we use our symmetry bounds to derive fundamental limitations on the set of transversally implementable logical gates for general QEC codes, which represent a new type of improvement of the Eastin--Knill theorem and 
{apply more broadly than previous results along a similar line about stabilizer codes in Refs.~\cite{zeng2011transversality,bravyi2013classification,pastawski2015fault,anderson2016classification,jochym2018disjointness}, advancing our understanding of fault tolerance.} 
Then, to solidify our general theory, we present case studies on two explicit code constructions, which can be seen as examples of approximately covariant codes that exhibit certain key features, as well as upper bounds (achievability results) that help understand how strong our fundamental limits are.  First, we construct a parametrized code family that interpolates between the two ends of exact QEC and exact symmetry and exhibits a full trade-off between QEC and symmetry, by modifying the so-called thermodynamic code~\cite{brandao2019quantum,faist2019continuous}.  In the second case study, we analyze the quantum Reed--Muller codes which exhibit nice structures and features and, in particular, have been widely applied for the transversal implementation of certain non-Clifford gates and magic state distillation~\cite{bravyi2012magic,anderson2014fault,haah2018codes,hastings2018distillation}.  We find that in both cases the codes can almost saturate the bounds on global covariance and charge conservation (up to constant factors) asymptotically, that is, both the code constructions and bounds are nearly optimal.

Here we present the study in a rigorous and comprehensive manner.  In particular, this work contains all technical details of the derivation, thorough discussions of all different approaches, and many additional results. 

This paper is organized as follows. First, in \secref{sec:def}, we 
review the formalism of QEC and the incompatibility between QEC and continuous symmetries, and also 
 formally define the accuracy of approximate QEC codes as well as the different quantitative charaterizations of approximate continuous symmetries associated with QEC codes that will be considered.  In \secref{sec:global-1} and \secref{sec:global-2}, we introduce the two frameworks based on the notions of charge fluctuation and gate implementation error respectively, under which we discuss a series of different approaches to deriving the trade-off relations between the QEC inaccuracy and the group-global covariance violation. Then in \secref{sec:transversal}, we specifically discuss the application to fault-tolerant quantum computing, deriving general restrictions on the transversally implementable logical gates in QEC codes from the results above.  Afterwards, in \secref{sec:local}, we present our results on the trade-off relations between QEC inaccuracy and group-local symmetry measures including the group-local covariance violation and the charge conservation violation.   After the above discussion of fundamental limits, in \secref{sec:case-study} we  study  the modified thermodynamic code and  quantum Reed--Muller codes, which gives concrete examples of nearly optimal approximately covariant approximate codes in certain cases.   Finally, in \secref{sec:discussion} we summarize our study, and discuss important open problems and future directions.

\textbf{Note that this long paper is a companion paper of Ref.~\cite{short} with extended results and technical details which focuses on the most representative results and the physical motivation behind this study. It was published as the Supplementary Information to Ref.~\cite{short}.}

\section{Approximately covariant approximate QEC codes: Quantitative characterizations of QEC and symmetry}
\label{sec:def}

Here we formally define the quantitative measures of QEC and symmetry that will be used in our study. Specifically, we first overview the notions of QEC and covariant codes and discuss how to quantify the deviation of general quantum codes from them. In particular, we will define the QEC inaccuracy which quantifies the QEC capability of a quantum code under specific noise. We will also define a measure of group-global symmetry which quantifies the approximate covariance of a code over the entire $U(1)$ group and two measures of group-local symmetry which are linked to the covariance of a code at an exact point in $U(1)$. The trade-off relations between QEC and these symmetry measures will be thoroughly studied in later sections. 

Note that in this work, we extensively use distance metrics defined based on the purified distance \cite{gilchrist2005distance,tomamichel2015quantum}, which are well-behaved and commonly used in the quantum information literature.  
Notably, the choice of purified distance directly relates the local covariance violation to the well known quantum Fisher information (QFI)~\cite{helstrom1976quantum,holevo1982probabilistic,hubner1992explicit,sommers2003bures}. 
In principle, one may also consider other metrics. We shall also discuss the situations where one uses the diamond distance~\cite{watrous2018theory}, another standard channel distance measure.

\subsection{Approximate quantum error correction}
\label{sec:def-QEC}

QEC functions by encoding the logical quantum system in some quantum code living in a larger physical system with redundancy, so that a limited number of errors can be corrected to recover the original logical information. A quantum code is defined by an encoding quantum channel $\mE_{\LtoS}$ from a logical system $L$ to a  physical system $S$, and it perfectly protects the logical information against a physical noise $\mN_{S}$ if and only if there exists a recovery channel $\mR_{\LtoS}$ such that 
\begin{equation}
\mR_{\StoL} \circ \mN_{S} \circ \mE_{\LtoS} = \id_L. 
\end{equation}
In particular, when $\mE_{\LtoS}$ is isometric, $\mN_{S}(\cdot) = \sum_{i=1}^r K_{S,i}(\cdot)K_{S,i}^\dagger$ and $\Pi$ is the projection onto the code subspace in the physical system, such a recovery channel exists if and only if the Knill--Laflamme (KL) conditions, $\forall i,j,\,\Pi K_{S,i}^\dagger K_{S,j} \Pi \propto \Pi$~\cite{knill1997theory}, hold. 

In many scenarios, a quantum code is still useful in protecting quantum information when it only achieves approximate QEC, namely, $\mR_{\StoL} \circ \mN_{S} \circ \mE_{\LtoS}$ is close to but not exactly equal to $\id_L$. To characterize the inaccuracy of an approximate QEC code, we will use the channel fidelity and the Choi channel fidelity,  defined by 
\begin{gather}
f(\Phi_1,\Phi_2) = \min_{\rho} f\left((\Phi_1\otimes \id)(\rho),(\Phi_2\otimes \id)(\rho)\right),\\
\overline{f}(\Phi_1,\Phi_2) = f\left((\Phi_1\otimes \id)(\Psi),(\Phi_2\otimes \id)(\Psi)\right),
\end{gather}
respectively,
where $f(\rho,\sigma) = \trace(\sqrt{\rho^{1/2}\sigma\rho^{1/2}})$ is the fidelity of quantum states, $\ket{\Psi} = \frac{1}{\sqrt{d}}\sum_{i=1}^{d}\ket{i}\ket{i}$ is the maximally entangled state and $\Psi = \ket{\Psi}\bra{\Psi}$. Here the inputs $\rho$ and $\Psi$ lie in a bipartite system consisting of the original system $\Phi_{1,2}$ acting on and a reference system as large as the original. Correspondingly, one can define the purified distance of states $P(\rho,\sigma) = \sqrt{1 - f(\rho,\sigma)^2}$, the purified distance of channels $P(\Phi_1,\Phi_2) = \sqrt{1 - f(\Phi_1,\Phi_2)^2}$ and the Choi purified distance of channels $\barP(\Phi_1,\Phi_2) = \sqrt{1 - \overline{f}(\Phi_1,\Phi_2)^2}$~\cite{gilchrist2005distance,tomamichel2015quantum,LiuWinter19}. 
The  \emph{(worst-case) QEC inaccuracy} and the \emph{Choi QEC inaccuracy} for approximate QEC codes are then defined as 
\begin{gather}
\epsilon(\mN_S,\mE_{\LtoS}) := \min_{\mR_{\StoL}} P(\mR_{\StoL}\circ\mN_S \circ \mE_{\LtoS},\id_L),\\
\bepsilon(\mN_S,\mE_{\LtoS}) := \min_{\mR_{\StoL}} \barP(\mR_{\StoL}\circ\mN_S \circ \mE_{\LtoS},\id_L), 
\end{gather}
respectively.
The Choi inaccuracy reflects the average-case behavior in the sense that $\bepsilon = \sqrt{\frac{d_L+1}{d_L}}\epsilon_\avg$~\cite{horodecki1999general,nielsen2002simple}, where \sloppy $\epsilon_\avg := \sqrt{{1 - \max_{\mR_{\StoL}}\int d\psi f^2(\psi,\mR_{\StoL}\circ \mN_S \circ \mE_{\LtoS}(\psi))}}$ in which the integral is over the Haar random pure logical states. For simplicity, we will not explicitly write down the arguments of $\epsilon$ or $\bepsilon$ (and of many other measures defined later) when they are unambiguous.

In the above, we used the channel purified distances as channel distance measures. 
As mentioned, we may also consider the  the diamond distance $D_\diamond(\Phi_1,\Phi_2)$ induced by the diamond norm of channels~\cite{diamond,watrous2018theory}:
\begin{align}
D_\diamond(\Phi_1,\Phi_2):=\,& \max_\rho \frac{1}{2} \norm{(\Phi_1\otimes \id)(\rho)-(\Phi_2\otimes \id)(\rho)}_1\\
=\,& \max_{\ket{\psi}} \frac{1}{2} \norm{(\Phi_1\otimes \id)(\ket{\psi}\bra{\psi})-(\Phi_2\otimes \id)(\ket{\psi}\bra{\psi})}_1. 
\end{align}
where $\norm{\cdot}_1$ is the nuclear (trace) norm. 
Naturally,  the diamond distance version of QEC inaccuracy is defined as
\begin{equation}
\epsilon_\diamond := \min_{\mR_{\StoL}} D_\diamond(\mR_{\StoL}\circ\mN_S \circ \mE_{\LtoS},\id_L).     
\end{equation}
It is easy to see that lower bounds on $\epsilon$ (that we derive below) directly indicate lower bounds on  $\epsilon_\diamond$.
According to the Fuchs--van de Graaf inequality $1 - f(\rho,\sigma) \leq \frac{1}{2}\norm{\rho - \sigma}_1$~\cite{fuchs1999cryptographic}, we have
\begin{align}
\label{eq:fuchs}
D_\diamond(\Phi_1,\Phi_2) \geq 1 - f(\Phi_1,\Phi_2)
\geq \frac{1}{2}P(\Phi_1,\Phi_2)^2. 
\end{align}
In the case of our interest where the second channel is the identity, the above inequality can be further improved using $1 - f(\rho,\ket{\psi}\bra{\psi})^2 \leq \frac{1}{2}\norm{\rho - \ket{\psi}\bra{\psi}}_1$:
\begin{align}
D_\diamond(\Phi_1,\id) \geq 1 - f(\Phi_1,\id)^2 = P(\Phi_1,\id)^2. 
\end{align}
Therefore, $\epsilon_\diamond \geq \epsilon^2$.

\subsection{Measuring approximate symmetries of QEC codes}
\label{sec:def-symmetry}

Symmetries of quantum codes manifest themselves in the covariance of the encoder with respect to symmetry transformations.
For the case of current interest, the symmetry transformations on the logical and physical systems are, respectively, $U_{L,\theta} = e^{-iH_L\theta}$ generated by a logical Hamiltonian (charge observable) $H_L$, and $U_{S,\theta} = e^{-iH_S\theta}$ generated by a physical Hamiltonian (charge observable) $H_S$\footnote{Here $H_L$ and $H_S$ are generators of $U(1)$ representations, or ``charge observables'', and should not be confused with the intrinsic system Hamiltonians governing the system dynamics.}, both representations of the $U(1)$ Lie group periodic with a common period $\tau$.   
The transversality property of symmetry transformations (gate actions) corresponds to the 1-local form of $H_S$, namely, $H_S=\sum_{l=1}^n H_{S_l}$ where each term $H_{S_l}$ acts locally on physical subsystem $S_l$.
We say a quantum code is covariant (with respect to such $U(1)$ representations given by $H_L$ and $H_S$), if
\begin{equation}
\label{eq:covariant}
\mU_{S,\theta} \circ \mE_{\LtoS,\theta} = \mE_{\LtoS,\theta} \circ \mU_{L,\theta},\quad \forall \theta \in \bR. 
\end{equation}
The definitions of covariant codes can be easily extended to general compact Lie groups \cite{faist2019continuous,woods2019continuous}. We also assume $H_L$ and $H_S$ to be both non-trivial, i.e., not a constant operator. Note that applying constant shifts on $H_L$ and $H_S$ do not change the definition of \eqref{eq:covariant} and we will often use this property  below.  

As mentioned, the covariance of quantum codes is often incompatible with their error-correcting properties and approximate notions of covariance may play important roles in wide-ranging  scenarios. For example, here the Eastin--Knill theorem indicates that codes that can perfectly correct local noise cannot  simultaneously be covariant with respect to non-trivial 1-local $H_S$ \cite{eastin2009restrictions}. More generally, exact QEC is known to be incompatible with exact covariance as long as 
\begin{equation}
H_S \in {\rm span}\{K_{S,i}^\dagger K_{S,j},\,\forall i,j\},    
\end{equation} 
which we refer to as the \emph{Hamiltonian-in-Kraus-span (HKS) condition}, holds~\cite{zhou2020new,zhou2020theory}. 
The HKS condition   
holds for many typical scenarios, including the one mentioned above where $\mN_S$ represents single-erasure noise {(where one subsystem chosen uniformly at random is erased)} and $H_S$ is 1-local. When the HKS condition does not hold, examples of exactly covariant QEC codes exist, e.g., when $\mN_{S} = \id$ (noiseless dynamics), when $H_S$ is a Pauli-X operator and $\mN_S$ is dephasing noise~\cite{kessler2014quantum,arrad2014increasing}, and when $\mN_S$ is single-erasure noise but $H_S$ is 2-local~\cite{gottesman2016quantum}. We shall assume that the HKS condition holds for the quantum codes considered in our work. We also emphasize that there exist examples of exact QEC codes covariant with respect to discrete symmetry groups~\cite{hayden2017error}, so the assumption of continuous groups is important.

Besides quantum computation, approximately symmetries and covariant codes are potentially useful in quantum gravity and condensed matter physics, as discussed in the main text.  To formally characterize and study approximate covariance, an important first step is to find reasonable ways to quantify it.  We now do so.

\subsubsection{Group-global covariance violation}
\label{sec:group-global}

The first, most important type of measure is based on the global covariance violation over the entire symmetry group.   Codes that are approximately covariant with respect to  $H_L$ and $H_S$ in such a global sense should have small covariance violation for all $\theta$.  We define the \emph{group-global\footnote{We shall refer to ``group-global'' and ``group-local'' as ``global'' and ``local'', respectively,  for simplicity, as is common in e.g.~estimation theory after their definitions. They should not be confused with the geometric notions commonly used in physical contexts.} covariance violation} and the \emph{Choi group-global covariance violation} by 
\begin{gather}
\delta_\group :=  \max_\theta P(\mU_{S,\theta}\circ\mE_{\LtoS},\mE_{\LtoS}\circ\mU_{L,\theta}),\\
\bdelta_\group := \max_\theta \barP(\mU_{S,\theta}\circ\mE_{\LtoS},\mE_{\LtoS}\circ\mU_{L,\theta}), 
\end{gather}
respectively.
Intuitively, they measure the maximum deviation of the encoding channel $\mE_{\LtoS}$ from the exact covariance condition \eqref{eq:covariant} in the entire symmetry group.
It is known that $\delta_\group$ and $\epsilon$ cannot be simultaneously zero in non-trivial situations, and previous works~\cite{faist2019continuous,woods2019continuous,wang2019quasi,kubica2020using,zhou2020new,yang2020covariant,tajima2021symmetry} mostly focus on deriving lower bounds on $\epsilon$ for exactly covariant codes ($\delta_\group = 0$). We will present bounds that involve $\delta_\group$ which reveal the trade-off between QEC and global covariance, derived via two notions we introduce called the charge fluctuation and gate implementation error. This extends the scope of previous consideration to general codes including exact QEC codes.

Similar to the case of QEC inaccuracy, we can also consider the diamond distance and define 
\begin{equation}
\delta_{\group,\diamond} :=  \max_\theta D_\diamond(\mU_{S,\theta}\circ\mE_{\LtoS},\mE_{\LtoS}\circ\mU_{L,\theta}).     
\end{equation}
Again, lower bounds on $\delta_{\group,\diamond}$ that we derive below directly indicate lower bounds on $\delta_{\group}$. 
Using \eqref{eq:fuchs}, we directly see that $\delta_{\group,\diamond} \geq \delta_{\group}^2/2$.  In particular, when $\mE_{\LtoS}$ is isometric, we have $\delta_{\group,\diamond} = \delta_{\group}$, using the fact that $\frac{1}{2}\norm{\ket{\psi_1}\bra{\psi_1} - \ket{\psi_2}\bra{\psi_2}}_1 = P(\ket{\psi_1}\bra{\psi_1},\ket{\psi_2}\bra{\psi_2})$.

\subsubsection{Group-local (point) covariance violation}

One may wonder if the incompatibility between QEC and continuous symmetries can be relieved when we relax the requirement from exact global covariance to exact local covariance, i.e., when we require only the code covariance for $\theta$ inside a small neighborhood of a point $\theta_0$, satisfying $\mU_{S,\theta_0} \circ \mE_{\LtoS} = \mE_{\LtoS} \circ \mU_{L,\theta_0}$. 
Unfortunately, the no-go results also extend to the local case, meaning that a non-trivial QEC code cannot be exactly covariant even in an arbitrarily small neighborhood of $\theta_0$. Without loss of generality, we assume $\theta_0 = 0$ because we can always redefine $\mE_{\LtoS} \circ \mU_{L,\theta_0}$ to be the new encoding channel such that the code is covariant at $\theta = 0$. To characterize the local covariance, we introduce the \emph{group-local (point) covariance violation} defined by 
\begin{align}
\delta_\point
:= \sqrt{{2\partial_\theta^2 P(\mU_{S,\theta}\circ\mE_{\LtoS}\circ\mU_{L,\theta}^\dagger,\mE_{\LtoS})^2\big|_{\theta = 0}}} = \sqrt{F(\mU_{S,\theta}\circ\mE_{\LtoS}\circ\mU_{L,\theta}^\dagger)\big|_{\theta = 0}}.
\end{align}
Here $F(\Phi_\theta)$ is the quantum Fisher information (QFI) defined using the second order derivative of the purified distance $F(\Phi_\theta) = 2\partial_{\theta'}^2 P(\Phi_\theta,\Phi_{\theta'})^2\big|_{\theta' = \theta}$ which characterizes the amount of information of $\theta$ one can extract from $\Phi_\theta$ around point $\theta$~\cite{fujiwara2008fibre}. Correspondingly, the QFI of quantum states is defined by $F(\rho_\theta) = 2\partial_{\theta'}^2 P(\rho_\theta,\rho_{\theta'})^2\big|_{\theta' = \theta}$~\cite{hubner1992explicit,braunstein1994statistical} which characterizes the amount of information of $\theta$ one can extract from $\rho_\theta$ around point $\theta$ and we have $F(\Phi_\theta) = \max_\rho F((\Phi_\theta\otimes\id)(\rho))$.  Note that the QFI defined here using the purified distance is usually called the SLD QFI and there are other types of QFIs, e.g., the RLD QFI~\cite{yuen1973multiple,hayashi2011comparison,katariya2020geometric} which we will encounter later in \secref{sec:global-2-res}. 
When $\delta_\point = 0$, the code is locally covariant up to the lowest order of $d\theta$. We shall see later that for any $\delta_\point < \Delta H_L$ (we will use $\Delta(\cdot)$ to denote the difference between the maximum and minimum eigenvalues of $(\cdot)$), there is a non-trivial lower bound on $\epsilon$, leading to a trade-off relation between QEC and local covariance.

\subsubsection{Charge conservation violation}

The correspondence between symmetries and conservation laws is a landmark result of modern physics. Inspired by this correspondence, we can define another intuitive measure of the symmetry violation by the degree of charge deviation.
It can be shown that for an isometric encoding channel $\mE_{\LtoS}$ the covariance condition \eqref{eq:covariant} is equivalent to 
\begin{equation}
\label{eq:isometric-charge-relation}
    (\mE_{\LtoS})^\dagger(H_S) = H_L-\nu\id, 
\end{equation} for some $\nu\in\mathbb{R}$, 
where $\mE^\dagger$ is the dual channel of $\mE$ satisfying $\trace(H\mE(\rho)) = \trace(\mE^\dagger(H)\rho)$ for any $H$ and $\rho$. Since $H_S$ and $H_L$ represent the charge observables in the physical and logical systems, \eqref{eq:isometric-charge-relation} implies that the eigenstates of $H_L$ are mapped to the corresponding eigenstates of $H_S$ after the encoding operation~\cite{faist2019continuous}, indicating the charge conservation nature of the encoding map.
The charge conservation law can also be understood through the relation $\trace(H_S \mE_{\LtoS}(\rho)) = \trace(H_L\rho)-\nu$ for any $\rho$, where $\nu$ represents a universal constant offset in the charge.
To measure the degree to which the charge conservation law is violated, 
we consider the following quantity which we call the \emph{charge conservation violation} (also defined in Ref.~\cite{faist2019continuous}): 
\begin{equation}
\delta_\charge
:= \Delta\left(H_L - (\mE_{\LtoS})^\dagger(H_S)\right). 
\end{equation}
Note again that $\Delta(\cdot)$ denotes the difference between the maximum and minimum eigenvalues of $(\cdot)$. {It can be easily verified that $\delta_\charge/2$ is equal to the difference between physical and logical charges, formally given by $\min_{\nu\in\bR} \max_\rho| \trace(H_S\mE_{\LtoS}(\rho))-\trace ((H_L-\nu \id)\rho)|$ (a constant offset on the definitions of charges is allowed).} 
For general CPTP encoding maps, $\delta_\charge$ is not always zero for exactly covariant codes~\cite{cirstoiu2020robustness}, unlike $\delta_\group$ and $\delta_\point$. However, for isometric encoding we always have the following relation between $\delta_\point$ and $\delta_\charge$: 
\begin{proposition}
\label{prop:point-charge}
When $\mE_{\LtoS}$ is isometric, $\delta_\point \geq \delta_\charge$. 
\end{proposition}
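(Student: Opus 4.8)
\emph{Proof plan.}
The plan is to use that, for an isometric encoding $\mE_{\LtoS}(\cdot)=V(\cdot)V^\dagger$ with $V^\dagger V=\id$, the one-parameter family $\Phi_\theta := \mU_{S,\theta}\circ\mE_{\LtoS}\circ\mU_{L,\theta}^\dagger$ is again isometric, namely $\Phi_\theta(\cdot)=W_\theta(\cdot)W_\theta^\dagger$ with $W_\theta = e^{-iH_S\theta}\,V\,e^{iH_L\theta}$ (one checks $W_\theta^\dagger W_\theta=\id$ directly). Hence $\delta_\point^2 = F(\Phi_\theta)\big|_{\theta=0}$ is the SLD channel QFI of a pure-output family, which can be written explicitly in terms of the generator. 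First I would compute $\dot W_\theta\big|_{\theta=0} = -iG$ with $G := H_S V - V H_L$, and record the two identities $G^\dagger G = V^\dagger H_S^2 V - A H_L - H_L A + H_L^2$ and $V^\dagger G = V^\dagger H_S V - H_L = -\tilde H$, where $A := V^\dagger H_S V = \mE_{\StoL}^\dagger(H_S)$ and $\tilde H := H_L - \mE_{\StoL}^\dagger(H_S)$ is precisely the operator whose spread is $\delta_\charge = \Delta(\tilde H)$.

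Next I would lower-bound the channel QFI by evaluating the pure-state QFI on a conveniently chosen logical input, with the reference system taken trivial. For $|v\rangle\in\mathcal{H}_L$, the output family $W_\theta|v\rangle\langle v|W_\theta^\dagger$ has pure-state QFI $4\big(\langle v|G^\dagger G|v\rangle - |\langle v|V^\dagger G|v\rangle|^2\big) = 4\big(\langle v|G^\dagger G|v\rangle - \langle v|\tilde H|v\rangle^2\big)$ at $\theta=0$, using $|\psi_0\rangle=V|v\rangle$ and $|\dot\psi_0\rangle=-iG|v\rangle$. The crucial operator inequality is $G^\dagger G - \tilde H^2 = V^\dagger H_S^2 V - A^2 = V^\dagger H_S(\id - VV^\dagger)H_S V \ge 0$, valid because $VV^\dagger$ is an orthogonal projector (equivalently, operator convexity / the Kadison--Schwarz inequality for the unital CP map $X\mapsto V^\dagger X V$). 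Therefore the QFI at input $|v\rangle$ is at least $4\big(\langle v|\tilde H^2|v\rangle - \langle v|\tilde H|v\rangle^2\big) = 4\,\mathrm{Var}_{|v\rangle}(\tilde H)$. Choosing $|v\rangle$ to be the normalized equal superposition of eigenvectors of $\tilde H$ for its largest and smallest eigenvalues gives $4\,\mathrm{Var}_{|v\rangle}(\tilde H) = \Delta(\tilde H)^2 = \delta_\charge^2$, and since $\delta_\point^2 = F(\Phi_\theta)\big|_{\theta=0}$ dominates this value we conclude $\delta_\point \ge \delta_\charge$.

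The only substantive ingredient, beyond bookkeeping, is the operator inequality $V^\dagger H_S^2 V \ge (V^\dagger H_S V)^2$ together with the observation that for the direction we need it suffices to bound the channel QFI below by the pure-state QFI at one well-chosen input — no optimization over entangled inputs or over the residual global phase is required. I expect the main thing to watch is notational: confirming that ``$\mE_{\StoL}$ isometric'' means $\mE_{\StoL}^\dagger(H_S)=V^\dagger H_S V$, and that the standard pure-state QFI formula $F = 4\big(\langle\dot\psi|\dot\psi\rangle - |\langle\psi|\dot\psi\rangle|^2\big)$ applies verbatim here. (Deciding instead whether $\delta_\point$ can strictly exceed $\delta_\charge$ would require the Fujiwara--Imai characterization of $F(\Phi_\theta)$ as a minimization over a phase $\alpha$ of $4\lambda_{\max}\big((\dot W_\theta + i\alpha W_\theta)^\dagger(\dot W_\theta + i\alpha W_\theta)\big)$, which is not needed for this statement.)
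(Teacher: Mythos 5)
Your proof is correct and is essentially the paper's own argument. You both lower-bound $\delta_\point^2$ by the pure-state QFI at a well-chosen logical input $W|v\rangle$ and then invoke the same key inequality — you phrase it as the Kadison--Schwarz-type fact $V^\dagger H_S^2 V \geq (V^\dagger H_S V)^2$, while the paper inserts $WW^\dagger \leq \id$ into $\langle\partial_\theta\psi|\partial_\theta\psi\rangle$, which is the identical step — arriving at $F \geq 4\,\mathrm{Var}_{|v\rangle}(\tilde H)$, maximized to $\Delta(\tilde H)^2 = \delta_\charge^2$.
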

\begin{proof}
Suppose $\mE_{\LtoS}(\cdot) = W(\cdot)W^\dagger$ where $W$ is isometric. Then 
\begin{align}
    \delta_\charge = \Delta\left({H_L - W^\dagger H_S W}\right),
\end{align}
 and  
\begin{align}
    (\delta_\point)^2 = F(\mU_{S,\theta}\circ\mE_{\LtoS}\circ\mU_{L,\theta}^\dagger)\big|_{\theta = 0} = \max_{\ket{\psi}} F(U_{S,\theta} W U_{L,\theta}^\dagger \ket{\psi})\big|_{\theta = 0}.
\end{align}
Let 
\begin{align}
\ket{\psi_\theta} = U_{S,\theta} W U_{L,\theta}^\dagger \ket{\psi} = e^{-iH_S\theta}We^{iH_L\theta}\ket{\psi}.
\end{align}
Then 
\begin{align}
    \ket{\partial_\theta\psi_\theta} \!=\! e^{-iH_S\theta}We^{iH_L\theta}iH_L\ket{\psi} 
\!-\! iH_S e^{-iH_S\theta}We^{iH_L\theta}\ket{\psi}\!. 
\end{align}
The QFI for pure states is given by
$F(\ket{\psi_\theta}) = 4(\braket{\partial_\theta\psi_\theta|\partial_\theta\psi_\theta} - \abs{\braket{\partial_\theta\psi_\theta|\psi_\theta}}^2 )$~\cite{braunstein1994statistical}.  Since 
\begin{gather}
\begin{split}
\braket{\partial_\theta\psi_\theta|\partial_\theta\psi_\theta}\big|_{\theta = 0} &= \bra{\psi}(W H_L - H_S W)^\dagger  (W H_L - H_S W)\ket{\psi}\\
&\geq \bra{\psi}(W H_L - H_S W)^\dagger WW^\dagger (W H_L - H_S W)\ket{\psi},  
\end{split}\\
\braket{\psi_\theta|\partial_\theta\psi_\theta}\big|_{\theta = 0}
= \bra{\psi}(H_L - W^\dagger H_S W)\ket{\psi}, 
\end{gather}
we have 
\begin{align}
(\delta_\point)^2 =& \max_{\psi} F(\ket{\psi_\theta})\big|_{\theta = 0} \\
\geq& \max_{\psi}  4\bra{\psi}( H_L - W^\dagger H_S W)^\dagger   (H_L - W^\dagger H_S W)\ket{\psi} - 4\abs{\bra{\psi}(H_L - W^\dagger H_S W)\ket{\psi}}^2 = (\delta_\charge)^2,
\end{align}
proving the result. 
\end{proof}
We shall see later that, similar to the situation of local covariance violation, for any $\delta_\charge < \Delta H_L$, there is a non-trivial lower bound on $\epsilon$. We refer to both $\delta_\point$ and $\delta_\charge$ as local symmetry measures because their values only depend on the approximate covariance of a code in the neighborhood of the point $\theta = 0$. Note that both $\delta_\point$ and $\delta_\charge$ have the same unit as the charges while $\delta_\group$ is dimensionless, i.e., after replacing $H_S$ and $H_L$ with $c H_S$ and $c H_L$ for some constant $c$, both $\delta_\point$ and $\delta_\charge$ are changed to $c \delta_\point$ and $c \delta_\charge$ while $\delta_\group$ is unchanged.

\subsubsection{Remarks on non-compact groups and infinite-dimensional codes}

In the above discussion, we assumed compact Lie groups and finite-dimensional quantum codes. Here we remark on possible extensions to non-compact Lie groups and infinite-dimensional codes. 

First, note that our definitions of $\delta_\point$ and $\delta_\charge$ can be naturally extended to the situations of non-compact groups where  $H_S$ and $H_L$ are arbitrary finite-dimensional Hermitian operators but the group transformations are not periodic, because their definitions only depend on the local geometry of the symmetry group.  For the global measure $\delta_\group$, we need to assume compact Lie groups, i.e.,~the physical and logical group transformations are both periodic with a common period.

Moreover, when the physical system $S$ is infinite-dimensional, one may naturally consider some  finite-dimensional truncation $\tilde H_S$ of $H_S$. The trade-off relations we derive below hold for $\tilde H_S$ and $H_L$, so when the truncation is suitably chosen we can still obtain nontrivial results that well indicate the behaviors of $H_S$. For example, when $\|(\mE_{\LtoS})^\dagger(H_S) - (\mE_{\LtoS})^\dagger(\tilde H_S)\|$ ($\norm{\cdot}$ is the spectral norm) is small, $\tilde H_S$ is a good substitute for $H_S$ in terms of the charge conservation violation~\cite{faist2019continuous}.

\section{Trade-off between QEC and global covariance: Charge fluctuation approach}
\label{sec:global-1}

In this section, we derive trade-off relations between the QEC inaccuracy $\epsilon$ and the global covariance violation $\delta_\group$ by connecting them to {a quantity which we call} the charge fluctuation $\chi$ (note that this notion is distinct from the charge conservation violation although they are in some way related  as will be discussed).  Our approach essentially proceed in two steps. First, we connect $\delta_\group$ and $\chi$ by providing a lower bound on $\delta_\group$ which depends on $\chi$.  Then we prove upper bounds on $\abs{\chi}$ in terms of the QEC inaccuracy $\epsilon$ using two different methods. The first one is based on analyzing the deviation of the approximate QEC code from the the KL conditions, which we call the KL-based method, and the second one is based on treating the problem as a channel estimation problem and using quantum metrology techniques.  These methods eventually lead to two types of trade-off bounds between QEC and global covariance. Note that we assume quantum codes are isometric throughout this section unless stated otherwise.  

\subsection{Bounding global covariance violation by charge fluctuation}
\label{sec:global-1-charge}

Consider a code defined by encoding isometry $\mE_{\LtoS}$.  We start by considering the situation where the code achieves exact QEC under the noise channel $\mN_{S}(\cdot) = \sum_i K_{S,i}(\cdot)K_{S,i}^\dagger$. 
According to the KL conditions,
\begin{equation}
\label{eq:KL}
\Pi K_{S,i}^\dagger K_{S,j} \Pi \propto \Pi, 
\end{equation}
where $\Pi$ is the projection onto the code subspace. 
In particular, let $\ket{0_L}$ and $\ket{1_L}$ be eigenstates corresponding to the largest and the smallest eigenvalues of $H_L$. (We do not specify the exact choices of $\ket{0_L}$ and $\ket{1_L}$ even when $H_L$ is degenerate, as long as they correspond to the largest and smallest eigenvalues, respectively.) Using \eqref{eq:KL}, we have 
\begin{equation}
\bra{0_L}(\mE_{\LtoS})^\dagger(K_{S,i}^\dagger K_{S,j})\ket{0_L} =
\bra{1_L}(\mE_{\LtoS})^\dagger(K_{S,i}^\dagger K_{S,j})\ket{1_L}. 
\end{equation}
Using the HKS condition $H_S \in {\rm span}\{K_{S,i}^\dagger K_{S,j},\,\forall i,j\}$, 
we must also have 
\begin{equation}
\label{eq:KL-H}
\bra{0_L}(\mE_{\LtoS})^\dagger(H_S)\ket{0_L} =
\bra{1_L}(\mE_{\LtoS})^\dagger(H_S)\ket{1_L}. 
\end{equation}
The incompatibility between QEC and symmetry could be understood through the incompatibility between \eqref{eq:KL-H} and \eqref{eq:isometric-charge-relation}.  \eqref{eq:KL-H} implies that $\braket{0_L|H_L|0_L} - \braket{1_L|H_L|1_L} = 0$ when the code achieves exact QEC, which contradicts with $\braket{0_L|H_L|0_L} - \braket{1_L|H_L|1_L} = \Delta H_L$ for exactly covariant codes from \eqref{eq:isometric-charge-relation}, implying the non-existence of exact QEC codes with exact covariance.

For general codes, we define the \emph{charge fluctuation}:
\begin{equation}
\chi := \bra{0_L}(\mE_{\LtoS})^\dagger(H_S)\ket{0_L} - 
\bra{1_L}(\mE_{\LtoS})^\dagger(H_S)\ket{1_L}, 
\end{equation}
Based on the discussion above, one can see that $\chi$ embodies the transition between exact QEC and exact symmetry quantitatively---when a code is close to being an exactly covariant code, $\chi$ cannot be too far away from $\Delta H_L$, and when a code is close to being an exact QEC code, $\chi$ cannot be too far away from $0$ (see an illustration in \figref{fig:charge}). Thus the trade-off relation between $\epsilon$ and $\delta_\group$ can be derived by connecting $\chi$ to each of them separately.

\begin{figure}[tb]
	\center
	\includegraphics[width=0.35\textwidth]{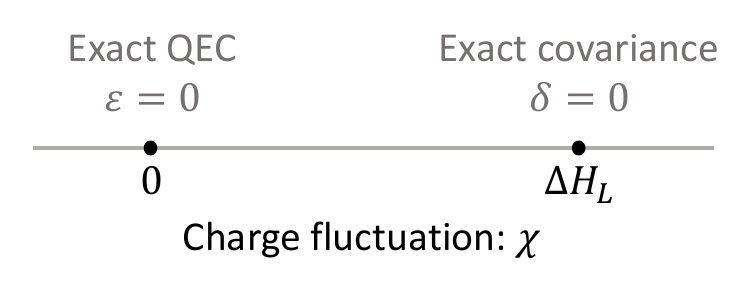}
	\caption{\label{fig:charge} 
	For exact QEC codes ($\epsilon = 0$) which satisfy the HKS condition, the charge fluctuation $\chi = 0$. For exactly covariant codes ($\delta_\group = 0$), the charge fluctuation $\chi = \Delta H_L$. The trade-off can be derived by investigating the relations between the distances of $\chi$ from $\Delta H_L$ and $0$, and the symmetry and QEC measures. 
	}
\end{figure}

We now derive the following lower bound on the global covariance violation $\delta_\group$ in terms of the charge fluctuation $\chi$, which directly connects $\delta_\group$ with $\chi$. 
{Note that, in this paper, ``$\gtrsim$'', ``$\lesssim$'', and ``$\simeq$'' mean ``$\geq$'', ``$\leq$'', and ``$=$'', respectively, up to the leading order.} 
\begin{proposition}
\label{prop:global-charge}
Consider an isometric quantum code defined by $\mE_{\LtoS}$.  Consider  physical Hamiltonian $H_S$,  logical Hamiltonian $H_L$, and noise channel $\mN_S$. Suppose the HKS condition is satisfied. Then when $\abs{\Delta H_L  - \chi} \leq \Delta H_S$, it holds that 
\begin{equation}
\label{eq:delta-charge}
\delta_\group \geq 
\min\left\{\frac{\sqrt{\abs{\Delta H_L - \chi}\left(\Delta H_S - \frac{1}{2}\abs{\Delta H_L - \chi}\right)}}{\Delta H_S},\sqrt{\frac{3}{8}}\right\}, 
\end{equation}
and when $\abs{\Delta H_L  - \chi} > \Delta H_S$, $\delta_\group \geq \sqrt{3/8}$. 
In particular, when $\abs{\Delta H_L - \chi} \ll \Delta H_S$, we have 
\begin{equation}
\delta_\group \gtrsim \sqrt{\frac{\abs{\Delta H_L - \chi}}{\Delta H_S}}.
\end{equation}
\end{proposition}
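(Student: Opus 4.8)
The plan is to exhibit, for a cleverly chosen test input and rotation angle $\theta$, an explicit lower bound on the covariance violation, and then to optimize over $\theta$. Write $\mE_{\StoL}(\cdot)=W(\cdot)W^\dagger$ with $W:L\to S$ an isometry, and let $\ket{0_L},\ket{1_L}$ be eigenvectors of $H_L$ for its largest and smallest eigenvalues $\lambda_{\max},\lambda_{\min}$, so that $\lambda_{\max}-\lambda_{\min}=\Delta H_L$. As observed in the discussion of $\delta_{\group,\diamond}$ above, for isometric $\mE$ the quantity $\delta_\group$ equals the maximum over $\theta$ and over pure inputs $\ket{\Psi}$ on $L\otimes R$ of the purified distance between the two (pure) output states, so for every $\theta$ and every unit vector $\ket{\Psi}\in L\otimes R$,
\[
\delta_\group^2\ \ge\ 1-\bigl|\bra{\Psi}\bigl(W^\dagger e^{iH_S\theta}W\,e^{-iH_L\theta}\bigr)\otimes I_R\ket{\Psi}\bigr|^2 .
\]
I would take $\ket{\Psi}=\tfrac1{\sqrt2}(\ket{0_L}\ket{0_R}+\ket{1_L}\ket{1_R})$; orthogonality of the reference annihilates the off-diagonal terms and leaves $\delta_\group^2\ge 1-|f(\theta)|^2$ with $f(\theta)=\tfrac12\bigl(e^{-i\lambda_{\max}\theta}c_0(\theta)+e^{-i\lambda_{\min}\theta}c_1(\theta)\bigr)$, where $c_j(\theta):=\bra{j_L}W^\dagger e^{iH_S\theta}W\ket{j_L}=\int e^{iE\theta}\,d\mu_j(E)$ and $\mu_j$ is the spectral distribution of $H_S$ in the encoded state $W\ket{j_L}$: a probability measure on $\mathrm{spec}(H_S)$ (an interval of width $\Delta H_S$) with first moment $\int E\,d\mu_j=\bra{j_L}\mE_{\StoL}^\dagger(H_S)\ket{j_L}=:h_j$, so that $\chi=h_0-h_1$.

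Expanding the square gives $|f(\theta)|^2=\tfrac14\bigl(|c_0(\theta)|^2+|c_1(\theta)|^2+2\,g(\theta)\bigr)$ with
\[
g(\theta):=\Re\bigl(e^{i\Delta H_L\theta}\,\overline{c_0(\theta)}\,c_1(\theta)\bigr)=\int\!\!\int\cos\!\bigl((E-E'-\Delta H_L)\theta\bigr)\,d\mu_0(E)\,d\mu_1(E').
\]
The integrand $E-E'-\Delta H_L$ has $\mu_0\times\mu_1$-mean $\chi-\Delta H_L$, of absolute value exactly $|\Delta H_L-\chi|$, while $E-E'$ is confined to an interval of width $\Delta H_S$. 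Using constant shifts of $H_S$ and of $H_L$ (which change none of $\delta_\group,\chi,\Delta H_S,\Delta H_L$) one may normalize the spectrum of $H_S$ and the gauge of $H_L$ as convenient, and periodicity of the common $U(1)$ action then permits restricting $\theta$ to one period. Bounding $|c_j|^2\le1$ already yields $\delta_\group^2\ge\tfrac12\bigl(1-g(\theta)\bigr)$ for every $\theta$; should a sharper bound be needed in some regime, one can additionally invoke the single-state witnesses $\ket{j_L}$, which give $\delta_\group^2\ge 1-|c_j(\theta)|^2$ and are useful precisely when $\mu_j$ is spread out.

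The crux is to choose $\theta$ — I would try $\theta$ of order $\pi/\Delta H_S$, the shortest oscillation scale of $e^{iH_S\theta}$ — so as to make $g(\theta)$ small, with the precise target $g(\theta)\le\bigl(1-|\Delta H_L-\chi|/\Delta H_S\bigr)^2$ when $|\Delta H_L-\chi|\le\tfrac12\Delta H_S$, and $g(\theta)\le\tfrac14$ otherwise. Substituting into $\delta_\group^2\ge\tfrac12(1-g(\theta))$, the first estimate reproduces exactly $\delta_\group\ge\sqrt{|\Delta H_L-\chi|(\Delta H_S-\tfrac12|\Delta H_L-\chi|)}/\Delta H_S$ and the second gives $\delta_\group^2\ge\tfrac38$; together these produce the stated $\min\{\,\cdot\,,\sqrt{3/8}\}$, and in particular the scaling $\delta_\group\gtrsim\sqrt{|\Delta H_L-\chi|/\Delta H_S}$ when $|\Delta H_L-\chi|\ll\Delta H_S$.

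Establishing the displayed bound on $g(\theta)$ is the step I expect to be the main obstacle: it is an anticoncentration-type statement about characteristic functions of probability measures with bounded support and a prescribed first moment. I expect it to require a short case analysis — on the size of $|\Delta H_L-\chi|$ relative to $\Delta H_S$, and (because the phase offset $\Delta H_L\theta$ must be controlled) on whether $\Delta H_L\le\Delta H_S$ — using that after the shift $E-E'$ lies in an interval of width $\Delta H_S$ whose $\mu_0\times\mu_1$-mean differs from $\Delta H_L$ by exactly $\Delta H_L-\chi$; the value at which the sharp bound must be cut off is what produces the $\sqrt{3/8}$ cap. Finally, the remaining regime $|\Delta H_L-\chi|>\Delta H_S$ should follow more cheaply — the same two-state witness with a cruder estimate already suffices (alternatively the witness $\tfrac1{\sqrt2}(\ket{0_L}+\ket{1_L})$ together with the Popoviciu bound $\mathrm{Var}_{W\ket{+_L}}(H_S)\le(\Delta H_S)^2/4$) — again giving a constant lower bound consistent with $\sqrt{3/8}$.
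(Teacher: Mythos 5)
Your setup coincides with the paper's: you use the same two\-state witness $\ket{\Psi}=\tfrac{1}{\sqrt2}(\ket{0_L}\ket{0_R}+\ket{1_L}\ket{1_R})$, reduce to the fidelity $|f(\theta)|$, and aim at the same cap-at-$\sqrt{3/8}$ case split. But there is a factual error and, more importantly, a genuine gap where you defer the actual work.

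\emph{Factual error.} You assert that ``$E-E'$ is confined to an interval of width $\Delta H_S$.'' It is not: with $E,E'$ each ranging over a spectrum of width $\Delta H_S$, the difference $E-E'$ ranges over an interval of width $2\Delta H_S$. Expanding $|f(\theta)|^2$ replaces the \emph{mixture} measure the paper works with (which has support of width $\Delta H_S+\Delta H_L$) by a \emph{product/convolution} measure of width $2\Delta H_S$, and the anticoncentration statement you need then has to be formulated for the wider measure. Your later case analysis references this incorrect width, so it is not merely cosmetic.

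\emph{The gap.} The entire nontrivial content of the proposition is the estimate you label ``the step I expect to be the main obstacle'' and leave unproved: that for some $\theta$ one has $g(\theta)\le(1-x)^2$ with $x=|\Delta H_L-\chi|/\Delta H_S$ (and $g(\theta)\le 1/4$ for $x>1/2$). This is not a routine corollary of boundedness plus a prescribed first moment. The natural first-try — averaging $g$ over one period to get $\min_\theta g(\theta)\le\mathbb{P}[Z=0]$, then using the mean and support to bound $\mathbb{P}[Z=0]\le 1-x/2$ — only yields $\delta_\group^2\ge x/4$, strictly weaker than the required $x(1-x/2)$. Your reduction via $|c_0(\theta)|^2,|c_1(\theta)|^2\le 1$ is arithmetically compatible with the target bound, but it shifts all the difficulty onto a sharper statement about $g$ that you have not established.

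\emph{What the paper does differently.} Rather than expand $|f|^2$ into a double integral over $\mu_0\times\mu_1$, the paper first reduces (via the $U(1)$ periodicity) to integer eigenvalues of $H_S,H_L$, writes $f(\theta)$ as a single characteristic function $\sum_\eta c_\eta e^{-i\eta\theta}$ of the \emph{mixture} $c_\eta=\tfrac12|c^0_{\eta+\Delta H_L}|^2+\tfrac12|c^1_\eta|^2$, and then (i) shows by averaging $\cos((\eta-\eta_*)\theta)$ over a period that there exists $\theta$ making $\sum_{\eta\neq\eta_*}c_\eta e^{-i(\eta-\eta_*)\theta}$ purely imaginary, so that $\min_\theta f_\theta\le\sqrt{c_{\eta_*}^2+(1-c_{\eta_*})^2}$, and (ii) derives $c_{\eta_*}\le 1-|\Delta H_L-\chi|/(2\Delta H_S)$ directly from the first-moment relation $\sum_\eta|c^0_\eta|^2\eta=\sum_\eta|c^1_\eta|^2\eta+\chi$. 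Step (i) is exactly the tool you are missing; it lets one avoid the anticoncentration-for-$g$ statement entirely. If you want to pursue your route you will need to prove the $g$-bound from scratch, and I expect it is at least as hard as the paper's largest-atom argument.

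In short: the witness and the case structure are right, but the claim on $g(\theta)$ is neither trivial nor supplied, and your intermediate assertion about the support width is incorrect. You should either prove the anticoncentration bound you invoke, or switch to the paper's mixture-plus-largest-atom strategy.
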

\begin{proof}
Since $U_{S,\theta}$ and $U_{L,\theta}$ are both periodic with a common period, we assume $H_S$ and $H_L$ both have integer eigenvalues. We also assume the smallest eigenvalue of $H_S$ is zero because constant shifts do not affect the definitions of symmetry measures. 
When $\mE_{\LtoS}(\cdot) = W(\cdot)W^\dagger$ is isometric, let $\ket{\frakc_{0}} = W\ket{0_L}$, $\ket{\frakc_{1}} = W\ket{1_L}$, and write
\begin{align}
\ket{\frakc_0} = \sum_{\eta=0}^{\Delta H_S} c_\eta^0 \ket{\eta^0},\quad 
\ket{\frakc_1} = \sum_{\eta=0}^{\Delta H_S} c_\eta^1 \ket{\eta^1},
\end{align} 
where $\sum_\eta \abs{c_\eta^0}^2 = \sum_\eta \abs{c_\eta^1}^2 = 1$ and $\ket{\eta^0}$ and $\ket{\eta^1}$ are eigenstates of $H_S$ with eigenvalue $\eta$. $\ket{\eta^0}$ and $\ket{\eta^1}$ may not be the same when $H_S$ is degenerate.  Note that when $\eta$ is not an eigenvalue of $H_S$, we simply take $c_\eta^{0} = 0$ (or $c_\eta^{1} = 0$) so that $c_\eta^{0}$ (or $c_\eta^{1}$) is well-defined for any integer $\eta$. 
Let $\ket{\psi} = \frac{1}{\sqrt{2}}(\ket{0_L}\ket{0_R}+\ket{1_L}\ket{1_R})$, where $R$ is a reference system. 
Then the channel fidelity 
\begin{align}
f_\theta :=&\, f(\mU_{S,\theta}\circ \mE_{\LtoS}\circ\mU_{L,\theta}^\dagger,\mE_{\LtoS}) \nonumber\\
=&\, \min_{\rho} f\big((\mU_{S,\theta}\!\circ\! \mE_{\LtoS}\!\circ\!\mU_{L,\theta}^\dagger\!\otimes\!\id_R)(\rho),(\mE_{\LtoS}\!\otimes\!\id_R)(\rho)\big)\nonumber\\
\leq&\, f\big((\mU_{S,\theta}\!\circ\! \mE_{\LtoS}\!\circ\!\mU_{L,\theta}^\dagger\!\otimes\!\id_R)(\ket{\psi}),(\mE_{\LtoS}\!\otimes\!\id_R)(\ket{\psi})\big)\nonumber\\
=&\, \abs{\braket{\psi|W^\dagger  U_{S,\theta} W U_{L,\theta}^{\dagger}|\psi}} \nonumber\\
=&\, \abs{\frac{1}{2}\sum_{\eta=0}^{\Delta H_S} \abs{c_\eta^0}^2 e^{-i\eta\theta+i\Delta H_L\theta} + \frac{1}{2}\sum_{\eta=0}^{\Delta H_S} \abs{c_\eta^1}^2 e^{-i\eta\theta}}\nonumber\\
=&\, \abs{\sum_{\eta=-\Delta H_L}^{\Delta H_S} c_\eta e^{-i\eta\theta}} = \abs{c_{\eta_*} \!+\! \sum_{\eta\neq\eta_*} c_\eta e^{-i(\eta-\eta_*)\theta}},
\end{align}
where we define $c_\eta := \frac{1}{2} \abs{c_{\eta+\Delta H_L}^0}^2+ \frac{1}{2}\abs{c_\eta^1}^2$ for $\eta \in [-\Delta H_L,\Delta H_S]$ and choose $\eta_*$ such that $c_{\eta_*} \geq c_{\eta}$ for all $\eta$. Note that there is always a $\theta$ such that $\sum_{\eta\neq\eta_*} c_\eta \cos((\eta-\eta_*)\theta) = 0$ (because the integration of it from $0$ to $2\pi$ is zero) and that $\sum_{\eta\neq\eta_*} c_\eta e^{-i(\eta-\eta_*)\theta}$ is imaginary. Then we must have 
\begin{equation}
\min_\theta f_\theta \leq \sqrt{c_{\eta_*}^2 + (1 - c_{\eta_*})^2}. 
\end{equation}

To arrive at a non-trivial lower bound on $\delta_\group = \sqrt{1-\min_\theta f_\theta^2}$, we need an upper bound of $\min_\theta f_\theta$ which is smaller than $1$. To this end, we analyze $c_{\eta_*}$  in detail. In particular, we consider two situations: 
\begin{enumerate}[(1),wide, labelindent=0pt]
\item $c_{\eta_*} \leq 1/2$ and a constant upper bound on $\min_\theta f_\theta$ exists. We can always find a subset of $\{\eta\}$ denoted by $\is$ such that $ 1/4 \leq \sum_{\eta\in\is} c_\eta \leq 1/2$. To find such a set, we first include $\eta_*$ in $\is$ and add new elements into $\is$ one by one until their sum is at least $1/4$. Then there is always a $\theta$ such that $(\sum_{\eta\in\is} c_\eta e^{-i\eta\theta})\cdot(\sum_{\eta\notin\is} c_\eta e^{-i\eta\theta})$ is imaginary, in which case $\min_\theta f_\theta \leq \sqrt{(1/4)^2+(3/4)^2} = \sqrt{5/8}$ and we have \begin{equation}\label{eq:delta-1} \delta_\group \geq \sqrt{1-\min_\theta f_\theta^2} \geq \sqrt{3/8}.\end{equation}
\item $c_{\eta_*} > 1/2$. Then $\sqrt{c_{\eta_*}^2 + (1-c_{\eta_*})^2}$ is a monotonically increasing function of $c_{\eta_*}$ and we only need to find an upper bound on $c_{\eta_*}$. To see such an upper bound exists, we first consider the special case where $\epsilon = 0$ and, according to the KL conditions and the HKS condition, $\braket{\frakc_0|H_S|\frakc_0} = \braket{\frakc_1|H_S|\frakc_1}$ holds true. 
On the other hand, if $c_{\eta_*} = \frac{1}{2}\abs{c_{\eta_*+\Delta H_L}^0}^2+ \frac{1}{2}\abs{c_{\eta_*}^1}^2= 1$, we must have $\abs{c_{\eta_*+\Delta H_L}^0}^2 = \abs{c_{\eta_*}^1}^2 = 1$ and $\braket{\frakc_0|H_S|\frakc_0} - \braket{\frakc_1|H_S|\frakc_1} = \Delta H_L > 0$, leading to a contradition. 

In general, to derive a non-trivial upper bound on $c_{\eta_*}$, we first note that $0 \leq \eta^* \leq \Delta H_S - \Delta H_L$ because otherwise either $c_{\eta_*+\Delta H_L}^0 = 0$ or $c_{\eta_*}^1 = 0$ which contradicts with $c_{\eta_*} > 1/2$. We have $\chi = \braket{\frakc_0|H_S|\frakc_0} - \braket{\frakc_1|H_S|\frakc_1}$ and 
\begin{gather*}
\sum_\eta \abs{c_\eta^0}^2 \eta = \sum_\eta \abs{c_\eta^1}^2 \eta + \chi, \\
~\Leftrightarrow~ 
\Delta H_L - \chi = -\left(1-\abs{c_{\eta_*}^1}^2\right)\left(\eta_* \!-\! \frac{\sum_{\eta\neq\eta_*}{\abs{c_\eta^1}^2\eta}}{\sum_{\eta\neq\eta_*}{\abs{c_\eta^1}^2}}\right) 
+\left(1-\abs{c_{\eta_*+\Delta H_L}^0}^2\right)\left(\eta_* + \Delta H_L - \frac{\sum_{\eta\neq\eta_*+\Delta H_L}{\abs{c_\eta^0}^2\eta}}{\sum_{\eta\neq\eta_*+\Delta H_L}{\abs{c_\eta^0}^2}}\right).
\end{gather*} 
Note that both $\bigg|\eta_* - \frac{\sum_{\eta\neq\eta_*}{\abs{c_\eta^1}^2\eta}}{\sum_{\eta\neq\eta_*}{\abs{c_\eta^1}^2}}\bigg|$ and $\bigg|\eta_* + \Delta H_L - \frac{\sum_{\eta\neq\eta_*+\Delta H_L}{\abs{c_\eta^0}^2\eta}}{\sum_{\eta\neq\eta_*+\Delta H_L}{\abs{c_\eta^0}^2}}\bigg|$ are at most $\Delta H_S$. Therefore, $c_{\eta_*} \leq 1 - \abs{\Delta H_L - \chi}/(2\Delta H_S)$ and 
\begin{align}
\delta_\group 
&\geq \sqrt{1 - \min_\theta f_\theta^2} \geq \sqrt{2 c_{\eta_*}(1-c_{\eta_*})} \geq \frac{\sqrt{\abs{\Delta H_L - \chi}\left(\Delta H_S - \frac{1}{2}\abs{\Delta H_L - \chi}\right)}}{\Delta H_S}. \label{eq:KL-last-step}
\end{align}
\end{enumerate}
\propref{prop:global-charge} then follows from combining \eqref{eq:delta-1} and \eqref{eq:KL-last-step}. 
\end{proof}

\subsection{Bounding charge fluctuation by QEC inaccuracy}
\label{sec:global-1-epsilon}

We now need to establish connections between $\chi$ and the QEC inaccuracy in order to link the global covariance violation to the QEC inaccuracy. We discuss two different methods that achieve this. 
 
\subsubsection{Knill--Laflamme-based method}
\label{sec:global-1-KL}

 Intuitively, a non-zero charge fluctuation leads to a violation of the KL conditions (\eqref{eq:KL}), which indicates a non-zero QEC inaccuracy.  Therefore, we may bound the QEC inaccuracy through analyzing the deviation from the KL condition. We call this method the KL-based method. Specifically, we have 
\begin{proposition}
\label{prop:charge-KL} 
\isometric Then it holds that \begin{equation}
\abs{\chi} \leq 2\epsilon\frakJ,
\end{equation}
where $\frakJ$ is a function of $H_S$ and $\mN_{S}$ defined by
\begin{equation}
\label{eq:def-frakJ}
\frakJ(\mN_S,H_S) := \min_{h: H_S = \sum_{ij} h_{ij} K_{S,i}^\dagger K_{S,j}} \Delta h,
\end{equation}
where $h$ is Hermitian. 
\end{proposition}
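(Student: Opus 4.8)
The plan is to recast the charge fluctuation as the difference of two complementary-channel (environment) states, and then show that approximate correctability forces those two states to be almost equal. Throughout I use $\mE_{\StoL}(\cdot)=W(\cdot)W^\dagger$ with $W$ an isometry, and I let $V=V_SW$ be the Stinespring isometry of $\mN_S\circ\mE_{\LtoS}$ into $S\otimes E$, where $V_S=\sum_i K_{S,i}\otimes\ket{i}_E$ dilates $\mN_S$.

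First I would use the HKS condition to rewrite $\chi$. A direct computation gives $\mE^\dagger_{\StoL}(K_{S,i}^\dagger K_{S,j})=W^\dagger K_{S,i}^\dagger K_{S,j}W=V^\dagger(\id_S\otimes\ketbra{i}{j}_E)V$, so summing against any Hermitian $h$ with $H_S=\sum_{ij}h_{ij}K_{S,i}^\dagger K_{S,j}$ (which exists by HKS) yields $\mE^\dagger_{\StoL}(H_S)=V^\dagger(\id_S\otimes\tilde h_E)V$, with $\tilde h=\sum_{ij}h_{ij}\ketbra{i}{j}$ Hermitian on $E$ and cospectral with the matrix $h$, hence $\Delta\tilde h=\Delta h$. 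Therefore $\chi=\Tr\!\big(\tilde h(\sigma_E^{(0)}-\sigma_E^{(1)})\big)$, where $\sigma_E^{(k)}:=\Tr_S\!\big(V\ketbra{k_L}{k_L}V^\dagger\big)$ is the environment marginal for the logical input $\ket{k_L}$ appearing in the definition of $\chi$. Since $\sigma_E^{(0)}-\sigma_E^{(1)}$ is Hermitian and traceless, subtracting the midpoint of the spectrum of $\tilde h$ and applying H\"older's inequality gives $\abs{\chi}\le\tfrac12\Delta h\,\norm{\sigma_E^{(0)}-\sigma_E^{(1)}}_1$.

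The core of the argument is then the bound $\norm{\sigma_E^{(0)}-\sigma_E^{(1)}}_1\le 4\epsilon$. Fix a recovery $\mR_{\StoL}$ attaining $P(\mR_{\StoL}\circ\mN_S\circ\mE_{\LtoS},\id_L)=\epsilon$, with Stinespring isometry $V_R:S\to L\otimes E_R$, feed half of $\ket\Omega=\tfrac1{\sqrt2}(\ket{0_L}\ket{0_R}+\ket{1_L}\ket{1_R})$ through the code and noise, and set $\ket\Xi=(V\otimes\id_R)\ket\Omega$, $\ket{\Xi'}=(V_R\otimes\id_{ER})\ket\Xi$. Because $V_R$ is an isometry on $S$, one has $\rho_{ER}:=\Tr_{LE_R}\ketbra{\Xi'}{\Xi'}=\Tr_S\ketbra{\Xi}{\Xi}$, whose blocks diagonal in $\{\ket{0_R},\ket{1_R}\}$ are exactly $\tfrac12\sigma_E^{(0)}$ and $\tfrac12\sigma_E^{(1)}$, while $\rho_{LR}:=\Tr_{EE_R}\ketbra{\Xi'}{\Xi'}=(\mR_{\StoL}\circ\mN_S\circ\mE_{\LtoS}\otimes\id_R)(\Omega)$ satisfies $P(\rho_{LR},\Omega)\le\epsilon$. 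Thus $\ket{\Xi'}$ is a purification of $\rho_{LR}$, which is $\epsilon$-close to the pure state $\Omega$; Uhlmann's theorem then yields a purification of $\Omega$, necessarily of the product form $\ket{\Omega'}=\ket{\Omega}_{LR}\otimes\ket{\phi}_{EE_R}$, with $P(\ket{\Xi'},\ket{\Omega'})\le\epsilon$. Applying $\Tr_{LE_R}$ and using monotonicity of the purified distance under CPTP maps gives $P(\rho_{ER},\phi_E\otimes\tfrac{\id_R}{2})\le\epsilon$ with $\phi_E:=\Tr_{E_R}\ketbra{\phi}{\phi}$, hence $\norm{\rho_{ER}-\phi_E\otimes\tfrac{\id_R}{2}}_1\le 2\epsilon$ by Fuchs--van de Graaf. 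Pinching both sides onto the $\{\ket{0_R},\ket{1_R}\}$ blocks (trace-norm nonincreasing) and using the triangle inequality gives $\tfrac12\norm{\sigma_E^{(0)}-\phi_E}_1+\tfrac12\norm{\sigma_E^{(1)}-\phi_E}_1\le 2\epsilon$, so $\norm{\sigma_E^{(0)}-\sigma_E^{(1)}}_1\le 4\epsilon$.

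Combining the two steps, $\abs{\chi}\le\tfrac12\Delta h\cdot 4\epsilon=2\epsilon\,\Delta h$ for every admissible $h$, and minimizing over Hermitian $h$ with $H_S=\sum_{ij}h_{ij}K_{S,i}^\dagger K_{S,j}$ gives $\abs{\chi}\le 2\epsilon\frakJ$. I expect the decoupling estimate of the previous paragraph to be the main obstacle: the delicate point is to argue that the noise environment $E$ by itself --- not merely the joint environment $E\otimes E_R$ of the corrected channel --- is already $\epsilon$-decoupled from the reference, which is why one has to bring in the recovery's Stinespring dilation and carefully invoke partial-trace monotonicity; tracking the constants there is precisely what produces the sharp prefactor $2$ rather than a worse one. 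Everything else is routine manipulation with complementary channels, the purified distance, and Fuchs--van de Graaf.
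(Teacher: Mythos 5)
Your proof is correct, and it takes a genuinely different concrete route from the paper's. The paper's proof of this proposition invokes the B\'eny--Oreshkov characterization (Lemma~\ref{lemma:KL-general}), which recasts $\epsilon$ as a minimax $\min_\Lambda\sqrt{1-f^2(\Lambda,\Lambda+\mathcal{B})}$, lower-bounds that minimum by a trace norm $\tfrac12\min_\lambda\max_{\psi_0}\norm{\tB}_1$, then diagonalizes $h$ via a Kraus rotation so that only the $\tB_{ii}$ survive, and finishes with a weighted-sum estimate using $\max_i h_{ii}=\Delta h/2$. You bypass the B\'eny--Oreshkov minimization entirely: you write $\chi=\Tr\bigl(\tilde h(\sigma_E^{(0)}-\sigma_E^{(1)})\bigr)$ directly on the noise environment, bound this by H\"older, and then prove $\norm{\sigma_E^{(0)}-\sigma_E^{(1)}}_1\le 4\epsilon$ by a self-contained Uhlmann decoupling argument --- purify the action on $\ket\Omega$, invoke Uhlmann to find a product-form purification $\ket\Omega\otimes\ket\phi$ at purified distance $\le\epsilon$, trace out, pinch onto the $R$-diagonal blocks, and triangle-inequality. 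The two routes arrive at the identical constant $2\epsilon\Delta h$. Both are rooted in the complementary-channel picture of approximate QEC, but yours has the virtue of being self-contained and arguably more transparent about \emph{why} the bound holds (the recovery's existence forces the noise environment to decouple from the logical reference, and $\chi$ lives precisely in that environment); the paper's route, while less transparent here, plugs more directly into the efficiently computable semidefinite characterization of $\frakJ$ that follows. One small remark: since $\Omega$ is pure and already the image of the identity side, the Uhlmann partner automatically has the tensor-product form you claim, so that step needs no further justification; and the dimension-$2$ reference is admissible because it embeds into the $d_L$-dimensional reference used in the worst-case channel-fidelity definition.
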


One can verify that 
$\frakJ(\mN_S,H_S)$ is efficiently computable using the following semidefinite program~\cite{boyd2004convex}: 
\begin{equation}
\begin{split}
    \min_{\substack{x,\nu\in\bR\\h \text{ is Hermitian}}} 2x, \quad\text{s.t. }\begin{pmatrix}x\id & h-\nu\id\\ h-\nu\id & x\id\end{pmatrix} \geq 0, \quad H_S = \sum_{ij}h_{ij} K_{S,i}^\dagger  K_{S,j}. 
\end{split}
\end{equation}

The proof of \propref{prop:charge-KL} is partly based on a useful lemma from Ref.~\cite{beny2010general} which connects the QEC inaccuracy $\epsilon$ to the deviation from the KL conditions:

\begin{lemma}[{\cite{beny2010general}}]
\label{lemma:KL-general}
Let $\Pi$ be the projection onto the code subspace of an isometric quantum code $\mE_{\LtoS}$ and the noise channel is $\mN_{S} = \sum_{i=1}^r K_i(\cdot)K_i^\dagger$. 
Then 
\begin{equation}
\epsilon(\mN_{S},\mE_{\LtoS}) = \min_{{\Lambda}} \sqrt{1 - f^2({\Lambda},{\Lambda}+{\mathcal{B}})}, 
\end{equation}
where ${\Lambda}(\rho) = \sum_{ij} \lambda_{ij} \trace(\rho) \ket{i}\bra{j}$, $({\Lambda}+{\mathcal{B}})(\rho) = {\Lambda}(\rho) + \sum_{ij} \trace({B }_{ij} \mE_{\LtoS}(\rho) ) \ket{i}\bra{j}$, and $\lambda_{ij}$ and ${B }_{ij}$ are constant numbers and operators satisfying $\Pi K_i^\dagger K_j \Pi = \lambda_{ij} \Pi + \Pi {B }_{ij} \Pi$. 
\end{lemma}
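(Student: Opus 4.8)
The plan is to recognize the right-hand side as the purified-distance form of the \emph{complementarity} (or \emph{decoupling}) characterization of approximate quantum error correction, and then to match it term-by-term with the definition of $\epsilon$. Throughout I would write $\mathcal{N} := \mN_S\circ\mE_{\LtoS}$, fix the encoding isometry $W$ with $\Pi = WW^\dagger$, and use the Kraus operators $\{K_iW\}$ of $\mathcal{N}$, so that $V := \sum_i (K_iW)\otimes\ket{i}_E$ is a Stinespring dilation of $\mathcal{N}$ and the associated complementary channel is $\widehat{\mathcal{N}}(\rho) := \trace_S\big(V\rho V^\dagger\big) = \sum_{ij}\trace\big(W^\dagger K_j^\dagger K_i W\rho\big)\ket{i}\bra{j}_E$.

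\emph{Step 1: identify $\Lambda$ and $\Lambda+\mathcal{B}$.} Using $\mE_{\LtoS}(\rho)=W\rho W^\dagger$, $\trace(\rho)=\trace(W^\dagger W\rho)$, and the defining relation $\Pi K_i^\dagger K_j\Pi = \lambda_{ij}\Pi + \Pi B_{ij}\Pi$ together with $W^\dagger\Pi = W^\dagger$, a one-line computation gives $(\Lambda+\mathcal{B})(\rho) = \sum_{ij}\trace(W^\dagger K_i^\dagger K_jW\rho)\ket{i}\bra{j}$, which is $\widehat{\mathcal{N}}$ up to a fixed transpose of the $E$-register that leaves all fidelities invariant; and $\Lambda(\rho)=\trace(\rho)\,\tau$ with $\tau := (\lambda_{ij})$. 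Trace preservation $\sum_iK_i^\dagger K_i=\id$ forces $\trace\tau=1$, and for the expression $f(\Lambda,\Lambda+\mathcal{B})$ to be the state fidelity it is meant to be I would restrict to $\tau\geq 0$, i.e.\ $\Lambda$ a genuine constant channel (that the minimum is attained at such a $\tau$ follows a posteriori from the equality below). Hence $\min_\Lambda\sqrt{1-f^2(\Lambda,\Lambda+\mathcal{B})} = \min_{\tau}P\big(\widehat{\mathcal{N}},\,\trace(\cdot)\tau\big)$, the minimum running over density operators $\tau$ on $E$.

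\emph{Step 2: the decoupling equality $\epsilon(\mN_S,\mE_{\LtoS}) = \min_\tau P(\widehat{\mathcal{N}},\trace(\cdot)\tau)$.} For ``$\geq$'', I would take any recovery $\mathcal{R}$ with Stinespring isometry $U_\mathcal{R}\colon S\to L\otimes F$; then $(U_\mathcal{R}\otimes\id_E)V$ dilates $\mathcal{R}\circ\mathcal{N}$, and because $U_\mathcal{R}$ is an isometry the partial trace over $LF$ of any operator on $SE$ equals the partial trace over $S$, so the $E$-marginal of the $(\mathcal{R}\circ\mathcal{N})$-output (on any input, with any purifying ancilla) is exactly that of $\widehat{\mathcal{N}}$. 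Monotonicity of fidelity under $\trace_{LF}$ applied to ``$\mathcal{R}\circ\mathcal{N}$ close to $\id_L$'', combined with Uhlmann's theorem, then produces a $\tau$ with $P(\widehat{\mathcal{N}},\trace(\cdot)\tau)\leq P(\mathcal{R}\circ\mathcal{N},\id_L)$; minimizing over $\mathcal{R}$ gives $\min_\tau P\leq\epsilon$. For ``$\leq$'', given $\tau$ close to $\widehat{\mathcal{N}}$ I would pick a purification $\ket{\tau}_{EE'}$ and, via Uhlmann applied to the Stinespring dilations of the two channels, obtain an isometry $U\colon S\to L\otimes F$ with $(U\otimes\id_E)V$ close to $\id_L\otimes(\text{fixed vector})_{FE}$; then $\mathcal{R}:=\trace_F\circ U(\cdot)U^\dagger$ satisfies $P(\mathcal{R}\circ\mathcal{N},\id_L)\leq P(\widehat{\mathcal{N}},\trace(\cdot)\tau)$, and minimizing over $\tau$ gives $\epsilon\leq\min_\tau P$. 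Chaining Steps 1 and 2 yields the lemma.

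The hard part is Step 2: making the two Uhlmann arguments return the \emph{same} constant, so that one gets an equality rather than a two-sided estimate. This is precisely where the purified (Bures) distance is essential --- the analogous statement in trace distance holds only up to a factor of $2$ --- and it requires care about which system is ``purified over'' (here the arbitrary ancilla in the channel-fidelity definition plays that role), about the harmless output transpose relating $\Lambda+\mathcal{B}$ to $\widehat{\mathcal{N}}$, about the forced normalization $\trace\tau=1$, and about commuting the optimization over inputs through the input-by-input Uhlmann step. This is exactly the content of Ref.~\cite{beny2010general}, and the sketch above is a streamlined account of their argument.
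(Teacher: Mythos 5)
The paper itself contains no proof of this lemma: it is imported from Ref.~\cite{beny2010general}, and your sketch reconstructs precisely that reference's argument --- identify $\Lambda+\mathcal{B}$ with the complementary channel of $\mN_{S}\circ\mE_{\LtoS}$ and $\Lambda$ with a constant channel, then establish $\epsilon=\min_{\tau}P\big(\widehat{\mN_{S}\circ\mE_{\LtoS}},\trace(\cdot)\,\tau\big)$ by a two-sided Uhlmann/decoupling argument. This is exactly the specialization $\Phi_1=\mN_S\circ\mE_{\LtoS}$, $\Phi_2=\id_L$ of \lemmaref{lemma:complementary}, which the paper also quotes from the same reference and uses in \secref{sec:case-study}; so your route coincides with the source's, and deferring the ``same $\tau$ for all inputs'' uniformity issue to Ref.~\cite{beny2010general} matches the paper's own level of detail, since it proves nothing here either.

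One justification in your Step~1 should not be waved through. Once the reference system is attached, your ``fixed transpose of the $E$-register'' acts as a \emph{partial} transpose on both output states, and partial transposition preserves neither fidelity nor positivity in general; indeed the map $\rho\mapsto\sum_{ij}\trace\!\big(W^\dagger K_i^\dagger K_j W\rho\big)\ketbra{i}{j}$ is in general not completely positive (with $L_i:=\overline{K_iW}$ its Choi blocks are $L_j^\dagger L_i$ rather than the Gram form $L_i^\dagger L_j$), so ``the transpose leaves all fidelities invariant'' is not a valid step as stated. The clean repair avoids any transpose: taking the adjoint of the defining relation gives $\Pi K_j^\dagger K_i\Pi=\bar\lambda_{ij}\Pi+\Pi B_{ij}^\dagger\Pi$, which exhibits the genuine (CP) complementary channel $\rho\mapsto\sum_{ij}\trace\!\big(K_iW\rho W^\dagger K_j^\dagger\big)\ketbra{i}{j}$ directly in the form $\Lambda'+\mathcal{B}'$ with $\lambda'_{ij}=\bar\lambda_{ij}$ and $B'_{ij}=B_{ij}^\dagger$; since the splittings $(\lambda,B)$ and $(\lambda',B')$ are in bijection and the lemma minimizes over all admissible splittings, this relabeling costs nothing --- equivalently, the lemma is to be read with the index convention of the complementary channel, as in Ref.~\cite{beny2010general}. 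With that convention fixed, your Step~1 identification is exact, and the substantive content is your Step~2, which is the information--disturbance duality of the cited work.
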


\propref{prop:charge-KL} then follows by connecting the deviation from the KL conditions to the charge fluctuation. The proof goes as follows.

\begin{proof}[Proof of {\propref{prop:charge-KL}}]
Let $\Pi$ be the projection onto the code subspace under consideration, 
$\ket{\frakc_{0}} = W\ket{0_L}$, $\ket{\frakc_{1}} = W\ket{1_L}$ where $\mE_{\LtoS}(\cdot) = W(\cdot)W^\dagger$, and the simplified notations $\braket{K_{S,i}^\dagger K_{S,j}}_{0} := \braket{\frakc_0|K_{S,i}^\dagger K_{S,j}|\frakc_0}$ and $\braket{K_{S,i}^\dagger K_{S,j}}_{1} := \braket{\frakc_1|K_{S,i}^\dagger K_{S,j}|\frakc_1}$. Assume $\lambda_{ij}$ and ${B }_{ij}$ satisfies $\Pi K_{S,i}^\dagger K_{S,j} \Pi = \lambda_{ij}\Pi + \Pi {B }_{ij} \Pi$.  Let $\ket{\psi} = \psi_0\ket{0_L}\ket{0_R} + \psi_1\ket{1_L}\ket{1_R}$ where $\abs{\psi_0}^2 + \abs{\psi_1}^2 = 1$, we have 
\begin{align}
 f\left({\Lambda}+{\mathcal{B}},{\Lambda}\right) 
\leq \min_{\psi_{0},\psi_{1}} f\left((({\Lambda}+{\mathcal{B}})\otimes\id_R)(\ket{\psi}),({\Lambda}\otimes\id_R)(\ket{\psi})\right)= \min_{\psi_{0}} f\left(\lambda+\tB,\lambda\right), 
\end{align}
where $\tB_{ij} = \abs{\psi_0}^2 \braket{K_{S,i}^\dagger K_{S,j}}_0 + \abs{\psi_1}^2 \braket{K_{S,i}^\dagger K_{S,j}}_1 - \lambda_{ij}$. 
According to the Fuchs--van de Graaf inequality $f(\rho,\sigma) \leq \sqrt{1 - \frac{1}{4}\norm{\rho-\sigma}_1^2}$~\cite{fuchs1999cryptographic}, 
\begin{align}
\epsilon &= \min_{{\Lambda},{\mathcal{B}}}\sqrt{1-f\left({\Lambda}+{\mathcal{B}},{\Lambda}\right)^2} \geq \min_{\lambda}\max_{\psi_0} \sqrt{1-f\left(\lambda+\tB,\lambda\right)^2}\geq \frac{1}{2}\min_{\lambda}\max_{\psi_0}\norm{\tB}_1. 
\end{align}
According to the HKS condition, $H_S = \sum_{ij} h_{ij}K_{S,i}^\dagger K_{S,j}$ for some Hermitian matrix $h$. Without loss of generality, we assume $h$ is diagonal and $H_S = \sum_{i} h_{ii}K_{S,i}^\dagger K_{S,i}$ because if not, we can always choose another set of Kraus operators that diagonalizes $h$. We can also assume $\max_{i} h_{ii}  = - \min_{i} h_{ii} =\frac{\Delta h}{2}$ because we can replace $H_S$ with $H_S -\nu\id$ for any $\nu\in\bR$. Then we have 
\begin{align}
\epsilon 
&\geq \frac{1}{2} \min_{\lambda}\max_{\psi_0} \norm{\tB}_1 
\geq \frac{1}{2} \min_{\lambda}\max_{\psi_0} \sum_{i}\abs{\tB_{ii}} \nonumber\\
&= \frac{1}{2} \min_{\lambda}\max_{\psi_0} \sum_{i}\Big|\abs{\psi_0}^2 \braket{K_{S,i}^\dagger K_{S,i}}_{0} + \abs{\psi_1}^2 \braket{K_{S,i}^\dagger K_{S,i}}_{1} - \lambda_{ii}\Big|\nonumber\\
&\geq \frac{1}{4} \min_{\lambda}\sum_{i} \left|\braket{K_{S,i}^\dagger K_{S,i}}_{0} - \lambda_{ii}\right| + \left|\braket{K_{S,i}^\dagger K_{S,i}}_{1} - \lambda_{ii}\right|\nonumber\\
&\geq \frac{1}{4} \sum_{i} \left|\braket{K_{S,i}^\dagger K_{S,i}}_{0} - \braket{K_{S,i}^\dagger K_{S,i}}_{1}\right|\nonumber\\
&\geq \frac{1}{4} \frac{1}{\max_i h_{ii}} \sum_{i} \abs{h_{ii} \left( \braket{K_{S,i}^\dagger K_{S,i}}_{0} -   \braket{K_{S,i}^\dagger K_{S,i}}_{1} \right)} \geq \frac{\abs{\chi}}{2 \Delta h}. 
\end{align}
Note that there might be many different choices of $h$ such that $H_S = \sum_{ij} h_{ij}K_{S,i}^\dagger K_{S,j}$ holds true. In order to obtain the tightest lower bound on $\epsilon$, we can minimize $\Delta h$ over all possible $h$ such that $H_S = \sum_{ij} h_{ij}K_{S,i}^\dagger K_{S,j}$, leading to $\epsilon \geq \abs{\chi}/2\frakJ$, where $\frakJ := \min_{h: H_S = \sum_{ij} h_{ij} K_{S,i}^\dagger K_{S,j}} \Delta h$. 
\end{proof}

\subsubsection{Quantum metrology method}
\label{sec:global-1-metrology}

Besides the KL-based method, the relationship between the charge fluctuation and the QEC inaccuracy could be understood through the lens of quantum metrology, which results in another inequality concerning $\chi$ and $\epsilon$, as shown in the following. A detailed comparison between the two bounds obtained from the KL-based method and the quantum metrology method (\propref{prop:charge-KL} and \propref{prop:charge-metrology}) will later be given in \secref{sec:global-1-tradeoff} and \secref{sec:global-1-noise}. 
\begin{proposition}
\label{prop:charge-metrology}
Consider a quantum code defined by $\mE_{\LtoS}$. Consider  physical Hamiltonian $H_S$,  logical Hamiltonian $H_L$, and noise channel $\mN_S$. Suppose the HKS condition is satisfied. 
Then it holds that 
\begin{equation}
\label{eq:charge-frakB}
\abs{\chi} \leq 2\epsilon\left(  \sqrt{(1-\epsilon^2)\frakF} + \frakB\right). 
\end{equation} 
Here 
\begin{equation}
\label{eq:def-frakB}
\frakB := \max_{\ket{\psi}}\sqrt{8 \bV_{H_S}\left(\mE_{\LtoS}(\ket{\psi})\right)} \leq \sqrt{2}\Delta H_S, 
\end{equation}
where the variance $\bV_{H}(\rho) := \trace(H^2\rho) - (\trace(H\rho))^2$, 
and $\frakF$ is a function of $\mN_S$ and $H_S$ defined by 
\begin{equation}
\label{eq:def-frakF}
\frakF(\mN_S,H_S) := 
\\ 4\min_{h: H_S = \sum_{ij} h_{ij} K_{S,i}^\dagger K_{S,j}} \bigg\|\sum_{ij}(h^2)_{ij} K_{S,i}^\dagger K_{S,j} - H_S^2\bigg\|, 
\end{equation}
where $h$ is Hermitian. In particular, when $\epsilon \ll 1$ and $\frakB \ll \sqrt{\frakF}$, we have 
\begin{equation}
\abs{\chi} \lesssim 2\epsilon\sqrt{\frakF}.
\end{equation}
\end{proposition}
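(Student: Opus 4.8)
The plan is to recast the bound as a statement in single-shot quantum channel estimation: treat $\theta$ as an unknown parameter, imprint it physically via $e^{-iH_S\theta}$ on an optimally chosen encoded probe, push the system through the noise $\mN_S$, and sandwich the quantum Fisher information (QFI) of the resulting family — an upper bound from the HKS structure, and, after recovery, a lower bound whose ``logical signal rate'' is controlled by $\chi$. Concretely I would take the probe to be the maximally entangled state $\Psi$ between the two-dimensional span of $\ket{0_L},\ket{1_L}$ and a reference $R$, and study $\rho_\theta := (\mN_S\circ\mU_{S,\theta}\circ\mE_{\LtoS}\otimes\id_R)(\Psi)$; the two bounds are then reconciled to produce the inequality.

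\emph{Upper bound.} The noisy rotation channel $\cN_\theta := \mN_S\circ\mU_{S,\theta}$ has Kraus operators $K_{S,i}e^{-iH_S\theta}$, so $\sum_i\dot K_i^\dagger\dot K_i\big|_{\theta=0}=H_S^2$. Using the freedom $\dot K_i\mapsto\dot K_i+i\sum_j h_{ij}K_j$ ($h$ Hermitian) in the Kraus representation, the local generator is $\alpha_h=\sum_i(\dot K_i+i\sum_j h_{ij}K_j)^\dagger(\dot K_i+i\sum_k h_{ik}K_k)$; the key computation is that when $h$ is chosen to be an HKS representation of $H_S$ (so $H_S=\sum_{ij}h_{ij}K_{S,i}^\dagger K_{S,j}$), the three $H_S$-bilinear contributions collapse via this identity and $\alpha_h=\sum_{ij}(h^2)_{ij}K_{S,i}^\dagger K_{S,j}-H_S^2$. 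Since the ancilla-assisted channel QFI equals $4\min_h\|\alpha_h\|$, and pre-composing $\cN_\theta$ with the $\theta$-independent encoding $\mE_{\LtoS}$ cannot increase it, $F(\rho_\theta)\le 4\min_{h:\,H_S=\sum h_{ij}K_{S,i}^\dagger K_{S,j}}\|\sum_{ij}(h^2)_{ij}K_{S,i}^\dagger K_{S,j}-H_S^2\|=\frakF$.

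\emph{Lower bound.} Let $\mR$ be the optimal recovery for $(\mN_S,\mE_{\LtoS})$ and $\sigma_\theta:=(\mR\otimes\id_R)(\rho_\theta)$, so $F(\sigma_\theta)\le F(\rho_\theta)\le\frakF$ by monotonicity and $\sigma_0$ lies within purified distance $\epsilon$ of $\Psi$. Had recovery been exact, $\sigma_\theta$ would be $\Psi$ rotated by the effective logical charge $\mE^\dagger(H_S)$, whose variance on $\Psi$ is at least $\chi^2/4$ (a Cauchy--Schwarz estimate on its $\ket{0_L},\ket{1_L}$ diagonal block), giving QFI $\ge\chi^2$ and hence $\chi^2\le\frakF$. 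For approximate recovery one must track the deviation from this ideal: the $\epsilon$-error at $\theta=0$, together with the drift $P(\mU_{S,\theta}\circ\mE_{\LtoS},\mE_{\LtoS})\lesssim\frac{\frakB}{2\sqrt2}|\theta|$ (which is where $\frakB=\max_{\psi}\sqrt{8\bV_{H_S}(\mE_{\LtoS}(\psi))}\le\sqrt2\,\Delta H_S$ enters, via $\bV_H(\rho)\le(\Delta H)^2/4$), both propagated through $\mR$; combining with $F(\sigma_\theta)\le\frakF$ through the triangle inequality for the purified distance — equivalently, through the outcome Fisher-information bound $g'(\theta)^2\le\frakF\,g(\theta)(1-g(\theta))$ for $g(\theta):=\bra{\Psi}\sigma_\theta\ket{\Psi}$, which integrates to $\sqrt{1-g(\theta)}\le\epsilon+\frac{\sqrt{\frakF}}{2}|\theta|$ using $g(0)\ge 1-\epsilon^2$ — should yield $|\chi|\le 2\epsilon(\sqrt{(1-\epsilon^2)\frakF}+\frakB)$, the factor $1-\epsilon^2$ being $f(\mR\circ\mN_S\circ\mE_{\LtoS},\id_L)^2$.

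The main obstacle is precisely this last reconciliation, and within it the appearance of the overall factor $2\epsilon$: the naive two-point comparison sketched above only yields the much weaker $|\chi|\lesssim\sqrt{\frakF}$, so extracting an extra factor of $\epsilon$ — so that $\chi\to0$ as $\epsilon\to0$, matching the fact that exact QEC codes have $\chi=0$ under HKS — requires arguing that the recoverability of $\rho_\theta$ forces the surviving logical signal to \emph{scale with} $\epsilon$ rather than merely be bounded, which is a genuinely perturbative argument (morally the metrology counterpart of the approximate Knill--Laflamme analysis behind \propref{prop:charge-KL}) and not a consequence of monotonicity alone, together with careful bookkeeping of constants. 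A secondary subtlety is the choice of channel-QFI bound: the SLD version used above is convenient, but the RLD version invoked elsewhere in the paper can be sharper for some noise models.
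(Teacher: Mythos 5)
Your framework — error-corrected metrology, monotonicity of the channel QFI, $\frakF$ emerging as the regularized QFI $\barF(\mN_{S,\theta})$ from the HKS structure — is the same starting point the paper uses, and your upper-bound computation is correct (modulo noting that $4\min_h\|\alpha_h\|$ under the HKS constraint is the \emph{regularized} QFI $\barF$, which dominates the single-shot ancilla-assisted QFI, so the chain of inequalities still runs the right way).

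The gap is exactly the last step, as you suspected, and it is not a matter of bookkeeping. Your outcome-probability functional $g(\theta)=\bra{\Psi}\sigma_\theta\ket{\Psi}$ picks up only the \emph{real part} of the off-diagonal element: for a rotated dephasing Choi state one has $g(\theta)=\tfrac12(1+\Re\,\xi_\theta)$, so $g'(0)=\tfrac12\Re[\partial_\theta\xi_\theta]\big|_0$. But $\chi$ lives in the imaginary part — the appendix lemma establishes $\partial_\theta\xi_\theta\big|_0 \approx -i\chi$ up to an error $2\epsilon\frakB$ — so $g'(0)$ is of order $\dot p_0 = O(\epsilon^2)$, not of order $\chi$, and the classical-FI inequality $g'(\theta)^2\le\frakF\,g(\theta)(1-g(\theta))$ is slack at the origin. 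The integrated form $\sqrt{1-g(\theta)}\le\epsilon+\tfrac12\sqrt{\frakF}\,|\theta|$ is valid but gives nothing new about $\chi$ in the regime $\Delta H_L\ll\sqrt{\frakF}$ that actually matters (for single-erasure noise $\sqrt\frakF = O(n)$), since there the encoded phase satisfies it trivially.

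The missing structural ingredient is the paper's repetition-code concatenation. It introduces a two-level system $C$ with ancilla qubit $A$, encodes $\ket{0_C}\mapsto\ket{0_L0_A}$, $\ket{1_C}\mapsto\ket{1_L1_A}$, and decodes with $\mR^\rep$, which corrects all bit flips on $L$ using the ancilla. Conjugating $\mR^\optL_{\StoL}\circ\mN_{S,\theta}\circ\mE_{\LtoS}\otimes\id_A$ by this pair forces the resulting channel $\mN_{C,\theta}$ to be a rotated dephasing channel, whose regularized QFI admits the closed form $\barF(\mN_{C,\theta})=\frac{|\partial_\theta\xi_\theta|^2}{1-|\xi_\theta|^2}$ with $\xi_\theta=\bra{0_C}\mN_{C,\theta}(\ket{0_C}\bra{1_C})\ket{1_C}$. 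It is this formula — the full SLD channel QFI, not a two-outcome classical Fisher information — that produces the signal-to-noise ratio: one bounds $1-|\xi_0|^2\le 4\epsilon^2(1-\epsilon^2)$ via the purified distance of a dephasing channel and monotonicity of $P$, and bounds $|\partial_\theta\xi_\theta|_0|\ge|\chi|-2\epsilon\frakB$ via a Hilbert--Schmidt estimate on the Kraus operators (this is the perturbative step you rightly flagged, and it is where $\frakB$ actually enters — not via a drift bound on $P(\mU_{S,\theta}\circ\mE,\mE)$). Then $\barF(\mN_{C,\theta})\le\barF(\mN_{S,\theta})=\frakF$ gives $\frakF\ge(|\chi|-2\epsilon\frakB)^2/(4\epsilon^2(1-\epsilon^2))$, which rearranges to the claim. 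Without the repetition step you never land in the rotated-dephasing family, and the fidelity $g(\theta)$ alone cannot see the phase velocity that carries $\chi$.
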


Unlike $\frakJ(\mN_S,H_S)$ introduced in \secref{sec:global-1-KL}, $\frakF(\mN_S,H_S)$ appearing in \propref{prop:charge-metrology} has a clear operational meaning:
\begin{equation}
\frakF(\mN_S,H_S) \equiv \barF(\mN_{S,\theta}),\quad \forall \theta \in \bR. 
\end{equation}
Here $\barF(\mN_{S,\theta})$ is the regularized QFI~\cite{kolodynski2013efficient,zhou2020theory} of the quantum channel $\mN_{S,\theta} := \mN_{S}\circ\mU_{S,\theta}$ where $\theta$ is the unknown parameter to be estimated. (Generally, the regularized QFI of quantum channel $\Phi_\theta$ is defined by $\barF(\Phi_\theta) := \lim_{N\rightarrow\infty} F(\Phi_\theta^{\otimes N})/N$.) Note that $\barF(\mN_{S,\theta})$ is independent of $\theta$ and computable using semidefinite programming~\cite{demkowicz2012elusive}. Also, $\barF(\mN_{S,\theta}) \leq \infty$ if and only if the HKS condition is satisfied. The channel QFI inherits many nice properties from the QFI of quantum states. For example, here $\barF$ obeys the monotonicity property, i.e.~$\barF(\mR\circ\Phi_\theta\circ\mE) \leq \barF(\Phi_\theta)$ for arbitrary parameter-independent channels $\mR$ and $\mE$. 

{
The operational meaning of the quantity $\frakB$ is not immediately clear for general encoding channels, but when $\mE_{\LtoS}$ is isometric we have that 
\begin{align}
{\frakB} &= \sqrt{2 F(\mU_{S,\theta}\circ\mE_{\LtoS})}\big|_{\theta = 0},
\end{align}
which satisfies 
\begin{equation}
    \abs{\frakB - \sqrt{2}\delta_\point}\leq \sqrt{2}\Delta H_L,
\end{equation}
due to the chain rule of the square root of the channel QFI~\cite{katariya2020geometric}: 
$\sqrt{F(\Phi_{1,\theta}\circ\Phi_{2,\theta})} \leq \sqrt{F(\Phi_{1,\theta})}+\sqrt{F(\Phi_{2,\theta})}$ for any $\Phi_{1,\theta}$ and $\Phi_{2,\theta}$, and $F(\mU_{L,\theta})= F(\mU_{L,\theta}^\dagger) = (\Delta H_L)^2$. 
In general, $\frakB$ depends on specific encodings and in order to obtain an code-independent bound we should replace $\frakB$ with its upper bound $\Delta H_S$ so that \eqref{eq:charge-frakB} becomes 
\begin{equation}
\label{eq:charge-no-frakB}
\abs{\chi} \leq 2\epsilon\left(  \sqrt{(1-\epsilon^2)\frakF} + \sqrt{2}\Delta H_S\right),
\end{equation}
which still leads to useful bounds, e.g., for single-erasure noise as discussed in \secref{sec:global-1-noise}. 
However, in many cases $\frakB$ is negligible, i.e., $\frakB \ll \sqrt{\frakF}$ (or $\delta_\point \ll \sqrt{\frakF}$ for isometric codes) in the examples we study later in \secref{sec:case-study}. 
}

The monotonicity of the regularized QFI is a key ingredient in the proof of \propref{prop:charge-metrology}. Specifically, we introduce a two-level system $C$ and an ancillary qubit $A$ and consider the channel estimation of the error-corrected noise channel $\mN_{C,\theta} = \mR_{\SAtoC}\circ(\mN_{S,\theta}\otimes \id_A)\circ\mE_{\CtoSA}$ (see \figref{fig:dephasing}). $\mR_{\SAtoC}$ and $\mE_{\CtoSA}$ is carefully chosen such that $\barF(\mN_{C,\theta})$ is roughly $\Theta((\chi/\epsilon)^2)$ around $\theta = 0$. Intuitively, one might interpret $\barF(\mN_{C,\theta})$ handwavily as a quantity proportional to the square of the ``signal-to-noise ratio'' where the QEC inaccuracy $\epsilon$ is roughly the noise rate of $\mN_{C,\theta}$ and the charge fluctuation $\abs{\chi}$ is roughly the signal strength.  \propref{prop:charge-metrology} then follows from the monotonicity of QFI: 
\begin{equation}
\barF(\mN_{C,\theta}) \leq \barF(\mN_{S,\theta}). 
\end{equation}

We now explain the error-corrected metrology protocol in detail. We first introduce an ancilla-assisted two-level encoding. Consider a two-level system $C$ spanned by $\ket{0_C}$ and $\ket{1_C}$ and a Hamiltonian 
\begin{equation}
\label{eq:hamt-c}
H_C = \frac{\Delta H_L}{2} \cdot Z_C, \quad \mU_{C,\theta}(\cdot) = e^{-iH_C\theta} (\cdot) e^{iH_C\theta},
\end{equation}
where $Z_C$ is the Pauli-Z operator. 
We define a repetition encoding from $C$ to $LA$, 
\begin{equation}
\label{eq:rep-enc}
\mE^{\rep}_{\CtoLA}(\ket{0_C}) := \ket{0_L 0_A},\;
\mE^{\rep}_{\CtoLA}(\ket{1_C}) := \ket{1_L 1_A}, 
\end{equation}
where $A$ is the ancillary qubit. The corresponding repetition recovery channel is 
\begin{equation}
\label{eq:rep-rec}
\mR^{\rep}_{\LAtoC}(\cdot) :=
\sum_{i=0}^{d_L-1} R^\rep_i(\cdot)R^{\rep\dagger}_i,
\end{equation} 
where $R^\rep_0 = \ket{0_C}\bra{0_L0_A} + \ket{1_C}\bra{1_L1_A}$ and $R^\rep_1 = \ket{0_C}\bra{1_L0_A} + \ket{1_C}\bra{0_L1_A}$ and $R^\rep_{i>1} = \ket{0_C}\bra{i_L0_A} + \ket{1_C}\bra{i_L1_A}$. Clearly, $\mR^{\rep}_{\LAtoC} \circ \mE^{\rep}_{\CtoLA} = \id_C$. 
The repetition encoding $\mE_{\CtoLA}^{\rep}$ is covariant with respect to $H_C$ and $H_L$, i.e., $\mE_{\CtoLA}^{\rep}\circ \mU_{C,\theta} = (\mU_{L,\theta}\otimes\id_A) \circ \mE_{\CtoLA}^{\rep}$. Moreover, the repetition code corrects all bit-flip noise. When concatenated with $\mE_{\LtoS}$ and $\mR_{\StoL}$, the error-corrected noisy channel $
\mN_{C,\theta} = \mR^{\rep}_{\SAtoC} \circ (\mR_{\StoL} \circ\mN_{S,\theta} \circ \mE_{\LtoS} \otimes \id_A) \circ \mE_{\CtoLA}^{\rep}$ 
becomes a \emph{rotated dephasing channel}, namely, a single-qubit channel which is a composition of dephasing channel $(1-p)(\cdot)+p Z(\cdot)Z$ and a Pauli-$Z$ phase rotation $e^{-i\phi Z}$~\cite{zhou2020new}  (see \figref{fig:dephasing}). The regularized QFI of any rotated dephasing channel $\Phi_\theta$ is~\cite{zhou2020theory} 
\begin{equation}
\barF(\Phi_{\theta}) = \frac{\abs{\partial_\theta x_\theta}^2}{1 - \abs{x_{\theta}}^2},
\end{equation}
where the complex number $x_\theta = \braket{0|\Phi_\theta(\ket{0}\bra{1})|1}$. We consider the estimation around $\theta = 0$ for $\mN_{C,\theta}$. The monotonicity of the regularized QFI guarantees that 
\begin{equation}
\label{eq:local-QFI}
\barF(\mN_{C,\theta})\big|_{\theta = 0} = \frac{\abs{\partial_\theta \xi_\theta}^2\big|_{\theta=0}}{1 - \abs{\xi_{\theta = 0}}^2} \leq \barF(\mN_{S,\theta}),
\end{equation}
where 
\begin{equation}
\xi_\theta := \bra{0_C}\mN_{C,\theta}(\ket{0_C}\bra{1_C})\ket{1_C}. 
\end{equation}
\propref{prop:charge-metrology} can then be proven, connecting $\xi_\theta$ with $\epsilon$ and $\chi$.

\begin{figure}[tb]
	\center
	\includegraphics[width=0.48\textwidth]{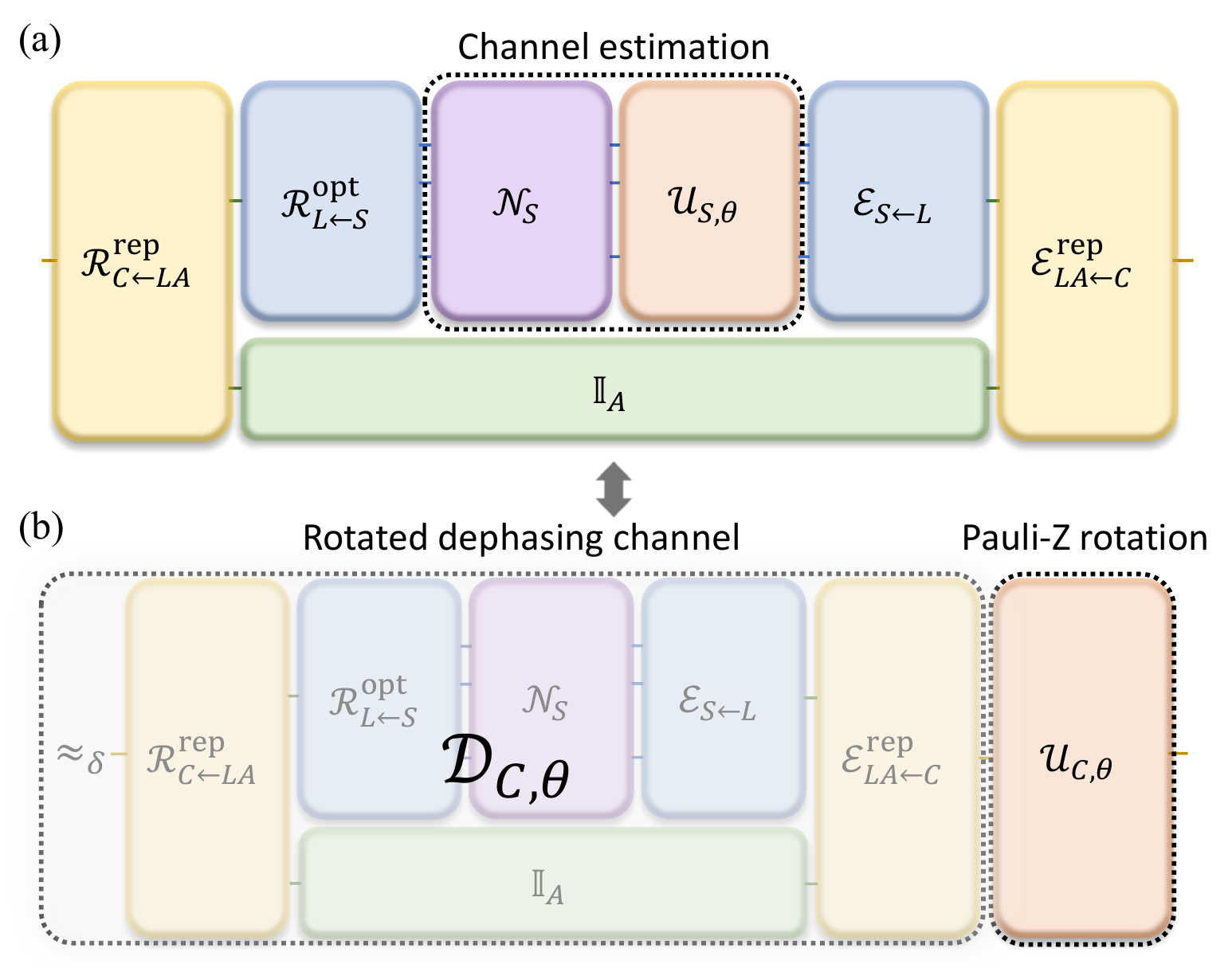}
	\caption{\label{fig:dephasing} A two-level encoding scheme for the estimation of $\theta$. 
	(a) Definition of the encoded channel $\mN_{C,\theta} $ in the system $C$. $\mN_{C,\theta} = \mR_{\SAtoC} \circ (\mN_{S,\theta} \otimes \id_A) \circ \mE_{\CtoSA}$ where $\mR_{\SAtoC} = \mR_{\CtoLA}^{\rep} \circ (\mR^{\optL}_{\StoL} \otimes \id_A)$ and $\mE_{\CtoSA} = (\mE_{\LtoS} \otimes \id_A) \circ \mE_{\CtoLA}^{\rep}$. It includes a concatenation of the repetition encoding from $C$ to $LA$ and the encoding under consideration $\mE_{\LtoS}$ from $L$ to $S$ with the optimal recovery channels chosen accordingly. 
	 (b) $\mN_{C,\theta}$ is the composition of $\mD_{C,\theta}$ and $\mU_{C,\theta}$ where $\mU_{C,\theta}$ is the unitary rotation with respect to a Pauli-Z Hamiltonian $H_C$ and $\mD_{C,\theta}$ is a rotated dephasing channel which is at most $\epsilon$-far from identity at $\theta = 0$. 
	 When $\delta_\group \approx 0$, $\mD_{C,\theta} \approx \mR_{\SAtoC} \circ (\mN_{S} \otimes \id_A) \circ \mE_{\CtoSA}$ is almost $\theta$-independent. Note that the gates are applied from right to left.}
\end{figure}

Now we are ready to present the formal proof of \propref{prop:charge-metrology}. 
\begin{proof}[Proof of {\propref{prop:charge-metrology}}]
Let $\mR^{\optL}_{\StoL}$ be the recovery channel such that 
$\epsilon =  P(\mR^{\optL}_{\StoL}\circ \mN_{S}\circ\mE_{\LtoS}, \id_{L})$. Let $\mR_{\SAtoC} = \mR_{\SAtoC}^\rep \circ (\mR_{\StoL}^{\optL} \otimes \id_A)$ and $\mE_{\CtoSA} = (\mE_{\LtoS}^{\optL} \otimes \id_A) \circ \mE_{\CtoLA}^\rep $, we have two rotated dephasing channels (see \figref{fig:dephasing}): 
\begin{align}
\mN_{C,\theta} 
&= \mR_{\SAtoC} \circ (\mN_{S,\theta} \otimes \id_A) \circ \mE_{\CtoSA} =: \mD_{C,\theta} \circ \mU_{C,\theta},
\end{align} 
where $\mD_{C,\theta}$ and $\mN_{C,\theta}$ are rotated dephasing channels of the following forms:
\begin{gather*}
\begin{split}
\mD_{C,\theta}(\cdot) =: (1 - p_\theta) e^{-i\frac{\phi_\theta}{2} Z_C} (\cdot) e^{i\frac{\phi_\theta}{2}Z_C} 
& + p_\theta Z_C   e^{-i\frac{\phi_\theta}{2}Z_C} (\cdot) e^{i\frac{\phi_\theta}{2}Z_C} Z_C, 
\end{split}\\
\begin{split}
\mN_{C,\theta}(\cdot) =: (1 - p_\theta) e^{-i\frac{\phi_\theta+\Delta H_L\theta}{2} Z_C} (\cdot) e^{i\frac{\phi_\theta+\Delta H_L\theta}{2} Z_C}
& + p_\theta Z_C e^{-i\frac{\phi_\theta+\Delta H_L\theta}{2} Z_C} (\cdot) e^{i\frac{\phi_\theta+\Delta H_L\theta}{2} Z_C} Z_C, 
\end{split}
\end{gather*}
and $\xi_\theta = \bra{0_C}\mD_{C,\theta}\circ\mU_{C,\theta}(\ket{0_C}\bra{1_C})\ket{1_C} = (1 - 2p_\theta)e^{-i(\phi_\theta+\Delta H_L\theta)}$. $\mD_{C,\theta}$ is identity when the code is both exactly covariant and exactly error-correcting; it is $\theta$-independent when the code is exactly covariant (see also Ref.~\cite{zhou2020new}). {Note that we will not use the channel $\mD_{C,\theta}$ in this proof (it will be used later in \secref{sec:global-2}), but we introduce the notation here 
to clarify its physical meaning. }

Consider the parameter estimation of $\theta$ in the neighborhood of $\theta = 0$. On one hand, for rotated dephasing channels (see \appref{app:dephasing} for the purified distance of rotated dephasing channels from identity), we have 
\begin{equation}
\label{eq:noise-rate-epsilon-1}
\sqrt{p_{\theta=0}} \leq P(\mD_{C,\theta = 0},\id_C) = P(\mN_{C,\theta = 0},\id_C).
\end{equation}
On the other hand, 
we have
\begin{align}
 P(\mN_{C,\theta=0},\id_C) = P(\mR_{\SAtoC} \circ (\mN_{S} \otimes \id_A) \circ \mE_{\CtoSA} ,\id_C)
\leq P(\mR^{\optL}_{\StoL} \circ \mN_{S} \circ \mE_{\LtoS}, \id_L) \leq \epsilon, 
\label{eq:noise-rate-epsilon-2}
\end{align}
where we use the monotonicity of the purified distance~\cite{tomamichel2015quantum} and the definition of $\epsilon$. Combining \eqref{eq:noise-rate-epsilon-1} and \eqref{eq:noise-rate-epsilon-2}, we have 
\begin{equation}
\label{eq:xi-epsilon}
\abs{\xi_{\theta = 0}} \geq 1 - 2\epsilon^2. 
\end{equation}
As shown in \appref{app:local-proof}, we also have 
\begin{equation}
\abs{\partial_\theta\xi_\theta}^2\big|_{\theta = 0}  \geq (\abs{\chi}-2\epsilon\frakB)^2,
\end{equation}
when $\abs{\chi} \geq 2\epsilon\frakB$. 
Hence, when $\abs{\chi} \geq 2\epsilon\frakB$, we must have 
\begin{align}
\barF(\mN_{S,\theta}) = \barF(\mN_{S,\theta}) \big|_{\theta = 0} \geq \barF(\mN_{C,\theta})\big|_{\theta = 0} \geq \frac{(\abs{\chi}  -2\epsilon\frakB)^2}{4\epsilon^2(1-\epsilon^2)},
\end{align} 
completing the proof. 
\end{proof}

\subsection{Consequent bounds on the trade-off between QEC and global covariance}
\label{sec:global-1-tradeoff}

In \secref{sec:global-1-charge} and \secref{sec:global-1-epsilon}, we derived bounds on $\delta_\group$ and $\epsilon$ separately, using the notion of charge fluctuation $\chi$. Combining these results, we immediately obtain the trade-off relations between $\delta_\group$ and $\epsilon$. 
\begin{theorem}
\label{thm:global-1}
\isometric It holds that, when $0 \leq G(\epsilon) \leq \Delta H_S$, 
\begin{equation}
\label{eq:tradeoff-1}
\delta_\group \geq 
\min\left\{\frac{\sqrt{G(\epsilon)\left(\Delta H_S - \frac{1}{2}G(\epsilon)\right)}}{\Delta H_S},\sqrt{\frac{3}{8}}\right\}, 
\end{equation}
and when $G(\epsilon) > \Delta H_S$, $\delta_\group \geq \sqrt{3/8}$, 
where we could take either 
\begin{equation}
G(\epsilon) = \Delta H_L - 2\epsilon \frakJ, 
\end{equation} 
or 
\begin{equation}
G(\epsilon) = \Delta H_L - 2 \epsilon (\sqrt{(1-\epsilon^2)\frakF}+\frakB),
\end{equation}
where $\frakJ$, $\frakF$, and $\frakB$ are given by \eqref{eq:def-frakJ}, \eqref{eq:def-frakF}, and \eqref{eq:def-frakB}, respectively. 
\end{theorem}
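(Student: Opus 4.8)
The plan is to chain together the three results of this section, using the charge fluctuation $\chi$ as the intermediary: Proposition~\ref{prop:global-charge} bounds $\delta_\group$ from below in terms of $|\Delta H_L - \chi|$, while Propositions~\ref{prop:charge-KL} and~\ref{prop:charge-metrology} each bound $|\chi|$ from above in terms of $\epsilon$. Since $\chi$ is a difference of expectation values of the Hermitian operator $\mE^\dagger_{\StoL}(H_S)$, it is real, and the two upper bounds $|\chi| \leq 2\epsilon\frakJ$ and $|\chi| \leq 2\epsilon(\sqrt{(1-\epsilon^2)\frakF}+\frakB)$ can be rewritten as $|\chi| \leq \Delta H_L - G(\epsilon)$ for the corresponding choice of $G(\epsilon)$ appearing in the statement.

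The first step is to pass from an upper bound on $|\chi|$ to a lower bound on $|\Delta H_L - \chi|$. In the regime of interest we have $G(\epsilon) \geq 0$, and $G(\epsilon) = \Delta H_L - 2\epsilon\frakJ \leq \Delta H_L$ holds in any case (similarly for the metrology form), so $\Delta H_L - G(\epsilon) \geq 0$; then $\chi \leq |\chi| \leq \Delta H_L - G(\epsilon)$ gives $\Delta H_L - \chi \geq G(\epsilon) \geq 0$, hence $|\Delta H_L - \chi| = \Delta H_L - \chi \geq G(\epsilon)$.

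The second step is to substitute into Proposition~\ref{prop:global-charge}. I would first observe that $t \mapsto \sqrt{t(\Delta H_S - t/2)}/\Delta H_S$ is nondecreasing on $[0,\Delta H_S]$, since $\frac{d}{dt}(t\Delta H_S - t^2/2) = \Delta H_S - t \geq 0$ there, so that replacing $|\Delta H_L-\chi|$ by the smaller number $G(\epsilon)$ can only decrease the relevant expression. Then I would split on the size of $|\Delta H_L - \chi|$: if $|\Delta H_L - \chi| \leq \Delta H_S$, then using $G(\epsilon) \leq |\Delta H_L - \chi| \leq \Delta H_S$ and monotonicity, Proposition~\ref{prop:global-charge} yields $\delta_\group \geq \min\{\sqrt{G(\epsilon)(\Delta H_S - G(\epsilon)/2)}/\Delta H_S,\ \sqrt{3/8}\}$; if $|\Delta H_L - \chi| > \Delta H_S$ it yields $\delta_\group \geq \sqrt{3/8}$, which is no smaller than the claimed bound. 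This settles the case $0 \leq G(\epsilon) \leq \Delta H_S$; and when $G(\epsilon) > \Delta H_S$, the lower bound $|\Delta H_L-\chi| \geq G(\epsilon) > \Delta H_S$ immediately gives $\delta_\group \geq \sqrt{3/8}$ via the second clause of Proposition~\ref{prop:global-charge}.

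Given that Propositions~\ref{prop:global-charge}, \ref{prop:charge-KL}, and \ref{prop:charge-metrology} are already established, nothing genuinely difficult remains here—the theorem is essentially a repackaging of the section's work. The only points requiring care are the monotonicity check that justifies substituting $G(\epsilon)$ for $|\Delta H_L-\chi|$, and aligning the case distinctions so that the hypothesis $|\Delta H_L - \chi| \leq \Delta H_S$ of Proposition~\ref{prop:global-charge} is invoked correctly when one only controls $|\Delta H_L - \chi|$ from below by $G(\epsilon)$.
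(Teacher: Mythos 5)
Your proof is correct and fills in the straightforward combination that the paper leaves implicit (the paper merely remarks that Theorem~\ref{thm:global-1} follows ``immediately'' from Propositions~\ref{prop:global-charge}, \ref{prop:charge-KL}, and \ref{prop:charge-metrology}). The steps you flag as needing care—the monotonicity of $t\mapsto\sqrt{t(\Delta H_S - t/2)}/\Delta H_S$ on $[0,\Delta H_S]$, and the case split ensuring the hypotheses of Proposition~\ref{prop:global-charge} are correctly invoked when only a lower bound $|\Delta H_L-\chi|\geq G(\epsilon)$ is known—are exactly the right ones, and you handle them correctly.
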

For the extreme cases of  exactly covariant codes and exactly error-correcting codes, we have the following corollaries: 
\begin{corollary}
\label{col:global-1-covariant}
\isometric  When $\epsilon = 0$, i.e., the code is exactly error-correcting, it holds that when $\Delta H_L \leq \Delta H_S$, 
\begin{equation}
\delta_\group 
\geq \frac{\sqrt{\Delta H_L\left(\Delta H_S -\frac{1}{2}\Delta H_L\right)}}{\Delta H_S}, 
\end{equation}
and when $\Delta H_L > \Delta H_S$, $\delta_\group \geq \sqrt{3/8}$. 
\end{corollary}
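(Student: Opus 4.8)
The plan is to read this off as the $\epsilon = 0$ specialization of \thmref{thm:global-1} (equivalently, of \propref{prop:global-charge}). The only substantive input is that an exact QEC code obeying the HKS condition must have vanishing charge fluctuation. Indeed, when $\epsilon = 0$ the Knill--Laflamme conditions \eqref{eq:KL} hold exactly, so $\braket{0_L|\mE^\dagger_{\StoL}(K_{S,i}^\dagger K_{S,j})|0_L} = \braket{1_L|\mE^\dagger_{\StoL}(K_{S,i}^\dagger K_{S,j})|1_L}$ for every $i,j$; expanding $H_S$ over $\{K_{S,i}^\dagger K_{S,j}\}$ by the HKS condition gives \eqref{eq:KL-H}, i.e., $\chi = 0$. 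Alternatively, one simply sets $\epsilon = 0$ in \propref{prop:charge-KL} or \propref{prop:charge-metrology}, each of which then forces $\abs{\chi} \le 0$.

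With $\epsilon = 0$, hence $\chi = 0$, both admissible choices of $G$ in \thmref{thm:global-1} collapse to $G(0) = \Delta H_L$, as the $\epsilon$-dependent corrections vanish. I would then substitute $G(0) = \Delta H_L$ into \eqref{eq:tradeoff-1}: when $\Delta H_L \le \Delta H_S$ this lands in the first branch and yields $\delta_\group \ge \min\{\sqrt{\Delta H_L(\Delta H_S - \frac12\Delta H_L)}/\Delta H_S,\ \sqrt{3/8}\}$, while when $\Delta H_L > \Delta H_S$ it lands in the second branch and yields $\delta_\group \ge \sqrt{3/8}$ directly, which is the claimed dichotomy up to the form of the first case. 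Equivalently, one applies \propref{prop:global-charge} with $\chi = 0$, using $\abs{\Delta H_L - \chi} = \Delta H_L$.

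The one bookkeeping point---and the closest thing to an obstacle, though it is minor---is presenting the $\Delta H_L \le \Delta H_S$ case without the minimum. Writing $t = \Delta H_L/\Delta H_S \in (0,1]$, the first entry of the minimum is $\sqrt{t - t^2/2}$, which is $\le \sqrt{3/8}$ exactly when $t \le 1/2$; so in the regime $\Delta H_L \ll \Delta H_S$ of primary interest---and in particular for the leading-order form $\delta_\group \gtrsim \sqrt{\Delta H_L/\Delta H_S}$---the minimum equals the first term and can be dropped, and for $\Delta H_L$ closer to $\Delta H_S$ I would retain the minimum explicitly. Apart from this, the corollary is immediate from \thmref{thm:global-1}; all the real work resides in \propref{prop:global-charge} together with the charge-fluctuation bounds \propref{prop:charge-KL}/\propref{prop:charge-metrology}.
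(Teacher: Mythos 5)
Your proof is correct and is essentially the paper's own route: Corollary~\ref{col:global-1-covariant} is obtained by specializing Theorem~\ref{thm:global-1} (equivalently Proposition~\ref{prop:global-charge}) to $\epsilon=0$, where the exact Knill--Laflamme conditions plus the HKS assumption force $\chi=0$ and hence $G(0)=\Delta H_L$.

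Your ``bookkeeping point'' is also a genuine observation, not a non-issue: as stated, Theorem~\ref{thm:global-1} only yields $\delta_\group \ge \min\bigl\{\sqrt{\Delta H_L(\Delta H_S-\tfrac12\Delta H_L)}/\Delta H_S,\ \sqrt{3/8}\bigr\}$ for $\Delta H_L\le\Delta H_S$, and the first entry exceeds $\sqrt{3/8}$ precisely when $\Delta H_L>\Delta H_S/2$ (it reaches $1/\sqrt2$ at $\Delta H_L=\Delta H_S$), so dropping the minimum strengthens the bound in that subrange beyond what Proposition~\ref{prop:global-charge} proves. The paper itself implicitly agrees with you: when Corollary~\ref{col:global-1-covariant} is invoked in the proof of Corollary~\ref{col:gate} (at \eqref{eq:app-gate}), the minimum with $\sqrt{3/8}$ is reinstated. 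So the corollary statement should either keep the minimum or be restricted to $\Delta H_L\le\Delta H_S/2$; your proof, which retains the minimum for $\Delta H_L>\Delta H_S/2$, is the accurate version.
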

\begin{corollary}
\label{col:global-1-qec}
\isometric
When $\delta_\group = 0$, i.e., the code is exactly covariant, it holds that 
\begin{equation}
\label{eq:epsilon-covariant-1}
\epsilon \geq \frac{\Delta H_L}{2 \frakJ}, 
\end{equation}
where $\frakJ$ is given by \eqref{eq:def-frakJ}. 
\end{corollary}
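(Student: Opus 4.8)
The plan is to obtain this corollary as the exact-symmetry specialization of the charge-fluctuation argument. The two ingredients are already in hand: on one side, an exactly covariant isometric code has charge fluctuation $\chi=\Delta H_L$ (the right edge of \figref{fig:charge}); on the other, \propref{prop:charge-KL} gives $\abs{\chi}\le 2\epsilon\frakJ$ for any code satisfying the HKS condition. Chaining these immediately yields $\Delta H_L=\abs{\chi}\le 2\epsilon\frakJ$, hence \eqref{eq:epsilon-covariant-1}.

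First I would pin down the identity $\chi=\Delta H_L$. When $\delta_\group=0$, by definition $P(\mU_{S,\theta}\circ\mE_{\LtoS},\mE_{\LtoS}\circ\mU_{L,\theta})=0$ for every $\theta$, so the two channels coincide and the exact covariance condition \eqref{eq:covariant} holds. For an isometric encoder this is equivalent to the charge relation \eqref{eq:isometric-charge-relation}, i.e.\ $\mE^\dagger_{\StoL}(H_S)=H_L-\nu\id$ for some $\nu\in\bR$. Substituting this into the definition $\chi=\braket{0_L|\mE^\dagger_{\StoL}(H_S)|0_L}-\braket{1_L|\mE^\dagger_{\StoL}(H_S)|1_L}$, the constant $\nu\id$ drops out, and since $\ket{0_L}$ and $\ket{1_L}$ are chosen to be eigenstates of $H_L$ associated with its largest and smallest eigenvalues, one gets $\chi=\braket{0_L|H_L|0_L}-\braket{1_L|H_L|1_L}=\Delta H_L$. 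Degeneracy of $H_L$ causes no trouble because \eqref{eq:isometric-charge-relation} is an operator identity, so the particular choices of $\ket{0_L},\ket{1_L}$ are irrelevant.

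With $\chi=\Delta H_L$ established, \propref{prop:charge-KL} gives $\Delta H_L=\abs{\chi}\le 2\epsilon\frakJ$, which rearranges to $\epsilon\ge \Delta H_L/(2\frakJ)$ as claimed, with $\frakJ$ as in \eqref{eq:def-frakJ}. Equivalently, the bound can be read off \thmref{thm:global-1} by setting $\delta_\group=0$: on the range $0\le G(\epsilon)\le\Delta H_S$ one has $\Delta H_S-\tfrac12 G(\epsilon)\ge\tfrac12\Delta H_S>0$, so the first branch of \eqref{eq:tradeoff-1} vanishes only if $G(\epsilon)=0$, while $G(\epsilon)>\Delta H_S$ forces $\delta_\group\ge\sqrt{3/8}>0$; hence $\delta_\group=0$ implies $G(\epsilon)=\Delta H_L-2\epsilon\frakJ\le 0$. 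There is essentially no obstacle here—the substance is all in \propref{prop:charge-KL}; the only steps requiring minor care are invoking the isometric equivalence between \eqref{eq:covariant} and \eqref{eq:isometric-charge-relation} and verifying that the constant offset $\nu$ cancels in $\chi$.
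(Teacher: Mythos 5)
Your proof is correct and follows exactly the route the paper intends: establish $\chi=\Delta H_L$ for an exactly covariant isometric code via Eq.~(\ref{eq:isometric-charge-relation}), then apply Proposition~\ref{prop:charge-KL} (equivalently, set $\delta_\group=0$ in Theorem~\ref{thm:global-1}). The paper states the corollary without a separate proof precisely because the argument is this immediate specialization, so there is nothing to flag.
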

\begin{corollary}
\label{col:global-1-qec-2}
\nonisometric
When $\delta_\group = 0$, i.e., the code is exactly covariant, we must have either $2\epsilon\frakB \geq \Delta H_L$ or 
\begin{equation}
\label{eq:epsilon-covariant-2}
\epsilon\frac{\sqrt{1-\epsilon^2}}{1-2\epsilon\frakB/\Delta H_L} \geq \frac{\Delta H_L}{2\sqrt{\frakF}}, 
\end{equation}
where $\frakF$ and $\frakB$ are given by \eqref{eq:def-frakF} and \eqref{eq:def-frakB}, respectively. 
\end{corollary}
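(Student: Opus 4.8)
The plan is to read off this corollary as the $\delta_\group=0$ case of the metrology-based trade-off already established, namely \propref{prop:charge-metrology} (equivalently, \thmref{thm:global-1} with the second choice of $G(\epsilon)$). The only additional input I need is the value of the charge fluctuation for an exactly covariant code. Since $\delta_\group=0$ means the encoding satisfies the covariance condition \eqref{eq:covariant}, it attains the exact-symmetry endpoint of \figref{fig:charge}, i.e.\ $\chi=\Delta H_L$: for an isometric encoding this follows from the charge relation \eqref{eq:isometric-charge-relation}, which gives $\mE^\dagger_{\StoL}(H_S)=H_L-\nu\id$ for some $\nu\in\bR$, whence $\chi=\braket{0_L|H_L|0_L}-\braket{1_L|H_L|1_L}=\Delta H_L$ because $\ket{0_L},\ket{1_L}$ are eigenvectors for the largest and smallest eigenvalues of $H_L$.

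I would then substitute $\chi=\Delta H_L$ into the bound $\abs{\chi}\le 2\epsilon\bigl(\sqrt{(1-\epsilon^2)\frakF}+\frakB\bigr)$ of \propref{prop:charge-metrology}, which rearranges to $\Delta H_L-2\epsilon\frakB\le 2\epsilon\sqrt{(1-\epsilon^2)\frakF}$, and split into two cases. If $\Delta H_L-2\epsilon\frakB\le 0$, this is precisely the first alternative $2\epsilon\frakB\ge\Delta H_L$ of the statement and nothing further is needed. Otherwise both sides are positive, so squaring preserves the inequality and yields $(\Delta H_L-2\epsilon\frakB)^2\le 4\epsilon^2(1-\epsilon^2)\frakF$; dividing by $4\frakF$, taking square roots, factoring $\Delta H_L$ out of the left-hand side, and dividing by the now-positive quantity $1-2\epsilon\frakB/\Delta H_L$ produces exactly \eqref{eq:epsilon-covariant-2}.

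I do not expect a genuine obstacle, since all the substantive work lives in \propref{prop:charge-metrology}; the one point demanding care is the case split, i.e.\ never dividing by $1-2\epsilon\frakB/\Delta H_L$ when it is nonpositive, which is exactly why the claim is phrased as a dichotomy. If one prefers to route the argument through \thmref{thm:global-1} directly, the only extra observation needed is that for $\delta_\group=0$ the right-hand side of \eqref{eq:tradeoff-1} can vanish only when $G(\epsilon)\le 0$: indeed $\Delta H_S-\tfrac12 G(\epsilon)>0$ whenever $0\le G(\epsilon)\le\Delta H_S$, while $G(\epsilon)>\Delta H_S$ would force $\delta_\group\ge\sqrt{3/8}>0$; hence $G(\epsilon)=\Delta H_L-2\epsilon(\sqrt{(1-\epsilon^2)\frakF}+\frakB)\le 0$, which is the same starting inequality, and one finishes as above.
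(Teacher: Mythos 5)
Your route is the paper's intended one---substitute the covariant-case value of $\chi$ into \propref{prop:charge-metrology} and split on the sign of $\Delta H_L-2\epsilon\frakB$---and the algebraic manipulations in your second and third paragraphs are fine. The gap is in the first step. The corollary is stated for a general (not necessarily isometric) encoding, and the paper stresses in \secref{sec:global-1-tradeoff} that it holds for arbitrary codes precisely because \propref{prop:charge-metrology} does. Yet your only justification that $\delta_\group=0$ forces $\chi=\Delta H_L$ goes through \eqref{eq:isometric-charge-relation}, which the paper establishes only for isometric encodings---as you yourself flag by writing ``for an isometric encoding this follows from\ldots''. The non-isometric case is not addressed, and in fact $\chi=\Delta H_L$ can fail there: the paper notes right after \eqref{eq:isometric-charge-relation} that $\delta_\charge$ need not vanish for exactly covariant CPTP encodings, and \lemmaref{lemma:chi-local} only provides the one-sided bound $\abs{\chi}\geq\Delta H_L-\delta_\charge$, not equality. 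Concretely, covariance of a general CPTP $\mE_{\LtoS}$ only forces $\mE^\dagger_{\StoL}(H_S)$ to commute with $H_L$, not to equal $H_L-\nu\id$; a constant covariant encoding onto an eigenstate of $H_S$ has $\chi=0$.

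Your fallback route through \thmref{thm:global-1} inherits the same problem and is, moreover, stated only for isometric codes. To patch the proof you should either restrict to isometric encodings (in which case your argument is complete and parallels \corollaryref{col:global-1-qec}), or avoid $\chi$ entirely: when $\delta_\group=0$, the rotated-dephasing factor $\mD_{C,\theta}$ appearing in the proof of \propref{prop:charge-metrology} is $\theta$-independent for any CPTP encoding, which yields the cleaner, $\frakB$-free bound \eqref{eq:epsilon-covariant-3} directly. Be aware, though, that \eqref{eq:epsilon-covariant-3} does not formally imply \eqref{eq:epsilon-covariant-2} unless $\frakB\geq\epsilon\Delta H_L$, so the corollary as phrased for general encodings does not follow from either route without first securing $\chi=\Delta H_L$.
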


We make a few remarks on the scope of application of these results. Although \propref{prop:global-charge} and \propref{prop:charge-KL} need the isometric encoding assumption, \propref{prop:charge-metrology} (and thus \corollaryref{col:global-1-qec-2}) holds for arbitrary codes. Also, \propref{prop:global-charge} only holds when $H_L$ and $H_S$ share a common period, but \propref{prop:charge-KL} and  \propref{prop:charge-metrology} hold true for arbitrary Hamiltonians without the $U(1)$ assumption. {Finally, a keen reader might have already noticed that the choice of the pair of orthonormal states $\{\ket{0_L},\ket{1_L}\}$ in the proofs of \propref{prop:global-charge}, \propref{prop:charge-KL} and \propref{prop:charge-metrology} is quite arbitrary (chosen only for the purpose of proving \thmref{thm:global-1}) and we can in principle replace it with any other pair and the proofs will still hold, leading to refinements of these propositions. We present these refinements in detail in \appref{app:refine-2}. In particular, \propref{prop:global-charge} leads to an inequality between $\delta_\group$ and $\delta_\charge$. }

{
To compare the results from the KL-based method and the quantum metrology method, we first consider the limiting situation where $\delta_\group \ll 1$ and $\epsilon \ll 1$.
Then we have
\begin{gather}
\label{eq:frakJ-global-asym}
\delta_\group \gtrsim \sqrt{\frac{\Delta H_L - 2\epsilon \frakJ}{\Delta H_S}} ~\Leftrightarrow~ \epsilon + \delta_\group^2 \frac{\Delta H_S}{2\frakJ} \gtrsim \frac{\Delta H_L}{2\frakJ},\\ 
\delta_\group \gtrsim \sqrt{\frac{\Delta H_L - 2\epsilon (\sqrt{\frakF}+\frakB)}{\Delta H_S}} ~\Leftrightarrow~ 
\label{eq:frakF-global-asym}
\epsilon + \delta_\group^2 \frac{\Delta H_S}{2(\sqrt{\frakF}+\frakB)} \gtrsim \frac{\Delta H_L}{2(\sqrt{\frakF}+\frakB)}.
\end{gather}
When $\frakB \ll \sqrt{\frakF}$, the metrology bound performs no worse than the KL-based bound because we always have $\frakJ^2 \geq \frakF$ (proof in \appref{app:inequality}). 
For the examples we study later in \secref{sec:case-study}, we find that $\frakB$ is  negligible, but in practice one may need to bound the parameter $\frakB$ a priori using properties of specific codes to obtain desired trade-off relations. It still open in general under which conditions $\frakB \ll \sqrt{\frakF}$ holds, and whether \propref{prop:charge-metrology} might be further improved with $\frakB$ removed. }

A byproduct of our results are lower bounds on $\epsilon$ (\eqref{eq:epsilon-covariant-1} and \eqref{eq:epsilon-covariant-2}) for exactly covariant codes, a special case which has been extensively studied in previous works~\cite{faist2019continuous,woods2019continuous,kubica2020using,zhou2020new,yang2020covariant}. As discussed below, the bound \eqref{eq:epsilon-covariant-1} for random local erasure noise behaves almost the same as the one in Ref.~\cite{faist2019continuous} and our  \propref{prop:charge-KL} provides an alternative method to obtain this result. However, compared to \propref{prop:charge-metrology}, the bound in Ref.~\cite{zhou2020new}
\begin{equation}
\label{eq:epsilon-covariant-3}
\epsilon\frac{\sqrt{1-\epsilon^2}}{1 - 2\epsilon^2} \geq \frac{\Delta H_L}{2\sqrt{\frakF}}
\end{equation}
does not involve the parameter $\frakB$, implying that the proof of our \propref{prop:charge-metrology} might be further improved.

\subsection{Noise models and explicit behaviors of the bounds}
\label{sec:global-1-noise}

Now we explicitly discuss how the bounds in \thmref{thm:global-1} behave under difference types of noise in an $n$-partite system. 
We consider 1-local Hamiltonians $H_S = \sum_{l=1}^n H_{S_l}$, so $\Delta H_S = O(n)$. In this case we have $\delta_\group = \Omega(1/\sqrt{\Delta H_S}) = \Omega(1/\sqrt{n})$ as long as $G(\epsilon) = \Omega(1)$. On the other hand, when $G(\epsilon) = o(1)$, the scaling of $\epsilon$ must be lower bounded by $\Omega(1/\frakJ)$ {(or $\Omega(1/(\sqrt{\frakF}+\frakB))$)} so it is important to understand the scalings of $\frakJ$, $\frakF$ and $\frakB$. When $n$ is large, the values of $\frakJ$ and $\frakF$ may be not efficiently computable. However, in \propref{prop:charge-KL},  \propref{prop:charge-metrology} and \thmref{thm:global-1}, we could always replace them with their efficiently computable upper bounds  and the trade-off relations still hold then. 
We discuss the following two general noise models~\cite{faist2019continuous,woods2019continuous,kubica2020using,zhou2020new,yang2020covariant} (there are still other types of noises that we will not cover, e.g., random long-range phase errors~\cite{woods2019continuous}): 
\begin{enumerate}[(1),wide, labelwidth=0pt, labelindent=0pt]
\item Random local noise, where different local noise channels acting on a constant number of subsystems randomly. Specifically, $\mN_S = \sum_{i} q_i \mN_{S}^{(i)}$ and $H_S = \sum_i H_{S}^{(i)}$, where $\mN_{S}^{(i)}$ represent the local noise channels acting on a constant number of subsystems, $q_i$ represent their probabilities ($q_i>0$ and $\sum_i q_i = 1$), and the HKS condition is satisfied for each pair of $(\mN_{S}^{(i)},H_{S}^{(i)})$. Then we have 
\begin{align}
\label{eq:frakJ-local}
\frakJ(\mN_S,H_S) &\leq \max_i \frac{1}{q_i}\frakJ(\mN_S^{(i)},H_S^{(i)}),\\
\label{eq:frakF-local}
\frakF(\mN_S,H_S) &\leq \sum_i \frac{1}{q_i}\widetilde{\frakF}(\mN_S^{(i)},H_S^{(i)}),
\end{align}
where 
\begin{equation}
\widetilde{\frakF}(\mN_S,H_S) := 
4\min_{h: H_S = \sum_{ij} h_{ij} K_{S,i}^\dagger K_{S,j}} \bigg\| \sum_{ij}(h^2)_{ij} K_{S,i}^\dagger K_{S,j} \bigg\|.
\end{equation} 
We prove \eqref{eq:frakJ-local} in \appref{app:frakJ} and \eqref{eq:frakF-local} was previously known in Ref.~\cite{zhou2020new}. Note that $\widetilde{\frakF}$ might be different when we replace $H_{S_l}$ with $H_{S_l} - \nu \id$ for some $\nu \in \bR$, one need to minimize over $\nu$ to find the optimal $\widetilde{\frakF}$\cite{zhou2020new}. 
For example, consider single-erasure noise in an $n$-partite system and let the erasure channel of the $\ell$-th subsystem be $\mN_{S_l}(\cdot) = \ket{\vac}\bra{\vac}_{S_l} \otimes \trace_{S_l}(\cdot)$ (we use  $\ket{\vac}$ to represent the vacuum state after erasure). When $\mN_S = \sum_{l=1}^n \frac{1}{n} \mN_{S_l}$ and the Hamiltonian takes the 1-local form $H_S = \sum_{l=1}^n H_{S_l}$, we have $\frakJ(\mN_{S_l},H_{S_l}) = \Delta H_{S_l}$ and $\widetilde{\frakF}(\mN_{S_l},H_{S_l})  = (\Delta H_{S_l})^2$. Then we have  
\begin{align}
\label{eq:erasure-J}
    \frakJ(\mN_S,H_S) &\leq n\max_l \Delta H_{S_l},\\
    \frakF(\mN_S,H_S) &\leq n\sum_{l=1}^n (\Delta H_{S_l})^2, \\
    \label{eq:erasure-frakB}
    \sqrt{\frakF(\mN_S,H_S)}+\frakB(\mN_S,H_S) &\leq n\!\left(\sqrt{\frac{\sum_{l=1}^n (\Delta H_{S_l})^2}{n}} + \sqrt{2}{\frac{\Delta H_S}{n}} \right). 
\end{align}
Note that using \eqref{eq:erasure-J} and \eqref{eq:epsilon-covariant-1}, we obtain $\epsilon \geq \Delta H_L/(2n\max_l \Delta H_{S_l})$ which is identical to Theorem~1 in Ref.~\cite{faist2019continuous}. 
{In \eqref{eq:erasure-frakB}, we use 
\begin{equation}
\frakB \leq \sqrt{2}{\Delta H_S}    
\end{equation}
Comparing \eqref{eq:erasure-frakB} with \eqref{eq:erasure-J}, we find that when $\frakB$ is not negligible, the quantum metrology method can still outperform the KL-based method in some cases (e.g., when one of $\Delta H_{S_l}$ is extremely large). 
For other types of random local noise acting on each subsystem uniformly randomly, we also have $\frakJ=O(n)$ and $\sqrt{\frakF}+\frakB=O(n)$ and the behaviors of the trade-off relations from the KL-based method and the quantum metrology method are similar.} {From~\eqref{eq:frakJ-global-asym} and \eqref{eq:frakF-global-asym}, we have $\epsilon + \Theta(1)\cdot \delta_\group^2  = \Omega(1/n)$, meaning that when both $\epsilon$ and $\delta_\group$ are sufficiently small, their optimal scalings are $\Theta(1/n)$ and $\Theta(1/\sqrt{n})$, respectively. }

\item Independent noise, where noise channels act on each subsystem independently. Note that  independent noise is considered a ``stronger'' noise model than local noise because the noise actions are no longer guaranteed to be local. Specifically, $\mN_S = \bigotimes_{l=1}^n \mN_{S_l}$ and $H_S = \sum_{l=1}^n H_{S_l}$ where $\mN_{S_l}$ represent independent noise channels acting on each subsystem and $H_{S_l}$ acts only non-trivially on the subsystem $S_l$, and the HKS condition is  satisfied for each pair of $(\mN_{S_l},H_{S_l})$. Here we have 
\begin{align}
\label{eq:frakJ-independent}
\frakJ(\mN_S,H_S) &\leq \sum_{l=1}^n \frakJ(\mN_{S_l},H_{S_l}),\\
\label{eq:frakF-independent}
\frakF(\mN_S,H_S) &= \sum_{l=1}^n \frakF(\mN_{S_l},H_{S_l}).
\end{align}
The proof of \eqref{eq:frakJ-independent} is provided in \appref{app:frakJ}, and \eqref{eq:frakF-independent} follows directly from the additivity of the regularized QFI $\frakF = \barF(\bigotimes_{l=1}^n\mN_{S_l,\theta}) = \sum_{l=1}^n \barF(\mN_{S_l,\theta})$~\cite{zhou2020new}. 
Therefore we have $\frakJ = O(n)$ and $\sqrt{\frakF} = O(\sqrt{n})$ and there is now a quadratic gap between them.
If $\frakB$ can be upper bounded by $O(\sqrt{n})$ (e.g., in \secref{sec:case-study}), from \eqref{eq:frakF-global-asym}, we have $\epsilon + \Theta(\sqrt{n})\cdot \delta_\group^2  = \Omega(1/\sqrt{n})$, meaning that when both $\epsilon$ and $\delta_\group$ are sufficiently small, their optimal scalings are both $\Theta(1/\sqrt{n})$. From \eqref{eq:frakJ-global-asym}, we only have $\epsilon + \Theta(1)\cdot \delta_\group^2  = \Omega(1/n)$ and a worse lower bound $\epsilon = \Omega(1/n)$ for small $\delta_\group$. In general, for independent noise, the trade-off bound from the KL-based method is asymptotically weaker than the one from the quantum metrology method as long as $\frakB = o(n)$.   
\end{enumerate}

Finally, we remark here that the exact values of $\frakJ(\mN_{S_l},H_{S_l})$, $\frakF(\mN_{S_l},H_{S_l})$ and $\widetilde{\frakF}(\mN_{S_l},H_{S_l})$ can also be analytically calculated (or upper bounded) for not only erasure noise, but also other types of practically relevant noise, e.g., depolarizing noise. In principle, to derive an upper bound on $\frakJ(\mN_{S_l},H_{S_l})$,  $\frakF(\mN_{S_l},H_{S_l})$ or $\widetilde{\frakF}(\mN_{S_l},H_{S_l})$, one only need to find a Hermitian matrix $h$ that satisfies $H_{S_l} = \sum_{ij} h_{ij} K_{S_l,i}^\dagger K_{S_l,j}$ and use the target functions $\Delta h$,  $\bigg\| \sum_{ij}(h^2)_{ij} K_{S,i}^\dagger K_{S,j} - H_S^2\bigg\|$ or $\bigg\| \sum_{ij}(h^2)_{ij} K_{S,i}^\dagger K_{S,j}\bigg\|$ as the upper bound. One can further tighten the bound by minimizing the target functions over all possible $h$. We give a few examples below. 

First, we note that for an erasure noise channel $\mN(\cdot) = (1-p)(\cdot) + p \ket{\vac}\bra{\vac}\trace(\cdot)$, we have 
\begin{align}
  \frakJ(\mN,H)  = \frac{\Delta H}{p},\quad 
  \frakF(\mN,H)  = (\Delta H)^2 \frac{1-p}{p},\quad 
  \widetilde{\frakF}(\mN,H) =  \frac{(\Delta H)^2}{p}. 
\end{align}
To derive these, we assume the system is $d$-dimensional and let $K_1 = \sqrt{1-p}\id$, $K_{i+1} = \sqrt{p}\ket{\vac}\bra{i}$, for $i = 1,\ldots, d$. Then the $d$-dimensional Hermitian matrix $h$ such that $H = \sum_{ij} h_{ij} K_i^\dagger K_j$ must be $h = \begin{pmatrix} \frac{h_{11}}{1-p} & 0 \\  0 & \frac{H-h_{00}\id}{p} \end{pmatrix}$ for some $h_{11} \in \bR$. The above equations follow straightforwardly by minimizing the target functions over $h_{11}$ (see also Ref.~\cite{zhou2020new}). 

Similarly, for single-qubit depolarizing noise $\mN(\cdot) = (1-p)(\cdot) + p \frac{\id}{2}$, we have 
\begin{align}
  \frakJ(\mN,H)  \leq \frac{2\Delta H}{p(1-p/2)},\quad 
  \frakF(\mN,H)  = (\Delta H)^2 \frac{2(1-p)^2}{p(3-2p)},\quad 
  \widetilde{\frakF}(\mN,H) =  (\Delta H)^2 \frac{2-p}{p(3-2p)}. 
\end{align}
To derive these, we let $K_1 = \sqrt{1-\frac{3}{4}p}\id$, $K_2 = \sqrt{\frac{p}{4}}X$, $K_3 = \sqrt{\frac{p}{4}}Y$ and $K_4 = \sqrt{\frac{p}{4}}Z$. Then the $4$-dimensional Hermitian matrix such that $H = \sum_{ij} h_{ij} K_i^\dagger K_j$ must be $h = \begin{pmatrix} 0 & 0 & 0 & h_{14} \\ 0 & 0 & i h_{23} & 0 \\ 0 & -i h_{23} & 0 & 0 \\ h_{14} & 0 & 0 & 0 \end{pmatrix}$ for some $h_{14}, h_{23} \in \bR$ when $\bigg\| \sum_{ij}(h^2)_{ij} K_{S,i}^\dagger K_{S,j} - H_S^2\bigg\|$ in $\frakF$ and $\bigg\| \sum_{ij}(h^2)_{ij} K_{S,i}^\dagger K_{S,j}\bigg\|$ in $\frakJ$ are minimized (see Ref.~\cite{zhou2020theory}). The above equations follow straightforwardly by minimizing the target functions over $h_{14}$ and $h_{23}$. Note that for $\frakJ$, there is no guarantee that the anti-diagonal form of $h$ is optimal, so it only provides an upper bound on $\frakJ$. 

Finally, for depolarizing noise on qudits: $\mN(\cdot) = (1-p)(\cdot) + p \frac{\id}{d}$, we have from Ref.~\cite{zhou2020new} that 
\begin{align}
  \frakF(\mN,H)  \leq (\Delta H)^2 \frac{1-p}{p},\quad 
  \widetilde{\frakF}(\mN,H) \leq (\Delta H)^2 \frac{1}{p}, 
\end{align}
and how to find a simple upper bound on $\frakJ$ is still open.

\section{Trade-off between QEC and global covariance: Gate implementation error approach}
\label{sec:global-2}

In this section, we introduce another framework that also enables us to derive the trade-off between the QEC inaccuracy $\epsilon$ and the global covariance violation $\delta_\group$ and could be interesting in its own right. Here the idea is to analyze a key notion which we call the gate implementation error  $\gamma$ that allow us to treat $\epsilon$ and $\delta_\group$ on the same footing. More specifially, we first formally define $\gamma$ in \secref{sec:global-2-gate} and show that $\epsilon + \delta_\group \geq \gamma$. Then we derive two lower bounds on $\gamma$ using two different methods from quantum metrology and quantum resource theory, which automatically induce two trade-off relations between the QEC inaccuracy and the global covariance violation. We will compare the gate implementation error approach to the charge fluctuation approach at the end of this section.  

\subsection{Gate implementation error as a unification of QEC inaccuracy and global covariance violation} 
\label{sec:global-2-gate}

Consider a practical quantum computing scenario where we want to implement a set of logical gates $U_{L,\theta}= e^{-iH_L\theta}$ for $\theta \in \bR$ using physical gates $U_{S,\theta} = e^{-iH_S\theta}$ under noise $\mN_S$. We would like to design an encoding and a recovery channel such that $\mR_{\StoL}\circ\mN_{S,\theta}\circ\mE_{\LtoS}$ simulate $\mU_{L,\theta}$. We call the error in such simulations the \emph{gate implementation error} and the \emph{Choi gate implementation error}, defined by 
\begin{align}
\gamma
&:= \min_{\mR_{\StoL}} \max_\theta P(\mR_{\StoL}\circ \mN_{S,\theta}\circ\mE_{\LtoS}, \mU_{L,\theta}), 
\\
\bgamma
&:= \min_{\mR_{\StoL}} \max_\theta \barP(\mR_{\StoL}\circ \mN_{S,\theta}\circ\mE_{\LtoS}, \mU_{L,\theta}). 
\end{align}
Both the QEC inaccuracy and the covariance violation contribute to the gate implementation error (see \figref{fig:gate}). Clearly, $\gamma = 0$ when the quantum code is exactly error-correcting and covariant. In general, $\gamma$ is upper bounded by the sum of $\epsilon$ and $\delta_\group$, as shown in the following 
proposition. 
\begin{proposition}
\label{prop:gate}
Consider a quantum code defined by $\mE_{\LtoS}$. Consider  physical Hamiltonian $H_S$,  logical Hamiltonian $H_L$, and noise channel $\mN_S$.  It holds that
\begin{gather}
\epsilon + \delta_\group \geq \gamma,\\
\bepsilon + \bdelta_\group \geq \bgamma.
\end{gather}
\end{proposition}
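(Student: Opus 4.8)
The plan is to prove the worst-case bound $\epsilon + \delta_\group \geq \gamma$ by exhibiting a single, $\theta$-independent recovery channel that simultaneously controls both the error-correction error and the covariance error for every $\theta$, and then to observe that the Choi version $\bepsilon + \bdelta_\group \geq \bgamma$ follows by the identical argument with $P$ replaced by $\barP$. Concretely, let $\mR^{\optL}_{\StoL}$ be a recovery channel attaining the QEC inaccuracy, i.e.\ $P(\mR^{\optL}_{\StoL}\circ\mN_S\circ\mE_{\LtoS},\id_L) = \epsilon$, and fix it once and for all. Since $\gamma$ is a minimum over recovery channels, it suffices to show $\max_\theta P(\mR^{\optL}_{\StoL}\circ\mN_{S,\theta}\circ\mE_{\LtoS},\mU_{L,\theta}) \leq \epsilon + \delta_\group$ for this particular choice.

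Next I would fix an arbitrary $\theta$, write $\mN_{S,\theta} = \mN_S\circ\mU_{S,\theta}$, and insert the intermediate channel $\mR^{\optL}_{\StoL}\circ\mN_S\circ\mE_{\LtoS}\circ\mU_{L,\theta}$ to split the distance via the triangle inequality for the channel purified distance:
\begin{align*}
P(\mR^{\optL}_{\StoL}\circ\mN_{S,\theta}\circ\mE_{\LtoS},\mU_{L,\theta})
&\leq P\big(\mR^{\optL}_{\StoL}\circ\mN_S\circ\mU_{S,\theta}\circ\mE_{\LtoS},\,\mR^{\optL}_{\StoL}\circ\mN_S\circ\mE_{\LtoS}\circ\mU_{L,\theta}\big)\\
&\quad + P\big(\mR^{\optL}_{\StoL}\circ\mN_S\circ\mE_{\LtoS}\circ\mU_{L,\theta},\,\mU_{L,\theta}\big).
\end{align*}
For the first term I would apply monotonicity of the purified distance under post-composition by the parameter-independent channel $\mR^{\optL}_{\StoL}\circ\mN_S$, which strips off the outer channel and leaves $P(\mU_{S,\theta}\circ\mE_{\LtoS},\mE_{\LtoS}\circ\mU_{L,\theta}) \leq \delta_\group$. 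For the second term I would use invariance of the channel purified distance under pre-composition by a unitary channel (here $\mU_{L,\theta}$), which converts it into $P(\mR^{\optL}_{\StoL}\circ\mN_S\circ\mE_{\LtoS},\id_L) = \epsilon$. Summing the two bounds and taking the maximum over $\theta$ gives $\gamma \leq \delta_\group + \epsilon$.

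For the Choi statement I would run the same chain with $\barP$ in place of $P$ and with $\mR^{\optL}_{\StoL}$ now chosen to attain $\bepsilon$. All three ingredients remain valid: $\barP$ is a genuine metric (triangle inequality), it is monotone under post-composition by a channel (descending from monotonicity of state fidelity applied to the Choi states), and it is invariant under pre-composition with a unitary (via the transpose trick, pre-composing the encoder with $\mU_{L,\theta}$ amounts to acting with $\mU_{L,\theta}^{T}$ on the reference leg of the maximally entangled state, leaving the fidelity unchanged).

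I do not expect a substantive obstacle here; this is essentially a packaging of the triangle inequality plus data processing. The only points needing care are (a) confirming that the QEC-optimal recovery $\mR^{\optL}_{\StoL}$ is a legitimate $\theta$-independent competitor in the minimization defining $\gamma$, which is immediate, and (b) verifying the invariance of $P$ and $\barP$ under pre-composition with a unitary channel, which is the least automatic step. Everything else follows from the metric and monotonicity properties of the purified distance already used elsewhere in the paper.
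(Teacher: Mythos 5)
Your proof is correct and takes essentially the same route as the paper: the same triangle-inequality split via the intermediate channel $\mR\circ\mN_S\circ\mE_{\LtoS}\circ\mU_{L,\theta}$, the same use of monotonicity to bound the first term by $\delta_\group$ and unitary pre-composition invariance to identify the second with $\epsilon$, and the same observation that the Choi version carries over verbatim. The only cosmetic difference is that you fix the $\epsilon$-optimal recovery up front and feed it into the definition of $\gamma$ as a competitor, whereas the paper runs the chain for an arbitrary recovery and minimizes on both sides at the end; these are interchangeable.
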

\begin{proof}
Using the triangular inequality of the purified distance~\cite{tomamichel2015quantum}, we have 
\begin{multline}
P(\mR_{\StoL}\circ \mN_{S,\theta}\circ\mE_{\LtoS}, \mU_{L,\theta})   \leq \\ P(\mR_{\StoL}\circ \mN_{S,\theta}\circ\mE_{\LtoS}, \mR_{\StoL}\circ \mN_{S}\circ\mE_{\LtoS}\circ\mU_{L,\theta}) + P(\mR_{\StoL}\circ \mN_{S}\circ\mE_{\LtoS}\circ\mU_{L,\theta}, \mU_{L,\theta}).
\end{multline}
The first term is upper bounded by $P(\mU_{S,\theta}\circ\mE_{\LtoS}, \mE_{\LtoS}\circ\mU_{L,\theta})$ using the monotonicity of the purified distance and the second term is equal to $P(\mR_{\StoL}\circ \mN_{S}\circ\mE_{\LtoS}, \id_L)$ by definition. Then $\gamma \leq \epsilon + \delta_\group$ follows by taking the maximization over $\theta$ and the minimization over $\mR_{\StoL}$ on both sides. The above discussion also holds when we replace the purified distance $P(\cdot,\cdot)$ with the Choi purified distance $\barP(\cdot,\cdot )$, implying that $\bgamma \leq \bepsilon + \bdelta_\group$. 
\end{proof}

\begin{figure}[tb]
	\center
	\includegraphics[width=0.4\textwidth]{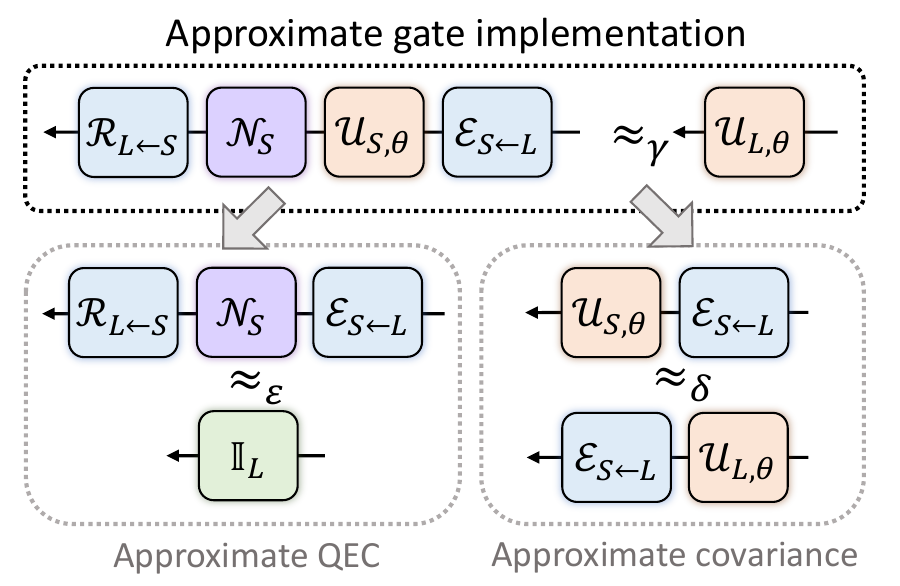}
	\caption{\label{fig:gate} 
	Both the QEC inaccuracy $\epsilon$ and the covariance violation $\delta_\group$ contribute to the error in approximate gate implementation. Specifically, the (Choi) gate implementation error $\gamma$ ($\bgamma$) is upper bounded by $\delta_\group + \epsilon$ ($\bdelta_\group + \bepsilon$). Note that the gates are applied from right to left.}
\end{figure}

\subsection{Bounding gate implementation error}
\subsubsection{Quantum metrology method}
\label{sec:global-2-metrology}

Now we derive a lower bound on the gate implementation error $\gamma$, where we consider the approximate gate implementation of $\mU_{L,\theta}$ using noisy gates $\mN_{S,\theta}$ as an error-corrected metrology protocol where $\theta$ is an unknown parameter to be estimated. 

Again, we use the ancilla-assisted two-level encoding, as introduced in \secref{sec:global-1-metrology}. We choose the repetition code concatenated with the quantum code under study, so that the error-corrected noise channel $
\mN_{C,\theta} = \mR^{\rep}_{\SAtoC} \circ (\mR_{\StoL} \circ\mN_{S,\theta} \circ \mE_{\LtoS} \otimes \id_A) \circ \mE_{\CtoLA}^{\rep}$ becomes a rotated dephasing channel. 
The main difference between the error-corrected metrology protocol we use here and the one in \secref{sec:global-1-metrology} is that now we choose the recovery channel $\mR_{\StoL}$ to be the optimal recovery channel which minimizes the gate implementation error (instead of the QEC inaccuracy) and guarantees a lower noise rate over the entire group of $\theta$ (instead of just around $\theta = 0$). In this case, we show that there always exists some $\theta_*$ such that $\barF(\mN_{C,\theta_*}) = \Theta((\Delta H_L/\gamma)^2)$, which then provide us a lower bound on $\gamma$ using the monotonicity of the regularized QFI. Now we state and prove \thmref{thm:global-2-metrology} which provides a lower bound on $\gamma$ (and thus on $\epsilon + \delta_\group$). 

\begin{theorem}
\label{thm:global-2-metrology}
Consider a quantum code defined by $\mE_{\LtoS}$. Consider  physical Hamiltonian $H_S$,  logical Hamiltonian $H_L$, and noise channel $\mN_S$. Suppose the HKS condition is satisfied. Then it holds that
\begin{equation}
\epsilon + \delta_\group \geq \gamma \geq \ell_1\left(\frac{\Delta H_L}{2\sqrt{\frakF}
}\right), 
\end{equation}
where $\frakF$ is given by \eqref{eq:def-frakF},  $\ell_1(x) = x + O(x^2)$ is the inverse function of the monotonic function $x=y\frac{\sqrt{1-y^2}}{1-2y^2}$ on $[0,1/\sqrt{2})$.
\end{theorem}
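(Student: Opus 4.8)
Since the first inequality $\epsilon+\delta_\group\ge\gamma$ is exactly \propref{prop:gate}, the whole content is the lower bound $\gamma\ge\ell_1\!\left(\Delta H_L/(2\sqrt{\frakF})\right)$, and the plan is to run an error-corrected metrology protocol as in \secref{sec:global-1-metrology}, with one change: take $\mR_{\StoL}$ to be the recovery that \emph{attains} $\gamma$, i.e.\ $\gamma=\max_\theta P(\mR_{\StoL}\circ\mN_{S,\theta}\circ\mE_{\LtoS},\mU_{L,\theta})$. Concatenating with the repetition code gives, as there, a rotated dephasing channel $\mN_{C,\theta}=\mR^{\rep}_{\LAtoC}\circ(\mR_{\StoL}\circ\mN_{S,\theta}\circ\mE_{\LtoS}\otimes\id_A)\circ\mE^{\rep}_{\CtoLA}$, which I write as $\mN_{C,\theta}=\mD_{C,\theta}\circ\mU_{C,\theta}$ with $\mU_{C,\theta}(\cdot)=e^{-iH_C\theta}(\cdot)e^{iH_C\theta}$ and $\mD_{C,\theta}$ the rotated dephasing channel carrying dephasing rate $p_\theta$ and rotation angle $\phi_\theta$. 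Using the exact covariance of $\mE^{\rep}_{\CtoLA}$ (so that $\mU_{C,\theta}=\mR^{\rep}_{\LAtoC}\circ(\mU_{L,\theta}\otimes\id_A)\circ\mE^{\rep}_{\CtoLA}$) together with monotonicity of the purified distance under the parameter-independent channels in between, the first step is the \emph{uniform} bound
\begin{equation}
P(\mD_{C,\theta},\id_C)=P(\mN_{C,\theta},\mU_{C,\theta})\le P(\mR_{\StoL}\circ\mN_{S,\theta}\circ\mE_{\LtoS},\mU_{L,\theta})\le\gamma\quad\text{for all }\theta\in\bR,
\end{equation}
which now holds over the whole group, not just near $\theta=0$ as in \secref{sec:global-1-metrology}.

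Next I would convert this into pointwise constraints on the dephasing channel's parameters. Evaluating $P(\mD_{C,\theta},\id_C)$ explicitly (cf.\ \appref{app:dephasing}; testing on a maximally entangled input already gives $P(\mD_{C,\theta},\id_C)^2\ge\tfrac12(1-(1-2p_\theta)\cos\phi_\theta)$) yields
\begin{equation}
(1-2p_\theta)\cos\phi_\theta\ \ge\ 1-2\gamma^2\quad\text{for all }\theta,
\end{equation}
which simultaneously forces $p_\theta\le\gamma^2$ \emph{and} $\cos\phi_\theta\ge1-2\gamma^2$. The latter is what makes the argument work globally: assuming $\gamma<1/\sqrt2$ (the only regime with content, since $\ell_1$ takes values in $[0,1/\sqrt2)$), the smooth $\tau$-periodic function $\phi_\theta$ stays in the interval $|\phi_\theta|<\arccos(1-2\gamma^2)<\pi$, so its continuous branch cannot wind and $\int_0^\tau\partial_\theta\phi_\theta\,d\theta=0$; by Rolle's theorem there is a $\theta_*$ with $\partial_\theta\phi_{\theta_*}=0$. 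At $\theta_*$, using the rotated-dephasing formula $\barF(\mN_{C,\theta})=|\partial_\theta x_\theta|^2/(1-|x_\theta|^2)$ with $x_\theta=(1-2p_\theta)e^{-i(\phi_\theta+\Delta H_L\theta)}$, together with $|\partial_\theta x_\theta|^2=4(\partial_\theta p_\theta)^2+(1-2p_\theta)^2(\partial_\theta\phi_\theta+\Delta H_L)^2$, I get
\begin{equation}
\barF(\mN_{C,\theta})\big|_{\theta=\theta_*}\ \ge\ \frac{(1-2p_{\theta_*})^2(\Delta H_L)^2}{4p_{\theta_*}(1-p_{\theta_*})}\ \ge\ \frac{(1-2\gamma^2)^2(\Delta H_L)^2}{4\gamma^2(1-\gamma^2)},
\end{equation}
the last step using $p_{\theta_*}\le\gamma^2<1/2$ and that $p\mapsto(1-2p)^2/(4p(1-p))$ is decreasing on $(0,1/2)$.

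Finally, since $\mN_{C,\theta}$ arises from $\mN_{S,\theta}$ (with a trivial ancilla) by parameter-independent pre- and post-processing, monotonicity of the regularized QFI gives $\barF(\mN_{C,\theta})|_{\theta=\theta_*}\le\barF(\mN_{S,\theta_*})=\frakF$. Chaining this with the previous display and taking positive square roots yields $\Delta H_L/(2\sqrt{\frakF})\le\gamma\sqrt{1-\gamma^2}/(1-2\gamma^2)$, and applying the inverse $\ell_1$ of the strictly increasing bijection $y\mapsto y\sqrt{1-y^2}/(1-2y^2)$ from $[0,1/\sqrt2)$ onto $[0,\infty)$ gives $\gamma\ge\ell_1(\Delta H_L/(2\sqrt{\frakF}))$; the expansion $\ell_1(x)=x+O(x^2)$ follows by Taylor-inverting that map at the origin. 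I expect the main obstacle to be the ``no winding of $\phi_\theta$'' step: it is what allows a single well-chosen $\theta_*$ to carry the metrology argument across the whole period, and it is the reason one must extract the two-sided estimate $(1-2p_\theta)\cos\phi_\theta\ge1-2\gamma^2$ (controlling the phase, not merely the dephasing rate) from $P(\mD_{C,\theta},\id_C)\le\gamma$; one should also record that restricting to $\gamma<1/\sqrt2$ costs nothing, since $\ell_1\le1/\sqrt2$ always.
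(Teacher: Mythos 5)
Your proof matches the paper's almost line by line: use the recovery attaining $\gamma$, concatenate with the covariant repetition code to obtain a rotated dephasing channel, apply the uniform bound $P(\mD_{C,\theta},\id_C)\le\gamma$ over the whole group, locate a stationary point $\theta_*$ of $\phi_\theta$ from periodicity, and close via monotonicity of the regularized QFI. The one place you add something is the ``no winding'' step: the paper simply writes ``$\phi_\theta=\phi_{\theta+\tau}$, therefore there exists $\theta_*$ with $\partial_\theta\phi_{\theta_*}=0$,'' which tacitly assumes that a continuous real-valued branch of $\phi_\theta$ is genuinely $\tau$-periodic rather than merely periodic modulo $2\pi$; you close that gap by noting $P(\mD_{C,\theta},\id_C)\le\gamma<1/\sqrt2$ forces $\cos\phi_\theta\ge 1-2\gamma^2>0$ uniformly, so $|\phi_\theta|<\pi/2$ and the branch cannot wind, and you correctly record that restricting to $\gamma<1/\sqrt2$ is free since $\ell_1$ is bounded by $1/\sqrt2$. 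This is a small but legitimate tightening of the paper's argument; everything else is identical in substance.
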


In particular, for exact QEC codes, we have the following corollary: 
\begin{corollary}
\label{col:global-2-covariant-metrology}
\nonisometric When $\epsilon = 0$, i.e., when the code is exactly error-correcting, it holds that $\delta_\group \geq \ell_1(\Delta H_L/2\sqrt{\frakF}
)$, where $\frakF$ is given by \eqref{eq:def-frakF}. 
\end{corollary}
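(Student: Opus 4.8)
The plan is to run the error-corrected metrology argument of \secref{sec:global-1-metrology} (as in Ref.~\cite{zhou2020new}), but now feeding it the recovery channel that is optimal for the gate implementation error rather than for the QEC inaccuracy. The first inequality $\epsilon+\delta_\group\geq\gamma$ is precisely \propref{prop:gate}, so the whole task is to establish $\gamma\geq\ell_1(\Delta H_L/2\sqrt{\frakF})$, and by the range of $\ell_1$ we may assume $\gamma<1/\sqrt{2}$ (otherwise the bound is automatic since $\ell_1<1/\sqrt{2}$).

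First I would fix $\mR_{\StoL}$ attaining $\gamma=\max_\theta P(\mR_{\StoL}\circ\mN_{S,\theta}\circ\mE_{\LtoS},\mU_{L,\theta})$ and form the ancilla-assisted two-level channel $\mN_{C,\theta}=\mR^{\rep}_{\SAtoC}\circ(\mR_{\StoL}\circ\mN_{S,\theta}\circ\mE_{\LtoS}\otimes\id_A)\circ\mE^{\rep}_{\CtoLA}$. Because the repetition code corrects all bit-flip errors and is covariant with respect to $H_C=\frac{\Delta H_L}{2}Z_C$ and $H_L$, $\mN_{C,\theta}$ is a rotated dephasing channel $\mN_{C,\theta}=\mD_{C,\theta}\circ\mU_{C,\theta}$, and since $\mR^{\rep},\mE^{\rep}$ turn $\mU_{L,\theta}$ exactly into $\mU_{C,\theta}$, monotonicity and unitary invariance of the purified distance give $P(\mD_{C,\theta},\id_C)=P(\mN_{C,\theta},\mU_{C,\theta})\leq\gamma$ for \emph{all} $\theta$ --- the key strengthening over \secref{sec:global-1-metrology}, where $\theta$-uniform closeness was not available. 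Writing $\xi_\theta=(1-2p_\theta)e^{-i\psi_\theta}=\bra{0_C}\mN_{C,\theta}(\ketbra{0_C}{1_C})\ket{1_C}$ and $\beta_\theta:=\psi_\theta-\Delta H_L\theta$ for the residual phase of $\mD_{C,\theta}$, the purified-distance bound (computed for rotated dephasing channels in \appref{app:dephasing}) forces $p_\theta\leq\gamma^2$ and confines $\beta_\theta$ to a small neighborhood of $2\pi\mathbb{Z}$, uniformly in $\theta$.

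Next I would locate a favorable point $\theta_*$. Since $p_\theta\leq\gamma^2<1/2$, the decomposition $\mN_{C,\theta}=\mD_{C,\theta}\circ\mU_{C,\theta}$ is unique and smooth in $\theta$, so $\beta_\theta$ may be chosen continuous; combined with periodicity ($\mN_{C,\tau}=\mN_{C,0}$, hence $\beta_\tau\equiv\beta_0\bmod 2\pi$) and its confinement, $\beta_\theta$ stays in a single small interval around one point of $2\pi\mathbb{Z}$, forcing $\beta_\tau=\beta_0$ and therefore $\int_0^\tau\partial_\theta\psi_\theta\,d\theta=\Delta H_L\tau$. Averaging then yields a $\theta_*$ with $|\partial_\theta\psi_\theta|\geq\Delta H_L$ there. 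Plugging into the rotated-dephasing QFI formula $\barF(\mN_{C,\theta})=|\partial_\theta\xi_\theta|^2/(1-|\xi_\theta|^2)$ and using $|\partial_\theta\xi_\theta|^2\geq(1-2p_\theta)^2(\partial_\theta\psi_\theta)^2\geq(1-2\gamma^2)^2(\Delta H_L)^2$ together with $1-|\xi_\theta|^2=4p_\theta(1-p_\theta)\leq 4\gamma^2(1-\gamma^2)$ gives $\barF(\mN_{C,\theta_*})\geq(1-2\gamma^2)^2(\Delta H_L)^2/[4\gamma^2(1-\gamma^2)]$. Monotonicity of the regularized channel QFI then gives $\frakF=\barF(\mN_{S,\theta})\geq\barF(\mN_{C,\theta_*})$; rewriting this as $\Delta H_L/2\sqrt{\frakF}\leq\gamma\sqrt{1-\gamma^2}/(1-2\gamma^2)$ and inverting the increasing function $y\mapsto y\sqrt{1-y^2}/(1-2y^2)$ on $[0,1/\sqrt{2})$ yields $\gamma\geq\ell_1(\Delta H_L/2\sqrt{\frakF})$. \corollaryref{col:global-2-covariant-metrology} is the $\epsilon=0$ specialization.

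The main obstacle I anticipate is the middle step --- showing that the residual phase $\beta_\theta$ has vanishing net winding over one period, so that $\int_0^\tau\partial_\theta\psi_\theta\,d\theta$ is pinned exactly to $\Delta H_L\tau$ and not to $\Delta H_L\tau+2\pi j$ with $j<0$, which would destroy the averaging bound. This needs a clean quantitative estimate of the residual rotation angle of a rotated dephasing channel in terms of its purified distance from identity (from \appref{app:dephasing}), together with a continuity/homotopy argument exploiting that the dephasing-times-rotation decomposition is everywhere well defined (guaranteed by $p_\theta<1/2$); one must also verify that the ``small-$\gamma$'' regime where this works already covers all $\gamma<1/\sqrt{2}$, the rest being trivial. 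The remaining ingredients --- the explicit channel-QFI formula, the monotonicity inequality, and the final algebraic inversion --- are routine.
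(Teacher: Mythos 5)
Your proof is correct and follows essentially the same route as the paper: fix the recovery channel optimal for the gate implementation error, form the ancilla-assisted two-level channel $\mN_{C,\theta}=\mD_{C,\theta}\circ\mU_{C,\theta}$, use the uniform bound $P(\mD_{C,\theta},\id_C)\leq\gamma$ to locate a favorable $\theta_*$, and conclude via the rotated-dephasing QFI formula and monotonicity of $\barF$. The one place where you diverge is in how the favorable $\theta_*$ is produced: the paper asserts that $\phi_\theta=\phi_{\theta+\tau}$ implies some $\theta_*$ with $\partial_\theta\phi_\theta|_{\theta_*}=0$, whereas you first integrate to show the net phase of $\psi_\theta$ over a period equals $\Delta H_L\tau$ and then invoke the mean value theorem. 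These are logically equivalent, but your treatment makes explicit a point the paper's proof glosses over: $\phi_\theta$ is a priori defined only mod $2\pi$, and a continuous real-valued lift could in principle acquire a nonzero winding over one period, in which case the mean-value step would fail. You correctly observe that once one may assume $\gamma<1/\sqrt{2}$ (the complementary case being trivial since $\ell_1$ ranges in $[0,1/\sqrt{2})$), the constraint $(1-2p_\theta)\cos\phi_\theta\geq 1-2\gamma^2>0$ pins $\cos\phi_\theta>0$ uniformly, confines $\phi_\theta$ to a contractible arc, and hence forces zero winding, after which the lift is genuinely $\tau$-periodic. So your proof does not change the approach, but it does close a small technical gap in the published argument; the rest --- the QFI formula, the two terms with a nonnegative $(\partial_\theta p_\theta)^2$ contribution, and the final inversion of $y\mapsto y\sqrt{1-y^2}/(1-2y^2)$ --- is carried out exactly as in Theorem~\ref{thm:global-2-metrology}.
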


\begin{proof}[Proof of {\thmref{thm:global-2-metrology}}]
By definition, there exists a $\mR^{\optG}_{\StoL}$ such that 
$\gamma =  \max_\theta P(\mR^{\optG}_{\StoL}\circ \mN_{S,\theta}\circ\mE_{\LtoS}, \mU_{L,\theta})$. 
Let $\mR_{\SAtoC} = \mR_{\SAtoC}^\rep \circ (\mR_{\StoL}^{\optG} \otimes \id_A)$ and $\mE_{\CtoSA} = (\mE_{\LtoS}^{\optG} \otimes \id_A) \circ \mE_{\CtoLA}^\rep $, we have $\mN_{C,\theta} 
= \mR_{\SAtoC} \circ (\mN_{S,\theta} \otimes \id_A) \circ \mE_{\CtoSA} 
= \mD_{C,\theta} \circ \mU_{C,\theta}$, 
and $\mD_{C,\theta}$ and $\mN_{C,\theta}$ are rotated dephasing channels of the following forms:
\begin{gather}
\begin{split}
\mD_{C,\theta}(\cdot) =\; (1 - p_\theta) e^{-i\frac{\phi_\theta}{2} Z_C} (\cdot) e^{i\frac{\phi_\theta}{2}Z_C}  + p_\theta Z_C   e^{-i\frac{\phi_\theta}{2}Z_C} (\cdot) e^{i\frac{\phi_\theta}{2}Z_C} Z_C, 
\end{split}\\
\begin{split}
\mN_{C,\theta}(\cdot) =\; (1 - p_\theta) e^{-i\frac{\phi_\theta+\Delta H_L\theta}{2} Z_C} (\cdot) e^{i\frac{\phi_\theta+\Delta H_L\theta}{2} Z_C}  + p_\theta Z_C e^{-i\frac{\phi_\theta+\Delta H_L\theta}{2} Z_C} (\cdot) e^{i\frac{\phi_\theta+\Delta H_L\theta}{2} Z_C} Z_C. 
\end{split}
\end{gather}
where $p_\theta \in (0,1)$ and $\phi_\theta \in [0,2\pi)$. 
Let $
\xi_\theta 
= \bra{0_C}\mN_{C,\theta}(\ket{0_C}\bra{1_C})\ket{1_C} = (1 - 2p_\theta)e^{-i(\phi_\theta+\Delta H_L\theta)}$. 
The regularized channel QFI of rotated dephasing channels is 
\begin{align}
\label{eq:dephasing-SLD}
&\barF(\mN_{C,\theta}) = \frac{\abs{\partial_\theta \xi_\theta}^2}{1 - \abs{\xi_\theta}^2}  = \frac{(1-2p_\theta)^2 (\partial_\theta\phi_\theta + \Delta H_L)^2 }{4p_\theta(1-p_\theta)} + \frac{(\partial_\theta p_\theta)^2 }{4p_\theta(1-p_\theta)}. 
\end{align}
In order to get a lower bound on $\barF(\mN_{C,\theta})$ as a function of $\gamma$. We note that the purified distance between $\mD_{C,\theta}$ and $\id_C$ (see \appref{app:dephasing}) is upper bounded by $\gamma$: 
\begin{equation}
\label{eq:purified}
P(\mD_{C,\theta},\id_C) =  \sqrt{\frac{1-(1-2p_\theta)\cos\phi_\theta}{2}} \leq \gamma,  
\end{equation}
because 
\begin{align*}
P(\mD_{C,\theta},\id_C)
&= P(\mR_{\SAtoC} \circ (\mN_{S,\theta} \otimes \id_A) \circ \mE_{\CtoSA} \circ \mU_{C,\theta}^\dagger,\id_C)\nonumber
\\
&= 
P(  \mR_{\SAtoC} \circ (\mN_{S,\theta}\circ \mE_{\LtoS} \circ \mU_{L,\theta}^\dagger \otimes \id_A) \circ \mE_{\CtoLA}^{\rep} , \id_C )\nonumber\\
&\leq P(\mR^{\optG}_{\StoL} \circ \mN_{S} \circ \mU_{S,\theta}\circ \mE_{\LtoS} \circ \mU_{L,\theta}^\dagger, \id_L) \leq \gamma,  
\end{align*}
where we use the monotonicity of the purified distance and the definition of $\gamma$.  \eqref{eq:purified} implies $\sqrt{p_\theta} \leq P(\mD_{C,\theta},\id_C) \leq \gamma$ for all $\theta \in \bR$.  
Since $U_{L,\theta}$ and $U_{S,\theta}$ are periodic with a common period $\tau$, we must have $\phi_\theta = \phi_{\theta + \tau}$. Therefore, there must exists a $\theta_*$ such that $\partial_\theta \phi_\theta\big|_{\theta = \theta_*} = 0$. Then using \eqref{eq:dephasing-SLD} and the monotonicity of the regularized QFI, we see that 
\begin{align}
&\barF(\mN_{S,\theta}) = \barF(\mN_{S,\theta_*}) \geq \barF(\mN_{C,\theta_*}) \geq 
\frac{(1-2p_{\theta_*})^2 (\Delta H_L)^2 }{4p_{\theta_*}(1-p_{\theta_*})} \geq 
\frac{(1-2\gamma^2)^2 (\Delta H_L)^2 }{4\gamma^2(1-\gamma^2)}.
\end{align}
\thmref{thm:global-2-metrology} then follows from \propref{prop:gate}.

\end{proof}

Note that \thmref{thm:global-2-metrology} coincides with Theorem 1 in Ref.~\cite{zhou2020new} in the special case where $\delta_\group = 0$. 

\subsubsection{Quantum resource theory method}
\label{sec:global-2-res}

Now we present another derivation based on quantum resource theory, which allows us to derive not only a lower bound on the worst-case gate implementation error, but also on the Choi gate implementation error. 

We work with a resource theory of coherence~\cite{marvian2020coherence} where the free (incoherent) states are those whose density operators commute with the Hamiltonian and the free (covariant) operations are those that commute with the Hamiltonian evolution, e.g., a covariant operation $\mC_\StoL$ from $S$ to $L$ satisfies $\mC_\StoL \circ \mU_{S,\theta} = \mU_{L,\theta} \circ \mC_\StoL$ for all $\theta \in \bR$. Assuming that the recovery operations $\mR_{\StoL}$ and the noise channel $\mN_S$ are covariant, we can formulate the covariant QEC as a resource conversion task from noisy physical states to error-corrected logical states and the noise rate of the latter is upper bounded by $\gamma$, illustrated by the following lemma:

\begin{proposition}
\label{prop:cov}
Consider a quantum code defined by $\mE_{\LtoS}$. Consider  physical Hamiltonian $H_S$,  logical Hamiltonian $H_L$, and noise channel $\mN_S$. 
Suppose $\mN_S$ commutes with $\mU_{S,\theta}$. Then the QEC inaccuracy measures under the restriction that the recovery channel is covariant satisfy 
\begin{align}
\label{eq:cov-ineq-1}
\epsilon_\cov = \min_{\mR^\cov_\StoL}
P(\mR^\cov_{\StoL}\circ \mN_{S}\circ\mE_{\LtoS}, \id_L)\leq \gamma ,\\ 
\label{eq:cov-ineq-2}
\bepsilon_\cov = \min_{\mR^\cov_\StoL}
\barP(\mR^\cov_{\StoL}\circ \mN_{S}\circ\mE_{\LtoS},\id_L) \leq \bgamma, 
\end{align} 
where $\mR^\cov_{\StoL}$ is a recovery channel satisfying $\mU_{L,\theta}\circ\mR^\cov_{\StoL} = \mR_{\StoL}^\cov\circ\mU_{S,\theta}$. 
\end{proposition}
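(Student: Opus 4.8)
The plan is to prove the bound by group-twirling (Haar-averaging) the optimal gate-implementation recovery over the $U(1)$ symmetry, turning it into a covariant recovery whose plain-QEC inaccuracy is controlled by $\gamma$. Concretely, let $\mR^{\optG}_{\StoL}$ be a recovery attaining the gate implementation error, i.e.\ $\gamma=\max_\theta P(\mR^{\optG}_{\StoL}\circ\mN_{S,\theta}\circ\mE_{\LtoS},\mU_{L,\theta})$ with $\mN_{S,\theta}=\mN_S\circ\mU_{S,\theta}$ (the minimum is attained by compactness of the CPTP maps), and define
\begin{equation}
\mR^\cov_{\StoL} := \frac{1}{\tau}\int_0^\tau d\theta\; \mU_{L,\theta}^\dagger\circ\mR^{\optG}_{\StoL}\circ\mU_{S,\theta},
\end{equation}
where $\tau$ is the common period of $U_{L,\theta}$ and $U_{S,\theta}$. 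This is a convex mixture of CPTP maps, hence a legitimate recovery channel; a change of variables $\theta\mapsto\theta-\phi$ together with the $\tau$-periodicity of the integrand shows $\mU_{L,\phi}\circ\mR^\cov_{\StoL}=\mR^\cov_{\StoL}\circ\mU_{S,\phi}$, so $\mR^\cov_{\StoL}$ is covariant. (For the Choi bound I would twirl instead the recovery $\mR^{\optCG}_{\StoL}$ that attains $\bgamma$.)

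Next I would rewrite the error-corrected channel using the hypothesis that $\mN_S$ commutes with $\mU_{S,\theta}$, so that $\mU_{S,\theta}\circ\mN_S=\mN_S\circ\mU_{S,\theta}=\mN_{S,\theta}$ and hence
\begin{equation}
\mR^\cov_{\StoL}\circ\mN_S\circ\mE_{\LtoS} = \frac{1}{\tau}\int_0^\tau d\theta\;\Phi_\theta, \qquad \Phi_\theta := \mU_{L,\theta}^\dagger\circ\mR^{\optG}_{\StoL}\circ\mN_{S,\theta}\circ\mE_{\LtoS}.
\end{equation}
Since the purified distance is invariant under post-composition with the unitary channel $\mU_{L,\theta}^\dagger$, the defining property of $\gamma$ gives $P(\Phi_\theta,\id_L)\le\gamma$ for every $\theta$; likewise $\barP(\Phi_\theta,\id_L)\le\bgamma$ in the Choi case.

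The final step is a concavity argument. Fixing any reference input $\rho$, joint concavity of the state fidelity yields
\begin{align}
f\!\left(\Big(\tfrac1\tau\textstyle\int_0^\tau\Phi_\theta\, d\theta\otimes\id\Big)(\rho),\rho\right)
&\ge \tfrac1\tau\textstyle\int_0^\tau f\big((\Phi_\theta\otimes\id)(\rho),\rho\big)\, d\theta \nonumber\\
&\ge \tfrac1\tau\textstyle\int_0^\tau f(\Phi_\theta,\id_L)\, d\theta \ \ge\ \sqrt{1-\gamma^2},
\end{align}
where the middle inequality uses that the (worst-case) channel fidelity $f(\Phi_\theta,\id_L)$ lower-bounds the fidelity on the particular input $\rho$. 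Taking the infimum over $\rho$ gives $P(\mR^\cov_{\StoL}\circ\mN_S\circ\mE_{\LtoS},\id_L)\le\gamma$, hence $\epsilon_\cov\le\gamma$; repeating the argument with $\rho$ replaced by the fixed maximally entangled input $\Psi$ gives $\bepsilon_\cov\le\bgamma$.

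The main point requiring care—rather than a genuine obstacle—is that the twirl be well-defined and actually covariant, which is where the common-period ($U(1)$ compactness) assumption enters, and that the reference state used to evaluate the fidelity of the twirled channel be matched with the one appearing in the concavity inequality; once these are in place the bound drops out. This is also the natural place to invoke the resource-theoretic language of Sec.~\ref{sec:global-2-res}, since $\mR^\cov_{\StoL}$ is precisely the free (covariant) operation obtained by projecting $\mR^{\optG}_{\StoL}$ onto the covariant subspace, and the inequality says that passing to free operations costs nothing beyond $\gamma$.
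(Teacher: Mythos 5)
Your proposal is correct and takes essentially the same route as the paper: twirl the optimal gate-implementation recovery over the symmetry group to get a covariant $\mR^\cov$, use commutativity of $\mN_S$ with $\mU_{S,\theta}$ to rewrite the twirled corrected channel as an average of $\mU_{L,\theta}^\dagger\circ\mR^{\optG}\circ\mN_{S,\theta}\circ\mE$, and then apply a concavity argument for the fidelity. The only cosmetic difference is that the paper directly invokes concavity of $f^2(\Phi,\id)$ in $\Phi$ (citing Schumacher), whereas you use joint concavity of the state fidelity at a fixed input followed by the min-over-inputs property; both are standard and yield the same bound.
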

\begin{proof}
Let $\mR^{\optG}_{\StoL}$ be a recovery channel such that $\gamma =  \max_\theta P(\mR^{\optG}_{\StoL}\circ \mN_{S,\theta}\circ\mE_{\LtoS}, \mU_{L,\theta})$. Suppose $U_{S,\theta}$ and $U_{L,\theta}$ share a common period $\tau$. 
Consider the following recovery channel: 
\begin{equation}
\label{eq:def-cov-rec}
\mR_{\StoL}^\covG = \frac{1}{\tau} \int_0^\tau  d\theta\;\mU^\dagger_{L,\theta} \circ \mR^{\optG}_{\StoL} \circ \mU_{S,\theta}. 
\end{equation}
It can be verified that $\mR_{\StoL}^\covG$ must be covariant and 
\begin{equation*}
\begin{split}
 P(\mR^\covG_{\StoL}\circ \mN_{S}\circ\mE_{\LtoS}, \id_L)
&= P\left(\frac{1}{\tau}\int_0^\tau d\theta\; \mU_{L,\theta}^\dagger \circ \mR^{\optG}_{\StoL} \circ  \mN_{S} \circ \mU_{S,\theta} \circ \mE_{\LtoS} ,\id_L\right)\\
&\leq \max_\theta P\left(\mU_{L,\theta}^\dagger \circ \mR^{\optG}_{\StoL} \circ  \mN_{S} \circ \mU_{S,\theta} \circ \mE_{\LtoS} ,\id_L\right) = \gamma,
\end{split}
\end{equation*}
where we used the concavity of $f^2(\Phi,\id)$ with respect to $\Phi$~\cite{schumacher1996sending}, leading to \eqref{eq:cov-ineq-1}. 

Similarly, let $\mR^{\optCG}_{\StoL}$ be a recovery channel such that $\bgamma =  \max_\theta \overline{P}(\mR^{\optCG}_{\StoL}\circ \mN_{S,\theta}\circ\mE_{\LtoS}, \mU_{L,\theta})$. We can define $\mR_{\StoL}^{\covCG} = \frac{1}{\tau} \int_0^\tau  d\theta\;\mU^\dagger_{L,\theta} \circ \mR^{\optCG}_{\StoL} \circ \mU_{S,\theta} $  and verify that $\overline{P}(\mR^{\covCG}_{\StoL}\circ \mN_{S}\circ\mE_{\LtoS}, \id_L) \leq \bgamma$, leading to \eqref{eq:cov-ineq-2}.  
\end{proof}

In order to derive a concrete lower bound on $\gamma$ and $\bgamma$ using \propref{prop:cov}, we choose a resource monotone based on another type of QFI of quantum states called the RLD QFI~\cite{yuen1973multiple} defined by $F^\txr(\rho_\theta) = \trace((\partial_\theta\rho_\theta)^2\rho_\theta^{-1})$ when the support of $\partial_\theta \rho_\theta$ is contained in $\rho_\theta$ and $= +\infty$ otherwise. The resource monotone satisfies  
\begin{equation}
\label{eq:monotone-RLD}
F^\txr(\mC_{\StoL}(\rho_S),H_L) \leq F^\txr(\rho_S,H_S),
\end{equation}
for all $\rho_S$ and covariant operations $\mC_{\StoL}$, where 
\begin{align}
&F^\txr(\rho,H) := F^\txr(e^{-iH\theta}\rho e^{iH\theta}) 
=\begin{cases}
\trace(H\rho^2 H \rho^{-1}) \!-\! \trace(\rho H^2) & \supp(H\rho H)\!\subseteq\! \supp(\rho),\\
+\infty & \text{otherwise.}
\end{cases}
\end{align}

Consider an error-corrected logical state $\mR^\cov_{\StoL}\circ\mN_{S}\circ\mE_{\LtoS}(\ket{\psi_L})$ using covariant recovery operations. On one hand, its RLD QFI is lower bounded by $\Theta(1/\gamma^2)$ when $\rho$ is $\gamma$-close to a coherent pure state in terms of purified distance~\cite{marvian2020coherence,zhou2020new}. On the other hand, its RLD QFI is upper bounded by the RLD QFI of the noisy physical state $F^\txr(\mN_{S}\circ\mE_{\LtoS}(\ket{\psi_L}),H_S)$ is no less than the channel RLD QFI $F^\txr(\mN_{S,\theta}) = \max_\rho F^\txr((\mN_{S,\theta}\otimes \id)(\rho))$~\cite{hayashi2011comparison,katariya2020geometric}.
Specifically, 
\begin{equation}
    F_\txr(\mN_{S,\theta}) = \big\|\trace_S\big(\Psi_{SR}^{\mN_S,H_S}(\Psi_{SR}^{\mN_S})^{-1}\Psi_{SR}^{\mN_S,H_S}\big)\big\|, 
\end{equation} 
where $\Gamma_{SR}^{\mN_S} = (\mN_S \otimes \id_R)(\Gamma_{SR})$, $\Gamma_{SR}^{\mN_S,H_S} = (\mN_S \otimes \id_R)\big((H_S\otimes \id_R)\Gamma_{SR}\big) - (\mN_S \otimes \id_R)\big(\Gamma_{SR}(H_S\otimes \id_R)\big)$, $\Gamma_{SR} = \ket{\Gamma}_{SR}\bra{\Gamma}_{SR}$ and $\ket{\Gamma}_{SR} = \sum_i \ket{i}_S\ket{i}_R$. 

We now state and prove \thmref{thm:global-2-res} which provides lower bounds on $\gamma$ and $\bgamma$ by considering different input logical states $\ket{\psi_L}$. 
\begin{theorem}
\label{thm:global-2-res}
Consider a quantum code defined by $\mE_{\LtoS}$. Consider  physical Hamiltonian $H_S$,  logical Hamiltonian $H_L$, and noise channel $\mN_S$. Suppose $\mN_S$ commutes with $\mU_{S,\theta}$. Then it holds that
\begin{gather}
\label{eq:coh-1}
\epsilon + \delta_\group \geq \gamma \geq \ell_2\left(\frac{\Delta H_L}{\sqrt{4 F^{\txr}(\mN_{S,\theta})}}\right), \\
\label{eq:coh-2}
\bepsilon + \bdelta_\group \geq \bgamma \geq \ell_3\left(\sqrt{\frac{\frac{1}{d_L}\trace(H_L^2)-\frac{1}{d_L^2}\trace(H_L)^2}{{ F^{\txr}(\mN_{S,\theta})}}}\right), 
\end{gather}
where $\ell_2(x)= x + O(x^2)$ is the inverse function of the monotonic increasing function $x = \frac{y}{\sqrt{(1 - 3y^2 + y^4)(1 - 6\sqrt{2}y^2)}}$ for $y \in [0,1/(6\sqrt{2}))$ and $\ell_3(x)= x + O(x^2)$ is the inverse function of the monotonic increasing function $x = \frac{y}{\sqrt{(1 - 3y^2 + y^4)\left(1 - ({3(\Delta H_L)^2y})/{\sqrt{2}(\frac{1}{d_L}\trace(H_L^2)-\frac{1}{d_L^2}\trace(H_L)^2)}\right)}}$ for $y \in \left[0,\frac{\sqrt{2}(\frac{1}{d_L}\trace(H_L^2)-\frac{1}{d_L^2}\trace(H_L)^2)}{3(\Delta H_L)^2}\right)$. 
\end{theorem}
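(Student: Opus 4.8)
The plan is to lower-bound the gate implementation errors $\gamma$ and $\bgamma$ directly by a resource-theoretic argument and then invoke \propref{prop:gate} to obtain \eqref{eq:coh-1} and \eqref{eq:coh-2}. Since $\mN_S$ commutes with $\mU_{S,\theta}$ by assumption, \propref{prop:cov} supplies covariant recovery channels $\mR^\cov_\StoL$ and $\mR^{\cov''}_\StoL$ (built by group-averaging the near-optimal recoveries) with $P(\mR^\cov_\StoL\circ\mN_S\circ\mE_\LtoS,\id_L)\le\gamma$ and $\barP(\mR^{\cov''}_\StoL\circ\mN_S\circ\mE_\LtoS,\id_L)\le\bgamma$. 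The error-corrected logical channel is then within $\gamma$ (resp.\ $\bgamma$) of $\id_L$ while factoring as ``prepare/corrupt a physical state, then apply a covariant (free) operation'', which is exactly the regime in which the RLD-QFI monotone \eqref{eq:monotone-RLD} applies.

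For the worst-case bound I would feed this channel the maximally $U(1)$-asymmetric logical input $\ket{\psi_L}=\tfrac{1}{\sqrt{2}}(\ket{0_L}+\ket{1_L})$, with $\ket{0_L},\ket{1_L}$ eigenvectors of $H_L$ for its extreme eigenvalues, and set $\sigma_L:=\mR^\cov_\StoL\circ\mN_S\circ\mE_\LtoS(\ket{\psi_L}\bra{\psi_L})$, so that monotonicity of the purified distance gives $P(\sigma_L,\ket{\psi_L}\bra{\psi_L})\le\gamma$. One then runs the chain
\begin{equation}
F^\txr(\sigma_L,H_L)\;\le\;F^\txr\big(\mN_S\circ\mE_\LtoS(\ket{\psi_L}\bra{\psi_L}),H_S\big)\;\le\;F^\txr(\mN_{S,\theta}),
\end{equation}
where the first inequality is \eqref{eq:monotone-RLD} for the covariant $\mR^\cov_\StoL$, and the second holds because $\mN_S$ commutes with the rotation, so feeding $\mE_\LtoS(\ket{\psi_L}\bra{\psi_L})$ into $\mN_{S,\theta}$ yields the trajectory $e^{-iH_S\theta}\big(\mN_S\circ\mE_\LtoS(\ket{\psi_L}\bra{\psi_L})\big)e^{iH_S\theta}$, whose RLD QFI is the middle quantity by definition and is dominated by the channel RLD QFI. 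The last ingredient is a lower bound on $F^\txr(\sigma_L,H_L)$: because $\sigma_L$ is $\gamma$-close to the pure coherent state $\ket{\psi_L}$, whose asymmetry with respect to $H_L$ is governed by $\Delta H_L$, the near-pure-state RLD-QFI estimate of Refs.~\cite{marvian2020coherence,zhou2020new} gives $F^\txr(\sigma_L,H_L)\ge\tfrac{(\Delta H_L)^2}{4\gamma^2}(1-3\gamma^2+\gamma^4)(1-6\sqrt{2}\,\gamma^2)$. Combining with the upper bound and solving for $\gamma$ produces exactly $\gamma\ge\ell_2\!\big(\Delta H_L/\sqrt{4F^\txr(\mN_{S,\theta})}\big)$, and \eqref{eq:coh-1} follows from \propref{prop:gate}.

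The Choi bound \eqref{eq:coh-2} runs along the same lines, with the input replaced by the maximally entangled state $\ket{\Psi}_{LR}$ (so the relevant metric is $\barP$ and the perturbation is $\bgamma$) and the generator replaced by $H_L\otimes\id_R$. The monotonicity chain now ends at the ancilla-assisted channel RLD QFI $F^\txr(\mN_{S,\theta})$ in the Choi-state form recorded before the theorem, while the near-pure-state estimate is applied to $\ket{\Psi}_{LR}$, whose asymmetry with respect to $H_L\otimes\id_R$ equals the variance $\tfrac{1}{d_L}\trace(H_L^2)-\tfrac{1}{d_L^2}\trace(H_L)^2$; inverting the resulting rational relation yields $\ell_3$, whose correction factor is linear rather than quadratic in $\bgamma$ because the balanced-superposition cancellations available in the worst case are unavailable here.

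The main obstacle is the quantitative near-pure-state lower bound on the RLD QFI: one must carefully propagate the fidelity/purification bookkeeping implied by $P(\sigma_L,\ket{\psi_L}\bra{\psi_L})\le\gamma$ through the support-sensitive definition $F^\txr(\rho_\theta)=\trace((\partial_\theta\rho_\theta)^2\rho_\theta^{-1})$, which is discontinuous when the perturbation pushes weight outside the support of the pure state; this is precisely what pins down the rational functions $x=y/\sqrt{(1-3y^2+y^4)(1-6\sqrt{2}\,y^2)}$ and its $H_L$-dependent Choi analogue, and hence the admissible domains of $\ell_2$ and $\ell_3$. A secondary point requiring care is that in the Choi case the reference $R$ does not carry the symmetry, so all asymmetry is measured by $H_L\otimes\id_R$ and one must verify the monotone chain still collapses to $F^\txr(\mN_{S,\theta})$ as defined.
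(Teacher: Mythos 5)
Your proposal follows the paper's proof nearly verbatim: use \propref{prop:cov} to pass from $\gamma$ (resp.\ $\bgamma$) to a covariant recovery with the same error, sandwich the error-corrected state's RLD QFI between the near-pure-state lower bound (Lemma~3 of Ref.~\cite{zhou2020theory}, applied to $\ket{+_L}$ and $\ket{\Psi_{LR}}$) and the RLD-QFI resource monotone plus the channel bound $F^\txr(\mN_{S,\theta})$, then invert for $\gamma$ or $\bgamma$. The one divergence is cosmetic: the paper's own derivation yields a linear correction factor $(1-6\sqrt{2}\gamma)$ for both the worst-case and Choi bounds, so the $(1-6\sqrt{2}\gamma^2)$ that you quote from the theorem statement---and your rationale that the two cases differ in the order of the correction due to balanced-superposition cancellations---trace back to what appears to be a typo in the stated form of $\ell_2$ rather than to a genuine structural asymmetry.
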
 

In particular, when $\epsilon = 0$, i.e., when the code is exactly error-correcting, we have the following corollary: 
\begin{corollary}
\label{col:global-2-covariant-res}
Consider a quantum code defined by $\mE_{\LtoS}$. Consider  physical Hamiltonian $H_S$,  logical Hamiltonian $H_L$, and noise channel $\mN_S$. 
Suppose $\mN_S$ commutes with $\mU_{S,\theta}$.
When $\epsilon = 0$, i.e., when the code is exactly error-correcting, it holds that $\delta_\group \geq \ell_2(\Delta H_L/\sqrt{4 F^\txr(\mN_{S,\theta})})$ and $\bdelta_\group \geq \ell_3(\Delta H_L/\sqrt{4 F^\txr(\mN_{S,\theta})})$. 
\end{corollary}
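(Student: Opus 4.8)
The plan is to read off \corollaryref{col:global-2-covariant-res} as the $\epsilon = 0$ specialization of \thmref{thm:global-2-res}, so that no argument beyond that theorem is needed. The one preliminary point to settle is that the hypothesis $\epsilon = 0$ also forces $\bepsilon = 0$: exact correctability means there is a recovery channel $\mR_{\StoL}$ with $\mR_{\StoL}\circ\mN_S\circ\mE_{\LtoS} = \id_L$ as an \emph{identity of channels}, and such a channel trivially satisfies $\barP(\mR_{\StoL}\circ\mN_S\circ\mE_{\LtoS},\id_L) = 0$ as well, whence $\bepsilon = 0$. The covariance hypothesis on $\mN_S$ (that it commutes with $\mU_{S,\theta}$) is inherited verbatim from the theorem, so all of its premises are in force.

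Given this, I would simply substitute $\epsilon = \bepsilon = 0$ into the chains \eqref{eq:coh-1} and \eqref{eq:coh-2}: the left-hand sides collapse to $\delta_\group$ and $\bdelta_\group$ respectively, yielding $\delta_\group \geq \ell_2\!\big(\Delta H_L/\sqrt{4 F^\txr(\mN_{S,\theta})}\big)$ and $\bdelta_\group \geq \ell_3\!\big(\sqrt{(\tfrac{1}{d_L}\trace(H_L^2) - \tfrac{1}{d_L^2}\trace(H_L)^2)/F^\txr(\mN_{S,\theta})}\big)$, which are the two asserted inequalities (in the corollary the argument of $\ell_3$ is written compactly, the maximally-mixed-state variance $\tfrac{1}{d_L}\trace(H_L^2) - \tfrac{1}{d_L^2}\trace(H_L)^2$ of $H_L$ being at most $(\Delta H_L/2)^2$). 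Since $\ell_2$ and $\ell_3$ are monotone with $\ell_i(x) = x + O(x^2)$ near the origin, this also gives the clean leading-order statement $\delta_\group,\ \bdelta_\group \gtrsim \Delta H_L/\sqrt{4 F^\txr(\mN_{S,\theta})}$.

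I do not expect a genuine obstacle at the level of the corollary itself; all of the substantive content is discharged inside the proof of \thmref{thm:global-2-res} --- constructing a covariant recovery channel by group-averaging over $\theta$ as in \propref{prop:cov}, and then applying the RLD-QFI monotone \eqref{eq:monotone-RLD} to the error-corrected logical state, whose RLD QFI scales as $\Theta(1/\gamma^2)$. The only things worth double-checking are precisely those flagged above: the implication $\epsilon = 0 \Rightarrow \bepsilon = 0$ (needed so that both the worst-case and Choi bounds specialize correctly), and that the $\mN_S$-covariance assumption cannot be dropped and hence must be retained in the corollary's statement.
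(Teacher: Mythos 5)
Your overall strategy is exactly the paper's implicit one: there is no separate proof of \corollaryref{col:global-2-covariant-res} in the text, it is simply the $\epsilon = 0$ specialization of \thmref{thm:global-2-res}, and you correctly supply the one routine ingredient the paper leaves unspoken, namely that exact correctability ($\mR\circ\mN_S\circ\mE = \id_L$ as a channel identity) forces $\bepsilon = 0$ as well. The worst-case part is therefore immediate: setting $\epsilon = 0$ in \eqref{eq:coh-1} gives $\delta_\group \geq \gamma \geq \ell_2\big(\Delta H_L/\sqrt{4F^{\txr}(\mN_{S,\theta})}\big)$.

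The parenthetical about $\ell_3$, however, runs the wrong way and is a genuine gap. Setting $\bepsilon = 0$ in \eqref{eq:coh-2} yields $\bdelta_\group \geq \ell_3\big(\sqrt{v/F^{\txr}(\mN_{S,\theta})}\big)$ with $v := \tfrac{1}{d_L}\trace(H_L^2) - \tfrac{1}{d_L^2}\trace(H_L)^2$. You observe, correctly, that $v \leq (\Delta H_L/2)^2$; but since $\ell_3$ is monotonically \emph{increasing}, this gives $\ell_3\big(\sqrt{v/F^{\txr}}\big) \leq \ell_3\big(\Delta H_L/\sqrt{4F^{\txr}}\big)$, so the theorem's conclusion is \emph{weaker} than, and does not imply, the corollary's printed $\bdelta_\group \geq \ell_3\big(\Delta H_L/\sqrt{4F^{\txr}}\big)$. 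The two coincide only when $v = (\Delta H_L/2)^2$ (e.g.\ $d_L = 2$, or a symmetric two-point spectrum for $H_L$); for generic $H_L$ with $d_L > 2$ they differ strictly. What actually follows from \thmref{thm:global-2-res} is the bound with the variance $v$ inside $\ell_3$, which is evidently what the authors meant --- the corollary's form is best read as a typo or an implicit $d_L = 2$ restriction. You should state the bound you actually derived and flag the discrepancy, rather than justify the substitution with an inequality that points in the opposite direction from the one needed.
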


\begin{proof}[Proof of {\thmref{thm:global-2-res}}]
Let $\ket{+_L} = \frac{\ket{0_L} + \ket{1_L}}{\sqrt{2}}$. Then according to \propref{prop:cov}, there exists a covariant recovery channel $\mR_{\StoL}^\cov$ such that 
\begin{equation}
1 - \braket{+_L|\rho_L|+_L} \leq \epsilon^2_\cov \leq \gamma^2,
\end{equation}
where $\rho_L = ( \mR_{\StoL}^\cov \circ \mN_S \circ \mE_{\LtoS}) (\ket{+_L}\bra{+_L})$. 
According to Lemma 3 in Ref.~\cite{zhou2020theory}, 
\begin{equation}
\label{eq:variance}
F^\txr(\rho_L,H_L) \geq  \frac{1 - 3\gamma^2 + \gamma^4}{\gamma^2} \cdot \left(\bV_{H_L}(\ket{+_L} - \frac{3\sqrt{2}\gamma(\Delta H_L)^2}{2}\right) ,
\end{equation}
where the variance $\bV_{H_L}(\ket{+_L}) = \braket{+_L|H_L^2|+_L} - \braket{+_L|H_L|+_L}^2 = \frac{(\Delta H_L)^2}{4}$. $\gamma < 1/(6\sqrt{2})$ guarantees the right-hand side is positive. On the other hand, using \eqref{eq:monotone-RLD},  
\begin{equation}
\label{eq:channel-RLD}
F^\txr(\rho_L,H_L) 
\leq F^\txr(\rho_S,H_S)= 
F^\txr(\mN_{S,\theta}(\mE_{\LtoS}(\ket{+_L}\bra{+_L})) ) \leq  F^\txr(\mN_{S,\theta}),
\end{equation}
where $\rho_S = (\mN_S \circ \mE_{\LtoS}) (\ket{+_L}\bra{+_L})$.
Using \eqref{eq:variance} and \eqref{eq:channel-RLD}, we have
\begin{equation}
\gamma^2 \cdot \frac{1}{(1 - 3\gamma^2 + \gamma^4)(1 - 6\sqrt{2}\gamma)} \geq \frac{(\Delta H_L)^2}{4 F^\txr(\mN_{S,\theta})}, 
\end{equation}
proving \eqref{eq:coh-1}. 

Similarly, let $\ket{\Psi_{LR}} = \frac{1}{\sqrt{d_L}}\sum_{i=1}^{d_L} \ket{i}_L \ket{i}_R$. Then according to \propref{prop:cov}, there exists a covariant recovery channel $\mR_{\StoL}^{\optCG}$ such that 
\begin{equation}
1 - \braket{\Psi_{LR}|\rho_{LR}|\Psi_{LR}} \leq \bepsilon_\cov \leq \bgamma,
\end{equation}
where $\rho_{LR} = ( \mR_{\StoL}^\cov \circ \mN_S \circ \mE_{\LtoS}) (\ket{\Psi_{LR}}\bra{\Psi_{LR}})$. 
According to Lemma 3 in Ref.~\cite{zhou2020theory}, 
\begin{align}
F^\txr(\rho_{LR},H_L\otimes\id_R) 
\geq\;& \frac{1 - 3\bgamma^2 + \bgamma^4}{\bgamma^2} \left( \bV_{H_L\otimes\id_R}(\ket{\Psi_{LR}}) - \frac{3\sqrt{2}\bgamma(\Delta H_L)^2}{2}\right) \nonumber\\
\geq\;& \frac{1 - 3\bgamma^2 + \bgamma^4}{\bgamma^2} \left( \frac{\trace(H_L^2)}{d_L}-\frac{\trace(H_L)^2}{d_L^2} - \frac{3\sqrt{2}\bgamma(\Delta H_L)^2}{2} \right). 
\end{align}
The rest of the proof is exactly the same as in the proof of the lower bound on the worst-case gate implementation error. 
\end{proof}

In fact, the proof of \thmref{thm:global-2-res} follows almost exactly from the proof of Theorem 2 in Ref.~\cite{zhou2020new} and our new contribution here is \propref{prop:cov}. 

To compare \thmref{thm:global-2-res}  to \thmref{thm:global-2-metrology} , we first note that $F^\txr(\Phi_\theta) \geq \barF(\Phi_\theta)$ for any $\Phi_\theta$ because $F^\txr(\rho_\theta) \geq F(\rho_\theta)$~\cite{petz2011introduction} for any $\rho_\theta$ and $F^\txr(\Phi_\theta) = \lim_{N\rightarrow \infty} F^\txr(\Phi_\theta^{\otimes N})/N$~\cite{hayashi2011comparison,katariya2020geometric}. 
Moreover, \thmref{thm:global-2-res} requires the commutativity between the noise and the Hamiltonian and $\ell_1(x) \geq \ell_2(x)$, so \thmref{thm:global-2-res} provides a weaker bound on the (worst-case) gate implementation error than \thmref{thm:global-2-metrology}. Note that $F^\txr(\mN_{S,\theta}) < +\infty$ only when ${\rm span}\{K_{S,i} H_S,\forall i\} \subseteq {\rm span}\{K_{S,i},\forall i\}$ which is also a stronger condition than the HKS condition. The resource theory method leads to a bound on the Choi gate implementation error, however, which is not available using the quantum metrology method.

Also note recent works Refs.~\cite{tajima2018uncertainty,tajima2020coherence} (results implied by Ref.~\cite{tajima2021symmetry}) which considered the coherence cost of implementing unitary gates based on relevant insights.

\subsection{Explicit behaviors of the bounds and comparison with the charge fluctuation approach}
\label{sec:comparison}

We first make a general comparison between the trade-off relations derived using the gate implementation error approach (\thmref{thm:global-2-metrology} and \thmref{thm:global-2-res}) and the charge fluctuation approach (\thmref{thm:global-1}) in \secref{sec:global-1}. Two clear advantages of the gate implementation error approach are that 1) it applies to general quantum codes (e.g., the non-isometric encodings in \cite{woods2019continuous,yang2020covariant}) while the charge fluctuation approach only holds for isometric codes; 2) it leads to a trade-off relation for the Choi measures. Additionally, for the special case of $\delta_\group = 0$, the results based on the gate implementation error approach  directly reduce to the previous known result for exactly covariant codes in Ref.~\cite{zhou2020new}, while there is still some discrepancy with previous results using the charge fluctuation approach (see more discussion in \secref{sec:global-1-tradeoff}). For the special case of $\epsilon = 0$ which was not previously studied, we have two lower bounds on $\delta$ from \corollaryref{col:global-2-covariant-metrology} and \corollaryref{col:global-1-covariant} which behave as follows:
\begin{align}
\delta_\group \gtrsim \frac{\Delta H_L}{2\sqrt{\frakF}}, \qquad 
\delta_\group \gtrsim \sqrt{\frac{\Delta H_L}{\Delta H_S}}. 
\end{align}
It is interesting to observe that the first bound depends on the noise channel while the second one does not (as long as the HKS condition is satisfied).

We now remark on the explicit scalings of our bounds for different noise models as in \secref{sec:global-1-noise}. Again, consider a $n$-partite system and a local physical Hamiltonian with $\Delta H_S = O(n)$. For random local noise which acts uniformly randomly on each subsystem, $\frakF = O(n^2)$ and the two bounds give $\delta_\group = \Omega(1/n)$ and $\delta_\group = \Omega(1/\sqrt{n})$, respectively. 
That is, the charge fluctuation approach outperforms the gate implementation error approach in this case. For noise acting independently on each subsystems, we have $\frakF = O(n)$ which gives $\delta_\group = \Omega(1/\sqrt{n})$ using the gate implementation error approach. In this situation, the bounds based on the two approaches are comparable. Note that in situations where $\frakF = o(n)$, i.e., the noise is even stronger than independent noise so that the regularized QFI is sublinear, the bound based the gate implementation error approach should outperform the bound based on the charge fluctuation approach. In general situations where both the QEC inaccuracy and the global covariance violation are non-vanishing, we expect a similar behavior, i.e., the gate implementation error approach performs better in the extremely strong noise regime, while the charge fluctuation approach performs better in weaker noise regimes.

\section{Limitations on transversal logical gates}
\label{sec:transversal}

Note that a key implication of our results is symmetry constraints on QEC codes that achieve a given accuracy, which extends the scope of previous knowledge on the incompatibility between symmetries and QEC  to general codes, especially exact QEC codes which are most commonly studied.
As we shall discuss in this section, such constraints actually allow us to derive restrictions on the transversally implementable gates for general QEC codes, advancing our understanding of fault tolerance. A key intuition is that the precision of gate implementation is associated with the degree of symmetry.
Recall that there are no QEC codes which admit a continuous symmetry acting transversally on physical qubits and thus there are no transversal universal gate sets, according to the Eastin--Knill theorem. For stabilizer codes, the incompatibility between QEC and symmetry are reflected in the classification of transversally logical gates in finite levels of the Clifford hierarchy~\cite{bravyi2013classification,pastawski2015fault,anderson2016classification,jochym2018disjointness}. Here we present new restrictions of transversal gates for arbitrary QEC codes from the perspective of global covariance violations.

The following corollary puts a restriction on the logical transversal gates using \corollaryref{col:global-1-covariant}. Namely, the non-trivial logical gates cannot be too close to the identity operators when implemented by transversal physical gates in the vicinity of identity operators because $\delta_\group$ has a lower bound of $\Theta(1/\sqrt{n})$. Note that here we implicitly consider exact QEC codes under single-erasure noise (so that the HKS condition is satisfied for 1-local Hamiltonians) and in this case \corollaryref{col:global-1-covariant} outperforms \corollaryref{col:global-2-covariant-metrology}, so we will only use \corollaryref{col:global-1-covariant} in this section. 

\begin{corollary}
\label{col:gate}
Suppose an $n$-qudit QEC code with distance at least $2$ admits a transversal implementation $V_S = \bigotimes_{l=1}^n e^{-i2\pi T_{S_l}/D}$ of the logical gate $V_L = e^{-i2\pi T_L/D}$ where $D$ is a positive integer and $T_{L,S}$ have integer eigenvalues. 
Then it holds that
\begin{equation}
D \leq \max\Bigg\{ 
4\pi\sqrt{\frac{2}{3}}\bigg(\Delta T_L + \sum_{l=1}^n \Delta T_{S_l}\bigg),
 {2\sqrt{2}\pi}\sqrt{ \frac{ \sum_{l=1}^n \Delta T_{S_l}}{\Delta T_L}} \left(\Delta T_L + \sum_{l=1}^n \Delta T_{S_l}\right) \Bigg\}.
\end{equation} 
In particular, $D = O(\poly(n))$ when $\Delta T_S = O(\poly(n))$, $\Delta T_L = O(\poly(n))$ and $\Delta T_S/\Delta T_L = O(\poly(n))$, where $\Delta T_S = \sum_{\ell=1}^n \Delta T_{S_\ell}$\footnote{ 
The conditions are satisfied in common settings; see, e.g., the proof of \corollaryref{col:gate-stab}. 
}.
\end{corollary}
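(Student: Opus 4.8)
The plan is to observe that a transversal implementation of $V_L$ forces the code to be \emph{exactly} covariant at every angle that is an integer multiple of $2\pi/D$, to upgrade this discrete exact covariance into an upper bound on the group-global covariance violation $\delta_\group$ via a continuity estimate, and then to play this against the \emph{lower} bound on $\delta_\group$ for exact QEC codes supplied by \corollaryref{col:global-1-covariant}; rearranging the resulting inequality for $D$ gives the claim. Concretely, set $H_S:=\sum_{l=1}^n T_{S_l}$ and $H_L:=T_L$. Since $T_L$ and the $T_{S_l}$ have integer eigenvalues, $\mU_{S,\theta}$ and $\mU_{L,\theta}$ are periodic with common period $2\pi$; $H_S$ is $1$-local by transversality, so $\Delta H_S=\sum_{l=1}^n\Delta T_{S_l}$. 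A distance-$\ge 2$ code corrects a single erasure exactly, so $\epsilon=0$ for single-erasure noise, and the HKS condition holds for this $1$-local $H_S$; thus \corollaryref{col:global-1-covariant} (with the encoding isometric, as usual) gives $\delta_\group\ge\sqrt{\Delta H_L(\Delta H_S-\tfrac12\Delta H_L)}/\Delta H_S\ge\sqrt{\Delta H_L/(2\Delta H_S)}$ when $\Delta H_L\le\Delta H_S$ (using $\Delta H_S-\tfrac12\Delta H_L\ge\tfrac12\Delta H_S$), and $\delta_\group\ge\sqrt{3/8}$ otherwise.

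For the matching upper bound, note that $V_S$ implementing $V_L$ means $\mU_{S,2\pi/D}\circ\mE_{\LtoS}=\mE_{\LtoS}\circ\mU_{L,2\pi/D}$, so composing $k$ times gives $\mU_{S,2\pi k/D}\circ\mE_{\LtoS}=\mE_{\LtoS}\circ\mU_{L,2\pi k/D}$ for all $k\in\mathbb Z$. Since appending a fixed unitary channel to both arguments preserves the purified distance, the covariance defect at $\theta$ depends only on $\theta':=\theta\bmod(2\pi/D)$, whence
\begin{equation*}
\delta_\group=\max_{0\le\theta'<2\pi/D}P\!\left(\mU_{S,\theta'}\circ\mE_{\LtoS},\,\mE_{\LtoS}\circ\mU_{L,\theta'}\right).
\end{equation*}
After shifting $H_S,H_L$ to have zero least eigenvalue (harmless for $\delta_\group$), the triangle inequality, monotonicity of the purified distance under appending channels, and the elementary bound $P(e^{-iH\theta'}(\cdot)e^{iH\theta'},\id)\le\|e^{-iH\theta'}-\id\|\le\Delta H\,\theta'$ give $P(\mU_{S,\theta'}\circ\mE_{\LtoS},\mE_{\LtoS}\circ\mU_{L,\theta'})\le(\Delta H_S+\Delta H_L)\theta'$, so $\delta_\group\le\tfrac{2\pi}{D}(\Delta H_L+\Delta H_S)=\tfrac{2\pi}{D}\big(\Delta T_L+\sum_{l=1}^n\Delta T_{S_l}\big)$.

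Combining the two bounds: if $\Delta H_L>\Delta H_S$ then $\sqrt{3/8}\le\tfrac{2\pi}{D}(\Delta H_L+\Delta H_S)$, i.e.\ $D\le 4\pi\sqrt{2/3}\,(\Delta T_L+\sum_l\Delta T_{S_l})$; if $\Delta H_L\le\Delta H_S$ then $\sqrt{\Delta H_L/(2\Delta H_S)}\le\tfrac{2\pi}{D}(\Delta H_L+\Delta H_S)$, i.e.\ $D\le 2\sqrt2\,\pi\sqrt{\sum_l\Delta T_{S_l}/\Delta T_L}\,(\Delta T_L+\sum_l\Delta T_{S_l})$; in either case $D$ is at most the maximum of the two expressions, which is the stated bound. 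The $\poly(n)$ claim is then immediate, since both expressions are polynomial in $n$ once $\Delta T_L$, $\Delta T_S:=\sum_{\ell=1}^n\Delta T_{S_\ell}$, and $\Delta T_S/\Delta T_L$ are $O(\poly(n))$. The main obstacle is the interpolation step: getting an honest Lipschitz bound on the covariance defect between consecutive exact-covariance angles, with constants clean enough to reproduce exactly the stated factors $4\pi\sqrt{2/3}$ and $2\sqrt2\,\pi$; checking the hypotheses of \corollaryref{col:global-1-covariant} (isometric encoding, HKS, common period) is routine. Conceptually the only new ingredient is the realization that exact covariance on an angular lattice of spacing $2\pi/D$, combined with continuity, already pins $\delta_\group$ down to $O((\Delta H_L+\Delta H_S)/D)$.
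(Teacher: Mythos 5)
Your proof is correct and follows essentially the same route as the paper's: you invoke \corollaryref{col:global-1-covariant} for the lower bound on $\delta_\group$ (for distance-$\ge 2$ codes under single-erasure noise with $1$-local $H_S$), exploit the exact covariance on the angular lattice $\{2\pi k/D\}$ to reduce $\delta_\group$ to angles $\theta' \in [0, 2\pi/D)$, and bound the remaining covariance defect via the triangle inequality and a Lipschitz estimate on unitary channels, obtaining $\delta_\group \le \frac{2\pi}{D}(\Delta H_L + \Delta H_S)$ exactly as in the paper. The only cosmetic difference is that you bound $P(\mU_{\theta'},\id)$ through the operator norm $\|e^{-iH\theta'}-\id\|$ (a clean elementary inequality, since $P(\psi,\phi)\le \|\,|\psi\rangle-|\phi\rangle\,\|$ for pure states) rather than the paper's explicit pure-state maximization, and both yield the same constants.
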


\begin{proof}
Any codes with distance at least $2$ can correct single-erasure noise. Let $H_{L} = T_L$ and $H_S = \sum_l T_{S_l}$. They have integer eigenvalues implies that $\mU_{S,\theta}$ and $\mU_{L,\theta}$ share a common period $2\pi$. According to \corollaryref{col:global-1-covariant}, the code must satisfy
\begin{equation}
\label{eq:app-gate}
\delta_\group 
\geq \min\left\{ \frac{\sqrt{\Delta H_L\left(\Delta H_S -\frac{1}{2}\Delta H_L\right)}}{\Delta H_S}, \sqrt{\frac{3}{8}}\right\}.
\end{equation}
We can always write $\theta = \frac{2j\pi}{D} + \theta_1$ for some $j\in\bN$ and $\theta_1 \in [0,2\pi/D)$. Then we have 
\begin{align}
 P(\mU_{S,\theta}\circ \mE_{\LtoS} \circ \mU_{L,\theta}^\dagger,\mE_{\LtoS}) = \;& P(\mU_{S,\theta_1}\circ \mE_{\LtoS} \circ \mU_{L,\theta_1}^\dagger,\mE_{\LtoS}) \nonumber\\
\leq\;&  P(\mU_{S,\theta_1}\circ \mE_{\LtoS} \circ \mU_{L,\theta_1}^\dagger,\mU_{S,\theta_1}\circ\mE_{\LtoS})  + P(\mU_{S,\theta_1}\circ\mE_{\LtoS},\mE_{\LtoS})\nonumber\\
\leq\;& P(\mU_{L,\theta_1}^\dagger,\id_L) + P(\mU_{S,\theta_1},\id_S), 
\end{align}
where we use the monotonicity and the triangular inequality of the purified distance. Without loss of generality, assume $H_L = \sum_{i=1}^{d_L} \lambda_i \ket{i}\bra{i}$ and let $\ket{\psi} = \sum_{i=1}^{d_L} \psi_i \ket{i}\ket{i}$. Consider first the situation where $D > 2\max\{\Delta H_L,\Delta H_S\}$, then we have $\theta_1 \max\{\Delta H_L,\Delta H_S\} < \pi$ and 
\begin{equation}
\begin{split}
P(\mU_{L,\theta_1}^\dagger,\id_L) =&\; \max_{\{\psi_i\}} P((\mU_{L,\theta_1}^\dagger\otimes \id)(\ket{\psi}\bra{\psi}),\ket{\psi}\bra{\psi}) \\
=&\; \max_{\{\psi_i\}} \!\sqrt{1 \!- \!\bigg|\!\sum_{i} \abs{\psi_i}^2 e^{i\theta_1 \lambda_i}\!\bigg|^2} = \max_{i,i'} \abs{\sin\left(\frac{1}{2}\theta_1(\lambda_i-\lambda_{i'})\right)} \leq \theta_1 \Delta H_L. 
\end{split}
\end{equation}
Similarly, $P(\mU_{S,\theta_1},\id_S) \leq \theta_1 \Delta H_S = \theta_1 \sum_{l=1}^n \Delta H_{S_l}$. Since $\delta_\group = \max_\theta P(\mU_{S,\theta}\circ \mE_{\LtoS} \circ \mU_{L,\theta}^\dagger,\mE_{\LtoS}) = \max_{\theta_1} P(\mU_{S,\theta_1}\circ \mE_{\LtoS} \circ \mU_{L,\theta_1}^\dagger,\mE_{\LtoS})$, we obtain 
\begin{equation}
\label{eq:app-gate-1}
\delta_\group \leq \frac{2\pi}{D} \left(\Delta H_L + \sum_{l=1}^n \Delta H_{S_l}\right), 
\end{equation}
Otherwise, 
\begin{equation}
\label{eq:app-gate-2}
D \leq 2\max\{\Delta H_L,\Delta H_S\}.     
\end{equation}
The result  then follows by combining \eqref{eq:app-gate}, \eqref{eq:app-gate-1} and \eqref{eq:app-gate-2}. 
\end{proof}

\corollaryref{col:gate} shows that the precision of transversal logical  gates under certain restrictions only increases polynomially in the number of qubits. For the important case of stabilizer codes, this implies that the levels of the Clifford hierarchy that can be reached only increase polynomially in the number of qubits. 
Specifically, 
consider an $n$-qubit stabilizer code with distance at least $2$. 
The following corollary describes the limitation on the transversally implementable logical gates for stabilizer codes:

\begin{corollary}
\label{col:gate-stab}
Let $\tilde{V}_S =  V_1 (\bigotimes_{l=1}^n e^{-i 2\pi a_l Z_l/D}) V_2$ be a transversal logical gate for an $n$-qubit stabilizer code with distance at least $2$, where $D$ is a power of two and $a_l$ is an integer and $V_{1,2}$ are transversal Clifford operators. (This describes the most general form of transversal logical gates for stabilizer codes~\cite{zeng2011transversality,anderson2016classification}). When $a_l = O(\poly(n))$, we must have $D = O(\poly(n))$ and $\tilde{V}_S$ implements a logical gate $\tilde V_L$ in the \mbox{$O(\log n)$-th} level of the Clifford hierarchy. 
\end{corollary}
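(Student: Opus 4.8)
The plan is to derive the statement from \corollaryref{col:gate}. There are two things to show: (i) $D = O(\poly n)$, and (ii) $\tilde V_L$ lies in level $O(\log n)$ of the Clifford hierarchy; I expect (ii) to follow easily once (i) is in hand, so essentially all of the work is in (i). The subtlety is that \corollaryref{col:gate} wants a transversal implementation of the literal form $\bigotimes_l e^{-i2\pi T_{S_l}/D}$ with $T_{S_l}$ integer-eigenvalued, whereas the single-qubit factors of $\tilde V_S = V_1(\bigotimes_l e^{-i2\pi a_l Z_l/D})V_2$, once dressed by the transversal Cliffords $V_1,V_2$, generically fail to have $D$-th-root-of-unity eigenvalues. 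So the first move is a reduction: write $G_S := \bigotimes_l e^{-i2\pi a_l Z_l/D}$ for the bare diagonal part, and observe (a short commutation argument) that once $D$ exceeds an absolute constant the only way $V_1 G_S V_2$ can preserve the codespace is for $V_2\mathcal C$ to be stabilized entirely by diagonal ($Z$-type) Paulis --- because $G_S$ conjugates any non-diagonal Pauli to that Pauli times a genuinely non-Pauli diagonal rotation. Hence $G_S$ commutes with the stabilizer group of $\mathcal C' := V_2\mathcal C$, so $G_S$ fixes $\mathcal C'$ and acts on it as a \emph{diagonal} logical gate $G_L'$, and expanding $\tilde V_S = V_1 G_S V_2$ on $\mathcal C$ identifies $\tilde V_L$ with $G_L'$ up to left/right composition with logical Clifford operators (the transported logical actions of $V_1$ and $V_2$); this last point will only be needed for step (ii).

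Now apply \corollaryref{col:gate} to the code $\mathcal C'$, which has distance at least $2$ (being Clifford-equivalent to $\mathcal C$) and hence corrects a single erasure and satisfies the HKS condition for $1$-local Hamiltonians, with the genuine transversal implementation $G_S = \bigotimes_l e^{-i2\pi a_l Z_l/D}$ of $G_L' = e^{-i2\pi T_L'/D}$, i.e.\ $T_{S_l} = a_l Z_l$ so that $\Delta T_{S_l} = 2|a_l|$. Because the eigenvalue phases of the diagonal gate $G_L'$ are a subset of those of $G_S$ --- integer multiples of $2\pi/D$ with total spread at most $2\sum_l|a_l|$ --- we may take $T_L'$ integer-eigenvalued with $\Delta T_L' \leq 2\sum_l|a_l|$ (if $\Delta T_L' = 0$ the logical gate is a global phase and the claim is vacuous after normalizing the canonical form, so assume $\Delta T_L' \geq 1$). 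Then $\Delta T_S := \sum_l\Delta T_{S_l} = 2\sum_l|a_l| = O(\poly n)$ by hypothesis, $\Delta T_L' = O(\poly n)$, and $\Delta T_S/\Delta T_L' \leq 2\sum_l|a_l| = O(\poly n)$, so all three polynomiality conditions in \corollaryref{col:gate} are met and we conclude $D = O(\poly n)$.

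For (ii), write $D = 2^t$ with $t = \log_2 D = O(\log n)$ since $D$ is a power of two. Each single-qubit factor of $\tilde V_S$ equals a single-qubit Clifford times $e^{-i2\pi a_l Z_l/2^t}$ times a single-qubit Clifford; a $Z$-rotation by a $2^t$-dyadic multiple of $2\pi$ lies in level $t$ of the Clifford hierarchy (conjugating $X$ by $\mathrm{diag}(1,e^{2\pi i/2^t})$ yields $X$ times a level-$(t-1)$ diagonal gate, and one inducts on $t$), and left/right multiplication by Cliffords preserves the level, so each factor lies in level $\max(t,2)$; tensoring preserves the level, hence $\tilde V_S$ lies in level $\max(t,2)$ of the Clifford hierarchy as a physical gate. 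A transversal gate that lies in level $m$ of the Clifford hierarchy and preserves a stabilizer code induces a logical gate in level $m$ (conjugate the physical-Pauli representatives of the logical Paulis by the transversal gate to land in level $m-1$, and induct). Therefore $\tilde V_L$ lies in level $\max(t,2) = O(\log n)$ of the Clifford hierarchy, completing the proof.

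I expect the main obstacle to be the reduction in the first paragraph: establishing rigorously that, for $D$ beyond a constant, the transversal-Clifford dressing $V_1,V_2$ forces $\mathcal C$ to be Clifford-equivalent to a purely-$Z$-stabilized code fixed by $G_S$, so that \corollaryref{col:gate} can be invoked in a form whose physical-Hamiltonian spread $\sum_l\Delta T_{S_l}$ is governed by $\sum_l|a_l|$ rather than by $D$ itself. This relies on the structure theory of transversal gates on stabilizer codes~\cite{zeng2011transversality,anderson2016classification} together with a short commutation computation; once it is in place, the remaining steps are a direct substitution into \corollaryref{col:gate} and routine Clifford-hierarchy bookkeeping.
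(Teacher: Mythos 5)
Your central reduction in the first paragraph is not just unproven but false, and you cannot repair it in the way you propose. You assert that once $D$ exceeds a constant, the only way $V_1 G_S V_2$ with $G_S = \bigotimes_l e^{-i2\pi a_l Z_l/D}$ can preserve the codespace is for $V_2\mathcal C$ to be stabilized entirely by $Z$-type Paulis, so that $G_S$ itself commutes with the stabilizer group and acts as a logical gate on $\mathcal C' = V_2\mathcal C$. A counterexample is sitting in \secref{sec:RM} of the paper: the $[[15,1,3]]$ quantum Reed--Muller code is a CSS code with nontrivial $X$-type stabilizers, has distance $3\geq 2$, and admits a bona fide transversal diagonal gate $\bigotimes_l e^{\pm i\pi Z_l/2^{t-1}}$ with $D = 2^{t-1} = 8$, far above any ``absolute constant.'' In general, a diagonal $G_S$ need not commute with the stabilizer group to preserve the code: conjugating an $X$-type stabilizer by $G_S$ produces that stabilizer times a diagonal phase pattern, which can still act as the identity (or as another element of the stabilizer group) on the codespace because of the code's combinatorial structure. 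That phenomenon is exactly what makes transversal non-Clifford gates on CSS codes possible, and your argument rules them out.

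The paper's proof circumvents precisely this difficulty: from $\tilde V_S\Pi = \Pi\tilde V_S\Pi$ one only gets $Q_S\Pi_2 = \Pi_1 Q_S\Pi_2$ with $\Pi_1 = V_1^\dagger\Pi V_1$, $\Pi_2 = V_2\Pi V_2^\dagger$ a priori \emph{distinct} stabilizer codes, so $Q_S$ need not preserve $\Pi_2$. The key structural input is Proposition~4 of Ref.~\cite{anderson2016classification}, which says that $(Q_S)^4$ \emph{does} preserve $\Pi_2$ and implements a logical gate of the form $e^{-i2\pi a Z_L/D}$ with $a$ an integer; only then does one apply \corollaryref{col:gate} with $H_L = aZ_L$ and $H_{S_l} = 4a_l Z_l$. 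Your application of \corollaryref{col:gate} and your Clifford-hierarchy bookkeeping in step~(ii) would go through essentially as in the paper if the reduction held, but since the reduction is the whole content of the argument and it fails, the proof does not stand without importing that structural fact (or an equivalent) about transversal diagonal gates on stabilizer codes.
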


\begin{proof}
Let $\Pi$ be the projection onto the stabilizer code under consideration and $Q_S = \bigotimes_{l=1}^n e^{-i 2\pi a_l Z_l/D}$. Then
\begin{equation}
\tilde{V}_S \Pi = \Pi\tilde{V}_S\Pi,~~\Rightarrow~~
Q_S \Pi_2  =
\Pi_1 Q_S \Pi_2 ,
\end{equation}
where $\Pi_1 = V_1^\dagger\Pi V_1$ and $\Pi_2 = V_2\Pi V_2^\dagger$. Both $\Pi_1$ and $\Pi_2$ are stabilizer codes with the same code distance as $\Pi$. Without loss of generality, we assume $\Pi$, $\Pi_1$ and $\Pi_2$ are two-dimensional stabilizer codes (by considering subcodes of the original codes). 

As proven in Proposition 4 in Ref.~\cite{anderson2016classification}, $V_S = (Q_S)^4$ must be a logical gate on $\Pi_2$, satisfying 
\begin{equation}
(Q_S)^4 \Pi_2 = \Pi_2 (Q_S)^4 \Pi_2,
\end{equation} 
and the logical gate $V_L$ has the form $V_L = e^{-i2\pi a Z_L/D}$ where $a$ is an integer. First consider the situation where $a = 0$ (for any choice of two-dimensional codes), i.e., $(Q_S)^4 \Pi_2 = \Pi_2$. By writing down the stabilizer code $\Pi_2$ in its computational basis, it is easy to observe that either $Q_S \Pi_2 = \Pi_2$, then $\tilde{V}_S$ implements a Clifford logical gate and the Corollary holds, or $Q_S \Pi_2 \neq \Pi_2$, then $\sum_{l=1}^n a_l/D$ must be a positive constant and $D = O(\sum_{l=1}^n a_l) = O(\poly(n))$.

Now we consider the situation where $a \neq 0$. By writing down the stabilizer code $\Pi_2$ in its computational basis, we observe $1 \leq a \leq \sum_{l=1}^n 4a_l = O(\poly(n))$. Let $H_L = a Z_L$ and $H_{S_l} = 4a_l Z_l$. Then we must have 
\begin{equation}
\frac{\sqrt{\Delta H_L\left(\Delta H_S -\frac{1}{2}\Delta H_L\right)}}{\Delta H_S} \geq \frac{1}{O(\poly(n))}.
\end{equation}
Using \corollaryref{col:gate}, we have $D = O(\poly(n))$. 
Since $D$ is a power of 2, for all $l$, $e^{-i 2\pi a_l Z_l/D}$  (see Proposition 1 in Ref.~\cite{anderson2016classification}) and thus $Q_S = \bigotimes_{l=1}^n e^{-i 2\pi a_l Z_l/D}$ must be in the \mbox{$(\log D)$-th} level of the Clifford hierarchy. \corollaryref{col:gate-stab} then follows from the fact that  Clifford operators $V_1$, $V_2$ preserve the level of the Clifford hierarchy and any physical gate in the \mbox{$j$-th} level of the Clifford hierarchy implements a logical gate in the \mbox{$j$-th} level of the Clifford hierarchy (because logical Pauli operators can be implemented by physical Pauli operators for stabilizer codes). 

\end{proof}

\corollaryref{col:gate-stab} provides a simple proof on the limitations of transversal logical gates for stabilizer codes from the perspective of continuous symmetries. Note that the relevant results previously known for stabilizer codes \cite{bravyi2013classification,pastawski2015fault,jochym2018disjointness} were obtained using very different techniques.

\section{Trade-off between QEC and local symmetry measures}
\label{sec:local}

In this section, we study relations between QEC and local symmetry measures, that is,  the local covariance violation and the charge conservation violation. We will first prove a lemma which links the charge conservation to the charge fluctuation and then derive trade-off relations using \propref{prop:charge-KL} and \propref{prop:charge-metrology}. We will also derive a lower bound on the local covariance violation using the quantum metrology method.

Note that the results in this section (\thmref{thm:local-KL}, \thmref{thm:local-metrology} and \thmref{thm:local-point}) hold true for arbitrary Hermitian operators $H_L$ and $H_S$, which do not necessarily share a common period as generators of $U(1)$ representations. 

\subsection{Bounds via charge fluctuation}
We first observe a simple connection between the charge fluctuation $\chi$ and the charge conservation violation $\delta_\charge$: 
\begin{lemma}
\label{lemma:chi-local} Consider a quantum code $\mE_{\LtoS}$, a physical Hamiltonian $H_S$ and a logical Hamiltonian $H_L$. Then 
\begin{equation}
\abs{\chi} \geq \Delta H_L - \delta_\charge.
\end{equation} 
\end{lemma}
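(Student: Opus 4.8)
The plan is to unpack the two definitions directly and observe that $\chi$ can differ from $\Delta H_L$ by no more than $\delta_\charge$. First I would introduce the shorthand $\tilde H_S := \mE^\dagger_{\StoL}(H_S)$, so that by definition $\delta_\charge = \Delta\big(H_L - \tilde H_S\big)$, while $\chi = \bra{0_L}\tilde H_S\ket{0_L} - \bra{1_L}\tilde H_S\ket{1_L}$, where $\ket{0_L}$ and $\ket{1_L}$ are eigenvectors of $H_L$ associated with its largest and smallest eigenvalues respectively.

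Next I would write $\tilde H_S = H_L - (H_L - \tilde H_S)$ and substitute into the expression for $\chi$. Because $\bra{0_L}H_L\ket{0_L}$ and $\bra{1_L}H_L\ket{1_L}$ are exactly the maximum and minimum eigenvalues of $H_L$, their difference is $\Delta H_L$, so
\begin{equation}
\chi = \Delta H_L - \big(\bra{0_L}(H_L - \tilde H_S)\ket{0_L} - \bra{1_L}(H_L - \tilde H_S)\ket{1_L}\big).
\end{equation}
The remaining step is to bound the parenthesized quantity: each of $\bra{0_L}(H_L - \tilde H_S)\ket{0_L}$ and $\bra{1_L}(H_L - \tilde H_S)\ket{1_L}$ is an expectation value of the Hermitian operator $H_L - \tilde H_S$, hence lies between its smallest and largest eigenvalues, so their difference has absolute value at most $\Delta(H_L - \tilde H_S) = \delta_\charge$. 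This yields $\chi \geq \Delta H_L - \delta_\charge$, and since $|\chi| \geq \chi$ for any real $\chi$, the claimed bound $|\chi| \geq \Delta H_L - \delta_\charge$ follows (and is vacuous when the right-hand side is negative).

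I do not anticipate a substantive obstacle: the statement is essentially a reformulation of the definitions. The only points requiring care are keeping straight which of $\ket{0_L},\ket{1_L}$ is the maximal/minimal eigenstate of $H_L$ — this is precisely what produces the $\Delta H_L$ term — and noticing that the argument naturally bounds $\chi$ from below rather than controlling $|\chi|$ directly, so the passage to $|\chi|$ relies only on the elementary inequality $|\chi|\geq\chi$.
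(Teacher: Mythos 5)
Your proof is correct and takes essentially the same approach as the paper: decompose $\mE^\dagger_{\StoL}(H_S)$ as $H_L - (H_L - \mE^\dagger_{\StoL}(H_S))$, note that the difference of expectation values of $H_L$ in $\ket{0_L}$ and $\ket{1_L}$ is exactly $\Delta H_L$, and bound the difference of expectations of the Hermitian operator $H_L - \mE^\dagger_{\StoL}(H_S)$ by $\delta_\charge$. The only cosmetic difference is that you obtain the slightly stronger statement $\chi \geq \Delta H_L - \delta_\charge$ and then pass to $|\chi|$, whereas the paper applies the reverse triangle inequality to $|\chi|$ directly; the decomposition and key estimates are identical.
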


\begin{proof}
By definition, $\abs{\chi} = \big|\!\braket{0_L|(\mE_{\LtoS})^\dagger(H_S)|0_L} - \braket{1_L|(\mE_{\LtoS})^\dagger|1_L}\!\big|$. Then we must have $\abs{\chi} \geq \Delta H_L - \delta_\charge$, because $\delta_\charge \geq \big|\!\braket{0_L|((\mE_{\LtoS})^\dagger(H_S)-H_L)|0_L} - \braket{1_L|((\mE_{\LtoS})^\dagger(H_S)-H_L)|1_L}\!\big|$, and $\Delta H_L = \big|\!\braket{0_L|H_L|0_L} - \braket{1_L|H_L|1_L}\!\big|$. 
\end{proof}

Using the KL-based method (\propref{prop:charge-KL}) and \propref{prop:point-charge}, we immediately have the following trade-off relations: 
\begin{theorem}
\label{thm:local-KL}
\isometric It holds that 
\begin{align}
\delta_{\point} + 2\epsilon\frakJ &\geq \Delta H_L,\\
\delta_{\charge} + 2\epsilon\frakJ &\geq \Delta H_L,
\label{eq:tradeoff-3}
\end{align} 
where $\frakJ$ is given by \eqref{eq:def-frakJ}. 
\end{theorem}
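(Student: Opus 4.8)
The plan is to simply chain together three results already established earlier in the paper: \propref{prop:charge-KL}, \lemmaref{lemma:chi-local}, and \propref{prop:point-charge}. The charge fluctuation $\chi$ is the natural intermediary — \propref{prop:charge-KL} controls it from above by the QEC inaccuracy, while \lemmaref{lemma:chi-local} controls it from below by the charge conservation violation, and the difference between the two bounds is forced to be at least $\Delta H_L$.

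Concretely, first I would invoke \lemmaref{lemma:chi-local}, which gives $\abs{\chi} \geq \Delta H_L - \delta_\charge$. Next I would invoke \propref{prop:charge-KL} (whose hypotheses—isometric code, HKS condition—are exactly those of \thmref{thm:local-KL}), which gives $\abs{\chi} \leq 2\epsilon\frakJ$. Combining these two bounds yields
\begin{equation}
\Delta H_L - \delta_\charge \leq \abs{\chi} \leq 2\epsilon\frakJ,
\end{equation}
which rearranges to $\delta_\charge + 2\epsilon\frakJ \geq \Delta H_L$, the second claimed inequality \eqref{eq:tradeoff-3}. Finally, since the encoding is isometric, \propref{prop:point-charge} gives $\delta_\point \geq \delta_\charge$, and substituting into the inequality just obtained yields $\delta_\point + 2\epsilon\frakJ \geq \delta_\charge + 2\epsilon\frakJ \geq \Delta H_L$, which is the first claimed inequality.

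There is no real obstacle here: the theorem is a bookkeeping corollary of the machinery built in \secref{sec:global-1} and \propref{prop:point-charge}. The only point worth flagging is that the step $\delta_\point \geq \delta_\charge$ relies on the isometry assumption (as does \propref{prop:charge-KL}), which is why the statement of \thmref{thm:local-KL} carries the ``isometric quantum code'' hypothesis; one should also note that, unlike the global case, no common-period / $U(1)$ assumption on $H_L$ and $H_S$ is needed since \propref{prop:charge-KL} and \lemmaref{lemma:chi-local} hold for arbitrary Hermitian $H_L, H_S$.
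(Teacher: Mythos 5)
Your proof is correct and is exactly the chain the paper intends: the paper states \thmref{thm:local-KL} as an immediate consequence of \propref{prop:charge-KL}, \lemmaref{lemma:chi-local}, and \propref{prop:point-charge}, without spelling out the elementary bookkeeping you have made explicit. Your remark that the common-period assumption is not needed here is also consistent with the paper's own note at the start of \secref{sec:local}.
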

Note that \eqref{eq:tradeoff-3} reduces to Corollary 3 in Ref.~\cite{faist2019continuous} for random local erasure noise.

In particular, for exact QEC codes, we have the following corollary: 
\begin{corollary}
\label{col:local-KL}
\isometric  When $\epsilon = 0$, i.e., when the code is exactly error-correcting, we must have $\delta_\point \geq \Delta H_L$ and $\delta_\charge \geq \Delta H_L$. 
\end{corollary}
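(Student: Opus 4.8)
The plan is to obtain \corollaryref{col:local-KL} as the immediate $\epsilon = 0$ specialization of \thmref{thm:local-KL}. Setting $\epsilon = 0$ in the two inequalities $\delta_{\point} + 2\epsilon\frakJ \geq \Delta H_L$ and $\delta_{\charge} + 2\epsilon\frakJ \geq \Delta H_L$ kills the $2\epsilon\frakJ$ terms and yields $\delta_\point \geq \Delta H_L$ and $\delta_\charge \geq \Delta H_L$ directly. So once \thmref{thm:local-KL} is in hand there is nothing left to do, and I would simply state this one line.

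Alternatively, and perhaps more transparently, one can argue from first principles without invoking the full strength of \thmref{thm:local-KL}. For an exact QEC code satisfying the HKS condition, the discussion around \eqref{eq:KL-H} already shows that the charge fluctuation vanishes, $\chi = \bra{0_L}\mE^\dagger_{\StoL}(H_S)\ket{0_L} - \bra{1_L}\mE^\dagger_{\StoL}(H_S)\ket{1_L} = 0$: the KL conditions force $\mE^\dagger_{\StoL}(K_{S,i}^\dagger K_{S,j})$ to have equal diagonal entries on $\ket{0_L}$ and $\ket{1_L}$, and $H_S$ lies in the span of the $K_{S,i}^\dagger K_{S,j}$ by HKS. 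Feeding $\chi = 0$ into \lemmaref{lemma:chi-local}, which gives $\abs{\chi} \geq \Delta H_L - \delta_\charge$, yields $\delta_\charge \geq \Delta H_L$ at once. Then \propref{prop:point-charge}, valid for isometric encodings, gives $\delta_\point \geq \delta_\charge \geq \Delta H_L$, completing the argument.

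There is essentially no obstacle: the substantive work, namely the perturbative KL analysis behind \propref{prop:charge-KL} and the quantum Fisher information comparison behind \propref{prop:point-charge}, has already been carried out upstream, so this corollary is a one-line consequence. The only point worth flagging is that both conclusions rest on the standing assumptions of this part of the paper, that the encoding is isometric (needed for \propref{prop:point-charge} and for the KL-span form of $H_S$) and that the HKS condition holds (needed to conclude $\chi = 0$, equivalently to rule out the escape in which $\frakJ \to \infty$); if either fails the bound can break, consistent with the exactly covariant exact-QEC examples recalled in \secref{sec:def-symmetry}.
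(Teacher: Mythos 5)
Your primary route is exactly the paper's: \corollaryref{col:local-KL} is stated without a separate proof because it is the $\epsilon = 0$ specialization of \thmref{thm:local-KL}, and your one-line substitution is all that is needed. Your alternative argument is also correct and worth noting as a genuine shortcut: at $\epsilon = 0$ the exact KL conditions plus HKS immediately give $\chi = 0$ (this is just \eqref{eq:KL-H} read off the definition of $\chi$), after which \lemmaref{lemma:chi-local} yields $\delta_\charge \geq \Delta H_L$ and \propref{prop:point-charge} upgrades it to $\delta_\point \geq \Delta H_L$. This bypasses the perturbative Knill--Laflamme machinery of \propref{prop:charge-KL} entirely, which is only needed for the quantitative $\epsilon > 0$ statement; for the exact-QEC endpoint the elementary argument is cleaner and makes the role of the two standing hypotheses (isometry for \propref{prop:point-charge}, HKS for $\chi = 0$) transparent. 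Both reasoning chains are sound, and your remark on why each hypothesis is indispensable is accurate and matches the discussion in \secref{sec:def-symmetry}.
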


Similarly, using the quantum metrology method (\propref{prop:charge-metrology}) and \propref{prop:point-charge}, we have the following trade-off relations: 
\begin{theorem}
\label{thm:local-metrology}
\nonisometric It holds that 
\begin{align}
\delta_{\charge} + 2\epsilon\left(\sqrt{(1-\epsilon^2)\frakF
} + \frakB \right) &\geq \Delta H_L. 
\end{align} 
In particular, when $\epsilon \ll 1$ and $\frakB \ll \sqrt{\frakF}$, we have 
\begin{align}
\delta_{\charge} + 2 \epsilon \sqrt{\frakF} 
&\gtrsim \Delta H_L,
\end{align}
where $\frakF$ and $\frakB$ are given by \eqref{eq:def-frakF} and \eqref{eq:def-frakB}, respectively. 
Furthermore, when the code is isometric, it holds that 
\begin{equation}
\delta_{\point} + 2\epsilon\left(\sqrt{(1-\epsilon^2)\frakF} + \frakB \right) \geq \Delta H_L. 
\end{equation}
\end{theorem}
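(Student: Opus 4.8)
The plan is to obtain \thmref{thm:local-metrology} as an immediate corollary of two facts already in hand: the lower bound on the charge fluctuation from \lemmaref{lemma:chi-local}, $\abs{\chi} \geq \Delta H_L - \delta_\charge$, and the metrology-based upper bound from \propref{prop:charge-metrology}, $\abs{\chi} \leq 2\epsilon\big(\sqrt{(1-\epsilon^2)\frakF} + \frakB\big)$. Crucially, both of these hold for arbitrary CPTP encoding channels satisfying the HKS condition and for arbitrary Hermitian $H_L, H_S$ without any common-period ($U(1)$) assumption, which is exactly the generality claimed in the theorem, so no new machinery is needed.

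First I would simply chain the two inequalities,
\begin{equation}
\Delta H_L - \delta_\charge \leq \abs{\chi} \leq 2\epsilon\big(\sqrt{(1-\epsilon^2)\frakF} + \frakB\big),
\end{equation}
and rearrange to get $\delta_\charge + 2\epsilon\big(\sqrt{(1-\epsilon^2)\frakF} + \frakB\big) \geq \Delta H_L$, which is the first displayed inequality. For the asymptotic form I would observe that $\sqrt{1-\epsilon^2}\simeq 1$ when $\epsilon \ll 1$ and that $2\epsilon\frakB$ is negligible compared to $2\epsilon\sqrt{\frakF}$ when $\frakB \ll \sqrt{\frakF}$, so the bound reduces to $\delta_\charge + 2\epsilon\sqrt{\frakF} \gtrsim \Delta H_L$. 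Finally, for the isometric case I would invoke \propref{prop:point-charge}, which gives $\delta_\point \geq \delta_\charge$ whenever $\mE_{\StoL}$ is isometric; substituting this into the inequality above yields $\delta_\point + 2\epsilon\big(\sqrt{(1-\epsilon^2)\frakF} + \frakB\big) \geq \Delta H_L$.

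The main point is that essentially no obstacle remains: all the real work — the ancilla-assisted two-level repetition encoding, the reduction of the error-corrected channel to a rotated dephasing channel, and the monotonicity of the regularized SLD QFI — was already carried out in the proof of \propref{prop:charge-metrology}, and the geometric inequality $\delta_\point \geq \delta_\charge$ in the proof of \propref{prop:point-charge}. The only thing worth double-checking is consistency of hypotheses across the pieces: the $\delta_\charge$ statements must be asserted for general CPTP encodings (as in \propref{prop:charge-metrology}), while the concluding $\delta_\point$ statement inherits the isometric assumption from \propref{prop:point-charge}. One could additionally remark, in parallel with the discussion after \thmref{thm:global-1} and in \appref{app:refine-2}, that replacing the pair $\{\ket{0_L},\ket{1_L}\}$ by an arbitrary orthonormal pair in the proof of \propref{prop:charge-metrology} would give a strengthened inequality, but this refinement is not needed for the statement as written.
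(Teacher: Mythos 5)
Your proposal is correct and matches the paper's (implicit) argument exactly: the paper introduces \lemmaref{lemma:chi-local} immediately before this theorem precisely so that it can be chained with \propref{prop:charge-metrology} (for the $\delta_\charge$ bound, valid for general CPTP encodings under HKS) and with \propref{prop:point-charge} (to pass to $\delta_\point$ in the isometric case). Your hypothesis bookkeeping — general encoding for $\delta_\charge$, isometric for $\delta_\point$ — is also consistent with the paper's use of the \texttt{nonisometric} preamble for the theorem and the explicit ``when the code is isometric'' qualifier on the final display.
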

\begin{corollary}
\label{col:local-metrology}
\nonisometric When $\epsilon = 0$, i.e., when the code is exactly error-correcting, we must have $\delta_\charge \geq \Delta H_L$. 
\end{corollary}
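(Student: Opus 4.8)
The plan is to obtain this statement as an immediate specialization of \thmref{thm:local-metrology} to the exact-QEC endpoint $\epsilon = 0$. First I would observe that the auxiliary quantities appearing in that theorem are finite under the standing assumptions: the HKS condition guarantees $\frakF = \frakF(\mN_S,H_S) < \infty$ (recall that the regularized channel QFI $\barF(\mN_{S,\theta})$ is finite precisely when HKS holds), and $\frakB \leq \sqrt{2}\,\Delta H_S < \infty$ by \eqref{eq:def-frakB}. Consequently the correction term $2\epsilon\big(\sqrt{(1-\epsilon^2)\frakF} + \frakB\big)$ vanishes identically at $\epsilon = 0$, and substituting into
\begin{equation*}
\delta_{\charge} + 2\epsilon\big(\sqrt{(1-\epsilon^2)\frakF} + \frakB\big) \geq \Delta H_L
\end{equation*}
yields $\delta_\charge \geq \Delta H_L$ at once.

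There is no genuine obstacle here; the only point worth checking is that the finiteness hypotheses implicit in the parent bound are in force, which is exactly what the HKS assumption provides (and in particular this is why HKS cannot be dropped — otherwise $\frakF$ could be infinite and the trade-off relation becomes vacuous). For completeness one could instead reconstruct the argument from first principles: \propref{prop:charge-metrology} specialized to $\epsilon = 0$ forces the charge fluctuation $\chi = 0$, and then \lemmaref{lemma:chi-local}, which states $\abs{\chi} \geq \Delta H_L - \delta_\charge$, immediately gives $\delta_\charge \geq \Delta H_L$. This is precisely the reasoning already compiled into \thmref{thm:local-metrology}, so quoting the theorem and setting $\epsilon = 0$ is the cleanest route.
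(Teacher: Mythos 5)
Your proof is correct and matches the paper's intended (implicit) argument: the corollary is precisely the $\epsilon = 0$ specialization of \thmref{thm:local-metrology}, and your side remarks on the finiteness of $\frakF$ (guaranteed by HKS) and $\frakB \leq \sqrt{2}\,\Delta H_S$, as well as the equivalent route via \propref{prop:charge-metrology} forcing $\chi = 0$ combined with \lemmaref{lemma:chi-local}, are accurate restatements of how \thmref{thm:local-metrology} was assembled in the first place.
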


Note that \corollaryref{col:local-metrology} is slightly more general than \corollaryref{col:local-KL} as the former covers the situation where the encoding is non-isometric. 

\subsection{Bounding local covariance violation using quantum metrology}

The trade-off relation between $\epsilon$ and $\delta_\point$ in \thmref{thm:local-metrology} requires the code to be isometric. 
In fact, we can show a cleaner version of the trade-off between the QEC inaccuracy and the local covariance violation using the quantum metrology method which does not contain $\frakB$ and also covers the non-isometric scenario, as shown below.  
\begin{theorem}
\label{thm:local-point}
\nonisometric It holds that 
\begin{equation}
\label{eq:tradeoff-2}
\delta_\point + 2\epsilon\left(\sqrt{(1-\epsilon^2)\frakF} + \epsilon \Delta H_L\right) \geq \Delta H_L. 
\end{equation}
When $\epsilon \ll 1$, we have
\begin{equation}
\delta_\point + 2\epsilon\sqrt{\frakF}
\gtrsim \Delta H_L, 
\end{equation} 
where $\frakF$ and $\frakB$ are given by \eqref{eq:def-frakF} and \eqref{eq:def-frakB}, respectively. 
\end{theorem}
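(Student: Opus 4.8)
The plan is to recycle the error-corrected two-level metrology protocol used in the proof of \propref{prop:charge-metrology}, but to read a lower bound on $\delta_\point$ \emph{directly} off the geometry of the resulting rotated dephasing channel, rather than routing through the charge fluctuation $\chi$ (which is what brings in $\frakB$ and forces isometry). First I would set up, exactly as in that proof, the qubit $C$ with $H_C=\frac{\Delta H_L}{2}Z_C$, the ancilla qubit $A$, the covariant repetition encoding $\mE^\rep_{\CtoLA}$ and recovery $\mR^\rep_{\LAtoC}$, and take $\mR^{\optL}_{\StoL}$ to achieve $\epsilon=P(\mR^{\optL}_{\StoL}\circ\mN_S\circ\mE_{\LtoS},\id_L)$. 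With $\mR_{\SAtoC}=\mR^\rep_{\LAtoC}\circ(\mR^{\optL}_{\StoL}\otimes\id_A)$ and $\mE_{\CtoSA}=(\mE_{\LtoS}\otimes\id_A)\circ\mE^\rep_{\CtoLA}$, the channel $\mN_{C,\theta}=\mR_{\SAtoC}\circ(\mN_{S,\theta}\otimes\id_A)\circ\mE_{\CtoSA}$ is the rotated dephasing channel $\mD_{C,\theta}\circ\mU_{C,\theta}$ with parameters $p_\theta,\phi_\theta$, so that $\xi_\theta:=\bra{0_C}\mN_{C,\theta}(\ketbra{0_C}{1_C})\ket{1_C}=(1-2p_\theta)e^{-i(\phi_\theta+\Delta H_L\theta)}$. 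As in \propref{prop:charge-metrology}, $\sqrt{p_0}\le P(\mD_{C,0},\id_C)=P(\mN_{C,0},\id_C)\le\epsilon$, so $p_0\le\epsilon^2$ and $1-|\xi_0|^2=4p_0(1-p_0)\le 4\epsilon^2(1-\epsilon^2)$. Nothing here uses isometry of $\mE_{\LtoS}$ or periodicity of the Hamiltonians, which is why the statement holds in full generality.

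The new ingredient is a bound, in terms of $\delta_\point$, on how fast $\mD_{C,\theta}$ can vary at $\theta=0$. Using covariance of the repetition code, $\mE^\rep_{\CtoLA}\circ\mU_{C,\theta}=(\mU_{L,\theta}\otimes\id_A)\circ\mE^\rep_{\CtoLA}$, together with the identity $\mU_{S,\theta}\circ\mE_{\LtoS}=\mM_\theta\circ\mU_{L,\theta}$ where $\mM_\theta:=\mU_{S,\theta}\circ\mE_{\LtoS}\circ\mU_{L,\theta}^\dagger$, one checks that $\mD_{C,\theta}=\mR^\rep_{\LAtoC}\circ\big((\mR^{\optL}_{\StoL}\circ\mN_S\circ\mM_\theta)\otimes\id_A\big)\circ\mE^\rep_{\CtoLA}$; that is, the entire $\theta$-dependence of $\mD_{C,\theta}$ enters through $\mM_\theta$ alone, dressed by $\theta$-independent pre- and post-processing. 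Monotonicity of the single-shot channel QFI under such processing (a consequence of data processing for the state QFI) then gives $F(\mD_{C,\theta})\big|_{\theta=0}\le F(\mM_\theta)\big|_{\theta=0}=\delta_\point^2$. On the other hand, feeding $\mD_{C,\theta}$ the probe $\ketbra{+_C}{+_C}$ (equivalently, evaluating on its Choi state) yields a qubit state whose Bloch vector has length $1-2p_\theta$ and azimuthal angle $\phi_\theta$, so the standard qubit-state QFI formula gives $F(\mD_{C,\theta})\big|_{\theta=0}\ge(1-2p_0)^2\dot\phi_0^2$. Combining the two, $(1-2p_0)\,|\dot\phi_0|\le\delta_\point$.

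Finally I would feed this into the regularized-QFI estimate. Monotonicity of $\barF$ and the rotated-dephasing formula give $\frakF=\barF(\mN_{S,\theta})\ge\barF(\mN_{C,\theta})\big|_{\theta=0}=\big(|\partial_\theta\xi_\theta|^2/(1-|\xi_0|^2)\big)\big|_{\theta=0}$, hence $|\partial_\theta\xi_\theta|^2\big|_{\theta=0}\le 4\epsilon^2(1-\epsilon^2)\frakF$. Since $|\partial_\theta\xi_\theta|^2\big|_{\theta=0}=4\dot p_0^2+(1-2p_0)^2(\dot\phi_0+\Delta H_L)^2\ge(1-2p_0)^2(\Delta H_L-|\dot\phi_0|)^2\ge\big((1-2p_0)\Delta H_L-\delta_\point\big)^2$ — the last two inequalities being valid when the relevant brackets are nonnegative, and otherwise $\delta_\point>(1-2p_0)\Delta H_L\ge(1-2\epsilon^2)\Delta H_L$ makes the theorem immediate — we obtain $\big((1-2p_0)\Delta H_L-\delta_\point\big)^2\le 4\epsilon^2(1-\epsilon^2)\frakF$, so $\delta_\point\ge(1-2p_0)\Delta H_L-2\epsilon\sqrt{(1-\epsilon^2)\frakF}\ge(1-2\epsilon^2)\Delta H_L-2\epsilon\sqrt{(1-\epsilon^2)\frakF}$. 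Rearranging gives $\delta_\point+2\epsilon\big(\epsilon\Delta H_L+\sqrt{(1-\epsilon^2)\frakF}\big)\ge\Delta H_L$, which is \eqref{eq:tradeoff-2}; the case $\epsilon\ge 1/\sqrt2$ is dispatched separately since there $2\epsilon^2\Delta H_L\ge\Delta H_L$ already, and dropping the $O(\epsilon^2)$ term yields the asymptotic form. The non-isometric case needs no extra work.

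The step requiring genuine care, as opposed to bookkeeping, is the lower bound on the \emph{single-shot} channel QFI $F(\mD_{C,\theta})$: every metrology argument elsewhere in the paper only ever lower-bounds the \emph{regularized} QFI $\barF$, whereas here the quantity to compare against, $\delta_\point=\sqrt{F(\mM_\theta)|_{\theta=0}}$, is single-shot, so one cannot simply invoke the rotated-dephasing $\barF$ formula and must instead compute the output Bloch vector explicitly and apply the qubit-state QFI formula. One must also resist setting the factors $(1-2p_0)$ to $1$ prematurely, since it is exactly the residue $p_0\le\epsilon^2$ that produces the $2\epsilon^2\Delta H_L$ term in the statement, and should confirm that the decomposition $\mD_{C,\theta}=\mathcal{R}'\circ(\mM_\theta\otimes\id_A)\circ\mathcal{E}'$ — and with it the claim that $\mN_{C,\theta}$ is precisely a rotated dephasing channel — carries over verbatim to non-isometric $\mE_{\LtoS}$, which it does because it relies only on covariance of the repetition code.
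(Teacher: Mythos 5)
Your proposal is correct and follows essentially the same route as the paper's proof (Lemma~S3 in Appendix~C): the identical two-level encoding, the same decomposition of $\mD_{C,\theta}$ through $\mM_\theta$ with $\theta$-independent pre/post-processing, monotonicity giving $F(\mD_{C,\theta})|_{\theta=0}\le\delta_\point^2$, and the same chain $|\partial_\theta\xi_\theta|^2|_{\theta=0}\ge((1-2\epsilon^2)\Delta H_L-\delta_\point)^2$ combined with $\barF(\mN_{C,\theta})\le\frakF$. The only cosmetic difference is that you lower-bound the single-shot channel QFI $F(\mD_{C,\theta})$ by feeding the probe $\ket{+_C}$ and using the qubit-state QFI formula, whereas the paper evaluates the exact closed-form channel QFI $(1-2p_\theta)^2\dot\phi_\theta^2+\dot p_\theta^2/(p_\theta(1-p_\theta))$ and then drops the $\dot p$ term; these yield the same inequality.
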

\begin{corollary}
\label{col:local-metrology-2}
\nonisometric When $\epsilon = 0$, i.e., when the code is exactly error-correcting, it holds that $\delta_\point \geq \Delta H_L$. 
\end{corollary}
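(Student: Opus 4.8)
The plan is to adapt the error-corrected metrology protocol behind \propref{prop:charge-metrology} (see \figref{fig:dephasing}), keeping the recovery channel $\mR^{\optL}_{\StoL}$ that achieves the QEC inaccuracy $\epsilon$, but this time extracting a lower bound on the regularized channel QFI that blends the ``signal'' $\Delta H_L$ coming from the logical rotation with the contamination caused by the encoder's point covariance violation $\delta_\point$. Concretely, as in that proof, concatenating the code with the two-level repetition encoding ($H_C=\tfrac{\Delta H_L}{2}Z_C$) yields a rotated dephasing channel $\mN_{C,\theta}=\mR_{\SAtoC}\circ(\mN_{S,\theta}\otimes\id_A)\circ\mE_{\CtoSA}=\mD_{C,\theta}\circ\mU_{C,\theta}$, where $\mD_{C,\theta}$ is itself a rotated dephasing channel; writing $\xi_\theta$ and $\mu_\theta$ for the complex coherence factors of $\mN_{C,\theta}$ and $\mD_{C,\theta}$, one has $\xi_\theta=e^{-i\Delta H_L\theta}\mu_\theta$, and \eqref{eq:xi-epsilon} already gives $|\mu_0|=|\xi_0|\ge 1-2\epsilon^2$, hence $1-|\xi_0|^2\le 4\epsilon^2(1-\epsilon^2)$.

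The new ingredient is the estimate $|\partial_\theta\mu_\theta|\big|_{\theta=0}\le\delta_\point$, which is exactly what lets us bypass the lossy detour through $\frakB$ used in \thmref{thm:local-metrology}. The key observation is that $\mD_{C,\theta}$ factors as $\Lambda_2\circ\bigl(\tilde{\mE}_\theta\otimes\id_A\bigr)\circ\Lambda_1$ with $\tilde{\mE}_\theta:=\mU_{S,\theta}\circ\mE_{\LtoS}\circ\mU_{L,\theta}^\dagger$ (so that $\delta_\point^2=F(\tilde{\mE}_\theta)\big|_{\theta=0}$) and with $\Lambda_1=\mE^{\rep}_{\CtoLA}$, $\Lambda_2=\mR^{\rep}_{\LAtoC}\circ(\mR^{\optL}_{\StoL}\circ\mN_S\otimes\id_A)$ parameter-independent. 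By monotonicity of the single-copy channel QFI $F$ under pre-/post-composition with fixed channels and under tensoring with $\id_A$ (the channel QFI already maximizes over ancillas), $F(\mD_{C,\theta})\big|_{\theta=0}\le F(\tilde{\mE}_\theta)\big|_{\theta=0}=\delta_\point^2$. Feeding the maximally entangled state into $\mD_{C,\theta}\otimes\id$ produces a state supported on a two-dimensional subspace with off-diagonal element $\mu_\theta/2$; by the qubit QFI formula its state QFI equals $|\partial_\theta\mu_\theta|^2$ plus a nonnegative term, and since this is at most $F(\mD_{C,\theta})$ by definition of the channel QFI we conclude $|\partial_\theta\mu_\theta|^2\big|_{\theta=0}\le\delta_\point^2$.

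It then remains to assemble the pieces exactly as in \propref{prop:charge-metrology}. From $\partial_\theta\xi_\theta\big|_{\theta=0}=-i\Delta H_L\,\mu_0+\partial_\theta\mu_0$ we get $|\partial_\theta\xi_\theta|\big|_{\theta=0}\ge\Delta H_L|\mu_0|-|\partial_\theta\mu_0|\ge\Delta H_L(1-2\epsilon^2)-\delta_\point$. If this right-hand side is nonpositive, then $\delta_\point+2\epsilon^2\Delta H_L\ge\Delta H_L$ and \eqref{eq:tradeoff-2} holds trivially; otherwise, monotonicity of the regularized QFI $\barF$ through the same fixed channels gives $\frakF=\barF(\mN_{S,\theta})\ge\barF(\mN_{C,\theta})\big|_{\theta=0}=\frac{|\partial_\theta\xi_\theta|^2}{1-|\xi_\theta|^2}\big|_{\theta=0}\ge\frac{(\Delta H_L(1-2\epsilon^2)-\delta_\point)^2}{4\epsilon^2(1-\epsilon^2)}$, and rearranging yields precisely $\delta_\point+2\epsilon\bigl(\sqrt{(1-\epsilon^2)\frakF}+\epsilon\Delta H_L\bigr)\ge\Delta H_L$. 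Dropping the $O(\epsilon^2)$ terms gives the asymptotic form, and the exactly-error-correcting case $\epsilon=0$ of \corollaryref{col:local-metrology-2} follows by continuity (for isometric codes it is already contained in \corollaryref{col:local-KL}).

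The main obstacle is the second step: cleanly certifying $|\partial_\theta\mu_\theta|\big|_{\theta=0}\le\delta_\point$, i.e.\ translating the abstract identity ``$\delta_\point^2$ equals the channel QFI of $\tilde{\mE}_\theta$ at $\theta=0$'' into a concrete bound on the dephasing-rate derivative of the error-corrected qubit channel. This forces one to keep track of which flavor of channel QFI is invoked where — the single-copy version $F$ for the data-processing step that brings in $\delta_\point$, the regularized version $\barF$ for the comparison with $\frakF=\barF(\mN_{S,\theta})$ — and to dispatch the degenerate case $|\xi_0|=1$ (which forces $\epsilon=0$) separately. Everything else is bookkeeping inherited from the proof of \propref{prop:charge-metrology}.
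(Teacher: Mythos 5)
Your proof is correct and follows essentially the same route as the paper's proof of Theorem~\ref{thm:local-point} (from which the corollary is immediate at $\epsilon=0$): the repetition-concatenated two-level metrology encoding, the monotonicity bound $F(\mD_{C,\theta})\big|_{\theta=0}\le\delta_\point^2$, the estimate $|\xi_0|\ge1-2\epsilon^2$, and the final regularized-QFI monotonicity $\barF(\mN_{C,\theta})\le\barF(\mN_{S,\theta})=\frakF$. The only minor variation is that you certify $|\partial_\theta\mu_0|\le\delta_\point$ by feeding the maximally entangled state into $\mD_{C,\theta}\otimes\id$ and invoking the qubit QFI formula, whereas the paper's Lemma~\ref{lemma:local} computes the rotated-dephasing channel QFI $(1-2p_\theta)^2(\partial_\theta\phi_\theta)^2+(\partial_\theta p_\theta)^2/(p_\theta(1-p_\theta))$ explicitly and extracts only $(1-2p_0)|\partial_\theta\phi_0|\le\delta_\point$; both yield the identical final bound.
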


\begin{proof}
Let $\mR^{\optL}_{\StoL}$ be the recovery channel such that 
$\epsilon =  P(\mR^{\optL}_{\StoL}\circ \mN_{S}\circ\mE_{\LtoS}, \id_{L})$. Let (see \figref{fig:dephasing}) 
\begin{align}
\mN_{C,\theta} =\;&\mR^{\rep}_{\SAtoC} \circ ( \mR_{\StoL}^{\optL} \circ\mN_{S} \otimes \id_A) \circ \nonumber\\&   (\mU_{S,\theta}\circ \mE_{\LtoS} \otimes \id_A) \circ \mE_{\CtoLA}^{\rep}. 
\end{align}
Consider the parameter estimation of $\theta$ in the neighborhood of $\theta = 0$ and let $\xi_\theta = \bra{0_C}\mN_{C,\theta}(\ket{0_C}\bra{1_C})\ket{1_C}$. Following the proof of \propref{prop:charge-metrology}, we have from \eqref{eq:xi-epsilon} that 
\begin{equation}
\abs{\xi_{\theta = 0}} \geq 1 - 2\epsilon^2. 
\end{equation}
As shown in \appref{app:local-proof}, we also have 
\begin{equation}
\abs{\partial_\theta\xi_\theta}^2\big|_{\theta = 0}  \geq ((1-2\epsilon^2)\Delta H_L - \delta_\point)^2, 
\end{equation}
when $(1-2\epsilon^2)\Delta H_L \geq \delta_\point$. 
Hence, when $(1-2\epsilon^2)\Delta H_L \geq \delta_\point$, we must have 
\begin{equation}
\barF(\mN_{S,\theta}) 
= \barF(\mN_{S,\theta}) \big|_{\theta = 0} 
\geq \barF(\mN_{C,\theta})\big|_{\theta = 0} \geq \frac{((1-2\epsilon^2)\Delta H_L - \delta_\point)^2}{4\epsilon^2(1-\epsilon^2)},
\end{equation} 
completing the proof. 
\end{proof}

\subsection{Remarks on the behaviors of the bounds}

From \thmref{thm:local-KL}, \thmref{thm:local-metrology} and \thmref{thm:local-point}, we observe that in $n$-partite systems, the local covariance violation $\delta_\point$ and the charge conservation violation $\delta_\charge$ are usually lower bounded by constants for small $\epsilon$ which does not vanish
as $n\rightarrow \infty$ like the global covariance violation $\delta_\group$. However, also note that $\delta_\point$ and $\delta_\charge$ may naturally be superconstant (for example, for the trivial encoding $\mE_{\LtoS} = \id$ we usually have $\delta_\point=\delta_\charge=\Delta(H_S-H_L) = \Theta(n)$), indicating that the constant or even sublinear scaling of $\delta_\point$ and $\delta_\charge$ requires non-trivial code structures.

Also note that the bounds on $\delta_\point$ in both \thmref{thm:local-KL} and \thmref{thm:local-metrology} rely on the fact that $\delta_\point \geq \delta_\charge$, indicating that these bounds may not be tight when there is a gap between $\delta_\point$ and a function of $\delta_\charge$. Such a gap does exist as we shown later in examples (see \secref{sec:case-study}) and we provide a possible explanation of the existence of the gap in \appref{app:refine}.

\section{Case studies of explicit codes}
\label{sec:case-study}

In the above, we have derived several forms of fundamental limits on the QEC accuracy and degree of symmetry or charge conservation that a quantum code can possibly admit.  Then a natural question is to what extent these limits can be attained by certain codes.  Furthermore, explicit constructions of approximately covariant codes would be important for our understanding of the QEC-symmetry trade-off and may find broad applications.   In this section, we introduce and analyze two code examples with interesting approximate covariance features to address these needs.  In the first example, we generalize a covariant code called the thermodynamic code~\cite{brandao2019quantum,faist2019continuous} to a class of general quantum codes which exhibits a full trade-off between symmetry and QEC via a smooth transition from exact covariance to exact QEC. The second one involves a well-known QEC code called the quantum Reed--Muller  codes~\cite{steane1999quantum,macwilliams1977theory}, which can be seen as a prominent example of approximately covariant exact QEC codes.  In particular, we explicitly compute their QEC and symmetry measures, and compare them to the fundamental limits. Remarkably, the scalings of the global covariance violation and the charge conservation violation for both examples match well with the optimal scalings from our bounds.  

\subsection{Modified thermodynamic codes}
\label{sec:thermo}

Thermodynamic codes \cite{brandao2019quantum,faist2019continuous} are $n$-qubit quantum codes given by certain Dicke states with different magnetic charges which become approximately quantum error-correcting for large $n$. Specifically, a two-dimensional thermodynamic code have codewords
\begin{equation}
\ket{\frakc_0} = \ket{m_n} , \quad 
\ket{\frakc_1} = \ket{(-m)_n} ,  
\end{equation}
where $\ket{m_n}$ for $m \in [-n,n]$ is the Dicke state defined by
\begin{equation}
\ket{m_n} = \frac{1}{\sqrt{\binom{n}{\frac{n+m}{2}}}} \sum_{\substack{\vj\in\{0,1\}^n:\\\sum_l j_l = \frac{n+m}{2}}}\ket{\vj}, 
\end{equation} 
satisfying $\sum_{l=1}^n Z_l\ket{m_n} = -m$. Note that $m+n$ must be an even number and we assume $2\leq m \ll n$. It is easy to verify that the thermodynamic code is exactly covariant with respect to $H_L = \frac{m}{2}Z_L$ and $H_S = - \frac{1}{2} \sum_{l=1}^n Z_l$ and it was proven that for single-erasure noise~\cite{zhou2020theory} $\epsilon = m/2n + O(m^2/n^2)$ which is infinitely small when $m/n\rightarrow 0$. 
Here we extend the thermodynamic code in such a way that it  smoothly transitions from an exactly covariant code to an exact QEC code as tuned by a continuous parameter $0\leq q \leq 1$. Specifically, our modified thermodynamic code is defined by 
\begin{align}
\ket{\frakc^q_0} &= \sqrt{\frac{n}{n+qm}} \ket{m_n} + \sqrt{\frac{qm}{n+qm}} \ket{(-n)_n}, \\
\ket{\frakc^q_1} &= \sqrt{\frac{n}{n+qm}} \ket{(-m)_n} + \sqrt{\frac{qm}{n+qm}} \ket{n_n}.  
\end{align}
In particular, when $q = 0$, we have the original thermodynamic code, and when $q = 1$, we obtain an modified code which is exactly error-correcting under single-erasure noise.
We shall compute the QEC inaccuracy and the different covariance violation measures, and compare them with our trade-off bounds. 

\subsubsection{QEC inaccuracy}

Here we compute the QEC inaccuracy of modified thermodynamic codes $\epsilon(\mN_{S},\mE_{\LtoS})$ where $\mE_{\LtoS}(\cdot) = W(\cdot)W^\dagger$ with $W=\ket{\frakc^q_0}\bra{0_L} + \ket{\frakc^q_1}\bra{1_L}$ and $\mN_S$ is the single-erasure noise channel $\mN_S = \sum_{l=1}^n \frac{1}{n} \mN_{S_l}$, where $\mN_{S_l}(\cdot) = \ket{\vac}\bra{\vac}_{S_l} \otimes \trace_{S_l}(\cdot)$.

We need to use the following lemma which compute the purified distance between error-corrected channels, employing the formalism of complementary channels~\cite{beny2010general}.  Let $\widehat \Phi_{B\leftarrow A}(\cdot) = \trace_A(V_{AB\leftarrow A}(\cdot)V_{AB\leftarrow A}^\dagger)$ 
be the complementary channel of channel $\Phi_A(\cdot) = \trace_B(V_{AB\leftarrow A}(\cdot)V_{AB\leftarrow A}^\dagger)$, 
where $V_{AB\leftarrow A}$ is a Stinespring dilation of $\Phi_A$. Then we have
\begin{lemma}[\cite{beny2010general}]
\label{lemma:complementary}

\begin{equation}\min_\mR P(\mR\circ\Phi_1,\Phi_2) = \min_{\mR'} P(\widehat{\Phi}_1,\mR' \circ \widehat{\Phi}_2),
\end{equation}
for arbitrary $\Phi_{1,2}$, where the minimizations are taken over all channels with the appropriate input and output spaces.
\end{lemma}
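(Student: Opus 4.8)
The plan is to prove this complementarity duality via Uhlmann's theorem together with Stinespring dilations, exploiting the symmetry between a channel's output and its environment. Write $\Phi_1:A\to B_1$ and $\Phi_2:A\to B_2$ with Stinespring isometries $U_1:A\to B_1\otimes E_1$ and $U_2:A\to B_2\otimes E_2$, so that $\widehat{\Phi}_i$ is obtained by tracing out $B_i$ instead of $E_i$. Since the channel fidelity here is a maximization over pure inputs $\ket{\phi}_{AR}$ with a reference $R$, and since $\min_{\mR}P(\mR\circ\Phi_1,\Phi_2)=\sqrt{1-\max_{\mR}\max_{\ket{\phi}}f^2}$ (both optimizations being maxima, so their order is immaterial), it suffices to establish, for each fixed $\ket{\phi}_{AR}$, the pointwise identity
\[
\max_{\mR}f\big((\mR\circ\Phi_1\otimes\id_R)(\phi),(\Phi_2\otimes\id_R)(\phi)\big)=\max_{\mR'}f\big((\widehat{\Phi}_1\otimes\id_R)(\phi),(\mR'\circ\widehat{\Phi}_2\otimes\id_R)(\phi)\big),
\]
and then take the maximum over $\ket{\phi}$ and re-assemble both sides into purified distances.

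With the input fixed, I would set $\ket{\Omega_i}:=(U_i\otimes\id_R)\ket{\phi}\in B_iE_iR$. The key structural point is that $\ket{\Omega_1}$ simultaneously purifies $(\Phi_1\otimes\id_R)(\phi)$ (with purifying system $E_1$) and $(\widehat{\Phi}_1\otimes\id_R)(\phi)$ (with purifying system $B_1$), and likewise for $\ket{\Omega_2}$. For the left-hand side, a recovery $\mR$ has a Stinespring isometry $W:B_1\to B_2\otimes F$, so $(W\otimes\id_{E_1R})\ket{\Omega_1}$ purifies $(\mR\circ\Phi_1\otimes\id_R)(\phi)$; Uhlmann's theorem \cite{tomamichel2015quantum} then gives the fidelity with $(\Phi_2\otimes\id_R)(\phi)$, which $\ket{\Omega_2}$ purifies, as $\max_{J}\big|\bra{\Omega_1}(W^\dagger\otimes\id_{E_1R})(\id_{B_2R}\otimes J)\ket{\Omega_2}\big|$ over isometries $J:E_2\to F\otimes E_1$. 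Running the same argument for the right-hand side, with the roles of the $B$- and $E$-registers interchanged, expresses its fidelity as $\max_{J'}\big|\bra{\Omega_1}(\id_{E_1R}\otimes {J'}^\dagger)(W'\otimes\id_{B_2R})\ket{\Omega_2}\big|$, where $W':E_2\to E_1\otimes F$ is a Stinespring isometry of $\mR'$ and $J':B_1\to F\otimes B_2$ is an isometry.

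The crux — and the step I expect to cost the most bookkeeping, though little conceptual difficulty — is to recognize that, after padding $F$ so that Uhlmann's freedom is realized by honest isometries, the two families of operators acting on $\ket{\Omega_2}$ coincide: on the left we get $W^\dagger_{B_2F\to B_1}\,J_{E_2\to FE_1}$, and on the right ${J'}^\dagger_{FB_2\to B_1}\,W'_{E_2\to E_1F}$, and both describe exactly an isometry sending the $E_2$-register into $E_1$ together with a fresh ancilla $F$, followed by a co-isometry sending the $B_2$-register together with $F$ into $B_1$, all acting as the identity on $R$. Under the relabelling $J\leftrightarrow W'$, $W\leftrightarrow J'$ the two optimization domains are literally the same, so the two maxima are equal; taking $\max_{\ket{\phi}}$ and converting back to purified distances yields $\min_{\mR}P(\mR\circ\Phi_1,\Phi_2)=\min_{\mR'}P(\widehat{\Phi}_1,\mR'\circ\widehat{\Phi}_2)$. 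Remaining points to handle are routine: choosing $F$ large enough that Uhlmann's partial isometries become isometries, the harmless reordering of tensor factors, and noting that the convenient exchange of $\min_{\mR}$ with the maximization over inputs is valid precisely because the channel fidelity in use is a best-case quantity (in a worst-case convention one would instead invoke a minimax argument).
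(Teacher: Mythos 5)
Your Stinespring--Uhlmann argument for the \emph{pointwise} identity (at a fixed input $\ket{\phi}$) is correct and captures the conceptual heart of the Beny--Oreshkov complementarity: on each side the fidelity is a maximization over the same family of operators, namely an isometry $E_2\to FE_1$ composed with a co-isometry $B_2F\to B_1$, and the relabelling $W\leftrightarrow J'$, $J\leftrightarrow W'$ simply exchanges which factor plays the ``Stinespring of the recovery'' role and which plays the ``Uhlmann freedom'' role. (One small clarification: the displayed definition of $f(\Phi_1,\Phi_2)$ in Sec.~II.A has a sign error --- it should be $\min_\rho$, not $\max_\rho$, as the later use in the proof of Proposition~2, the ``worst-case'' label on $\epsilon$, and the Beny--Oreshkov reference all make clear. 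So the actual quantity in the lemma is worst-case.)

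The genuine gap is the last step. You note that a best-case $f$ lets you commute the two maximizations freely and dismiss the worst-case version as ``one would instead invoke a minimax argument.'' That is precisely where the work is, and it is not a routine Sion/von~Neumann invocation: the function $(\mR,\rho)\mapsto f\big((\mR\circ\Phi_1\otimes\id)(\rho),(\Phi_2\otimes\id)(\rho)\big)$ is concave in $\rho$ \emph{and} concave in $\mR$ (by joint concavity of state fidelity and linearity of the channel actions), not concave--convex, so the standard minimax theorems do not apply. At the level of your operators, the identity you would need after relabelling is $\max_W\min_\psi\max_J g=\max_J\min_\psi\max_W g$ for the same $g$, which fails for generic $g$ (e.g.\ take $W,J\in\{0,1\}$, $\psi\in\{a,b\}$, $g(0,0,a)=g(0,1,b)=1$ and $g=0$ otherwise: the left side equals $1$ via $W=0$, the right side equals $0$). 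Beny and Oreshkov's own argument that lifts the pointwise duality to the worst-case statement relies on additional structure specific to the fidelity/purification setup, not on a generic minimax. As written, your proposal therefore proves a best-case variant of the lemma rather than the worst-case statement the paper actually cites and uses (e.g.\ in Sec.~VI.A.1), and the step you flag as routine is the nontrivial one.
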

Choosing $\Phi_1 = \mN_{S} \circ \mE_{\LtoS}$ and $\Phi_2 = \id_L$, 
we have 
\begin{equation}
\epsilon = 
\min_{\mR'_{\LtoB}} P(\widehat{\mN}_{\StoB}\circ\mE_{\LtoS},\mR'_{\LtoB}\circ \widehat{\id}_{L}), 
\end{equation}
As detailed in \appref{app:thermo}, we have that 
\begin{equation}
\label{eq:thermo-epsilon}
\epsilon =  \frac{(1-q)m}{2n} + O\left(\frac{m^2}{n^2}\right),
\end{equation}
and furthermore, explicitly construct a recovery channel $\mR^{\optL}_{\StoL}$ which achieves the optimal QEC inaccuracy up to the lowest order of $m/n$:
\begin{align}
\tilde \epsilon &= P(\mR^{\optL}_{\StoL}\circ\mN_S\circ\mE_{\LtoS},\id_L) = \sqrt{\frac{1}{2}-\sqrt{\frac{(n+m)(n+(2q-1)m)}{4(n+qm)^2}}} \approx \frac{(1-q)m}{2n}.
\end{align}

\subsubsection{Symmetry violation measures}

We now compute all the approximate symmetry measures associated with our modified thermodynamic codes.  Note that we let $H_L = \frac{m}{2}Z_L$ and $H_S = -\frac{1}{2}\sum_{l=1}^n Z_l$, which guarantees that the code tends to be covariant as $n\rightarrow\infty$.

We first compute $\delta_\group$ and $\delta_\point$. Let $\ket{\psi} = \ket{0_L}\ket{\psi_R^0} + \ket{1_L}\ket{\psi_R^1}$ be an arbitrary pure state on $L \otimes R$. Then 
\begin{align}
\ket{\psi_\theta} &:= U_{S,\theta} W U_{L,\theta}^\dagger \ket{\psi} \nonumber\\&=
\left(\!\sqrt{\frac{n}{n\!+\!qm}} \ket{m_n} \!+\! e^{\frac{i(m+n) \theta}{2}} \! \sqrt{\frac{qm}{n\!+\!qm}} \ket{(\!-n\!)_n}\!\right)\!\ket{\psi_R^0}  + \left(\!\sqrt{\frac{n}{n\!+\!qm}} \ket{(\!-m\!)_n}  
\!+\! e^{\frac{-i(m+n)\theta}{2}}\! \sqrt{\frac{qm}{n\!+\!qm}} \ket{n_n}\!\right)\!\ket{\psi_R^1},
\end{align}
and 
\begin{equation}
P(\mU_{S,\theta}\circ\mE_{\LtoS},\mE_{\LtoS}\circ\mU_{L,\theta}) = \max_{\psi} P(\ket{\psi_\theta},\ket{\psi})
= \sqrt{1 - \left|\frac{n + qm \cos((m+n)\theta/2)}{n + qm}\right|^2},
\end{equation}
where the maximum of $P(\ket{\psi_\theta},\ket{\psi})$ is attained at $\ket{\psi} = (\ket{0_L}+\ket{1_L})/\sqrt{2}$ (here the reference system can be one-dimensional, namely $\ket{\psi_R^0} = \ket{\psi_R^1} = 1/\sqrt{2}$, because $U_{S,\theta} W U_{L,\theta}^\dagger \ket{0_L}$ does not overlap with $U_{S,\theta} W U_{L,\theta}^\dagger \ket{1_L}$ for all $\theta$). 
Therefore, the global covariance violation is given by 
\begin{align} 
\delta_\group 
=\;& \max_\theta P(\mU_{S,\theta}\circ\mE_{\LtoS},\mE_{\LtoS}\circ\mU_{L,\theta}) = \sqrt{1 - \left(\frac{n-qm}{n+qm}\right)^2} = \sqrt{\frac{4qm}{n}} + O\left(\left(\frac{m}{n}\right)^{3/2}\right). 
\label{eq:thermo-delta}
\end{align}
The code is exactly covariant when $\theta = \frac{4k\pi}{m+n}$ and $k$ is an integer, and the corresponding local covariance violation is given by
\begin{align}
\delta_\point
=\;& \sqrt{2 \partial^2_\theta P(\mU_{S,\theta}\circ\mE_{\LtoS},\mE_{\LtoS}\circ\mU_{L,\theta})^2\big|_{\theta = \frac{4k\pi}{m+n},\forall k \in \bZ}} = \sqrt{\frac{qm(m+n)^2}{n+qm}} = \sqrt{qmn} + O\left(m\sqrt{\frac{m}{n}}\right), 
\end{align}

To compute the charge conservation violation $\delta_\charge$, note that $\mE_\StoL^\dagger(H_S) = \frac{mn(1-q)}{2(n+qm)} Z_L$ and $\mE_\StoL^\dagger(H_S^2) = \frac{mn(m+qn)}{4(n+qm)}\id_L$ so we have 
\begin{align}
\delta_\charge =\;& \Delta\left({H_L - (\mE_{\LtoS})^\dagger(H_S)}\right) 
= \frac{qm(n+m)}{(n+qm)} = qm + O\left(m\cdot\frac{m}{n}\right). 
\end{align}
Also note that the parameter $\frakB$ which shows up in \thmref{thm:global-1} and \thmref{thm:local-metrology} is given by
\begin{align}
\frakB = 2\sqrt{2}\max_{\ket{\psi}}\sqrt{\braket{\psi|\mE^\dagger(H_S^2)|\psi}-|\!\braket{\psi|\mE^\dagger(H_S)|\psi}\!|^2} 
= \sqrt{\frac{2mn(m+qn)}{n+qm}}
 = \sqrt{2qnm} + O\left(m\sqrt{\frac{m}{n}}\right),
\end{align}
when $qn \gg m$ and $\frakB = O(m)$ otherwise.

\subsubsection{Trade-off between QEC and symmetry, and explicit comparisons with lower bounds}

\begin{figure}[tb]
	\center
	\includegraphics[width=0.4\textwidth]{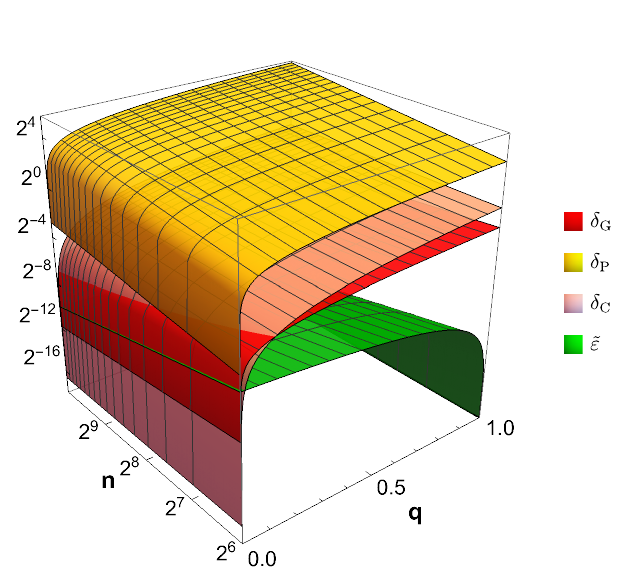}
	\caption{\label{fig:thermo}  Various types of symmetry measures and QEC accuracy of thermodynamic codes, where $m = 2$, $n \in [2^6,2^{10}]$ and $q \in [10^{-5},1-10^{-5}]$. Here we plot $\delta_\group$, $\delta_\point$, $\delta_C$ and $\tilde{\epsilon}$ (which is approximately equal to $\epsilon$ for large $n$). As $q$ increases from $0$ to $1$, $\tilde{\epsilon}$ decreases while the symmetry measures increase. We can also see that fixing $q$, the slopes of $\delta_\group$, $\delta_\point$, $\delta_C$, and $\tilde{\epsilon}$ with respect to $n$ are $-1/2$, $1/2$, $0$, and $-1$, respectively, matching our calculations. 
	}
\end{figure}

Let us first overview the behavior of modified thermodynamic codes.
Our calculations above indicate that up to the leading order, $\epsilon \simeq (1-q)m/2n$, while $\delta_\group \simeq \sqrt{4qm/n}$, $\delta_\point \simeq \sqrt{qmn}$, and $\delta_\charge \simeq qm$ (see \figref{fig:thermo}).  That is, as $q$ varies from $0$ to $1$, the symmetry violation (in terms of different measures) and the QEC inaccuracy exhibit trade-off behaviors---the former increases from $0$ while the latter decreases to $0$.

We now discuss the comparison with our lower bounds, focusing on the large $n$ asymptotics. Note that $H_L = \frac{m}{2}Z_L$ and $H_S = -\frac{1}{2}\sum_{l=1}^n Z_l$, so we have $\Delta H_L = m$ and $\Delta H_{S_l} = 1$ for each $l$. For single-erasure noise channels (as shown in \secref{sec:global-1-noise}), we have $\frakJ \leq n \max_l \Delta H_{S_l} = n$ and $\frakF \leq n\sum_{l=1}^n (\Delta H_{S_l})^2 = n^2$, and \thmref{thm:global-1} then gives: 
\begin{equation}
\label{eq:thermo-tradeoff}
\delta_\group \geq \sqrt{\frac{m-2n\epsilon}{n}} + O\left(\left(\frac{m}{n}\right)^{3/2}\right). 
\end{equation}
Plugging in the QEC inaccuracy $\epsilon \simeq \sqrt{(1-q)m/n}$, we have 
\begin{equation}
\delta_\group \geq \sqrt{\frac{qm}{n}} +  O\left(\left(\frac{m}{n}\right)^{3/2}\right). 
\end{equation}
Recall that for the modified thermodynamic code we have $\delta_\group \simeq \sqrt{4qm/n}$, which saturates this lower bound on $\delta_\group$ up to a constant factor of 2 in the leading order of $m/n$. 
Similarly, we could also plug the actual value $\delta_\group \simeq \sqrt{4qm/n}$ into \eqref{eq:thermo-tradeoff} and obtain the lower bound
\begin{equation}
\epsilon \geq \frac{(1-4q)m}{2n} + O\left(\frac{m^2}{n^2}\right),
\end{equation}
which shows that the  actual value $\epsilon \simeq (1-q)m/2n$  of the modified thermodynamic code saturates this lower bound up to a constant factor in the leading order of $m/n$ for $q < 1/4$.

For the local symmetry measures, we first note that for  the modified thermodynamic code with $q>0$ we have $\delta_\point/m \simeq \sqrt{qn/m}$ which becomes larger than $1$ as $m/n\rightarrow 0$, thus \thmref{thm:local-point} is not saturated. We provide one possible explanation of this gap between $\delta_\point$ and its lower bound in \appref{app:refine}, where we show a refinement of \thmref{thm:local-point} by replacing $\delta_\point$ in \thmref{thm:local-point} with $\delta_\point^\star$ ($\leq \delta_\point$) which is defined using the QFI of the error-corrected channel $\mR^\optL_{\StoL}\circ\mN_{S,\theta}\circ\mE_{\LtoS}\circ\mU_{L,\theta}^\dagger$ instead of the QFI of $\mU_{S,\theta}\circ\mE_{\LtoS}\circ\mU_{L,\theta}^\dagger$ at $\theta = 0$. We show that the gap between $\delta_\point$ and its lower bound could be explained by its gap with $\delta_\point^*$, explaining the looseness of \thmref{thm:local-point}.

Recall that for the modified thermodynamic code the charge conservation violation is $\delta_\charge \simeq qm$.  From \thmref{thm:local-KL}, we have 
\begin{equation}
\delta_\charge \geq \Delta H_L - 2\epsilon\frakJ = qm +  O\left(\frac{m^2}{n}\right).
\end{equation}
Namely, $\delta_\charge$ exactly saturates this lower bound in the leading order of $m/n$.

Note that $\frakB = O(\sqrt{n})$ and $\frakF = O(n^2)$ satisfies $\frakB \ll \sqrt{\frakF}$ in \thmref{thm:global-1} and \thmref{thm:local-metrology} so that $\frakB$ is negligible in the trade-off relations from \thmref{thm:global-1} and \thmref{thm:local-metrology} for modified thermodynamic codes in the large $n$ asymptotics.

Finally, note that the trade-off relation given by the diamond distance, 
\begin{equation} 
\delta_{\group,\diamond}^2 + 2\sqrt{\epsilon_\diamond} \geq \delta_\group^2 + 2\epsilon \geq \frac{m}{n}, 
\end{equation}
which follows from \thmref{thm:global-1} and the discussion in \secref{sec:def}, is also saturated up to a constant factor because $\delta_{\group,\diamond} = \delta_\group$ and $\epsilon_\diamond = \epsilon^2 + O(m^3/n^3)$ (see \appref{app:thermo} for details).

\subsection{Quantum Reed--Muller codes}
\label{sec:RM}

Reed--Muller codes constitute a family of error-correcting codes of great theoretical and technological interest.
The classical Reed--Muller code $R(s,t)$~\cite{macwilliams1977theory} is a $[2^t,\sum_{i=0}^s\binom{t}{i},2^{t-s}]$ code whose codewords correspond to Boolean functions of $t$ variables of degree at most $s$. Then the shortened Reed--Muller codes $\overline{R(s,t)} = [2^t-1,\sum_{i=1}^s\binom{t}{i},2^{t-s}]$ are obtained by selecting the codewords of $R(s,t)$ whose first digits are $0$ and deleting their first digits.  

The generalization to the quantum regime based on the stabilizer formalism and CSS construction, which leads to the quantum Reed--Muller codes, are also an important type of QEC codes~\cite{steane1999quantum}. Given the nice structures and features of quantum Reed--Muller codes, they provide a natural platform for understanding code properties. {For example, quantum Reed--Muller codes were widely applied in magic state distillation and implementing transversal non-Clifford operations~\cite{bravyi2012magic,anderson2014fault,haah2018codes,hastings2018distillation}. Quantum Reed--Muller codes were also known to reach the highest level of the Clifford hierarchy possible under the disjointness restriction~\cite{jochym2018disjointness}.}
Here we consider the $[[n = 2^t - 1,1,3]]$ quantum Reed--Muller code, which is a  CSS code~\cite{nielsen2002quantum} whose $X$ stabilizers correspond to $\overline{R(1,t)}$ and $Z$ stabilizers correspond to $\overline{R(t-2,t)}$. It is exactly error-correcting under single-erasure noise and admits a transversal implementation $\bigotimes_l \big(e^{i\pi Z_l/2^{t-1}}\big)$ of the logical operator $e^{-i\pi Z_L/2^{t-1}}$. 
 We now compute its symmetry violation measures. 
This code has the following form in the computation basis: 
\begin{align}
\ket{\frakc_0} & 
= \frac{1}{\sqrt{2^t}} \bigg(\ket{\v{0}} + \sum_{\v{x} \in \overline{R(1,t)}\backslash\{\v{0}\}} \ket{\v{x}}\bigg), \\
\ket{\frakc_1} &
= \frac{1}{\sqrt{2^t}} \bigg(\ket{\v{1}} + \sum_{\v{x} \in \overline{R(1,t)}\backslash\{\v{0}\}} \ket{\v{1} + \v{x}}\bigg),
\end{align}
where we use $\v{x}$ to denote $n$-bit strings {(${\v{0}}$ and ${\v{1}}$ are all-zero and all-one strings, respectively)}. All strings in $\overline{R(1,t)}\backslash\{\v{0}\}$ have weight $2^{t-1}$. 
Let $W$ be the encoding isometry $W = \ket{\frakc_0}\bra{0_L} + \ket{\frakc_1}\bra{1_L}$. {Consider $H_L = \frac{1}{2}Z_L$ and $H_S = -\frac{1}{2}\sum_{l=1}^n Z_l$, which again guarantees that the code tends to be covariant as $n\rightarrow\infty$.}
Let $\ket{\psi} = \ket{0_L}\ket{\psi_R^0} + \ket{1_L}\ket{\psi_R^1}$ be an arbitrary pure state on $L \otimes R$. Then 
\begin{align}
\ket{\psi_\theta}\! &= U_{S,\theta} W U_{L,\theta}^\dagger \ket{\psi} \nonumber\\
&= \bigg(\!\frac{1}{\sqrt{2^t}} \bigg(e^{i2^{t-1}{\theta}} \ket{\v{0}} + \sum_{\v{x} \in \overline{R(1,t)}\backslash\{\v{0}\}} \ket{\v{x}}\bigg)\!\bigg)\!\ket{\psi_R^0}  + \bigg(\!\frac{1}{\sqrt{2^t}} \bigg(e^{-i2^{t-1}{\theta}}\ket{\v{1}} + \sum_{\v{x} \in \overline{R(1,t)}\backslash\{\v{0}\}} \ket{\v{1} + \v{x}}\bigg)\!\bigg)\!\ket{\psi_R^1}, 
\end{align}
and thus
\begin{align}\label{eq:rm_p}
P(\mU_{S,\theta}\circ\mE_{\LtoS},\mE_{\LtoS}\circ\mU_{L,\theta}) = \max_{\psi} P(\ket{\psi_\theta},\ket{\psi})
= \sqrt{1 - \left|\frac{n + \cos((n+1)\theta/2)}{n + 1}\right|^2}. 
\end{align}
Therefore, we have
\begin{align}
\delta_\group  = \max_\theta P(\mU_{S,\theta}\circ\mE_{\LtoS}\circ \mU_{L,\theta}^\dagger, \mE_{\LtoS})  = \sqrt{1 - \left(\frac{n - 1}{n + 1}\right)^2} = \sqrt{\frac{4}{n}} + O\bigg(\bigg(\frac{1}{n}\bigg)^{3/2}\bigg). 
\end{align}
The lower bound from \thmref{thm:global-1} gives 
\begin{align}
\delta_\group \geq \sqrt{\frac{\Delta H_L(\Delta H_S - \frac{1}{2}\Delta H_L)}{\Delta H_S}} = \sqrt{\frac{n - \frac{1}{2}}{n^2}} = \sqrt{\frac{1}{n}} + O\bigg(\bigg(\frac{1}{n}\bigg)^{3/2}\bigg). 
\end{align}
Similar to the modified thermodynamic code, $\delta_\group$ saturates its lower bound up to a constant factor of 2 in the leading order of $1/n$. Also note that $\delta_{\group,\diamond} = \delta_\group$  in this case according to the discussion in \secref{sec:def}, indicating the saturation of the lower bound when we consider the diamond distance. The code is exactly covariant when $\theta = \frac{4k\pi}{n+1}$ and $k$ is an integer, and the corresponding local covariance violation can also be easily computed from \eqref{eq:rm_p}:
\begin{equation} 
\delta_\point = \sqrt{2 \partial^2_\theta P(\mU_{S,\theta}\circ\mE_{\LtoS},\mE_{\LtoS}\circ\mU_{L,\theta})^2\big|_{\theta = \frac{4k\pi}{n+1},\forall k \in \bZ}} = \sqrt{n+1}, 
\end{equation}
which has a quadratic gap with its lower bound $\Delta H_L = 1$. 
To compute the charge conservation violation $\delta_\charge$, we note that $(\mE_{\LtoS})^\dagger(H_S) = 0$ and  $(\mE_{\LtoS})^\dagger(H_S^2) = n/4$, thus 
\begin{align}
\delta_\charge = \Delta\left( H_L - (\mE_{\LtoS})^\dagger(H_S)\right) = 1,
\end{align}
matching our lower bound $\Delta H_L$. 
Also note that $\frakB = \sqrt{2n}$, so it is negligible in the trade-off relations from \thmref{thm:global-1} and \thmref{thm:local-metrology}. 

\section{Discussion}
\label{sec:discussion}

In this work, we devised and explored various approaches that enable us to quantitatively understand the fundamental trade-off between the QEC capability and several different characterizations of the degree of continuous symmetries associated with general QEC codes, including the violation of covariance conditions in both global and local senses as well as the violation of charge conservation (see \appref{app:comparison} for a summary). In particular, we introduced two intuitive and powerful frameworks based on the notions of charge fluctuation and gate implementation error respectively, and
employed several different methods from approximate QEC, quantum metrology, and quantum resource theory, to derive various forms of the trade-off relations in terms of distance metrics that address both worst-case and average-case inputs. 
Our results and techniques are expected to have numerous interesting applications to quantum computation as well as physics (see the main text).  We specifically discussed  the consequent restrictions on the transversal logical gates for general QEC codes, which could be of interest for fault tolerance.  We also provided detailed analysis of two interesting examples of approximately covariant codes---a parametrized extension of the (covariant) thermodynamic code, which gives a code family that continuously interpolates between exactly covariant and error-correcting, and the quantum Reed--Muller codes.  We showed that both codes can saturate the lower bounds asymptotically up to constant factors, indicating that the bounds are quite tight.

We would like to point out a few issues arising from our technical analysis that are not yet satisfactorily understood and could be worth further investigation: 
\begin{itemize}
    \item For both of the code examples we studied, the global covariance violation and charge conservation violation attain the optimal asymptotic scaling as indicated by the bounds based on the charge fluctuation approach, but the local covariance violation does not (there is a $\Theta(\sqrt{n})$ vs.\ $O(1)$ gap).   Note the observation (discussed above and in \appref{app:refine}) that if we additionally consider a recovery step in the definition of local covariance violation then it exhibits a tight scaling. We would hope to close this gap by further understanding both sides of it.  This is potentially key to a complete understanding of the behavior and practical meanings of the local symmetry measures. 
    \item The gate implementation error approach provides bounds that behave worse than the corresponding bounds from the charge fluctuation approach under uniformly random local noise.  It would be interesting to further understand whether this gap stems from the looseness of \propref{prop:gate}. On the other hand, the discussion in \secref{sec:comparison} also indicates that for extremely strong noise (so that the regularized QFI is sublinear), the gate implementation error approach outperforms the charge fluctuation approach.
    The question remains whether there is a universal bound which exhibits optimal scalings under any noise models.      
\end{itemize}

There are also several important directions for future study: 
\begin{itemize}
    \item Gate implementation error. We introduced the gate implementation error as a notion that nicely unifies QEC inaccuracy and global covariance violation, and in turn serves as a tool for deriving the trade-off between them.  We believe that this quantity is interesting in its own right and expect it to find broader applications in the analysis of QEC, distillation etc.
    \item General continuous symmetry groups. Here we mainly carried out the discussion in terms of $U(1)$ which corresponds to a single conserved quantity, but obviously the symmetry groups are often more complicated in quantum computation and physical scenarios.  It would be useful to extend our study to other important continuous symmetry groups such as $SU(d)$, for which we expect that our analysis for $U(1)$ provides a basis and serves as a sub-theory but it is useful to invoke corresponding representation theory machinery  (like in Refs.~\cite{faist2019continuous,KongLiu21:random}). 
    \item Discrete symmetries. Given the incompatibility results for continuous symmetries, it is natural to ask whether discrete symmetries, which are also broadly important, place restrictions on QEC.  It is known that for discrete symmetry groups one can in principle construct a covariant code which also achieves exact QEC  \cite{hayden2017error}, indicating that the incompatibility is not as fundamental as continuous symmetries.  However, we do know interesting cases  where exact QEC is forbidden even in the presence of discrete symmetries under simple additional constraints  (e.g.,~AdS/CFT codes---see Refs.~\cite{harlow2018constraints,harlow2018symmetries,faist2019continuous}). It would be interesting to further explore both the possible limitations as well as good code constructions for QEC with discrete symmetries in more general terms. 
\end{itemize}

Furthermore, we expect the study of how QEC interacts symmetries to be relevant in wide-ranging physical scenarios.  In the main text, we discussed potential applications of our theory and techniques to several topics of great interest in physics, including AdS/CFT, black hole radiation, and many-body physics.  It would be interesting to further consolidate these ideas.  To this end, an important task is to bridge the language of quantum information used here and those commonly used in high energy and condensed matter physics.  To conclude, our study enriches the ``physical'' understanding of QEC using a wide variety of approaches in quantum information. We hope it will stimulate further interest into exploring the interaction between QEC, quantum information, and physics.

\appendix

\newpage 

\setcounter{theorem}{0}
\setcounter{lemma}{0}
\setcounter{figure}{0}
\renewcommand{\thefigure}{S\arabic{figure}}
\renewcommand{\thelemma}{S\arabic{lemma}}
\renewcommand{\thetheorem}{S\arabic{theorem}}
\renewcommand{\theproposition}{S\arabic{proposition}}
\renewcommand{\thecorollary}{S\arabic{corollary}}
\renewcommand{\theHfigure}{Supplement.\arabic{figure}}
\renewcommand{\theHlemma}{Supplement.\arabic{lemma}}
\renewcommand{\theHtheorem}{Supplement.\arabic{theorem}}
\renewcommand{\theHcorollary}{Supplement.\arabic{corollary}}


\section{Summary and comparison of the different trade-off results}
\label{app:comparison}

\begin{table}[h!]
\begin{center}
 \begin{tabular}{|| c | c ||}  
 \hline
  Trade-off Relations  &  Summary of Strengths and Weaknesses \\
 \hline\hline
  \makecell{\thmref{thm:global-1} \\($\delta_\group$ and $\epsilon$)} & 
  \makecell{For single-erasure errors, the bound is near-optimal (as shown in \secref{sec:case-study}). \\The bound only applies to isometric encodings. } \\
 \hline
  \makecell{\thmref{thm:global-2-metrology} \\($\delta_\group$ and $\epsilon$)} & \makecell{The bound is suboptimal for single-erasure errors,\\ but it outperforms \thmref{thm:global-1} for strong noise when $\frakF = o(n)$. \\The bound applies to general encoding channels.} \\ 
 \hline 
  \makecell{\thmref{thm:global-2-res}\\ ($\delta_\group$ and $\epsilon$; $\overline{\delta}_\group$ and $\overline{\epsilon}$)} & \makecell{The bounds apply to general encodings channels, \\ and can characterize the average behavior based on the Choi purified distance.\\
  The worst-case bound is weaker than \thmref{thm:global-2-metrology}.} \\
 \hline
  \makecell{\thmref{thm:local-KL} \\($\delta_\charge$  and $\epsilon$; $\delta_\point$  and $\epsilon$)} & 
  \makecell{The trade-off between $\delta_\charge$  and $\epsilon$ is near-optimal for single-erasure errors (as shown in \secref{sec:case-study}). \\ The bounds only apply to isometric encodings. } \\
 \hline
  \makecell{\thmref{thm:local-metrology}\\ ($\delta_\charge$ and $\epsilon$; $\delta_\point$  and $\epsilon$)} & \makecell{The trade-off between $\delta_\charge$  and $\epsilon$ is near-optimal for single-erasure errors (as shown in \secref{sec:case-study}). \\ It contains the parameter $\frakB$, the subtlety related to which is discussed in \secref{sec:global-1-metrology}.\\The bound on $\delta_\charge$  and $\epsilon$ applies to general encoding channels.\\The bound on $\delta_\point$  and $\epsilon$ only applies to isometric encodings. }  \\
 \hline
  \makecell{\thmref{thm:local-point}\\ ($\delta_\point$  and $\epsilon$)} & \makecell{The bound does not contain parameter $\frakB$ and also applies to general encoding channels. \\
  The bound is suboptimal in some cases, see further discussions in \appref{app:refine}.} \\
 \hline 
\end{tabular}
\end{center}
\caption{Summary of the strengths and weaknesses of the various trade-off relations we derived using different approaches.}
\label{table:compare}
\end{table}

\section{Purified distance and diamond distance between rotated dephasing channels and the identity}
\label{app:dephasing}

Here, we calculate the purified distance between rotated dephasing channels and the identity for completeness. The same calculations could be found, e.g., in Ref.~\cite{zhou2020new}. 
\begin{lemma}
\label{lemma:dephasing}
Consider rotated dephasing channels $\mD(\cdot) = (1-p) e^{-i\frac{\phi}{2}Z} \rho e^{i\frac{\phi}{2}Z} + p e^{-i\frac{\phi}{2}Z} Z \rho Z e^{i\frac{\phi}{2}Z}$, where $Z$ is the Pauli-Z operator. It holds that $P(\mD,\id) = \sqrt{\frac{1}{2}(1 - (1-2p)\cos\phi)}$ and $D_\diamond(\mD,\id) = \frac{1}{2}\sqrt{1-2(1-2p)\cos\phi + (1-2p)^2}$. 
\end{lemma}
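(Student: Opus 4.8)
The plan is to compute both distances by directly evaluating the action of $\mD$ on a suitable input state and tracking the resulting density matrix. For the purified distance, I would use the fact that $P(\mD,\id)$ can be computed via the Choi-state fidelity or, more concretely, by optimizing over pure input states of the qubit together with a reference qubit. Write the input as $\ket{\Psi}=\alpha\ket{00}+\beta\ket{11}$ (with a possible relative phase absorbed into the basis); since $\mD$ is a mixture of a unitary $e^{-i\phi Z/2}$ and that unitary composed with $Z$, the output is $(\mD\otimes\id)(\ket{\Psi}\bra{\Psi}) = (1-p)\,U\ket{\Psi}\bra{\Psi}U^\dagger + p\,UZ\ket{\Psi}\bra{\Psi}ZU^\dagger$ where $U = e^{-i\phi Z/2}\otimes\id$. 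The fidelity with $\ket{\Psi}$ is then $f^2 = (1-p)\abs{\braket{\Psi|U|\Psi}}^2 + p\abs{\braket{\Psi|UZ|\Psi}}^2$, which I can evaluate in terms of $\abs{\alpha}^2,\abs{\beta}^2$ and $\phi$; maximizing the fidelity (equivalently minimizing the purified distance is built into its definition as a min over purifications, but here we want the channel purified distance which is a min of state purified distances, i.e.\ a $\min$ over inputs for $f$ — I should be careful: $f(\Phi_1,\Phi_2)=\min_\rho f(\cdot)$) gives the worst-case input. A short computation shows the extremal input is the maximally entangled one $\abs{\alpha}^2=\abs{\beta}^2=1/2$, at which $f^2 = (1-p)\cos^2(\phi/2) + p\sin^2(\phi/2) = \tfrac12(1+(1-2p)\cos\phi)$, whence $P(\mD,\id)=\sqrt{1-f^2}=\sqrt{\tfrac12(1-(1-2p)\cos\phi)}$.

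For the diamond distance, I would similarly reduce to a single-qubit-plus-reference optimization: $D_\diamond(\mD,\id) = \max_{\ket{\psi}}\tfrac12\norm{(\mD\otimes\id)(\ket{\psi}\bra{\psi}) - \ket{\psi}\bra{\psi}}_1$. Using the same parametrization, the difference operator $\Delta = (1-p)U\ket{\psi}\bra{\psi}U^\dagger + pUZ\ket{\psi}\bra{\psi}ZU^\dagger - \ket{\psi}\bra{\psi}$ is supported (for the worst-case input) on the two-dimensional span of $\{U\ket{\psi},UZ\ket{\psi}\}$ together with $\ket{\psi}$ — effectively a rank-$\leq 2$ traceless Hermitian operator on a qubit-like subspace once one notes $\ket{\psi}$, $U\ket{\psi}$, $UZ\ket{\psi}$ all lie in a common two-dimensional subspace when the input is maximally entangled. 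I would compute the two eigenvalues $\pm\mu$ of $\Delta$ restricted there; then $\tfrac12\norm{\Delta}_1 = \mu$. A direct evaluation of the Gram matrix of these vectors and the resulting $2\times2$ secular equation should yield $\mu = \tfrac12\sqrt{1-2(1-2p)\cos\phi+(1-2p)^2}$. One should also check that the maximally entangled input is indeed optimal for the diamond norm (it is the standard fact that for these ``classical-quantum-like'' channels the maximally entangled state is extremal, or one can just verify that the trace norm is monotone in the entanglement of the input here).

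The main obstacle is the diamond-distance calculation: unlike the purified distance, which reduces to a clean scalar optimization of a fidelity, the diamond norm requires diagonalizing the difference of two rank-one-ish operators, and one must argue carefully that the optimization over all joint input states (including general mixed or higher-dimensional references) is attained at the maximally entangled pure state. I would handle this either by invoking the known structure theory of diamond norms for qubit channels, or by an explicit argument: parametrize the input by its Schmidt coefficients, show the trace-norm objective is monotone nondecreasing as the coefficients move toward $1/2$, and then do the $2\times2$ eigenvalue computation. Everything else — the $U(1)$-rotation bookkeeping, the trig identities collapsing to the stated closed forms — is routine. Since the excerpt says ``The same calculations could be found, e.g., in Ref.~\cite{zhou2020new}'', I would keep the exposition brief and reference that work for the extremality lemma.
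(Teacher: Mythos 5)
Your approach is essentially the one the paper takes: act with $\mD\otimes\id$ on a pure two-qubit input, read off the resulting $4\times4$ density matrix, and optimize. A few remarks on the differences and on the concerns you flag.

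For the purified distance, you restrict immediately to Schmidt-form inputs $\alpha\ket{00}+\beta\ket{11}$ and assert this is without loss of generality. That turns out to be right, but you never justify it; the paper instead keeps a general input $\psi_{00}\ket{00}+\psi_{01}\ket{01}+\psi_{10}\ket{10}+\psi_{11}\ket{11}$ and the justification becomes automatic: writing $A=|\psi_{00}|^2+|\psi_{01}|^2$ and $B=1-A$, the fidelity is $f^2 = A^2+B^2+2AB(1-2p)\cos\phi$, which depends on the input only through $A$, so the Schmidt-form restriction loses nothing and the optimum is trivially at $A=B=1/2$. Your plug-in $f^2 = (1-p)\cos^2(\phi/2)+p\sin^2(\phi/2)$ is correct and gives the stated answer. (You also caught the $\min$-vs-$\max$ issue in the channel-fidelity definition correctly.)

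For the diamond distance, the step you single out as the ``main obstacle'' — arguing that the optimum is attained at a maximally entangled pure input, and that the difference operator can be diagonalized — is actually simpler than your Gram-matrix plan suggests, and the paper's general parametrization again pays off. In the general input basis, the difference $(\mD\otimes\id)(\ket{\psi}\bra{\psi})-\ket{\psi}\bra{\psi}$ is manifestly a block anti-diagonal Hermitian matrix
\begin{equation}
\Delta = \begin{pmatrix} 0 & cM \\ c^*M^\dagger & 0 \end{pmatrix},\qquad c=(1-2p)e^{-i\phi}-1,\quad M=\begin{pmatrix}\psi_{00}\\\psi_{01}\end{pmatrix}\begin{pmatrix}\psi_{10}^* & \psi_{11}^*\end{pmatrix},
\end{equation}
so $\tfrac12\|\Delta\|_1 = |c|\,\|M\|_1 = |c|\sqrt{AB}$, maximized at $A=B=1/2$ by AM--GM; no eigenvalue computation or monotonicity-in-Schmidt-coefficients argument is needed, and the standard fact that the diamond norm of a qubit channel is attained at a pure input on a qubit reference covers the rest. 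Your rank-$\leq 2$ traceless observation is correct but the rank-one structure of $M$ (hence of the off-diagonal blocks) is what makes this computation one line rather than a secular equation. So: correct proposal, same basic route as the paper, with your diamond-distance worry resolved more directly by keeping the general parametrization rather than committing to Schmidt form early.
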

\begin{proof}
Let the input state be $\ket{\psi} = \psi_{00}\ket{00} + \psi_{01}\ket{01} + \psi_{10}\ket{10} + \psi_{11}\ket{11}$, then 
\begin{equation}
(\mD\otimes \id)(\ket{\psi}\bra{\psi})
= 
\begin{pmatrix}
\psi_{00}\psi_{00}^* & \psi_{00}\psi_{01}^* & (1-2p)e^{-i\phi}\psi_{00}\psi_{10}^* & (1-2p)e^{-i\phi} \psi_{00}\psi_{11}^* \\
\psi_{00}\psi_{01}^* & \psi_{01}\psi_{01}^* & (1-2p)e^{-i\phi}\psi_{01}\psi_{10}^* & (1-2p)e^{-i\phi} \psi_{01}\psi_{11}^* \\
(1-2p)e^{i\phi}\psi_{10}\psi_{00}^* & (1-2p)e^{i\phi}\psi_{10}\psi_{01}^* & \psi_{10}\psi_{10}^* & \psi_{10}\psi_{11}^* \\
(1-2p)e^{i\phi}\psi_{11}\psi_{00}^* & (1-2p)e^{i\phi}\psi_{11}\psi_{01}^* & \psi_{11}\psi_{10}^* & \psi_{11}\psi_{11}^* \\
\end{pmatrix}. 
\end{equation}
Then 
\begin{align}
P(\mD,\id) &= \sqrt{1 - f^2(\mD,\id)} \\
&= \max_{\psi_{00,01,10,11}} \left( 2 \Re[(1-2p)e^{-i\phi}](|\psi_{00}|^2+|\psi_{01}|^2)(|\psi_{10}|^2+|\psi_{11}|^2) \right)^{1/2} = \sqrt{\frac{1}{2}\left(1 - (1-2p)\cos\phi\right)}, 
\end{align}
and 
\begin{align}
D_\diamond(\mD,\id) &= \frac{1}{2}\norm{\mD-\id}_\diamond \\
&= 2 \abs{1 - (1-2p)e^{-i\phi}} \max_{\psi_{00,01,10,11}} \norm{\begin{pmatrix}\psi_{00}\\ \psi_{01}\end{pmatrix} \begin{pmatrix}\psi_{10}^* & \psi_{11}^* \end{pmatrix} }_1
= \frac{1}{2}\sqrt{1-2(1-2p)\cos\phi + (1-2p)^2}. 
\end{align}
\end{proof}

\section{Two lower bounds on \texorpdfstring{$\abs{\partial_\theta\xi_\theta}^2\big|_{\theta = 0}$}{the absolute value of the derivative of xi at theta = 0}. }
\label{app:local-proof}

Here, we present the proofs of the two lower bounds on $\abs{\partial_\theta\xi_\theta}^2\big|_{\theta = 0}$ used in \propref{prop:charge-metrology} and \thmref{thm:local-point}.

\begin{lemma}
\label{lemma:local}
\nonisometric 
Let $\mR^{\optL}_{\StoL}$ be the recovery channel such that 
$\epsilon =  P(\mR^{\optL}_{\StoL}\circ \mN_{S}\circ\mE_{\LtoS}, \id_{L})$ and $\mN_{C,\theta} = \mR^{\rep}_{\SAtoC} \circ ( \mR_{\StoL}^{\optL} \circ\mN_{S} \otimes \id_A) \circ  (\mU_{S,\theta}\circ \mE_{\LtoS} \otimes \id_A) \circ \mE_{\CtoLA}^{\rep}$. Then 
$\xi_\theta = \bra{0_C}\mN_{C,\theta}(\ket{0_C}\bra{1_C})\ket{1_C}$ satisfies 
\begin{equation}
\label{eq:lower-QFI-1}
\abs{\partial_\theta\xi_\theta}^2\big|_{\theta = 0}  \geq  ((1-2\epsilon^2)\Delta H_L - \delta_\point)^2,
\end{equation}
when $(1-2\epsilon^2)\Delta H_L \geq \delta_\point$. 
\begin{equation}
\label{eq:lower-QFI-3}
\abs{\partial_\theta\xi_\theta}^2\big|_{\theta = 0}  \geq (\abs{\chi}-2\epsilon\frakB)^2,
\end{equation}
when $\abs{\chi} \geq 2\epsilon\frakB$. Here $\frakB = \max_{\ket{\psi}}\sqrt{8 \bV_{H_S}\left(\mE_{\LtoS}(\ket{\psi})\right)}$. 
\end{lemma}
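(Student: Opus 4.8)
\textbf{Proof proposal for Lemma~\ref{lemma:local}.}
The plan is to compute $\partial_\theta \xi_\theta$ at $\theta=0$ directly from the definition of $\mN_{C,\theta}$ and then lower-bound its modulus in two different ways. First I would unwind the chain of channels: acting on $\ket{0_C}\bra{1_C}$, the repetition encoding $\mE^{\rep}_{\CtoLA}$ produces $\ket{0_L 0_A}\bra{1_L 1_A}$, after which $\mU_{S,\theta}\circ\mE_{\LtoS}\otimes\id_A$ gives $U_{S,\theta}W\ket{0_L}\bra{1_L}W^\dagger U_{S,\theta}^\dagger \otimes \ket{0_A}\bra{1_A}$ (here I write $\mE_{\LtoS}(\cdot)=W(\cdot)W^\dagger$ in the isometric case, and handle the general CPTP case with a Stinespring dilation). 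Then $\mR^{\optL}_{\StoL}\circ\mN_S$ followed by $\mR^{\rep}_{\SAtoC}$ collapses this back to the $C$ system. The key point is that $\xi_\theta = \bra{0_C}\mN_{C,\theta}(\ket{0_C}\bra{1_C})\ket{1_C}$ is, up to the fixed recovery/noise maps, essentially $\bra{0_C}$ applied to an expression of the form $\mR(U_{S,\theta}\, M_0\, U_{S,\theta}^\dagger)\ket{1_C}$ where $M_0$ is the $\theta=0$ encoded operator. Differentiating pulls down a commutator with $-iH_S$, so $\partial_\theta\xi_\theta|_{\theta=0}$ is, schematically, $-i$ times a matrix element of $[\,\cdot\, ,H_S]$ through the recovery maps.

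For the first bound \eqref{eq:lower-QFI-1}, I would compare $\xi_\theta$ against the ``ideal'' quantity $\tilde\xi_\theta := e^{i\Delta H_L\theta}\xi_0$ that would arise if the encoded logical rotation acted perfectly: its derivative has modulus $|\Delta H_L|\,|\xi_0| \ge (1-2\epsilon^2)\Delta H_L$ using \eqref{eq:xi-epsilon}. The discrepancy $|\partial_\theta(\xi_\theta-\tilde\xi_\theta)|_{\theta=0}|$ should be controlled by the local covariance violation $\delta_\point$: this is where I invoke that $\delta_\point^2 = F(\mU_{S,\theta}\circ\mE_{\LtoS}\circ\mU_{L,\theta}^\dagger)|_{\theta=0}$, which by the pure-state QFI formula bounds the rate of change of the encoded state away from the covariant prediction. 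Concretely, inserting $\mU_{L,\theta}\circ\mU_{L,\theta}^\dagger=\id$ and using monotonicity of the purified distance / QFI under the recovery channel $\mR^{\rep}_{\SAtoC}\circ(\mR^{\optL}_{\StoL}\circ\mN_S\otimes\id_A)$, the ``non-covariant part'' of $\partial_\theta\xi_\theta$ has modulus at most $\delta_\point$ (possibly times a constant that I would track carefully to get the clean form stated). The reverse triangle inequality $|\partial_\theta\xi_\theta| \ge |\partial_\theta\tilde\xi_\theta| - |\partial_\theta(\xi_\theta-\tilde\xi_\theta)| \ge (1-2\epsilon^2)\Delta H_L - \delta_\point$ then gives \eqref{eq:lower-QFI-1} under the stated hypothesis.

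For the second bound \eqref{eq:lower-QFI-3}, I would instead split $H_S$ into the part that produces the ``signal'' $\chi$ and a residual. The natural decomposition: $\partial_\theta \xi_\theta|_{\theta=0}$ picks up $-i\bra{\frakc_0}H_S\ket{\frakc_0} + i\bra{\frakc_1}H_S\ket{\frakc_1}$ type terms (with the repetition structure routing $\ket{0_C}\leftrightarrow\ket{\frakc_0}$ and $\ket{1_C}\leftrightarrow\ket{\frakc_1}$ through the recovery), whose combination is exactly $i\chi$ times $\xi_0$ up to corrections of order $\epsilon$. Those corrections come from (i) the recovery channel not being perfectly inverting ($\epsilon$-close) and (ii) the ``off-diagonal'' fluctuation of $H_S$ on the code, which is what $\frakB = \max_{\psi}\sqrt{8\bV_{H_S}(\mE_{\LtoS}(\psi))}$ captures. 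I would bound the error term by $2\epsilon\frakB$ using Cauchy--Schwarz (the variance term) together with the fact that the discrepancy between the recovered state and the ideal is at most $\epsilon$ in purified distance, then apply the reverse triangle inequality again to get $|\partial_\theta\xi_\theta|_{\theta=0} \ge |\chi| - 2\epsilon\frakB$ when this is nonnegative.

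\textbf{Main obstacle.} The routine part is the commutator bookkeeping; the delicate part is bounding the two error terms ($\delta_\point$-error and $\epsilon\frakB$-error) with exactly the clean constants stated, because the recovery maps $\mR^{\optL}_{\StoL}$, $\mR^{\rep}$ are only near-inverses and one must repeatedly use monotonicity of purified distance/QFI without losing factors. I expect the tightest point to be showing the ``non-covariant'' contribution to $\partial_\theta\xi_\theta$ is bounded by $\delta_\point$ (not, say, $2\delta_\point$ or $\sqrt{2}\delta_\point$); this likely requires choosing the optimal input state in the QFI definition of $\delta_\point$ to be the maximally entangled logical state $(\ket{0_L}+\ket{1_L})/\sqrt2$ entangled with the reference, matching the probe state implicit in the repetition-code construction, so that the two QFIs line up exactly rather than merely up to a constant.
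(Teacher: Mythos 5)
Your proposal captures the right structure and is essentially the paper's approach: decompose $\mN_{C,\theta}$ into a $\theta$-dependent rotated dephasing channel $\mD_{C,\theta}$ composed with $\mU_{C,\theta}$, bound the ``noise rate'' of $\mD_{C,\theta}$ at $\theta=0$ by $\epsilon$, and control the spurious $\theta$-dependence of $\mD_{C,\theta}$ by monotonicity. For \eqref{eq:lower-QFI-1}, your reverse-triangle-inequality reorganization (comparing $\xi_\theta$ against $\tilde\xi_\theta = e^{-i\Delta H_L\theta}\xi_0$) does work, but note why the constant comes out clean: writing $\xi_\theta = (1-2p_\theta)e^{-i(\phi_\theta+\Delta H_L\theta)}$, one finds $|\partial_\theta(\xi_\theta-\tilde\xi_\theta)|^2_{\theta=0} = 4(\partial_\theta p_0)^2 + (1-2p_0)^2(\partial_\theta\phi_0)^2$, which is dominated by the channel QFI $F(\mD_{C,\theta})|_{\theta=0} = (1-2p_0)^2(\partial_\theta\phi_0)^2 + \frac{(\partial_\theta p_0)^2}{p_0(1-p_0)}$ precisely because $p_0(1-p_0)\le 1/4$; and $F(\mD_{C,\theta})|_{\theta=0} \le \delta_\point^2$ follows directly from monotonicity of the channel QFI under the fixed pre/post-composition by $\mE^{\rep}_{\CtoLA}$ and $\mR^{\rep}\circ(\mR^{\optL}\circ\mN_S\otimes\id)$. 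You should drop the worry about ``choosing the optimal input state to match the probe'' --- that is not how the bound is obtained; the channel QFI of $\mD_{C,\theta}$ is automatically $\le$ that of $\mU_{S,\theta}\circ\mE_\LtoS\circ\mU_{L,\theta}^\dagger$ by data processing, full stop. (The paper actually drops the $4(\partial_\theta p_0)^2$ piece outright, which is a hair looser but lands on the same final expression.)

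For \eqref{eq:lower-QFI-3}, your sketch is directionally right but the heavy lifting is absent, and there is a normalization slip worth flagging. The paper's comparison target is $\bra{0_L}\partial_\theta\tilde\mU_{L,\theta}(\ket{0_L}\bra{1_L})\ket{1_L}|_{\theta=0} = -i\chi$ (no factor of $\xi_0$), where $\tilde\mU_{L,\theta}$ is generated by $\mE^\dagger_{\StoL}(H_S)$. You wrote ``exactly $i\chi$ times $\xi_0$,'' and if you carry the $\xi_0$ you would only get $|\chi|(1-2\epsilon^2) - [\mathrm{error}]$, requiring a tighter error estimate than $2\epsilon\frakB$ to recover the stated bound; sticking to the $-i\chi$ target is cleaner. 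The actual bound $|\partial_\theta\xi_\theta|_{\theta=0} - (-i\chi)|\le 2\epsilon\frakB$ is proved by writing $\mR_{\SAtoC}\circ(\mN_S\otimes\id_A)$ and $\mE_{\CtoSA}$ in Kraus form, setting $A_{ij}=R_iE_j$, using the Choi-fidelity estimate $\sum_{ij}\|A_{ij}-\tfrac{\trace(A_{ij})}{2}\id\|_{HS}^2 \le 2\epsilon^2$, and then a chain of triangle, submultiplicativity, and Cauchy--Schwarz inequalities in the Hilbert--Schmidt norm, finally minimizing the shift $H_S\mapsto H_S-\nu\id$ via Sion's minimax theorem to land exactly on $\frakB=\max_\psi\sqrt{8\bV_{H_S}(\mE_\LtoS(\psi))}$. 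Your proposal names Cauchy--Schwarz and the $\epsilon$-closeness of the recovery, which are indeed the right tools, but none of the traceless-part trick, the $\|\cdot\|_{HS}$ bookkeeping, or the $\nu$-optimization is set up, and these are where the stated constant $2\epsilon\frakB$ actually comes from.
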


\begin{proof}
We have two rotated dephasing channels: 
\begin{align}
\mD_{C,\theta} &= \mR^{\rep}_{\SAtoC} \circ ( \mR_{\StoL}^{\optL} \circ\mN_{S} \otimes \id_A) \circ  (\mU_{S,\theta}\circ \mE_{\LtoS} \circ \mU_{L,\theta}^\dagger \otimes \id_A) \circ \mE_{\CtoLA}^{\rep},\\
\mN_{C,\theta} &= \mR^{\rep}_{\SAtoC} \circ ( \mR_{\StoL}^{\optL} \circ\mN_{S} \otimes \id_A) \circ  (\mU_{S,\theta}\circ \mE_{\LtoS} \otimes \id_A) \circ \mE_{\CtoLA}^{\rep}, 
\end{align}
of the following forms:
\begin{gather}
\mD_{C,\theta}(\cdot) = (1 - p_\theta) e^{-i\frac{\phi_\theta}{2} Z_C} (\cdot) e^{i\frac{\phi_\theta}{2}Z_C} 
 + p_\theta Z_C   e^{-i\frac{\phi_\theta}{2}Z_C} (\cdot) e^{i\frac{\phi_\theta}{2}Z_C} Z_C, \\
\mN_{C,\theta}(\cdot) = (1 - p_\theta) e^{-i\frac{\phi_\theta+\Delta H_L\theta}{2} Z_C} (\cdot) e^{i\frac{\phi_\theta+\Delta H_L\theta}{2} Z_C} 
 + p_\theta Z_C e^{-i\frac{\phi_\theta+\Delta H_L\theta}{2} Z_C} (\cdot) e^{i\frac{\phi_\theta+\Delta H_L\theta}{2} Z_C} Z_C. 
\end{gather}
Consider the parameter estimation of $\theta$ in the neighborhood of $\theta = 0$. Then let $\mR_{\SAtoC} = \mR_{\SAtoC}^\rep \circ (\mR_{\StoL}^{\optL} \circ \id_A)$ and $\mE_{\CtoSA} = (\mE_{\LtoS} \circ \id_A) \circ \mE_{\CtoLA}^\rep  $, we have 
\begin{align}
\sqrt{p_{\theta=0}} &\leq P(\mD_{C,\theta=0},\id_C) 
= P(\mR_{\SAtoC} \circ (\mN_{S} \otimes \id_A) \circ \mE_{\CtoSA} ,\id_C)
\nonumber\\
&\leq P(\mR^{\optL}_{\StoL} \circ \mN_{S} \circ \mE_{\LtoS}, \id_L) \leq \epsilon, 
\end{align}
where we use \lemmaref{lemma:dephasing}, the monotonicity of the purified distance  and the definition of $\epsilon$. 

\begin{enumerate}[(1),wide, labelwidth=0pt, labelindent=0pt]
\item We first prove \eqref{eq:lower-QFI-1}. The channel QFI~\cite{fujiwara2008fibre,demkowicz2012elusive,zhou2020theory} of rotated dephasing channel $\mD_{C,\theta}$ is 
\begin{align}
F(\mD_{\theta,C}) 
&= 4\min_{h} \norm{(\partial_\theta\vK_C - i h \vK_C)^\dagger(\partial_\theta\vK - i h \vK_C)} \nonumber\\
&= (1-2p_\theta)^2(\partial_\theta \phi_\theta)^2 + \frac{(\partial_\theta p_\theta)^2}{(1-p_\theta)p_\theta},
\end{align}
where $h$ is an arbitrary Hermitian matrix and $\vK_C = \begin{pmatrix}\sqrt{1-p_\theta}e^{-i\phi_\theta Z_C} \\ \sqrt{p_\theta}e^{-i\phi_\theta Z_C} Z_C\end{pmatrix}$. 
Moreover, using the monotonicity of the channel QFI, 
\begin{equation}
\label{eq:refine-1}
F(\mD_{\theta,C})\big|_{\theta = 0} \leq F(\mU_{S,\theta}\circ \mE_{\LtoS} \circ \mU_{L,\theta}^\dagger)\big|_{\theta = 0} = (\delta_\point)^2. 
\end{equation}
Then  
\begin{align}
\abs{\partial_\theta\xi_\theta}^2\big|_{\theta = 0}
&= \abs{\bra{0_C}\partial_\theta \mD_{C,\theta}\circ\mU_{C,\theta}(\ket{0_C}\bra{1_C})\ket{1_C} + \bra{0_C}\mD_{C,\theta}\circ\partial_\theta\mU_{C,\theta}(\ket{0_C}\bra{1_C})\ket{1_C}}^2 \big|_{\theta = 0}\\
&= \abs{\bra{0_C}\partial_\theta \mD_{C,\theta}\circ\mU_{C,\theta}(\ket{0_C}\bra{1_C})\ket{1_C} + (-i\Delta H_L)\bra{0_C}\mD_{C,\theta}\circ\mU_{C,\theta}(\ket{0_C}\bra{1_C})\ket{1_C}}^2 \big|_{\theta = 0}\\
&\geq \abs{(1-2p_\theta)i(-\partial_\theta \phi_\theta)e^{-i\phi_\theta} + (-i\Delta H_L)(1-2p_\theta)e^{-i\phi_\theta}}^2 \big|_{\theta = 0}\geq ((1 - 2 \epsilon^2) \Delta H_L - \delta_\point)^2. \label{eq:refine-2}
\end{align}

\item Next we prove \eqref{eq:lower-QFI-3}. First, note that
\begin{equation}
\partial_\theta (\mU_{S,\theta}\circ \mE_{\LtoS})(\cdot)  = -i[H_S,\mU_{S,\theta}\circ \mE_{\LtoS} (\cdot)]. 
\end{equation}
Then 
\begin{equation}
\partial_\theta\xi_\theta\big|_{\theta = 0} = \bra{0_C}\partial_\theta \mN_{C,\theta}(\ket{0_C}\bra{1_C})\ket{1_C}\big|_{\theta = 0} =
\bra{0_C} \mR_{\SAtoC}\circ \mN_{SA} \left( -i[H_{SA},\mE_{\CtoSA}(\ket{0_C}\bra{1_C})]\right)\ket{1_C}, 
\end{equation}
where $\mN_{SA} := \mN_{S}\otimes\id_A$ and $H_{SA} := H_S \otimes \id_A$.

We claim that 
\begin{equation}
\label{eq:approx-channel}
\abs{ \bra{0_C}\partial_\theta \mN_{C,\theta}(\ket{0_C}\bra{1_C})\ket{1_C}\big|_{\theta = 0}
 - \bra{0_L}\partial_\theta \widetilde{\mU}_{L,\theta}(\ket{0_L}\bra{1_L})\ket{1_L}\big|_{\theta = 0} } \leq 2\epsilon\frakB,
\end{equation}
where $\widetilde\mU_{L,\theta} = e^{-i (\mE_{\LtoS})^\dagger(H_S)\theta} (\cdot) e^{i (\mE_{\LtoS})^\dagger(H_S)\theta}$. Clearly, 
\begin{equation}
\label{eq:approx-channel-1}
\bra{0_L}\partial_\theta \widetilde{\mU}_{L,\theta}(\ket{0_L}\bra{1_L})\ket{1_L}\big|_{\theta = 0} = -i \bra{0_L}\left([(\mE_{\LtoS})^\dagger(H_S),\ket{0_L}\bra{1_L}]\right)\ket{1_L} = -i\chi.
\end{equation}
\eqref{eq:lower-QFI-3} is then proven combining \eqref{eq:approx-channel} and \eqref{eq:approx-channel-1}. 

Now we prove \eqref{eq:approx-channel}. 
First, let $\mE_{\CtoSA}(\cdot) = \sum_j E_j (\cdot) E_j^\dagger$, $\mR_{\SAtoC}\circ(\mN_{S}\otimes\id_A)(\cdot) = \sum_i R_i(\cdot)R_i^\dagger$. We have 
\begin{equation}
f^2\left(\sum_{ij} R_iE_j \rho E_j^\dagger R_i^\dagger,\rho\right) = f^2\left(\mR_{\SAtoC}\circ(\mN_{S}\otimes\id_A)\circ\mE_{\CtoSA}(\rho),\rho\right) \geq f^2\left(\mR^{\optL}_{\StoL}\circ \mN_{S} \circ\mE_{\LtoS},\id_L\right) = 1 - \epsilon^2,
\end{equation}
for any $\rho$ in system $C\otimes R$ where $R$ is a two-dimensional reference system. In particular, choose $\rho$ to be the maximally entangled state in $C\otimes R$, we have 
\begin{equation}
2\left(1 - f^2\left(\sum_{ij} R_iE_j \rho E_j^\dagger R_i^\dagger,\rho\right)\right) = \sum_{ij} \left(\trace( E_j^\dagger R_i^\dagger R_iE_j) - \frac{\abs{\trace(R_iE_j)}^2}{2}\right)  = \sum_{ij} \norm{\check{A}_{ij}}_{HS}^2\leq 2 \epsilon^2, 
\end{equation}
where $A_{ij} = R_i E_j$ and $\check{(\cdot)} = (\cdot) - \frac{\trace(\cdot) \id }{2}$,
and $\norm{\cdot}_{HS}$ denotes the Hilbert-Schmidt norm. Note that 
\begin{align}
&\bra{0_C}\partial_\theta \mN_{C,\theta}(\ket{0_C}\bra{1_C})\ket{1_C}\big|_{\theta = 0}
 - \bra{0_L}\partial_\theta \widetilde{\mU}_{L,\theta}(\ket{0_L}\bra{1_L})\ket{1_L}\big|_{\theta = 0} 
 \nonumber\\
=\;&\bra{0_C} \mR_{\SAtoC}\circ \mN_{SA} \left( -i[H_{SA},\mE_{\CtoSA}(\ket{0_C}\bra{1_C})]\right)\ket{1_C} + i \bra{0_C} [\mE_{\SAtoC}^\dagger(H_{SA}),\ket{0_C}\bra{1_C}] \ket{1_C} \nonumber\\
=\;& -i \bra{0_C}\left(\sum_{ij} R_i H_{SA} E_j \ket{0_C}\bra{1_C} E_j^\dagger R_i^\dagger  - \sum_j E_j^\dagger H_{SA} E_j (\ket{0_C}\bra{1_C})\right)\ket{1_C} \nonumber \\
&+ i \bra{0_C}\left(  \sum_{ij} R_i  E_j \ket{0_C}\bra{1_C} E_j^\dagger H_{SA} R_i^\dagger  - \sum_j  (\ket{0_C}\bra{1_C}) E_j^\dagger H_{SA} E_j\right)\ket{1_C}.
\end{align}
Then 
\begin{align}
&\Big\| \sum_{ij} R_i H_{SA} E_j \ket{0_C}\bra{1_C} E_j^\dagger R_i^\dagger  - \sum_j E_j^\dagger H_{SA} E_j (\ket{0_C}\bra{1_C}) \Big\|\nonumber\\
\leq\;& \Big\| \sum_{ij} R_i H_{SA} E_j \ket{0_C}\bra{1_C} E_j^\dagger R_i^\dagger  - \sum_j E_j^\dagger H_{SA} E_j (\ket{0_C}\bra{1_C}) \Big\|_{HS}\\
=\;& \Big\| \sum_{ij} R_i H_{SA} E_j \ket{0_C}\bra{1_C} A_{ij}^\dagger  - A_{ij}^\dagger R_i H_{SA} E_j \ket{0_C}\bra{1_C} \Big\|_{HS}\\
=\;& \Big\| \sum_{ij} R_i H_{SA} E_j \ket{0_C}\bra{1_C} \check A_{ij}^\dagger  - \check A_{ij}^\dagger R_i H_{SA} E_j \ket{0_C}\bra{1_C} \Big\|_{HS}\\
\leq\;& 2 \sum_{ij} \norm{\check A_{ij}}_{HS} \norm{R_i H_{SA} E_j \ket{0_C}\bra{1_C}}_{HS} \\ \leq\;& 2 \sqrt{\sum_{ij}\norm{\check A_{ij}}_{HS}^2 \sum_{ij}\norm{R_i H_{SA} E_j \ket{0_C}\bra{1_C}}_{HS}^2}\\
\leq\;& 2 \sqrt{2 \epsilon^2  \bra{0_C}\mE_{\SAtoC}^\dagger(H_{SA}^2)\ket{0_C}} \\ \leq\;& 2\sqrt{2} \epsilon \sqrt{\norm{(\mE_{\LtoS})^\dagger(H_S^2)}},
\end{align}
where we used the triangular inequality, the submultiplicity of $\norm{\cdot}_{HS}$ and the Cauchy--Schwartz inequalities. 
Similarly, 
\begin{equation}
\Big\|\sum_{ij} R_i  E_j \ket{0_C}\bra{1_C} E_j^\dagger H_{SA} R_i^\dagger  - \sum_j  (\ket{0_C}\bra{1_C}) E_j^\dagger H_{SA} E_j \Big\|\leq 2\sqrt{2}\epsilon \sqrt{\norm{(\mE_{\LtoS})^\dagger(H_S^2)}}.
\end{equation}
We have
\begin{equation}
\abs{ \bra{0_C}\partial_\theta \mN_{C,\theta}(\ket{0_C}\bra{1_C})\ket{1_C}\big|_{\theta = 0}
 - \bra{0_L}\partial_\theta \widetilde{\mU}_{L,\theta}(\ket{0_L}\bra{1_L})\ket{1_L}\big|_{\theta = 0} } \leq 4\sqrt{2}\epsilon \sqrt{\norm{(\mE_{\LtoS})^\dagger(H_S^2)}}. 
\end{equation}
Note that the equation above still holds when replacing $H_S$ with $H_S-\nu\id$, and that 
\begin{align}
\min_{\nu \in \bR} 2\sqrt{2}\sqrt{\norm{(\mE_{\LtoS})^\dagger((H_S - \nu\id)^2)}} &= 2\sqrt{2}\sqrt{\min_\nu \max_{\ket{\psi}}    \bra{\psi}(\mE_{\LtoS})^\dagger(H_S^2)\ket{\psi}   - 2\nu  \bra{\psi}(\mE_{\LtoS})^\dagger(H_S)\ket{\psi}  +  \nu^2 } \\
& 
= 2\sqrt{2}\sqrt{\max_{\ket{\psi}}  \min_\nu   \bra{\psi}(\mE_{\LtoS})^\dagger(H_S^2)\ket{\psi}   - 2\nu  \bra{\psi}(\mE_{\LtoS})^\dagger(H_S)\ket{\psi}  +  \nu^2 }
\\
& 
= \max_{\ket{\psi}}   2\sqrt{2}\sqrt{ \bra{\psi}(\mE_{\LtoS})^\dagger(H_S^2)\ket{\psi}   -  (\bra{\psi}(\mE_{\LtoS})^\dagger(H_S)\ket{\psi})^2  } = \frakB,
\end{align}
where in the second step we used the Sion's minimax theorem~\cite{komiya1988elementary}. 
\eqref{eq:approx-channel} is then proven.

\end{enumerate}

\end{proof}

\section{Refinements of 
{Proposition~2}, 
{Proposition~3} and 
{Proposition~5}}
\label{app:refine-2}

 Here, we present the refinements of \propref{prop:global-charge}, \propref{prop:charge-KL} and \propref{prop:charge-metrology}. First, we can slightly modify the proof of \propref{prop:global-charge} to obtain the following: 
\begin{proposition}
\label{prop:global-charge-app}
\isometric 
Then when $\delta_\charge \leq \Delta H_S$, the following trade-off relation holds: 
\begin{equation}
\label{eq:app-global-charge}
\delta_\group \geq 
\min\left\{\frac{\sqrt{\delta_\charge\left(\Delta H_S - \frac{1}{2}\delta_\charge\right)}}{\Delta H_S},\sqrt{\frac{3}{8}}\right\}, 
\end{equation}
and when $\delta_\charge > \Delta H_S$, $\delta_\group \geq \sqrt{3/8}$. 
In particular, when $\delta_\charge \ll \Delta H_S$, we have 
\begin{equation}
\delta_\group \gtrsim \sqrt{\frac{\delta_\charge}{\Delta H_S}}.
\end{equation}
\end{proposition}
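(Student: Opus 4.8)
The plan is to derive \propref{prop:global-charge-app} from the proof of \propref{prop:global-charge} by exploiting the freedom in the choice of the two logical reference states. Recall that in that proof $\{\ket{0_L},\ket{1_L}\}$ was fixed to be eigenvectors of $H_L$ with the largest and smallest eigenvalues, but the \emph{only} use of this is to turn the channel-fidelity bound $f_\theta\le\big|\,\tfrac12\sum_{a}\braket{a_L|W^\dagger U_{S,\theta}W|a_L}\,e^{i\lambda_a\theta}\,\big|$ into a trigonometric polynomial $\big|\sum_\eta c_\eta e^{-i\eta\theta}\big|$ with \emph{non-negative} coefficients $c_\eta\ge 0$ summing to $1$; everything after that (the period-integration step producing a $\theta$ for which the off-peak part is purely imaginary, the two-case split, and the inequality $c_{\eta_*}\le 1-\tfrac{|\Delta H_L-\chi|}{2\Delta H_S}$ in the relevant case) uses nothing beyond $c_\eta\ge 0$ and the eigenvalue bookkeeping. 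This non-negativity is preserved for \emph{any} pair of eigenvectors $\ket{\mu_i},\ket{\mu_j}$ of $H_L$ (say with $\lambda_i\ge\lambda_j$; the other ordering by symmetry), and more generally for any classical mixture of $H_L$-eigenstates used as the probe. Running the argument with such a pair reproduces it line for line, with the $\sqrt{3/8}$ branch triggered automatically when $\lambda_i-\lambda_j>\Delta H_S$ and otherwise with $|\Delta H_L-\chi|$ replaced by
\begin{equation*}
\Delta_{ij}:=\big|\braket{\mu_i|M|\mu_i}-\braket{\mu_j|M|\mu_j}\big|,\qquad M:=H_L-\mE^\dagger_{\StoL}(H_S),
\end{equation*}
because $\lambda_i-\lambda_j=\braket{\mu_i|H_L|\mu_i}-\braket{\mu_j|H_L|\mu_j}$ and the quantity in the role of $\chi$ is $\braket{\mu_i|\mE^\dagger_{\StoL}(H_S)|\mu_i}-\braket{\mu_j|\mE^\dagger_{\StoL}(H_S)|\mu_j}$.

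The second step is to maximize $\Delta_{ij}$ over admissible pairs so as to bring $\delta_\charge=\Delta M$ into play. Each $\Delta_{ij}$ is the spread between two diagonal entries of $M$ in an eigenbasis of $H_L$; moreover, precomposing $\mE_{\LtoS}$ with a logical unitary $\mathcal{V}$ that is covariant with respect to $H_L$ leaves $\delta_\group$ invariant and conjugates $M\mapsto V^\dagger M V$, hence leaves its spectrum and $\delta_\charge$ invariant while allowing an arbitrary rotation of the eigenbasis inside each degenerate eigenspace of $H_L$. Optimizing over $i,j$ and over these rotations pushes $\Delta_{ij}$ up to $\Delta M=\delta_\charge$. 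Substituting $\Delta_{ij}=\delta_\charge$, using the monotonicity of $x\mapsto\tfrac{1}{\Delta H_S}\sqrt{x(\Delta H_S-\tfrac12 x)}$ on $[0,\Delta H_S]$, and carrying over the $\sqrt{3/8}$ branch of \propref{prop:global-charge} unchanged yields \eqref{eq:app-global-charge} when $\delta_\charge\le\Delta H_S$ and $\delta_\group\ge\sqrt{3/8}$ when $\delta_\charge>\Delta H_S$, whence the leading-order form $\delta_\group\gtrsim\sqrt{\delta_\charge/\Delta H_S}$. Sanity checks: an isometric exactly covariant code has $\delta_\charge=0$, so the bound is vacuous as it must be; $\delta_\charge\le\Delta H_L$ always, so the hypothesis $\delta_\charge\le\Delta H_S$ is the natural regime; and for both code families in \secref{sec:case-study} one has $\mE^\dagger_{\StoL}(H_S)\propto Z_L$, so $M$ is already diagonal in the $H_L$-eigenbasis and $\Delta_{ij}$ equals $\delta_\charge$ exactly, which is why the bound is tight there.

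The step I expect to be the main obstacle is the passage $\max_{i,j}\Delta_{ij}=\delta_\charge$ in full generality. When $H_L$ is nondegenerate and $[H_L,\mE^\dagger_{\StoL}(H_S)]\neq 0$, the diagonal entries of $M$ in the $H_L$-eigenbasis can have spread strictly below $\Delta M$: the off-diagonal part of $\mE^\dagger_{\StoL}(H_S)$ is invisible to probe states built out of $H_L$-eigenstates, and those are exactly the probes that keep the trigonometric coefficients non-negative, so the global argument of \propref{prop:global-charge} cannot obviously ``see'' the full $\delta_\charge$. Two routes around this are worth trying: (i) enlarging the admissible probes---e.g.\ perturbing the physical side by replacing $H_S$ with $H_S$ plus a covariant correction, or arguing directly by convexity on the maximization over inputs in the definition of $\delta_\group$---so as to recover $\Delta M$; or (ii) contenting oneself with the bound carrying $\max_{i,j}\Delta_{ij}$, which equals $\delta_\charge$ precisely when $H_L$ and $\mE^\dagger_{\StoL}(H_S)$ commute, in particular for all the examples and applications considered here. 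Deciding whether (i) succeeds determines whether \propref{prop:global-charge-app} holds as stated or only in this ``diagonal'' form; I would attack (i) first and regard settling it as the heart of the matter.
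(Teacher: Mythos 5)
Your strategy mirrors the paper's: the paper also proceeds by rerunning the argument of Proposition~\ref{prop:global-charge} with a different probe pair $\ket{\tilde 0_L},\ket{\tilde 1_L}$, chosen so that $\big|\bra{\tilde 0_L}M\ket{\tilde 0_L}-\bra{\tilde 1_L}M\ket{\tilde 1_L}\big|$ with $M:=H_L-\mE^\dagger_{\StoL}(H_S)$ is close to $\delta_\charge=\Delta M$, and then rescales $H_L,H_S$ to make $\widetilde{\Delta H_L}$ an integer. Your diagnosis of the obstacle, however, is exactly the right thing to worry about, and it is not peculiar to your write-up: the step in the paper's proof that rewrites $\braket{\tilde 0_L|W^\dagger U_{S,\theta}WU_{L,\theta}^\dagger|\tilde 0_L}$ as a trigonometric polynomial $\sum_\eta|c^0_\eta|^2 e^{-i(\eta-\widetilde{\Delta H_L})\theta}$ with nonnegative coefficients tacitly requires $U_{L,\theta}^\dagger\ket{\tilde 0_L}$ to be a pure phase times $\ket{\tilde 0_L}$, i.e.\ that $\ket{\tilde 0_L}$ is an $H_L$-eigenstate (in which case $\widetilde{\Delta H_L}$ is automatically an integer and the rational/rescaling step is superfluous). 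With $H_L$-eigenstate probes one sees only the diagonal spread $\max_{i,j}\Delta_{ij}$, which equals $\delta_\charge$ iff $[H_L,\mE^\dagger_{\StoL}(H_S)]=0$, exactly as you say.

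Your route~(i) cannot succeed; the gap is genuine, not an artifact of the method, and there is a simple counterexample. Take $d_L=d_S=2$, $H_L=H_S=\mathrm{diag}(0,1)$ (integer spectrum, common period $2\pi$, $\Delta H_L=\Delta H_S=1$), dephasing noise so that HKS holds, and encoding unitary $W=V_\phi=\bigl(\begin{smallmatrix}\cos\phi&-\sin\phi\\ \sin\phi&\cos\phi\end{smallmatrix}\bigr)$. Then $\mE^\dagger_{\StoL}(H_S)=V_\phi^\dagger H_S V_\phi$ is the rank-one projector onto $(\sin\phi,\cos\phi)^T$, $M$ has eigenvalues $\pm|\sin\phi|$, and $\delta_\charge=2|\sin\phi|$. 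On the other hand, $(U_{S,\theta}W)^\dagger(WU_{L,\theta})=V_\phi^\dagger e^{iH_S\theta}V_\phi\,e^{-iH_L\theta}$ is an $SU(2)$ element with $\tfrac12\mathrm{Tr}=\cos^2\phi+\sin^2\phi\cos\theta=:\cos\alpha_\theta$, so $P(\mU_{S,\theta}\circ\mE_\LtoS,\mE_\LtoS\circ\mU_{L,\theta})=|\sin\alpha_\theta|$, which is maximized at $\theta=\pi$ giving $\delta_\group=|\sin 2\phi|$. For small $\phi$ both $\delta_\group$ and $\delta_\charge$ are $\Theta(\phi)$, which violates the claimed $\delta_\group\gtrsim\sqrt{\delta_\charge/\Delta H_S}=\Theta(\sqrt\phi)$ (e.g.\ at $\phi=0.1$: $\delta_\group\approx0.199$ while the right-hand side of \eqref{eq:app-global-charge} is $\approx0.424$). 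The diagonal entries of $M$ in the $H_L$-eigenbasis are $\mp\sin^2\phi$, so $\max_{i,j}\Delta_{ij}=2\sin^2\phi$ and the ``diagonal'' version of the bound gives $\delta_\group\gtrsim\sqrt{2}|\sin\phi|$, which does hold. So your route~(ii) is the defensible conclusion: the proposition as stated needs either the commutativity hypothesis $[H_L,\mE^\dagger_{\StoL}(H_S)]=0$ (satisfied by all the paper's worked examples, where $\mE^\dagger_{\StoL}(H_S)\propto Z_L$) or its $\delta_\charge$ replaced by the diagonal spread $\max_{i,j}\Delta_{ij}$.

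A small correction to one of your sanity checks: $\delta_\charge\le\Delta H_L$ is not true in general, since $M$ is the difference of two Hermitian operators and $\Delta M$ can exceed $\Delta H_L$ whenever $\mE^\dagger_{\StoL}(H_S)$ is far from $H_L$; this is peripheral and does not affect your argument.
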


\begin{proof}
Since $U_{S,\theta}$ and $U_{L,\theta}$ are both periodic with a common period, we assume $H_S$ and $H_L$ both have  integer eigenvalues. We also assume the smallest eigenvalue of $H_S$ is zero because constant shifts do not affect the definitions of symmetry measures. Choose orthonormal $\ket{\tilde{0}_L}$ and $\ket{\tilde{1}_L}$ such that
\begin{equation}
    \tilde\delta_\charge = \abs{\bra{\tilde{0}_L}(H_L - (\mE_{\LtoS})^\dagger(H_S))\ket{\tilde{0}_L}
    - \bra{\tilde{1}_L}(H_L - (\mE_{\LtoS})^\dagger(H_S))\ket{\tilde{1}_L}}
\end{equation} 
is arbitrarily close to $\delta_\charge$ and that $\widetilde{\Delta H_L} =  \bra{\tilde{0}_L}H_L\ket{\tilde{0}_L}   - \bra{\tilde{1}_L} H_L\ket{\tilde{1}_L} \geq 0$ is a rational number. We can always multiply both $H_{S}$ and $H_{L}$ by a large integer (e.g., the denominator of $\widetilde{\Delta H_L}$) such that $\widetilde{\Delta H_L}$ becomes an integer. This rescaling does not change the value of $\delta_\group$ and the right-hand side of \eqref{eq:app-global-charge}. Therefore, without loss of generality, we assume $\widetilde{\Delta H_L}$ is an integer. 

When $\mE_{\LtoS}(\cdot) = W(\cdot)W^\dagger$ is isometric, let $\ket{\tilde\frakc_{0}} = W\ket{\tilde{0}_L}$, $\ket{\tilde\frakc_{1}} = W\ket{\tilde{1}_L}$, and write
\begin{align}
\ket{\tilde\frakc_0} = \sum_{\eta=0}^{\Delta H_S} c_\eta^0 \ket{\eta^0},\quad 
\ket{\tilde\frakc_1} = \sum_{\eta=0}^{\Delta H_S} c_\eta^1 \ket{\eta^1},
\end{align} 
where $\sum_\eta \abs{c_\eta^0}^2 = \sum_\eta \abs{c_\eta^1}^2 = 1$ and $\ket{\eta^0}$ and $\ket{\eta^1}$ are eigenstates of $H_S$ with eigenvalue $\eta$. $\ket{\eta^0}$ and $\ket{\eta^1}$ may not be the same when $H_S$ is degenerate.  Note that when $\eta$ is not an eigenvalue of $H_S$, we simply take $c_\eta^{0} = 0$ (or $c_\eta^{1} = 0$) so that $c_\eta^{0}$ (or $c_\eta^{1}$) and is well-defined for any integer $\eta$. 
Let $\ket{\psi} = \frac{1}{\sqrt{2}}(\ket{\tilde 0_L}\ket{0_R}+\ket{\tilde 1_L}\ket{1_R})$. 
Then the channel fidelity 
\begin{align}
f_\theta &:=f(\mU_{S,\theta}\circ \mE_{\LtoS}\circ\mU_{L,\theta}^\dagger,\mE_{\LtoS}) \nonumber\\
&= \min_{\rho} f\big((\mU_{S,\theta}\!\circ\! \mE_{\LtoS}\!\circ\!\mU_{L,\theta}^\dagger\!\otimes\!\id_R)(\rho),(\mE_{\LtoS}\!\otimes\!\id_R)(\rho)\big)\nonumber\\
&\leq f\big((\mU_{S,\theta}\!\circ\! \mE_{\LtoS}\!\circ\!\mU_{L,\theta}^\dagger\!\otimes\!\id_R)(\ket{\psi}),(\mE_{\LtoS}\!\otimes\!\id_R)(\ket{\psi})\big)\nonumber\\
&= \abs{\braket{\psi|W^\dagger  U_{S,\theta} W U_{L,\theta}^{\dagger}|\psi}} \nonumber\\
&= \abs{\frac{1}{2}\sum_{\eta=0}^{\Delta H_S} \abs{c_\eta^0}^2 e^{-i\eta\theta+i\widetilde{\Delta H_L}\theta} + \frac{1}{2}\sum_{\eta=0}^{\Delta H_S} \abs{c_\eta^1}^2 e^{-i\eta\theta}}\nonumber\\
&= \abs{\sum_{\eta=-\widetilde{\Delta H_L}}^{\Delta H_S} c_\eta e^{-i\eta\theta}} = \abs{c_{\eta_*} \!+\! \sum_{\eta\neq\eta_*} c_\eta e^{-i(\eta-\eta_*)\theta}},
\end{align}
where we define $c_\eta := \frac{1}{2} \abs{c_{\eta+\widetilde{\Delta H_L}}^0}^2+ \frac{1}{2}\abs{c_\eta^1}^2$ for $\eta \in [-\widetilde{\Delta H_L},\Delta H_S]$ and choose $\eta_*$ such that $c_{\eta_*} \geq c_{\eta}$ for all $\eta$. Note that there is always a $\theta$ such that $\sum_{\eta\neq\eta_*} c_\eta \cos((\eta-\eta_*)\theta) = 0$ (because the integration of it from $0$ to $2\pi$ is zero) and that $\sum_{\eta\neq\eta_*} c_\eta e^{-i(\eta-\eta_*)\theta}$ is imaginary. Then we must have 
\begin{equation}
\min_\theta f_\theta \leq \sqrt{c_{\eta_*}^2 + (1 - c_{\eta_*})^2}. 
\end{equation}

To arrive at a non-trivial lower bound on $\delta_\group = \sqrt{1-\min_\theta f_\theta^2}$, we need an upper bound of $\min_\theta f_\theta$ which is smaller than $1$. To this end, we analyze $c_{\eta_*}$  in detail. In particular, we consider two situations: 
\begin{enumerate}[(1),wide, labelindent=0pt]
\item $c_{\eta_*} \leq 1/2$ and a constant upper bound on $\min_\theta f_\theta$ exists. We can always find a subset of $\{\eta\}$ denoted by $\is$ such that $ 1/4 \leq \sum_{\eta\in\is} c_\eta \leq 1/2$. To find such a set, we first include $\eta_*$ in $\is$ and add new elements into $\is$ one by one until their sum is at least $1/4$. Then there is always a $\theta$ such that $(\sum_{\eta\in\is} c_\eta e^{-i\eta\theta})\cdot(\sum_{\eta\notin\is} c_\eta e^{-i\eta\theta})$ is imaginary, in which case $\min_\theta f_\theta \leq \sqrt{(1/4)^2+(3/4)^2} = \sqrt{5/8}$ and we have \begin{equation}\label{eq:delta-1-app} \delta_\group \geq \sqrt{1-\min_\theta f_\theta^2} \geq \sqrt{3/8}.\end{equation}
\item $c_{\eta_*} > 1/2$. Then $\sqrt{c_{\eta_*}^2 + (1-c_{\eta_*})^2}$ is a monotonically increasing function of $c_{\eta_*}$ and we only need to find an upper bound on $c_{\eta_*}$. 
We first note that $0 \leq \eta^* \leq \Delta H_S - \widetilde{\Delta H_L}$ because otherwise either $c_{\eta_*+\widetilde{\Delta H_L}}^0 = 0$ or $c_{\eta_*}^1 = 0$ which contradicts with $c_{\eta_*} > 1/2$. We have $\tilde\chi = \braket{\tilde\frakc_0|H_S|\tilde\frakc_0} - \braket{\tilde\frakc_1|H_S|\tilde\frakc_1}$, $\abs{\widetilde{\Delta H_L} - \tilde\chi} = \tilde\delta_\charge$ and 
\begin{gather*}
\sum_\eta \abs{c_\eta^0}^2 \eta = \sum_\eta \abs{c_\eta^1}^2 \eta + \tilde\chi, \\
~\Leftrightarrow~ 
\widetilde{\Delta H_L} - \tilde\chi = -\left(1-\abs{c_{\eta_*}^1}^2\right)\left(\eta_* \!-\! \frac{\sum_{\eta\neq\eta_*}{\abs{c_\eta^1}^2\eta}}{\sum_{\eta\neq\eta_*}{\abs{c_\eta^1}^2}}\right) 
+\left(1-\abs{c_{\eta_*+\widetilde{\Delta H_L}}^0}^2\right)\left(\eta_* \!+\! \widetilde{\Delta H_L} \!-\! \frac{\sum_{\eta\neq\eta_*+\widetilde{\Delta H_L}}{\abs{c_\eta^0}^2\eta}}{\sum_{\eta\neq\eta_*+\widetilde{\Delta H_L}}{\abs{c_\eta^0}^2}}\right).
\end{gather*}
Note that both $\bigg|\eta_* - \frac{\sum_{\eta\neq\eta_*}{\abs{c_\eta^1}^2\eta}}{\sum_{\eta\neq\eta_*}{\abs{c_\eta^1}^2}}\bigg|$ and $\bigg|\eta_* + \widetilde{\Delta H_L} - \frac{\sum_{\eta\neq\eta_*+\widetilde{\Delta H_L}}{\abs{c_\eta^0}^2\eta}}{\sum_{\eta\neq\eta_*+\widetilde{\Delta H_L}}{\abs{c_\eta^0}^2}}\bigg|$ are at most $\Delta H_S$. Therefore, $c_{\eta_*} \leq 1 - \tilde\delta_\charge/(2\Delta H_S)$ and 
\begin{align}
\delta_\group 
\geq \sqrt{1 - \min_\theta f_\theta^2} \geq \sqrt{2 c_{\eta_*}(1-c_{\eta_*})} 
\geq \frac{\sqrt{\tilde\delta_\charge\left(\Delta H_S - \frac{1}{2}\tilde\delta_\charge\right)}}{\Delta H_S}. \label{eq:KL-last-step-app}
\end{align}
\end{enumerate}
\propref{prop:global-charge-app} then follows from \eqref{eq:delta-1-app}, \eqref{eq:KL-last-step-app} and the fact that $\tilde \delta_\charge$ can be arbitrarily close to $\delta_\charge$. 
\end{proof}

\propref{prop:global-charge-app} shows that $\delta_\group$ is always lower bounded by a monotonic function of $\delta_\charge$. We also immediately see that such a relation cannot hold true for general non-isometric encodings because there are situations when $\delta_\charge > 0$ and $\delta_\group = 0$~\cite{cirstoiu2020robustness}. 
The proof of \propref{prop:global-charge-app} is essentially the same as the proof of \propref{prop:global-charge} except that we replace the pair of logical states $\ket{0_L}$ and $\ket{1_L}$ with another pair of logical states $\ket{\tilde 0_L}$ and $\ket{\tilde 1_L}$. Similar tricks can be used to obtain the refinements of \propref{prop:charge-KL} and \propref{prop:charge-metrology}. 
\begin{proposition}
\label{prop:charge-KL-app} 
\isometric Then the following inequality holds: 
\begin{equation}
\Delta((\mE_{\LtoS})^\dagger(H_S)) \leq 2\epsilon\frakJ.
\end{equation}
\end{proposition}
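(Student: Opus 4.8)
\textbf{Proof proposal for Proposition~\ref{prop:charge-KL-app}.}

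The plan is to mimic the proof of \propref{prop:charge-KL}, but replace the single fixed pair $\{\ket{0_L},\ket{1_L}\}$ by an arbitrary pair of orthonormal logical states and then optimize. Concretely, $\Delta(\mE^\dagger_{\StoL}(H_S))$ is, by definition of $\Delta(\cdot)$, the difference between the largest and smallest eigenvalues of the Hermitian operator $\mE^\dagger_{\StoL}(H_S)$ on $L$; equivalently it equals $\max_{\ket{\phi_0},\ket{\phi_1}} \big| \braket{\phi_0|\mE^\dagger_{\StoL}(H_S)|\phi_0} - \braket{\phi_1|\mE^\dagger_{\StoL}(H_S)|\phi_1} \big|$ where the maximum is over orthonormal pairs (the optimum is attained by the eigenvectors for the extreme eigenvalues). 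So it suffices to show that for \emph{every} orthonormal pair $\{\ket{\phi_0},\ket{\phi_1}\}$ the quantity $\big|\braket{\phi_0|\mE^\dagger_{\StoL}(H_S)|\phi_0} - \braket{\phi_1|\mE^\dagger_{\StoL}(H_S)|\phi_1}\big| \le 2\epsilon\frakJ$.

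First I would fix such a pair and define $\tilde\chi := \braket{\phi_0|\mE^\dagger_{\StoL}(H_S)|\phi_0} - \braket{\phi_1|\mE^\dagger_{\StoL}(H_S)|\phi_1}$, the ``charge fluctuation'' computed with respect to this pair. The key observation is that the entire proof of \propref{prop:charge-KL} only uses that $\ket{0_L},\ket{1_L}$ are orthonormal (via the ancilla-system state $\ket{\psi}=\psi_0\ket{0_L}\ket{0_R}+\psi_1\ket{1_L}\ket{1_R}$, the simplified matrix elements $\braket{K_{S,i}^\dagger K_{S,j}}_{0,1}$, \lemmaref{lemma:KL-general}, the Fuchs--van de Graaf inequality, and the HKS decomposition $H_S=\sum_i h_{ii}K_{S,i}^\dagger K_{S,i}$); it never invokes any special property (e.g.\ extremality for $H_L$) of that pair. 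Running the identical chain of inequalities with $\ket{0_L}\mapsto\ket{\phi_0}$, $\ket{1_L}\mapsto\ket{\phi_1}$, $\ket{\frakc_{0,1}}=W\ket{\phi_{0,1}}$, one obtains $\epsilon \ge |\tilde\chi|/(2\frakJ)$, i.e.\ $|\tilde\chi|\le 2\epsilon\frakJ$. Note $\frakJ$ itself is unchanged since it depends only on $\mN_S$ and $H_S$, not on the choice of logical states. Then I would take the maximum over orthonormal pairs $\{\ket{\phi_0},\ket{\phi_1}\}$ on the left-hand side, which yields $\Delta(\mE^\dagger_{\StoL}(H_S)) \le 2\epsilon\frakJ$.

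The only thing to be careful about — and the ``main obstacle,'' though it is mild — is verifying that no step in the original argument secretly relied on $\ket{0_L},\ket{1_L}$ being eigenstates of $H_L$ for \emph{extreme} eigenvalues. Scanning the proof: the bound $\epsilon \ge \tfrac12\min_\lambda\max_{\psi_0}\|\tilde B\|_1$ comes from \lemmaref{lemma:KL-general} plus Fuchs--van de Graaf and is pair-agnostic; the reduction to $\sum_i|\tilde B_{ii}|$, the maximization over $\psi_0$ giving the factor $\tfrac14$, the diagonalization of $h$, the normalization $\max_i h_{ii}=-\min_i h_{ii}=\Delta h/2$, and the final collapse to $|\tilde\chi|/(2\Delta h)$ all go through verbatim with the substituted states. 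Hence the refinement is immediate, and — as the paper already notes in \secref{sec:global-1-tradeoff} and \appref{app:refine-2} — one would remark that this recovers \propref{prop:charge-KL} as the special case where $\ket{\phi_0},\ket{\phi_1}$ are the extreme eigenvectors of $H_L$, since $\Delta(\mE^\dagger_{\StoL}(H_S)) \ge |\chi|$ is not generally an equality but the inequality $\Delta(\mE^\dagger_{\StoL}(H_S))\le 2\epsilon\frakJ$ is strictly stronger.
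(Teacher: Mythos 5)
Your proof is correct and is essentially the same argument the paper gives: the paper also proves this by rerunning the proof of \propref{prop:charge-KL} with $\ket{0_L},\ket{1_L}$ replaced by an orthonormal pair that attains the eigenvalue spread of $\mE^\dagger_{\StoL}(H_S)$ (your ``take the maximum over orthonormal pairs'' is the same move, since that maximum is realized by the extreme eigenvectors). Your sanity check that no step in the original proof used extremality of the pair with respect to $H_L$ — only orthonormality — is exactly the point, so the proposal matches the intended proof.
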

\begin{proposition}
\label{prop:charge-metrology-app}
\nonisometric 
Then the following inequality holds: 
\begin{equation}
\Delta((\mE_{\LtoS})^\dagger(H_S)) \leq 2\epsilon\left(  \sqrt{(1-\epsilon^2)\frakF} + \frakB\right). 
\end{equation} 
In particular, when $\epsilon \ll 1$ and $\frakB \ll \sqrt{\frakF}$, we have 
\begin{equation}
\Delta((\mE_{\LtoS})^\dagger(H_S)) \lesssim 2\epsilon\sqrt{\frakF}.
\end{equation}
\end{proposition}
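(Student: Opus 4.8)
\emph{Proof strategy.} The plan is to rerun the argument behind \propref{prop:charge-metrology} with the fixed orthonormal pair $\{\ket{0_L},\ket{1_L}\}$ — chosen there only to feed \thmref{thm:global-1} — replaced by an \emph{arbitrary} orthonormal pair $\{\ket{\tilde 0_L},\ket{\tilde 1_L}\}$ in the logical space, and then to optimize over that pair. This is the same device that takes \propref{prop:global-charge} to \propref{prop:global-charge-app} in \appref{app:refine-2}, now applied to the metrology route rather than the Knill--Laflamme route. Concretely, I would build, for a generic pair $\{\ket{\tilde 0_L},\ket{\tilde 1_L}\}$, the repetition encoding $\mE^\rep_{\CtoLA}$ and recovery $\mR^\rep_{\LAtoC}$ of \eqref{eq:rep-enc}--\eqref{eq:rep-rec} \emph{with respect to that pair}, form the error-corrected channel $\mN_{C,\theta}=\mR^\rep_{\SAtoC}\circ(\mR^{\optL}_{\StoL}\circ\mN_S\otimes\id_A)\circ(\mU_{S,\theta}\circ\mE_{\LtoS}\otimes\id_A)\circ\mE^\rep_{\CtoLA}$ exactly as in \lemmaref{lemma:local}, and set $\xi_\theta=\bra{0_C}\mN_{C,\theta}(\ketbra{0_C}{1_C})\ket{1_C}$ together with $\tilde\chi:=\bra{\tilde 0_L}\mE^\dagger_{\StoL}(H_S)\ket{\tilde 0_L}-\bra{\tilde 1_L}\mE^\dagger_{\StoL}(H_S)\ket{\tilde 1_L}$.

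Two facts then need to be checked, both of which I expect to carry over with only notational changes. First, $\mN_{C,\theta}$ is still a rotated dephasing channel: since the $A$-register is untouched by the inner channel and $\mR^\rep$ merely returns the trace of the inner logical state to the diagonal, $\mN_{C,\theta}$ fixes $\ketbra{0_C}{0_C}$ and $\ketbra{1_C}{1_C}$; by purity of these outputs, every Kraus operator of $\mN_{C,\theta}$ is forced to be diagonal in the $C$-basis, which is precisely the rotated dephasing form, so $\barF(\mN_{C,\theta})\big|_{\theta=0}=\abs{\partial_\theta\xi_\theta}^2\big|_{\theta=0}/(1-\abs{\xi_{\theta=0}}^2)$ as before. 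Second, the two quantitative inputs to \propref{prop:charge-metrology}, namely $\abs{\xi_{\theta=0}}\ge 1-2\epsilon^2$ (\eqref{eq:xi-epsilon}) and $\abs{\partial_\theta\xi_\theta}^2\big|_{\theta=0}\ge(\abs{\tilde\chi}-2\epsilon\frakB)^2$ whenever $\abs{\tilde\chi}\ge 2\epsilon\frakB$ (the bound \eqref{eq:lower-QFI-3} of \lemmaref{lemma:local}), remain valid for the generic pair, since their proofs use only monotonicity of the purified distance, the definition of $\epsilon$, and the approximate intertwining between $H_{SA}$ and $\mE^\dagger_{\StoL}(H_S)$ — none of which mentions $H_L$ or requires $\{\ket{0_L},\ket{1_L}\}$ to be $H_L$-eigenstates — and $\frakB$ is a code quantity independent of the pair. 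The one place in \lemmaref{lemma:local} that genuinely uses the $H_L$-eigenstate property, \eqref{eq:lower-QFI-1}, is not invoked here.

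Granting these, the rest is verbatim as in \propref{prop:charge-metrology}: monotonicity of the regularized QFI gives $\frakF=\barF(\mN_{S,\theta})\ge\barF(\mN_{C,\theta})\big|_{\theta=0}\ge(\abs{\tilde\chi}-2\epsilon\frakB)^2/(4\epsilon^2(1-\epsilon^2))$ when $\abs{\tilde\chi}\ge 2\epsilon\frakB$, which rearranges to $\abs{\tilde\chi}\le 2\epsilon(\sqrt{(1-\epsilon^2)\frakF}+\frakB)$; and this bound holds trivially when $\abs{\tilde\chi}<2\epsilon\frakB$. Finally, taking $\ket{\tilde 0_L}$ and $\ket{\tilde 1_L}$ to be eigenvectors of the Hermitian operator $\mE^\dagger_{\StoL}(H_S)$ realizing its largest and smallest eigenvalues makes $\abs{\tilde\chi}=\Delta\big(\mE^\dagger_{\StoL}(H_S)\big)$, yielding the stated inequality; the $\epsilon\ll 1$, $\frakB\ll\sqrt{\frakF}$ asymptotics follow immediately. (The companion refinement \propref{prop:charge-KL-app} is obtained in exactly the same way from \propref{prop:charge-KL}.)

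\emph{Main obstacle.} The only real work is the bookkeeping in the second paragraph: one must walk through the proofs of \eqref{eq:xi-epsilon} and \eqref{eq:lower-QFI-3} in \appref{app:local-proof} and confirm that every manipulation — in particular the Cauchy--Schwarz and Hilbert--Schmidt estimates bounding $\partial_\theta\xi_\theta\big|_{\theta=0}$ against $-i\tilde\chi$ — survives replacing the special pair by a generic one, and that the object that naturally emerges is the off-diagonal commutator $\bra{\tilde 0_L}[\mE^\dagger_{\StoL}(H_S),\ketbra{\tilde 0_L}{\tilde 1_L}]\ket{\tilde 1_L}=\tilde\chi$ rather than anything $H_L$-dependent. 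Since those estimates are already pair-agnostic, no new inequality is needed; the content is entirely in recognizing that the choice of pair was an artifact of organizing the earlier proof.
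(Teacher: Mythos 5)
Your proposal is correct and is exactly the route the paper takes: it explicitly declines to spell out this proof, remarking only that it follows from Proposition~\ref{prop:charge-metrology} by replacing $\ket{0_L},\ket{1_L}$ with an orthonormal pair achieving the extremal expectations of $\mE^\dagger_{\StoL}(H_S)$. Your write-up supplies the bookkeeping the paper omits — in particular the check that the derivations of \eqref{eq:xi-epsilon} and \eqref{eq:lower-QFI-3} never use the $H_L$-eigenstate property of the pair (only part (1) of Lemma~\ref{lemma:local}, not invoked here, does) and that $\frakB$ is pair-independent — and reaches the same conclusion.
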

We omit the proofs of \propref{prop:charge-KL-app} and \propref{prop:charge-metrology-app} here, as they follow directly from the proofs of \propref{prop:charge-KL} and \propref{prop:charge-metrology} where we replace $\ket{0_L}$ and $\ket{1_L}$ with orthonormal $\ket{\tilde{\tilde{0}}_L}$ and $\ket{\tilde{\tilde{1}}_L}$ which satisfy
\begin{equation}
    \Delta((\mE_{\LtoS})^\dagger(H_S))  = \braket{\tilde{\tilde{0}}_L|(\mE_{\LtoS})^\dagger(H_S)|\tilde{\tilde{0}}_L} - \braket{\tilde{\tilde{1}}_L|(\mE_{\LtoS})^\dagger(H_S)|\tilde{\tilde{1}}_L}.  
\end{equation}

\section{An inequality between \texorpdfstring{$\frakJ(\mN_S,H_S)$}{J} and \texorpdfstring{$\frakF(\mN_S,H_S)$}{F}}
\label{app:inequality}

Here, we prove a simple inequality of $\frakJ$ and $\frakF$. 
\begin{lemma}
Suppose the HKS condition is satisfied. Then it holds that $\frakJ(\mN_S,H_S)^2 \geq \frakF(\mN_S,H_S)$. 
\end{lemma}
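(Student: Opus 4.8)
The plan is to compare the two variational quantities directly by exhibiting the same optimizer $h$ on both sides. Recall
\[
\frakJ(\mN_S,H_S) = \min_{h:\,H_S = \sum_{ij} h_{ij} K_{S,i}^\dagger K_{S,j}} \Delta h,
\qquad
\frakF(\mN_S,H_S) = 4\min_{h:\,H_S = \sum_{ij} h_{ij} K_{S,i}^\dagger K_{S,j}} \Bigl\|\sum_{ij}(h^2)_{ij} K_{S,i}^\dagger K_{S,j} - H_S^2\Bigr\|,
\]
where in both cases $h$ ranges over Hermitian matrices satisfying the HKS representation $H_S = \sum_{ij} h_{ij} K_{S,i}^\dagger K_{S,j}$. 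Let $h^\star$ be an optimizer for $\frakJ$, so $\Delta h^\star = \frakJ$ and $H_S = \sum_{ij} h^\star_{ij} K_{S,i}^\dagger K_{S,j}$. Since $h^\star$ is also feasible for the $\frakF$ program, it suffices to bound $4\bigl\|\sum_{ij}(h^{\star 2})_{ij} K_{S,i}^\dagger K_{S,j} - H_S^2\bigr\| \le \frakJ^2$.

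The key step is to interpret $\sum_{ij}(h^{\star 2})_{ij} K_{S,i}^\dagger K_{S,j}$ and $H_S^2 = \sum_{ij,kl} h^\star_{ij} h^\star_{kl} K_{S,i}^\dagger K_{S,j} K_{S,k}^\dagger K_{S,l}$ as expectation-type expressions in a purifying/Stinespring picture. Write $\vK = (K_{S,1},\dots,K_{S,r})^T$ so that $K_{S,i}^\dagger K_{S,j} = (\vK^\dagger)_i (\vK)_j$ acting on the code space, and note $\sum_i K_{S,i}^\dagger K_{S,i} = \id$. Then $H_S = \vK^\dagger h^\star \vK$ and $\sum_{ij}(h^{\star 2})_{ij} K_{S,i}^\dagger K_{S,j} = \vK^\dagger (h^\star)^2 \vK$, while $H_S^2 = \vK^\dagger h^\star \vK \vK^\dagger h^\star \vK$. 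The difference is $\vK^\dagger h^\star(\id - \vK\vK^\dagger) h^\star \vK$, and since $\vK\vK^\dagger$ is the Stinespring isometry projector (a projection), $\id - \vK\vK^\dagger \ge 0$, so this difference is positive semidefinite and moreover $\vK^\dagger h^\star(\id - \vK\vK^\dagger)h^\star\vK \le \vK^\dagger h^\star \, h^\star \vK$ is not quite the bound I want — instead I bound its norm: for any unit vector $\ket{\phi}$ in the code space, $\bra\phi \vK^\dagger h^\star(\id-\vK\vK^\dagger)h^\star\vK\ket\phi = \|(\id-\vK\vK^\dagger)h^\star\vK\ket\phi\|^2 \le \|h^\star\vK\ket\phi\|^2 - (\text{something})$; more simply, since we may assume (by shifting $H_S$ by a constant) that $\|h^\star\| = \Delta h^\star/2 = \frakJ/2$, we get $\|h^\star \vK\ket\phi\| \le (\frakJ/2)\|\vK\ket\phi\| = \frakJ/2$ (using $\vK^\dagger\vK = \id$), hence $\|\vK^\dagger h^\star(\id-\vK\vK^\dagger)h^\star\vK\| \le (\frakJ/2)^2 = \frakJ^2/4$. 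Multiplying by $4$ gives $\frakF \le 4\cdot \frakJ^2/4 = \frakJ^2$, as desired.

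The main obstacle — and the point requiring care — is the constant-shift normalization: the expression $\sum_{ij}(h^2)_{ij}K_{S,i}^\dagger K_{S,j} - H_S^2$ is \emph{not} invariant under $h \mapsto h - \nu\id$ (equivalently $H_S \mapsto H_S - \nu\id$), since squaring does not commute with shifting, whereas $\Delta h$ is. I must therefore verify that the $\frakF$ program, as written with a fixed $H_S$, still admits $h^\star$ shifted so that $\|h^\star\| = \Delta h^\star/2$ as a feasible point — this is fine because the feasibility constraint $H_S = \sum h_{ij}K_{S,i}^\dagger K_{S,j}$ together with $\sum_i K_{S,i}^\dagger K_{S,i} = \id$ means $h^\star$ and $h^\star - \nu\id$ represent $H_S$ and $H_S - \nu\id$ respectively, so I should instead run the argument with $H_S$ replaced by $H_S - \nu\id$ on both sides, note $\Delta(\cdot)$ and hence $\frakJ$ are unchanged, observe that $\frakF$ is likewise computed relative to whatever representative one picks but the inequality $\frakF \le \frakJ^2$ I derive holds for the shifted version and the claimed lemma should be read (as it is used in the text, e.g. in \appref{app:inequality}'s application) up to this standard normalization convention. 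A clean write-up will fix $\nu$ at the outset so that $\max_i(\text{eig of }h^\star) = -\min_i(\text{eig of }h^\star) = \frakJ/2$ exactly as done in the proof of \propref{prop:charge-KL}, then carry the short positivity/norm computation above.
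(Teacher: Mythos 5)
Your argument is correct and is essentially the same as the paper's: both express the difference as $\vK^\dagger h^2 \vK - H_S^2 = \vK^\dagger h(\id - \vK\vK^\dagger)h\vK \geq 0$, bound its operator norm by $\|h\|^2$ using that $\vK$ is an isometry, and invoke the shift normalization $\|h\| = \Delta h/2$. The concluding hedge about reading the lemma ``up to normalization'' is unnecessary---$\frakF$ is genuinely shift-invariant, both algebraically (for feasible $h$, $\vK^\dagger(h-\nu\id)^2\vK - (H_S-\nu\id)^2 = \vK^\dagger h^2\vK - H_S^2$ since the cross-terms cancel via $\vK^\dagger h\vK = H_S$ and $\vK^\dagger\vK=\id$) and operationally ($\frakF = \barF(\mN_{S,\theta})$ depends only on the channel $\mU_{S,\theta}$, which is unchanged by a constant shift of $H_S$)---so the inequality holds exactly as stated; also, ``unit vector in the code space'' should read ``unit vector in the physical Hilbert space $\mH_S$,'' since the code subspace plays no role in the definitions of $\frakJ$ and $\frakF$.
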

\begin{proof}
Let $\vK_S = \begin{pmatrix}K_{S,1}\\ \vdots \\ K_{S,r}\end{pmatrix}$. According to the definitions \eqref{eq:def-frakJ} and \eqref{eq:def-frakF}, 
\begin{equation}
\frakJ(\mN_S,H_S)^2 = \min_{h:\vK_S^\dagger h\vK_S = H_S} (\Delta h)^2
= 4  \min_{h,\nu:\vK_S^\dagger h\vK_S = H_S - \nu \id}  \norm{h^2}, 
\end{equation}
and 
\begin{equation}
\frakF(\mN_S,H_S) = \barF(\mN_{S,\theta}) = 4 \min_{h:\vK_S^\dagger h\vK_S = H_S} \norm{\vK_S^\dagger h^2 \vK_S - H_S^2} = 4 \min_{h,\nu:\vK_S^\dagger h\vK_S = H_S -\nu \id} \norm{\vK_S^\dagger h^2 \vK_S - (H_S -\nu \id)^2}. 
\end{equation}
Then it is clear that $\frakF(\mN_S,H_S) \leq \frakJ(\mN_S,H_S)^2$ because \begin{equation}
    \norm{\vK_S^\dagger h^2 \vK_S^\dagger - (H_S -\nu \id)^2} \leq \norm{\vK_S^\dagger h^2 \vK_S} \leq \norm{h^2}\norm{\vK_S\vK_S^\dagger} = \norm{h^2}. 
\end{equation} 
\end{proof}

\section{Upper bounds on \texorpdfstring{$\frakJ(\mN_S,H_S)$}{J(N\_S,H\_S)} for random local noise and independent noise}
\label{app:frakJ}

In this section, we prove two useful upper bounds (\eqref{eq:frakJ-local} and \eqref{eq:frakJ-independent}) on $\frakJ$ for random local noise and independent noise, respectively.

\begin{lemma}
Let $\mN_S = \sum_{i=1}^m q_i \mN_S^{(i)}$ and $H_S = \sum_i^m H_S^{(i)}$ where $q_i > 0$ and $\sum_i^m q_i = 1$. Suppose the HKS condition is satisfied for each pair of $\mN_S^{(i)}$ and $H_S^{(i)}$. Then \eqref{eq:frakJ-local} holds, i.e.,  
\begin{equation}
\frakJ(\mN_S,H_S) \leq \max_i \frac{1}{q_i} \frakJ(\mN_S^{(i)},H_S^{(i)}). 
\end{equation}
\end{lemma}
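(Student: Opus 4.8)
The plan is to reduce the bound for the mixed noise channel to the bounds for the individual components by exhibiting an explicit feasible point of the minimization defining $\frakJ(\mN_S,H_S)$, built from optimal feasible points for each $\frakJ(\mN_S^{(i)},H_S^{(i)})$. First I would fix, for each $i$, a Hermitian matrix $h^{(i)}$ achieving (or approaching) the minimum in $\frakJ(\mN_S^{(i)},H_S^{(i)}) = \min_{h:\, H_S^{(i)} = \sum h_{ab}K_{S,i;a}^\dagger K_{S,i;b}} \Delta h$, and, using the freedom to add a constant shift to $H_S^{(i)}$ (which changes neither $\frakJ$ nor the claimed inequality), normalize so that the eigenvalues of $h^{(i)}$ lie symmetrically about zero, i.e.\ $\norm{h^{(i)}} = \Delta h^{(i)}/2 = \frakJ(\mN_S^{(i)},H_S^{(i)})/2$.

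Next I would recall that the Kraus operators of the mixed channel $\mN_S = \sum_i q_i \mN_S^{(i)}$ are $\sqrt{q_i}\,K_{S,i;a}$ (ranging over $i$ and $a$), so that $K^\dagger K$ terms of $\mN_S$ are exactly $q_i q_j K_{S,i;a}^\dagger K_{S,j;b}$. Now define a block-diagonal Hermitian matrix $h$ whose $(i,i)$ block is $\tfrac{1}{q_i}h^{(i)}$ and whose off-diagonal blocks are zero. Then
\begin{equation}
\sum_{(i,a),(j,b)} h_{(i,a),(j,b)}\,(\sqrt{q_i}K_{S,i;a})^\dagger(\sqrt{q_j}K_{S,j;b})
= \sum_i q_i \cdot \frac{1}{q_i}\sum_{a,b} h^{(i)}_{ab} K_{S,i;a}^\dagger K_{S,i;b}
= \sum_i H_S^{(i)} = H_S,
\end{equation}
so $h$ is feasible for the problem defining $\frakJ(\mN_S,H_S)$. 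Since $h$ is block diagonal, its spectrum is the union of the spectra of the blocks $\tfrac{1}{q_i}h^{(i)}$, hence $\norm{h} = \max_i \tfrac{1}{q_i}\norm{h^{(i)}} = \max_i \tfrac{1}{2q_i}\frakJ(\mN_S^{(i)},H_S^{(i)})$, and therefore $\Delta h \le 2\norm{h} = \max_i \tfrac{1}{q_i}\frakJ(\mN_S^{(i)},H_S^{(i)})$. Taking the minimum over all feasible $h$ on the left gives $\frakJ(\mN_S,H_S) \le \max_i \tfrac{1}{q_i}\frakJ(\mN_S^{(i)},H_S^{(i)})$, as claimed. (If the individual minima are not attained, run the same argument with $h^{(i)}$ chosen within $\varepsilon$ of the infimum and let $\varepsilon\to 0$.)

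The only genuinely delicate point — and thus the main obstacle to state carefully rather than a hard computation — is the bookkeeping of the Kraus-operator index sets: one must be sure that the feasible matrices for the composite problem range over \emph{all} Hermitian matrices indexed by the full (product/union) index set of $\mN_S$'s Kraus operators, so that the block-diagonal ansatz is legitimate, and that the normalization $\norm{h^{(i)}} = \Delta h^{(i)}/2$ really is available via a constant shift of $H_S^{(i)}$ without disturbing the constraint $H_S = \sum_i H_S^{(i)}$ (it is, since $\sum_i (H_S^{(i)} - \nu_i\id) = H_S - (\sum_i\nu_i)\id$ and overall constant shifts of $H_S$ are immaterial to $\frakJ(\mN_S,H_S)$). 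Everything else is immediate from the definition \eqref{eq:def-frakJ} and the fact that the spectrum of a block-diagonal Hermitian matrix is the union of the block spectra. The analogous statement \eqref{eq:frakJ-independent} for independent noise would be proved by the same strategy, except that there the Kraus operators of $\bigotimes_l \mN_{S_l}$ are tensor products $\bigotimes_l K_{S_l,a_l}$, and the feasible ansatz is $h = \sum_l \id\otimes\cdots\otimes h^{(l)}\otimes\cdots\otimes\id$ (a sum of local terms), whose extremal eigenvalues add, giving $\Delta h \le \sum_l \Delta h^{(l)}$ and hence $\frakJ(\mN_S,H_S)\le \sum_l \frakJ(\mN_{S_l},H_{S_l})$.
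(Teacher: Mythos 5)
Your proof is correct and takes essentially the same approach as the paper's: both exhibit the same block-diagonal feasible point $h$ with $(i,i)$-block $h^{(i)}/q_i$ for the minimization defining $\frakJ(\mN_S,H_S)$, after using constant shifts of each $H_S^{(i)}$ (equivalently, the $\nu$-variable in the paper's $\min_{h,\nu}\norm{h-\nu\id}$ reformulation) to normalize $\norm{h^{(i)}}=\Delta h^{(i)}/2$. One minor prose slip: $(\sqrt{q_i}K_{S,i;a})^\dagger(\sqrt{q_j}K_{S,j;b})=\sqrt{q_iq_j}\,K_{S,i;a}^\dagger K_{S,j;b}$, not with coefficient $q_iq_j$, though your displayed computation only uses the diagonal blocks $i=j$ where the coefficient is indeed $q_i$, so the conclusion is unaffected.
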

\begin{proof}
Let $\mN_{S}^{(i)}(\cdot) = \sum_{j=1}^{r_i} K_{S,j}^{(i)} (\cdot) K_{S,j}^{(i)\dagger}$. 
Then we could take $\vK_S = \begin{pmatrix}\sqrt{q_1}\vK^{(1)}\\\sqrt{q_2}\vK^{(2)}\\\vdots\\\sqrt{q_m}\vK^{(m)}\end{pmatrix}$, where $\vK^{(i)} = \begin{pmatrix}K_{S,1}^{(i)}\\\vdots \\K_{S,r_i}^{(i)}\end{pmatrix}$. Recall that 
\begin{align}
\frakJ(\mN_S,H_S) &= \min_{h:\vK_S^\dagger h\vK_S = H_S} (\Delta h)
= 4  \min_{h,\nu:\vK_S^\dagger h\vK_S = H_S - \nu \id}  \norm{h},\\
\frakJ(\mN_S^{(i)},H_S^{(i)}) &= \min_{h^{(i)}:\vK^{(i)\dagger} h^{(i)} \vK^{(i)} = H_S^{(i)}} (\Delta h^{(i)})
= 4 \min_{h^{(i)},\nu^{(i)}:\vK^{(i)\dagger} h^{(i)} \vK^{(i)} = H_S^{(i)}-\nu^{(i)}\id}  \norm{h^{(i)}}. 
\end{align}
Suppose $(h^{(i)}_*,\nu^{(i)}_*)$ is optimal for $\frakJ(\mN_S^{(i)},H_S^{(i)})$, then let $\nu_* = \sum_i \nu^{(i)}_*$ and 
\begin{equation}
h = \begin{pmatrix}
\frac{h^{(1)}_*}{q_1} & & & \\
& \frac{h^{(2)}_*}{q_2} & & \\
& & \ddots & \\
& & & \frac{h^{(m)}_*}{q_m}, 
\end{pmatrix}, 
\end{equation}
We must have $\vK_S^\dagger h_* \vK_S = \sum_i \vK_S^{(i)\dagger} h_* \vK_S^{(i)} = H_S - \nu_*\id$. Therefore, 
\begin{equation}
\frakJ(\mN_S,H_S) = 4  \min_{h,\nu:\vK_S^\dagger h\vK_S = H_S - \nu \id}  \norm{h} \leq 4 \norm{h_*} = 4 \max_i \frac{1}{q_i}\norm{h^{(i)}_*} = \max_i \frac{1}{q_i} \frakJ(\mN_S^{(i)},H_S^{(i)}). 
\end{equation}
\end{proof}

\begin{lemma}
Let $\mN_S = \bigotimes_{l=1}^n \mN_{S_l}$ and $H_S = \sum_{l=1}^n H_{S_l}$. Suppose the HKS condition is satisfied for each pair of $\mN_{S_l}$ and $H_{S_l}$. Then \eqref{eq:frakJ-independent} holds, i.e.,  
\begin{equation}
\frakJ(\mN_S,H_S) \leq \sum_{l=1}^n \frakJ(\mN_{S_l},H_{S_l}). 
\end{equation}
\end{lemma}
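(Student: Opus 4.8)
The plan is to reuse the explicit construction from the random-local-noise lemma: build a feasible Hermitian $h$ for the optimization defining $\frakJ(\mN_S,H_S)$ out of the optimizers $h^{(l)}_\star$ of the local problems, then read off its spectral spread. First I would fix, for each $l$, a Kraus representation $\mN_{S_l}(\cdot) = \sum_{j=1}^{r_l} K^{(l)}_j(\cdot)K^{(l)\dagger}_j$ and stack it into $\vK^{(l)} = \begin{pmatrix}K^{(l)}_1 \\ \vdots \\ K^{(l)}_{r_l}\end{pmatrix}$, so that $\vK^{(l)\dagger}\vK^{(l)} = \sum_j K^{(l)\dagger}_j K^{(l)}_j = \id$ by trace preservation. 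The Kraus operators of $\mN_S = \bigotimes_{l} \mN_{S_l}$ are then the tensor products $K^{(1)}_{i_1}\otimes\cdots\otimes K^{(n)}_{i_n}$, i.e.\ $\vK_S = \vK^{(1)}\otimes\cdots\otimes\vK^{(n)}$ under the natural identification of index spaces. Using the reformulation $\frakJ(\mN_{S_l},H_{S_l}) = \min\{\Delta h^{(l)} : \vK^{(l)\dagger} h^{(l)}\vK^{(l)} = H_{S_l}-\nu^{(l)}\id,\ h^{(l)} \text{ Hermitian},\ \nu^{(l)}\in\bR\}$ (the constant shift is harmless because $\vK^{(l)\dagger}\vK^{(l)}=\id$), and invoking the HKS condition for $(\mN_{S_l},H_{S_l})$ to guarantee feasibility, I would pick an optimizer $(h^{(l)}_\star,\nu^{(l)}_\star)$ with $\Delta h^{(l)}_\star = \frakJ(\mN_{S_l},H_{S_l})$.

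Next I would set $h := \sum_{l=1}^n \id\otimes\cdots\otimes h^{(l)}_\star\otimes\cdots\otimes\id$, acting as $h^{(l)}_\star$ on the $l$-th Kraus-index factor and as the identity on the others, and $\nu := \sum_l \nu^{(l)}_\star$. A direct component-wise computation gives, for each summand,
\begin{equation}
\vK_S^\dagger\big(\id\otimes\cdots\otimes h^{(l)}_\star\otimes\cdots\otimes\id\big)\vK_S = \id_{S_1}\otimes\cdots\otimes\big(H_{S_l}-\nu^{(l)}_\star\id\big)\otimes\cdots\otimes\id_{S_n} = H_{S_l}-\nu^{(l)}_\star\id,
\end{equation}
where I used $\vK^{(m)\dagger}\vK^{(m)}=\id$ for $m\neq l$, $\vK^{(l)\dagger}h^{(l)}_\star\vK^{(l)}=H_{S_l}-\nu^{(l)}_\star\id$, and the fact that $H_{S_l}$ acts nontrivially only on $S_l$. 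Summing over $l$ yields $\vK_S^\dagger h\vK_S = H_S - \nu\id$, and since $\nu\id = \nu\,\vK_S^\dagger\vK_S$, the operator $h+\nu\id$ is feasible for $\frakJ(\mN_S,H_S)$ with $\Delta(h+\nu\id)=\Delta h$. Finally, the $n$ summands defining $h$ act on mutually disjoint tensor factors, hence commute, so the largest (resp.\ smallest) eigenvalue of $h$ is the sum of the largest (resp.\ smallest) eigenvalues of the $h^{(l)}_\star$; therefore $\Delta h = \sum_l \Delta h^{(l)}_\star = \sum_l \frakJ(\mN_{S_l},H_{S_l})$, and $\frakJ(\mN_S,H_S)\le \Delta h = \sum_{l=1}^n \frakJ(\mN_{S_l},H_{S_l})$.

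I do not anticipate a genuine obstacle here; the construction is parallel to the random-local-noise case already proven. The only points deserving a line of care are: (i) confirming that the tensor-product Kraus list is exactly $\vK^{(1)}\otimes\cdots\otimes\vK^{(n)}$ and that $\vK^{(l)\dagger}\vK^{(l)}=\id$ follows from each $\mN_{S_l}$ being trace preserving (hence so is $\mN_S$, which is what legitimizes the constant-shift reformulation of $\frakJ$ and the collection of the $\nu^{(l)}_\star$); and (ii) the elementary fact that $\Delta(\cdot)$ is additive over a sum of commuting operators supported on distinct tensor factors. Both are routine.
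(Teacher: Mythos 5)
Your proof is correct and takes essentially the same approach as the paper: both construct the feasible candidate $h = \sum_l \id\otimes\cdots\otimes h^{(l)}_\star\otimes\cdots\otimes\id$ from the local optimizers, verify feasibility via the tensor-product structure of the Kraus index space (using trace preservation on the uninvolved factors), and read off the spectral spread by additivity of $\Delta(\cdot)$ over commuting summands on disjoint factors. The only cosmetic difference is that you work directly with $\Delta h$ while the paper recenters $h$ and works with the operator norm (and in doing so has a minor, self-consistent typo writing "$4\min\norm{h}$" where the correct constant is $2$); your formulation sidesteps that.
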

\begin{proof}
Let $\mN_{S_l}(\cdot) = \sum_{j=1}^{r_l} K_{S_l,j} (\cdot) K_{S_l,j}^{\dagger}$ and $\vK_{S_l} = \begin{pmatrix}K_{S_l,1}\\\vdots\\ K_{S_l,r_l}\end{pmatrix}$. 
Then we could take $\vK_S = \bigotimes_{l=1}^n \vK_{S_l}$.

Recall that 
\begin{align}
\frakJ(\mN_S,H_S) &= \min_{h:\vK_S^\dagger h\vK_S = H_S} (\Delta h)
= 4  \min_{h,\nu:\vK_S^\dagger h\vK_S = H_S - \nu \id}  \norm{h},\\
\frakJ(\mN_{S_l},H_{S_l}) &= \min_{h_l:\vK_{S_l}^{\dagger} h_l \vK_{S_l} = H_{S_l}} (\Delta h_l)
= 4 \min_{h_l,\nu_l:\vK_{S_l}^{\dagger} h_l \vK_{S_l} = H_{S_l}-\nu_l\id}  \norm{h_l}. 
\end{align}
Suppose $(h_{l,*},\nu_{l,*})$ is optimal for $\frakJ(\mN_{S_l},H_{S_l})$, then let $\nu_* = \sum_l \nu_{l,*}$ and 
\begin{equation}
    h_* = \sum_{l=1}^n \id \otimes \cdots \underbrace{h_l}_{\text{$l$-th}} \otimes \cdots \otimes \id. 
\end{equation}
We must have $\vK_S^\dagger h_* \vK_S = \sum_l H_{S_l} - \nu_{l,*}\id = H_S - \nu_*\id$. Therefore, 
\begin{equation}
\frakJ(\mN_S,H_S) = 4  \min_{h,\nu:\vK_S^\dagger h\vK_S = H_S - \nu \id}  \norm{h} \leq 4 \norm{h_*} = 4 \sum_l \norm{h_{l,*}} = \sum_{l=1}^n \frakJ(\mN_{S_l},H_{S_l}).
\end{equation}
\end{proof}

\section{Computing the QEC inaccuracy of modified thermodynamic codes}
\label{app:thermo}

Let $\mN_S = \sum_{l=1}^n \frac{1}{n} \mN_{S_l}$, where $\mN_{S_l}(\cdot) = \ket{\vac}\bra{\vac}_{S_l} \otimes \trace_{S_l}(\cdot)$. 
Since for erasure channels the noise locations are effectively known (simply by measuring $\ket{\vac}$), to compute $\epsilon = \min_{\mR_\StoL} P(\mR_\StoL\circ\mN_S\circ\mE_\LtoS,\id_L)$, we could equivalently replace $\mN_S$ with the completely erasure noise on the first qubit $\mN_{S_1}$ and then  $\widehat\mN_{S\rightarrow B}(\cdot) = \sum_{i,j=0}^1 \trace(\braket{i_{S_1}|(\cdot)|j_{S_1}}) \ket{i_B}\bra{j_B}$. Note that 
\begin{align}
\ket{0_L} &=  \sqrt{\frac{n+m}{2(n+qm)}}\ket{1}\ket{(m-1)_{n-1}} +  \ket{0} \left( \sqrt{\frac{n-m}{2(n+qm)}} \ket{(m+1)_{n-1}}+ \sqrt{\frac{qm}{n+qm}}\ket{0^{\otimes n-1}} \right)\\
\ket{1_L} &= \ket{1}\left( \sqrt{\frac{n-m}{2(n+qm)}} \ket{(-m-1)_{n-1}}+ \sqrt{\frac{qm}{n+qm}}\ket{1^{\otimes n-1}} \right) + \sqrt{\frac{n+m}{2(n+qm)}}\ket{0}\ket{(-m+1)_{n-1}}
\end{align}
Then (for simplicity, we sometimes omit the subscripts $L$, $S$, $\LtoS$ and $\StoL$)
\begin{gather}
\widehat{\mN\circ\mE}(\ket{0_L}\bra{0_L}) = \widehat{\mN}(\ket{\frakc_0^q}\bra{\frakc_0^q}) = \frac{n+(2q-1)m}{2(n+qm)}\ket{0_B}\bra{0_B} + \frac{n+m}{2(n+qm)}\ket{1_B}\bra{1_B} =: \rho_0,\\
\widehat{\mN\circ\mE}(\ket{1_L}\bra{1_L}) = \widehat{\mN}(\ket{\frakc_1^q}\bra{\frakc_1^q}) = \frac{n+(2q-1)m}{2(n+qm)}\ket{1_B}\bra{1_B} + \frac{n+m}{2(n+qm)}\ket{0_B}\bra{0_B} =: \rho_1,\\
\widehat{\mN\circ\mE}(\ket{0_L}\bra{1_L}) = \widehat{\mN\circ\mE}(\ket{1_L}\bra{0_L}) = 0.
\end{gather}
Then let $\ket{\psi} = \psi_0\ket{0_L}\ket{\psi^0_R} + \psi_1\ket{1_L}\ket{\psi^1_R}$ be an arbitrary pure state on $L\otimes R$. We have 
\begin{align}
\epsilon &= 
\min_{\mR'} P(\widehat{\mN\circ\mE},\mR'\circ \widehat{\id})\\
& = \min_{\zeta} \max_{\ket{\psi}} P\left(|\psi_0|^2\rho_0 \otimes \ket{\psi^0_R}\bra{\psi^0_R}+|\psi_1|^2\rho_1 \otimes \ket{\psi^1_R}\bra{\psi^1_R},\zeta \otimes  (|\psi_0|^2\ket{\psi^0_R}\bra{\psi^0_R}+|\psi_1|^2\ket{\psi^1_R}\bra{\psi^1_R})\right)\\
& \geq \min_{\zeta}\max_{|\psi_0|^2+|\psi_1|^2 = 1} P\left(|\psi_0|^2\rho_0+|\psi_1|^2\rho_1,\zeta\right) \\&\geq \min_{\zeta}\frac{1}{2}\left(P(\rho_0,\zeta)+P(\rho_1,\zeta)\right) \\ &\geq \frac{1}{2}P(\rho_0,\rho_1)  \\ &= \frac{(1-q)m}{2(n+qm)}=\frac{(1-q)m}{2n} + O\left(\frac{m^2}{n^2}\right), 
\end{align}
where we used the monotonicity of purified distance for the third line and the triangular inequality for the fifth line. On the other hand, we have the following upper bound:  
\begin{align}
\label{eq:upper} 
\epsilon &= 
\min_{\mR'} P(\widehat{\mN\circ\mE},\mR'\circ \widehat{\id})\\
& = \min_{\zeta} \max_{\ket{\psi}} P\left(|\psi_0|^2\rho_0 \otimes \ket{\psi^0_R}\bra{\psi^0_R}+|\psi_1|^2\rho_1 \otimes \ket{\psi^1_R}\bra{\psi^1_R},\zeta \otimes  (|\psi_0|^2\ket{\psi^0_R}\bra{\psi^0_R}+|\psi_1|^2\ket{\psi^1_R}\bra{\psi^1_R})\right)\\
& \leq \max_{\ket{\psi}} P\left(|\psi_0|^2\rho_0 \otimes \ket{\psi^0_R}\bra{\psi^0_R}+|\psi_1|^2\rho_1 \otimes \ket{\psi^1_R}\bra{\psi^1_R},\zeta \otimes  (|\psi_0|^2\ket{\psi^0_R}\bra{\psi^0_R}+|\psi_1|^2\ket{\psi^1_R}\bra{\psi^1_R})\right)\big|_{\zeta = \frac{\rho_0+\rho_1}{2}}\\
& \leq \max_{|\psi_0|^2+|\psi_1|^2 = 1} \sqrt{1 - \left(\abs{\psi_0}^2 f(\rho_0,\zeta) + \abs{\psi_1}^2 f(\rho_1,\zeta)\right)^2} \bigg|_{\zeta = \frac{\rho_0+\rho_1}{2}}\\ &= P\left(\rho_1,\frac{1}{2}(\rho_0+\rho_1)\right) \\&= \sqrt{\frac{1}{2} - \frac{\sqrt{(n+m)(n+(2q-1)m)}}{2(n+qm)}}= \frac{(1-q)m}{2n} + O\left(\frac{m^2}{n^2}\right), 
\end{align}
where we used the joint concavity of the fidelity~\cite{nielsen2002quantum} for the fourth line and the fact that $P(\rho_0,\zeta) = P(\rho_1,\zeta)$ when $\zeta = \frac{\rho_0+\rho_1}{2}$ for the fifth line. Therefore, we conclude that 
\begin{equation}
\epsilon(\mN_{S},\mE_{\LtoS}) = \frac{(1-q)m}{2n} + O\left(\frac{m^2}{n^2}\right). 
\end{equation}

We claim that an optimal recovery channel (up to the leading order in $m/n$) 
achieving the smallest $\epsilon$ is 
\begin{equation}
\begin{split}
\mR_{\StoL}^{\optL}
=\;& \sum_{l=1}^n\left(\ket{0_L}\bra{(m-1)_{n-1}}_{S\backslash S_l} + \ket{1_L}\bra{\varphi_1}_{S\backslash S_l}\right)\bra{\vac}_{S_l}(\cdot)\ket{\vac}_{S_l}\left(\ket{(m-1)_{n-1}}_{S\backslash S_l}\bra{0_L}+\ket{\varphi_1}_{S\backslash S_l}\bra{1_L}\right) \\
&+  \left(\ket{0_L}\bra{\varphi_{0}}_{S\backslash S_l} + \ket{1_L}\bra{(-m+1)_{n-1}}_{S\backslash S_l}\right)\bra{\vac}_{S_l}(\cdot)\ket{\vac}_{S_l}\left(\ket{\varphi_{0}}_{S\backslash S_l}\bra{0_L}+\ket{(-m+1)_{n-1}}_{S\backslash S_l}\bra{1_L}\right)\\
&+ \ket{0_L}\bra{0_L} \trace\left((\cdot)\tPi^\perp\right), 
\end{split}
\end{equation}
where 
\begin{align}
\ket{\varphi_1} &= \sqrt{\frac{n-m}{n+(2q-1)m}} \ket{(-m-1)_{n-1}}+ \sqrt{\frac{2qm}{n+(2q-1)m}}\ket{1^{\otimes n-1}}, \\
\ket{\varphi_0} &=  \sqrt{\frac{n-m}{n+(2q-1)m}} \ket{(m+1)_{n-1}}+ \sqrt{\frac{2qm}{n+(2q-1)m}}\ket{0^{\otimes n-1}},
\end{align} $\tPi$ is the projection onto ${\rm span}\{\ket{\vac}_{S_l}\ket{(m-1)_{n-1}}_{S\backslash S_l},\ket{\vac}_{S_l}\ket{(-m+1)_{n-1}}_{S\backslash S_l},\ket{\vac}_{S_l}\ket{\varphi_1}_{S\backslash S_l},\ket{\vac}_{S_l}\ket{\varphi_0}_{S\backslash S_l},\forall l\}$ and $\tPi^\perp$ is its orthogonal projector. We have 
\begin{gather}
\mR^{\optL}\circ\mN\circ\mE(\ket{0_L}\bra{0_L}) = \ket{0_L}\bra{0_L},\\
\mR^{\optL}\circ\mN\circ\mE(\ket{1_L}\bra{1_L}) = \ket{1_L}\bra{1_L},\\
\mR^{\optL}\circ\mN\circ\mE(\ket{0_L}\bra{1_L}) = \sqrt{\frac{(n+m)(n+(2q-1)m)}{(n+qm)^2}}\ket{0_L}\bra{1_L}. 
\end{gather}
That is, $\mR^{\optL}\circ\mN\circ\mE$ is a dephasing channel. Using \lemmaref{lemma:dephasing}, we obtain
\begin{equation}
\label{eq:attainable}
P(\mR^{\optL}\circ\mN\circ\mE,\id) 
= \sqrt{\frac{1}{2} - \frac{\sqrt{(n+m)(n+(2q-1)m)}}{2(n+qm)}}
= \frac{(1-q)m}{2n} + O\left(\frac{m^2}{n^2}\right),
\end{equation}
and
\begin{equation}
D_\diamond(\mR^{\optL}\circ\mN\circ\mE,\id) 
= \frac{1}{2} - \frac{\sqrt{(n+m)(n+(2q-1)m)}}{2(n+qm)}
= \left(\frac{(1-q)m}{2n}\right)^2 + O\left(\frac{m^3}{n^3}\right). 
\end{equation}
Since 
\begin{equation}
\epsilon^2 \leq \epsilon_\diamond \leq D_\diamond(\mR^{\optL}\circ\mN\circ\mE,\id),
\end{equation}
we have $\epsilon_\diamond = \big(\frac{(1-q)m}{2n}\big)^2 + O\big(\frac{m^3}{n^3}\big)$.

\section{Refinement of 
{Theorem~23}
}
\label{app:refine}

Here, we provide a refinement of \thmref{thm:local-point} by replacing the local covariance violation $\delta_\point$ with a quantity $\delta_\point^\star$ which characterizes the local covariance violation under noise and recovery. Then we show that $\delta_\point^\star$ could be much more smaller than $\delta_\point$, which provides one explanation of the looseness of \thmref{thm:local-point}. Specifically, 
\begin{equation}
\delta^\star_\point
:= \min_{\substack{\mR_{\StoL}:\\P(\mR_{\StoL}\circ\mN_S\circ\mE_{\LtoS},\id_L) = \epsilon}}
\sqrt{F(\mR_{\StoL}\circ\mN_{S}\circ\mU_{S,\theta}\circ\mE_{\LtoS}\circ\mU_{L,\theta}^\dagger)\big|_{\theta = 0}}.
\end{equation}
Clearly, $\delta^\star_\point \leq \delta_\point$ because of the monotonicity of the QFI. And we have the following refinement of \thmref{thm:local-point}: 
\begin{theorem}
\label{thm:local-refine} 
\nonisometric
When $1 - 2\epsilon^2>\delta^\star_\point/\Delta H_L$, it holds that
\begin{equation}
\frac{\epsilon \sqrt{1-\epsilon^2}}{1 - 2\epsilon^2 - \delta^\star_\point/(\Delta H_L)} \geq \frac{\Delta H_L}{\sqrt{4 \frakF}}.
\end{equation}
When $\epsilon \ll 1$, we have 
\begin{equation}
\delta_\point^\star + 2\epsilon\sqrt{\frakF} \gtrsim \Delta H_L. 
\end{equation} 
\end{theorem}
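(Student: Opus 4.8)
The plan is to mirror the proof of \thmref{thm:local-point}, but replace the final application of the monotonicity of the channel QFI by a sharper version that keeps track of the recovery channel. Recall that in \thmref{thm:local-point} one builds the ancilla-assisted two-level encoding of \figref{fig:dephasing}, obtains a rotated dephasing channel $\mN_{C,\theta} = \mD_{C,\theta}\circ\mU_{C,\theta}$, and bounds $\barF(\mN_{C,\theta})|_{\theta=0}$ from below by $((1-2\epsilon^2)\Delta H_L - \delta_\point)^2/(4\epsilon^2(1-\epsilon^2))$ via \lemmaref{lemma:local}. The only place $\delta_\point$ enters is \eqref{eq:refine-1}: the estimate $F(\mD_{C,\theta})|_{\theta=0} \leq F(\mU_{S,\theta}\circ\mE_{\StoL}\circ\mU_{L,\theta}^\dagger)|_{\theta=0} = \delta_\point^2$, which throws away the recovery step $\mR^{\optL}_{\StoL}\circ\mN_S$ sitting in front of $\mU_{S,\theta}\circ\mE_{\LtoS}\circ\mU_{L,\theta}^\dagger$ inside $\mD_{C,\theta}$.

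First I would fix the recovery channel $\mR_{\StoL}$ to be one that \emph{simultaneously} attains $P(\mR_{\StoL}\circ\mN_S\circ\mE_{\LtoS},\id_L)=\epsilon$ and minimizes $\sqrt{F(\mR_{\StoL}\circ\mN_{S}\circ\mU_{S,\theta}\circ\mE_{\LtoS}\circ\mU_{L,\theta}^\dagger)|_{\theta=0}}$ over all such recovery channels; by definition this quantity equals $\delta_\point^\star$. Then I would rerun the construction of \lemmaref{lemma:local} with this $\mR_{\StoL}$ in place of $\mR^{\optL}_{\StoL}$. The inequality $\sqrt{p_{\theta=0}}\leq P(\mD_{C,\theta=0},\id_C)\leq\epsilon$ still holds since $\mR_{\StoL}$ is still $\epsilon$-optimal for QEC inaccuracy, giving $\abs{\xi_{\theta=0}}\geq 1-2\epsilon^2$. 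For the derivative bound, the monotonicity of the channel QFI now gives the \emph{refined} estimate $F(\mD_{C,\theta})|_{\theta=0} \leq F(\mR_{\StoL}\circ\mN_S\circ\mU_{S,\theta}\circ\mE_{\LtoS}\circ\mU_{L,\theta}^\dagger)|_{\theta=0} = (\delta_\point^\star)^2$, because $\mD_{C,\theta}$ is obtained from $\mR_{\StoL}\circ\mN_S\circ\mU_{S,\theta}\circ\mE_{\LtoS}\circ\mU_{L,\theta}^\dagger$ by pre- and post-composing with the parameter-independent channels $\mE_{\CtoLA}^\rep$ and $\mR_{\SAtoC}^\rep$. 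Feeding this into the same chain of inequalities as in \eqref{eq:refine-2} yields $\abs{\partial_\theta\xi_\theta}^2|_{\theta=0}\geq((1-2\epsilon^2)\Delta H_L-\delta_\point^\star)^2$ whenever $(1-2\epsilon^2)\Delta H_L\geq\delta_\point^\star$.

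Then I would close the argument exactly as in \thmref{thm:local-point}: by the monotonicity of the regularized QFI applied to the rotated dephasing channel $\mN_{C,\theta}$ concatenated inside $\mN_{S,\theta}$, and the closed form $\barF(\mN_{C,\theta})|_{\theta=0} = \abs{\partial_\theta\xi_\theta}^2|_{\theta=0}/(1-\abs{\xi_{\theta=0}}^2)$, we get
\begin{equation}
\frakF = \barF(\mN_{S,\theta}) \geq \barF(\mN_{C,\theta})\big|_{\theta=0} \geq \frac{((1-2\epsilon^2)\Delta H_L-\delta_\point^\star)^2}{4\epsilon^2(1-\epsilon^2)},
\end{equation}
where I used $\frakF \equiv \barF(\mN_{S,\theta})$ for all $\theta$. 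Rearranging gives $\frac{\epsilon\sqrt{1-\epsilon^2}}{1-2\epsilon^2-\delta_\point^\star/(\Delta H_L)}\geq \frac{\Delta H_L}{\sqrt{4\frakF}}$, which is the first displayed inequality. The asymptotic form $\delta_\point^\star + 2\epsilon\sqrt{\frakF}\gtrsim\Delta H_L$ follows by taking $\epsilon\ll 1$, where $\sqrt{1-\epsilon^2}\to 1$ and $1-2\epsilon^2-\delta_\point^\star/\Delta H_L \to 1-\delta_\point^\star/\Delta H_L$ to leading order, so $2\epsilon\sqrt{\frakF}\gtrsim\Delta H_L-\delta_\point^\star$. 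The monotonicity $\delta_\point^\star\leq\delta_\point$ (immediate from the data-processing inequality for the QFI, since adding the parameter-independent channel $\mR_{\StoL}\circ\mN_S$ can only decrease the QFI) shows this is genuinely a refinement of \thmref{thm:local-point}.

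The main obstacle, and really the only subtle point, is making sure the ``optimal recovery'' in the definition of $\delta_\point^\star$ can actually be used: one must verify that the set of $\mR_{\StoL}$ attaining $P(\mR_{\StoL}\circ\mN_S\circ\mE_{\LtoS},\id_L)=\epsilon$ is nonempty and compact (so the minimum defining $\delta_\point^\star$ is attained), and that for this minimizer the derivative estimate in \lemmaref{lemma:local} — which was written for the specific choice $\mR^{\optL}_{\StoL}$ — goes through verbatim. In fact nothing in the proof of \eqref{eq:refine-2} used any property of $\mR^{\optL}_{\StoL}$ beyond its $\epsilon$-optimality for QEC inaccuracy and the monotonicity step, so the substitution is routine; the only genuinely new ingredient is recognizing that replacing $F(\mU_{S,\theta}\circ\mE_{\StoL}\circ\mU_{L,\theta}^\dagger)$ with $F(\mR_{\StoL}\circ\mN_S\circ\mU_{S,\theta}\circ\mE_{\LtoS}\circ\mU_{L,\theta}^\dagger)$ is legitimate because $\mD_{C,\theta}$ is a parameter-independent processing of the latter. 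I would also remark (as the paper does in \secref{sec:case-study} and the surrounding discussion) that the gap between $\delta_\point$ and $\delta_\point^\star$ can be $\Theta(\sqrt n)$ vs.\ $O(1)$ for the modified thermodynamic and quantum Reed--Muller codes, which is precisely what explains the looseness of \thmref{thm:local-point}.
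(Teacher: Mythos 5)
Your proposal is correct and takes essentially the same route as the paper: replace the monotonicity step $F(\mD_{C,\theta})|_{\theta=0}\leq F(\mU_{S,\theta}\circ\mE_{\LtoS}\circ\mU_{L,\theta}^\dagger)|_{\theta=0}$ in \lemmaref{lemma:local} with $F(\mD_{C,\theta})|_{\theta=0}\leq F(\mR_{\StoL}\circ\mN_{S,\theta}\circ\mE_{\LtoS}\circ\mU_{L,\theta}^\dagger)|_{\theta=0}=(\delta_\point^\star)^2$, keeping the recovery step inside the QFI, and then feed the refined bound into the same closing argument as \thmref{thm:local-point}. You are actually slightly more careful than the paper's own writeup on one point: the paper writes ``$\mR^{\optL}_{\StoL}$ is a recovery channel such that $\epsilon = P(\mR^{\optL}_{\StoL}\circ\mN_S\circ\mE_{\LtoS},\id_L)$'' and then asserts the equality $F(\mR^{\optL}_{\StoL}\circ\mN_{S,\theta}\circ\mE_{\LtoS}\circ\mU_{L,\theta}^\dagger)|_{\theta=0}=(\delta_\point^\star)^2$, which really requires choosing $\mR^{\optL}_{\StoL}$ to be the minimizer in the definition of $\delta_\point^\star$ (not an arbitrary $\epsilon$-achieving recovery) — exactly the clarification you make explicit.
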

\begin{proof}
In order to prove \thmref{thm:local-refine}, we only need to prove a refinement of \lemmaref{lemma:local} where $\delta_\point$ is replaced by $\delta^\star_\point$, i.e., $\xi_\theta = \bra{0_C}\mN_{C,\theta}(\ket{0_C}\bra{1_C})\ket{1_C}$ satisfies 
\begin{equation}
\label{eq:refine}
\abs{\partial_\theta\xi_\theta}^2\big|_{\theta = 0}  \geq  ((1-2\epsilon^2)\Delta H_L - \delta^\star_\point)^2,
\end{equation}
when $(1-2\epsilon^2)\Delta H_L \geq \delta_\point^\star$. The rest of the proof follows exactly the same from the proof of \thmref{thm:local-point}. Here 
\begin{align}
\mN_{C,\theta} = \mR^{\rep}_{\SAtoC} \circ ( \mR_{\StoL}^{\optL} \circ\mN_{S} \otimes \id_A) \circ  (\mU_{S,\theta}\circ \mE_{\LtoS} \otimes \id_A) \circ \mE_{\CtoLA}^{\rep},
\end{align}
where $\mR^{\optL}_{\StoL}$ is a recovery channel such that 
$\epsilon =  P(\mR^{\optL}_{\StoL}\circ \mN_{S}\circ\mE_{\LtoS}, \id_{L})$. To see that \eqref{eq:refine} must be true, we only need to revisit the proof of \lemmaref{lemma:local} in \appref{app:local-proof} and note that $\delta_\point$ in \eqref{eq:refine-1} and then in \eqref{eq:refine-2} can be replaced by $\delta_\point^\star$ because 
\begin{equation}
F(\mD_{\theta,C})\big|_{\theta = 0} \leq F(\mR^{\optL}_{\StoL}\circ \mN_{S,\theta} \circ \mE_{\LtoS} \circ \mU_{L,\theta}^\dagger)\big|_{\theta = 0} = (\delta^\star_\point)^2,
\end{equation}
where $\mD_{C,\theta} = \mR^{\rep}_{\SAtoC} \circ ( \mR_{\StoL}^{\optL} \circ\mN_{S} \otimes \id_A) \circ  (\mU_{S,\theta}\circ \mE_{\LtoS} \circ \mU^\dagger_{L,\theta}\otimes \id_A) \circ \mE_{\CtoLA}^{\rep}$. 

\end{proof}

Now we show that $\delta_\point^*$ could be much more smaller than $\delta_\point$. In \secref{sec:case-study}, we had two exact QEC code examples (the modified thermodynamic code at $q=1$, and the $[[n = 2^t - 1,1,3]]$ quantum Reed--Muller code) where $\delta_\point = \Theta(\sqrt{n})$ and $\delta_\charge = O(1)$. The following proposition, combined with \propref{prop:point-charge}, indicates that $\delta_\point^\star = O(1)$ and has at least a quadratic gap to $\delta_\point$ in these examples. 

\begin{proposition}
\isometric When $\epsilon = 0$, i.e., the code is exactly error-correcting, it holds that $\delta_\point^\star \leq \delta_\charge$. 
\end{proposition}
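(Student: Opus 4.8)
The plan is to exhibit one admissible recovery channel---the transpose (Petz) channel recovery $\mR^{\mathrm{tp}}_{\StoL}$ for the channel $\mN_S\circ\mE_{\LtoS}$---and to show that the corresponding error-corrected, covariance-twisted channel $\Phi_\theta:=\mR^{\mathrm{tp}}_{\StoL}\circ\mN_S\circ\mU_{S,\theta}\circ\mE_{\LtoS}\circ\mU_{L,\theta}^\dagger$ satisfies $F(\Phi_\theta)\big|_{\theta=0}\le\delta_\charge^2$. Because $\epsilon=0$, the transpose channel recovers exactly, $\mR^{\mathrm{tp}}_{\StoL}\circ\mN_S\circ\mE_{\LtoS}=\id_L$, so $\mR^{\mathrm{tp}}_{\StoL}$ is admissible in the minimization defining $\delta_\point^\star$; hence this bound gives $\delta_\point^\star\le\sqrt{F(\Phi_\theta)|_{\theta=0}}\le\delta_\charge$, which is the claim. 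This is a genuine refinement over replacing $\delta_\point$ by $\delta_\point^\star$ in \thmref{thm:local-point}: the point is that a \emph{carefully chosen} recovery does not re-inject, to first order in $\theta$, the logical information that leaked out of the code space.

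I would first compute the first-order behavior of $\Phi_\theta$ at $\theta=0$. Write $\mE_{\LtoS}(\cdot)=W(\cdot)W^\dagger$, $\mN_S(\cdot)=\sum_i K_i(\cdot)K_i^\dagger$, $\mR^{\mathrm{tp}}_{\StoL}(\cdot)=\sum_a R_a(\cdot)R_a^\dagger$, and note that $\mU_{S,\theta}\circ\mE_{\LtoS}\circ\mU_{L,\theta}^\dagger$ is isometric with isometry $W_\theta:=e^{-iH_S\theta}We^{iH_L\theta}$, so $\Phi_\theta$ has Kraus operators $C_{ai}(\theta)=R_aK_iW_\theta$. Since $\Phi_0=\id_L$, every $C_{ai}(0)$ must be a multiple $c_{ai}\id_L$ of the identity with $\sum_{ai}|c_{ai}|^2=1$, and a direct computation then yields $\partial_\theta\Phi_\theta|_{\theta=0}(\rho)=Z\rho+\rho Z^\dagger$ with $Z=\sum_{ai}\overline{c_{ai}}\,\dot C_{ai}(0)=iH_L-i\big(\sum_{ai}\overline{c_{ai}}R_aK_i\big)H_SW$ (using $\big(\sum_{ai}\overline{c_{ai}}R_aK_i\big)W=\id_L$). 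The crucial step is to show that for the transpose channel recovery $\sum_{ai}\overline{c_{ai}}R_aK_i=W^\dagger$ exactly; a generic exact recovery only gives a left inverse $W^\dagger+L_\perp$ with $L_\perp\Pi=0$, and the residual $L_\perp H_S W$ term would change the effective generator and could be as large as $\Delta H_S$. To establish this, diagonalize the Knill--Laflamme matrix, $\Pi K_i^\dagger K_j\Pi=\lambda_{ij}\Pi$, passing to a Kraus basis $\tilde K_k$ with $\tilde K_kW=\sqrt{p_k}\,W_k$ for isometries $W_k$ with mutually orthogonal ranges; write $\mR^{\mathrm{tp}}_{\StoL}$ with Kraus operators $\{W_k^\dagger P_k\}$ (where $P_k$ projects onto $\mathrm{range}(W_k)$) together with leakage Kraus operators that annihilate the code space; the leakage terms carry $c=0$ and drop out, while the diagonal terms give $\sum_{ai}\overline{c_{ai}}R_aK_i=\sum_k\sqrt{p_k}\,W_k^\dagger\tilde K_k=\sum_k W^\dagger\tilde K_k^\dagger\tilde K_k=W^\dagger$, using $\sqrt{p_k}W_k^\dagger=W^\dagger\tilde K_k^\dagger$ and $\sum_k\tilde K_k^\dagger\tilde K_k=\id_S$. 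Consequently $Z=i\big(H_L-\mE^\dagger_{\StoL}(H_S)\big)=:iY$, so $\partial_\theta\Phi_\theta|_{\theta=0}(\rho)=i[Y,\rho]$, which is precisely the generator of the unitary conjugation family $\mathcal{V}_\theta(\cdot):=e^{iY\theta}(\cdot)e^{-iY\theta}$.

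With this first-order match, I would conclude using the chain rule for the square root of the channel QFI already invoked in the paper: $\sqrt{F(\Phi_\theta)}|_{\theta=0}\le\sqrt{F(\mathcal{V}_\theta)}|_{\theta=0}+\sqrt{F(\mathcal{V}_{-\theta}\circ\Phi_\theta)}|_{\theta=0}$. The first term equals $\Delta(-Y)=\Delta\big(H_L-\mE^\dagger_{\StoL}(H_S)\big)=\delta_\charge$, since the QFI of a unitary family is the spectral spread of its generator. For the second term, $\mathcal{V}_{-\theta}\circ\Phi_\theta$ is a CPTP family equal to $\id_L$ at $\theta=0$ with vanishing first $\theta$-derivative there (by construction of $\mathcal{V}_\theta$), and any such family has $F|_{\theta=0}=0$: for a pure input $\ket{\psi}_{LR}$ the output state is pure with zero first derivative, so its SLD $L_0$ satisfies $\{L_0,|\psi\rangle\langle\psi|\}=0$, forcing $L_0\ket{\psi}=0$ and $F=\langle\psi|L_0^2|\psi\rangle=0$. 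Combining, $\delta_\point^\star\le\sqrt{F(\Phi_\theta)}|_{\theta=0}\le\delta_\charge$. The main obstacle is the identity $\sum_{ai}\overline{c_{ai}}R_aK_i=W^\dagger$ for the transpose recovery---the precise ``leakage-free at first order'' property that forces us to commit to this particular recovery---whereas the remaining ingredients (the rigidity $C_{ai}(0)=c_{ai}\id_L$, the superoperator derivative, the $\sqrt{F}$ chain rule, and the vanishing QFI of a trivially-first-order family) are routine.
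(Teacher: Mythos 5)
Your proof is correct and takes essentially the same route as the paper: both rely on the fact that, for the standard (transpose/Petz) QEC recovery at $\epsilon=0$, the error-corrected, covariance-twisted family $\Phi_\theta=\mR_{\StoL}\circ\mN_{S,\theta}\circ\mE_{\LtoS}\circ\mU_{L,\theta}^\dagger$ has effective first-order generator $Y=H_L-W^\dagger H_S W=H_L-\mE_{\StoL}^\dagger(H_S)$ at $\theta=0$, from which the channel QFI is bounded by $(\Delta Y)^2=\delta_\charge^2$. The implementation differs only cosmetically. The paper states the needed property of the recovery at the superoperator level, $\mR_{\StoL}^{\optL}\circ\mN_S(\cdot)=W^\dagger(\cdot)W+\mR^\perp(\Pi^\perp(\cdot)\Pi^\perp)$ (citing Nielsen--Chuang's Theorem 10.1), and then reads off the pure-state QFI $4\bigl(\braket{\psi'|\psi'}-|\braket{\psi|\psi'}|^2\bigr)$ with $\ket{\psi'}=iY\ket{\psi}$; you instead work at the Kraus level, using the rigidity $C_{ai}(0)=c_{ai}\id_L$ and verifying directly that the transpose recovery satisfies $\sum_{ai}\overline{c_{ai}}R_aK_i=W^\dagger$. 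These are the same ``leakage-free at first order'' fact in two languages, though your version has the small advantage of making explicit \emph{why} the choice of recovery matters --- a generic left inverse $W^\dagger+L_\perp$ (with $L_\perp\Pi=0$) would introduce an additional $L_\perp H_SW$ term in $Z$ that could be as large as $\Delta H_S$. Your final step also differs mildly: the paper evaluates the state QFI directly, while you invoke the chain rule for $\sqrt{F}$ by factoring $\Phi_\theta=\mV_\theta\circ(\mV_{-\theta}\circ\Phi_\theta)$ and noting the second factor has vanishing first derivative at $\theta=0$ and hence zero QFI. Both routes yield $\delta_\point^\star\le\Delta Y=\delta_\charge$, so the arguments are equivalent in substance.
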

\begin{proof}

Suppose $\mE_{\LtoS}(\cdot) = W(\cdot)W^\dagger$ where $W$ is isometric. Then $\delta_\charge = \Delta\left({H_L - W^\dagger H_S W}\right)$. 
When $\epsilon = 0$, we must have $\mR_{\StoL}^{\optL} \circ \mN_S \circ \mE_{\LtoS} = \id_L$. Let $\Pi$ be the projection onto the code subspace and $\Pi^\perp = \id - \Pi$. Then according to Theorem 10.1 in Ref.~\cite{nielsen2002quantum}, we see that there exists a recovery channel $\mR_{\StoL}^{\optL}$ such that $\mR_{\StoL}^{\optL} \circ \mN_S$ have the following form: 
\begin{equation}
\mR_{\StoL}^{\optL} \circ \mN_S(\cdot) = W^\dagger(\cdot)W + \mR^\perp(\Pi^\perp(\cdot)\Pi^\perp), 
\end{equation}
for some CPTP map $\mR^{\perp}$. 
Let $\ket{\psi_\theta} = U_{S,\theta} W U_{L,\theta}^\dagger \ket{\psi} = e^{-iH_S\theta}We^{iH_L\theta}\ket{\psi}$, then $
\ket{\partial_\theta\psi_\theta} = e^{-iH_S\theta}We^{iH_L\theta}iH_L\ket{\psi} - iH_S e^{-iH_S\theta}We^{iH_L\theta}\ket{\psi}. $ Then 
\begin{gather}
\rho_\theta = \mR_{\StoL}^{\optL} \circ \mN_{S,\theta} \circ \mE_{\LtoS} \circ \mU_{L,\theta}^\dagger (\ket{\psi}\bra{\psi})= W^\dagger \ket{\psi_\theta}\bra{\psi_\theta} W + \mR^\perp(\Pi^\perp(\ket{\psi_\theta}\bra{\psi_\theta})\Pi^\perp), \\
\rho_\theta|_{\theta=0} = W^\dagger \ket{\psi_\theta}\bra{\psi_\theta} W = \ket{\psi}\bra{\psi}, \\
\partial_\theta\rho_\theta|_{\theta=0} = W^\dagger \ket{\partial_\theta\psi_\theta}\bra{\psi_\theta} W + W^\dagger \ket{\psi_\theta}\bra{\partial_\theta\psi_\theta} W = \ket{\psi'}\bra{\psi} + \ket{\psi}\bra{\psi'}, 
\end{gather}
where $\ket{\psi'} = i(H_L - W^\dagger H_S W)\ket{\psi}. $
Clearly, $\braket{\psi|\psi'} + \braket{\psi'|\psi} = 0$. 
Then 
\begin{align}
(\delta^\star_\point)^2 &\leq F(\mR_{\StoL}^{\optL} \circ \mN_{S,\theta} \circ \mE_{\LtoS} \circ \mU_{L,\theta}^\dagger)\big|_{\theta = 0} \\ 
&= \max_{\psi} F(\rho_\theta)\big|_{\theta = 0} = \max_{\psi} 4 (\abs{\braket{\psi'|\psi'}}^2 - \abs{\braket{\psi|\psi'}}^2) = (\delta_\charge)^2. 
\end{align}

\end{proof}

\bibliographystyle{naturemag}
\bibliography{refs-approx-cov-nc}

\end{document}